
\documentclass[conference,a4paper]{IEEEtran}


\RequirePackage{ifthen}


\newcommand{\version}{1}
\newcommand{\comment}{0} 

\newcommand{\condinc}[2]{\ifthenelse{\equal{\comment}{0}}{#1}{\green {**[[#2]]**}}}

\newcommand{\SLV}[2]{\ifthenelse{\equal{\version}{0}}{#1}{ #2}}



\newcommand{\SC}[1]{\textcolor{magenta}{Simona: #1}}
\newcommand{\CF}[1]{\textcolor{violet}{*CF: #1 *}}




\usepackage[T1]{fontenc} 
\usepackage{amsmath}
\usepackage{url}
\usepackage{amsthm}
\usepackage{color}
\usepackage{graphicx}

\usepackage{cmll}
\usepackage{stmaryrd}
\usepackage{proof}
\usepackage{wrapfig}
\usepackage{amssymb}
\usepackage{bm} 

\usepackage[english]{babel}
\newcommand{\ccontext}{{\bf C}}
\newcommand{\cc}{\ccontext}

\newcommand{\hh}{{\bf H}}
\newcommand{\kk}{{\bf K}}

\newcommand{\lc}{{\bf L}}
\renewcommand{\ss}{\bm S}
\newcommand{\tc}{\bm T}
	

\newcommand{\seq}[1]{\langle#1_n\rangle_{n\in\Nat}}

\newcommand{\equbetav}{=_{\beta_{v}}}

\newcommand{\xor}{\mathtt{~XOR~}}

\newcommand{\cbv}{{\mathtt{cbv}}}
\newcommand{\cbn}{{\mathtt{cbn}}}

\newcommand{\red}{\rightarrow}
\newcommand{\redb}{\rightarrow_{\beta}}
\newcommand{\redbv}{\rightarrow_{\beta_v}}
\newcommand{\redo}{\rightarrow_{\oplus}}
\newcommand{\full}{\rightrightarrows}

\newcommand{\Red}{\Rightarrow} 
\newcommand{\Redb}{\Red_{\beta} }  
\newcommand{\Redbv}{\Red_{\beta_v} }  
\newcommand{\Redo}{\Red_{\oplus}}

\newcommand{\surf}{\textsf s}
\newcommand{\deep}{\textsf d}
\newcommand{\lleft}{\textsf l}
\renewcommand{\int}{\textsf {int}}
\newcommand{\head}{\textsf h}

\newcommand{\ired}{\overset{{{}_{\int}}}{\red}}
\newcommand{\iRed}  {\overset{{}_{\int}}{\Red}}

\newcommand{\iredbv}{\ired_{\beta_v}}

\newcommand{\lred}{\overset{{}_\lleft}{\red}}
\newcommand{\lRed}{\overset{{}_\lleft}{\Red}}

\newcommand{\lredbv}{\lred_{\beta_v}}

\newcommand{\dred}{\overset{{{}_{\deep}}}{\red}}
\newcommand{\dRed}  {\overset{{}_{\deep}}{\Red}} 
\newcommand{\dredbv}{\dred_{\beta_v}}

\newcommand{\sred}{\overset{{}_\surf}{\red}}
\newcommand{\sRed}{\overset{{}_\surf}{\Red}}
\newcommand{\sredb}{\sred_{\beta}}
\newcommand{\sredbv}{\sred_{\beta_v}}
\newcommand{\sredo}{\sred_{\oplus}}

\newcommand{\sRedbv}{\sRed_{\beta_v}}

\newcommand{\sredbl}{\sred_{\beta}}
\newcommand{\redbl}{\red_{\beta}}
\newcommand{\Redbl}{\Red_{\beta}}

\newcommand{\redbn}{\red_{\beta}}

\newcommand{\hred}{\overset{{}_\head}{\red}}

\newcommand{\lfull}{\oset{\lleft}{\full}}
\newcommand{\sfull}{\oset{\surf}{\full}}
\newcommand{\hfull}{\oset{\head}{\full}}

\newcommand{\sfullbv}{\sfull_{\beta_v}}

\newcommand{\parmark}{\shortparallel\mkern -3mu}

\newcommand{\makepar}[1]{~\parmark \mkern-8mu #1}

\newcommand{\parred}{{\makepar \red}}
\newcommand{\parRed}  {{\makepar \Red} }
\newcommand{\iparred}{{\makepar \ired}}

\newcommand{\parredbv}{\parred_{\beta_v}}

\newcommand{\dparred}{{\makepar \dred}}
\newcommand{\dparRed}  {{\makepar \dRed}}

\newcommand{\dparredbv}{\dparred_{\beta_v}}

%
%

\newcommand{\PLambda}{\Lambda_\oplus}
\newcommand{\trn}[1]{(#1)_{{}_\mathtt{N}}}
\newcommand{\trb}[1]{(#1)_{!}}

\newcommand{\true}{\mathtt{T}}
\newcommand{\false}{\mathtt{F}}

\newcommand{\sing}{{{}_{\{\}}}}

\newcommand{\Val}{\mathcal V}
\newcommand{\Snf}{\mathcal S}
\newcommand{\Nnf}{\mathcal N}
\newcommand{\Hnf}{\mathcal H}

\newcommand{\st}{\mid}

\renewcommand{\S}{\mathfrak s}
\newcommand{\T}{\mathfrak t}
\renewcommand{\L}{\mathfrak l}

\newcommand{\two}{\frac{1}{2}}

\newcommand{\Real}{\mathbb{R}}
\newcommand{\Nat}{\mathbb{N}}

\renewcommand{\implies}{\Rightarrow}



\theoremstyle{plain}
\newtheorem{theorem}{Theorem}
\newtheorem{lemma}[theorem]{Lemma}
\newtheorem{fact}[theorem]{Fact}
\newtheorem{example}[theorem]{Example}
\newtheorem{remark}[theorem]{Remark}

\newtheorem{thm}[theorem]{Theorem}
\newtheorem{prop}[theorem]{Proposition}
\newtheorem{cor}[theorem]{Corollary}

\theoremstyle{definition}
\newtheorem{Def}[theorem]{Definition}

\newtheorem*{theorem*}{Theorem}
\newtheorem*{proposition*}{Proposition}
\newtheorem*{lemma*}{Lemma}
\newtheorem*{remark*}{Remark}
\newtheorem*{notation*}{Notation}
\newtheorem*{fact*}{Fact}



\newcommand{\lam}{\lambda}
\newcommand{\tr}[1]{(#1)_{\lambda}}

\newcommand{\ie}{\emph{i.e.}}


\usepackage{xcolor}

\newcommand{\blue}[1]{{\color{blue}{#1}}}
\newcommand{\green}[1]{{\color{green}{#1}}}





\newcommand{\Lim}{\mathtt{Lim}}

\newcommand{\down}{\Downarrow}

\newcommand{\tolim}{\overset{{}_{~~~\infty}}{\Rightarrow}}
\newcommand{\tolimx}[1]{\overset{{}_{\infty}}{\Rightarrow_{\mkern-4mu{}_{#1}}}}
\newcommand{\tolims}{\overset{{}_{~~\surf~\infty }}{\Rightarrow}}
\newcommand{\toliml}{\overset{{}_{~~\lleft~\infty }}{\Rightarrow}}



\newcommand{\DST}[1]{\mathtt{DST}(#1)}

\newcommand{\MDST}[1]{\mathtt{MDST}(#1)}

\newcommand{\supp}[1]{\mathit{Supp}(#1)}

\newcommand{\Obs}{\mathtt{Obs}}
\newcommand{\A}{\mathcal{A}}


\newcommand{\m}{\mathtt m}   
\newcommand{\n}{\mathtt n} 
\renewcommand{\r}{\mathtt r}     
\newcommand{\s}{\mathtt s}

\newcommand{\den}[1]{\llbracket {#1} \rrbracket}

\newcommand{\fullo}{\full_{\oplus}}



\newcommand{\bV}{\mathbf V}

\newcommand{\bU}{\mathbf U}
\newcommand{\bZ}{\mathbf Z}

\newcommand{\bmu}{\pmb \mu}

\newcommand{\bsigma}{\pmb \sigma}
\newcommand{\btau}{\pmb \tau}
\newcommand{\brho}{\pmb \rho}
\newcommand{\bsigmaep}{\pmb \sigma_{2\epsilon}}

\newcommand{\norm}[1]{\|#1\|}


\newcommand{\set}[1]{\{#1\}}
\newcommand{\multiset}[1]{\pmb[ #1 \pmb]}

\newcommand{\mset}{\multiset}
\newcommand{\mdist}[1]{\mset{#1}}

\newcommand{\iI}{i \in I}
\newcommand{\jJ}{j \in J}


\makeatletter
\newcommand{\oset}[3][0ex]{%
	\mathrel{\mathop{#3}\limits^{
			\vbox to#1{\kern0\ex@
				\hbox{$\tiny#2$}\vss}}}}
\makeatother

%


\setlength{\floatsep}{5pt} 
\setlength{\textfloatsep}{5pt}  
\setlength{\intextsep}{5pt}  

\setlength{\abovedisplayskip}{2pt} 
\setlength{\belowdisplayskip}{2pt} 
\setlength{\arraycolsep}{2pt}  
%



\setcounter{totalnumber}{50}
\setcounter{topnumber}{50}
\setcounter{bottomnumber}{50}

\usepackage{floatflt}
\usepackage[font=scriptsize,skip=4pt]{caption}

\thinmuskip=0mu plus 3mu
\medmuskip=1mu plus 4mu
\thickmuskip=2mu plus 5mu



\setcounter{topnumber}{25}
\setcounter{bottomnumber}{25}
\setcounter{totalnumber}{25}
\setcounter{dbltopnumber}{25}


%
%
%
%

\newcounter{number}

%
%
%
%
%

\setlength{\floatsep}{5pt} 
\setlength{\textfloatsep}{5pt}  
\setlength{\intextsep}{5pt}  

\setlength{\abovedisplayskip}{2pt} 
\setlength{\belowdisplayskip}{2pt} 
\setlength{\arraycolsep}{2pt}  
%



\setcounter{totalnumber}{50}
\setcounter{topnumber}{50}
\setcounter{bottomnumber}{50}

\usepackage{floatflt}
\usepackage[font=scriptsize,skip=4pt]{caption}

\title{Lambda Calculus and Probabilistic Computation\\ (Extended Version)}
\author{
	\IEEEauthorblockN{Claudia Faggian}
	\IEEEauthorblockA{Universit\'e de Paris, IRIF, CNRS, France
	}
	\and
	\IEEEauthorblockN{Simona Ronchi della Rocca}
	\IEEEauthorblockA{Dip. di Informatica, Universit\`a di Torino, Italy
	}
}

\date{}

%
%
%
\begin{document}
\maketitle

\begin{abstract}
We introduce two extensions of the $\lambda$-calculus with a probabilistic choice operator, $\PLambda^{\cbv}$ and $\PLambda^{\cbn}$, modeling respectively  call-by-value and call-by-name probabilistic computation. We prove that both enjoys confluence and standardization, in an extended way:  we revisit these two fundamental notions  to take into account the asymptotic behaviour of terms. The common root of the two calculi is a further calculus  based on Linear Logic, $\PLambda^!$, which allows us to develop a unified, modular approach.	

%
%
\end{abstract}

\section{Introduction}
The pervasive role of  stochastic models   in a variety of domains (such as machine learning, natural language, verification) has prompted a vast  body of research on  probabilistic programming languages;  such a  language supports at least discrete distributions by  providing  an operator which models sampling. 
 In particular, the  functional style of probabilistic programming, pioneered  by   \cite{Saheb-Djahromi78},  attracts increasing  interest  because  it allows for  higher-order computation, and offers  a level of abstraction well-suited to deal with mathematical objects. Early work  \cite{KollerMP97, PlessL01, Park03, RamseyP02, ParkPT05}
 has evolved in a  growing  body of software development and  theoretical research. 
 In this context, the $\lambda$-calculus has often been used as a core language.  

%

In order to model higher-order probabilistic computation, it is a natural approach to take the $\lambda$-calculus as general paradigm, and to enrich it with a probabilistic construct.  The most simple  and concrete way to do so (\cite{DiPierroHW05,LagoZ12,EhrhardPT11}) is  to equip the untyped $\lambda$-calculus with an operator $\oplus$, 
which models  flipping a fair coin. 
This suffices to have universality, as proved in \cite{LagoZ12}, in the sense that the   calculus is 
	sound and complete with respect to   \emph{computable} probability distributions.
The resulting calculus is however \emph{non-confluent}, as it has been observed early (see \cite{LagoZ12} for an analysis).
 We revise the issue in  Example~\ref{ex:motivation}.   The  problem  with confluence is  handled in the literature by fixing a deterministic reduction strategy, typically  the leftmost-outermost strategy.
This is not satisfactory both for  theoretical and practical reasons, as we discuss later. 

In this paper, we  propose a more general point of view. Our goal is a foundational calculus, which plays   the same role as the $\lambda$-calculus does for deterministic computation. More precisely, taking the point of view propounded by  Plotkin in \cite{PlotkinCbV}, we  discriminate   between a \emph{calculus} and a  \emph{programming language}. The former defines the reduction rules, independently from any reduction strategy, and enjoys  confluence and standardization, the latter is specified by a deterministic strategy (an abstract machine). \emph{Standardization} is what relates the two: the programming language implements the standard strategy associated to the calculus. Indeed, standardization implies the existence of a strategy (the standard strategy) which is guaranteed  to reach the result, if it exists.


In  this spirit, we consider a probabilistic  \emph{calculus} to be characterized by a specific calling mechanism; 
the reduction is otherwise only constrained  by  the need of discriminating between duplicating  a function which samples from a distribution,  and duplicating the result of sampling. Think of tossing a coin and
duplicating the result, versus  tossing the coin twice, which is indeed the issue at the core of confluence failure, as the following examples  (adapted from \cite{deLiguoroP95,LagoZ12}) show.

\begin{example}[Confluence]\label{ex:motivation} 	
	Let us  consider  the  untyped $\lambda$-calculus extended with a binary operator $\oplus$ which models fair, binary probabilistic choice: 
	$M\oplus N$  reduces to either $M$ or $N$ with equal probability $1/2$; we write this as $M\oplus N \red \{M^{\frac{1}{2}},N^{\frac{1}{2}}\}$. Intuitively,  \emph{the result} of evaluating a probabilistic term is a \emph{distribution} on its possible values.
	
	\begin{enumerate}	
		\item 	Consider the term $PQ$, where $P=(\lam z.z \xor z)$, and  $Q=(\true\oplus \false)$;  $\xor$  is    the  standard construct for exclusive $\mathtt{OR}$,    
		$\true=\lam xy.x$ and $\false=\lam xy.y$  code  the  boolean values. \\
			-- If we first reduce  $Q$, 
			we obtain   $(\lam z.z \xor z)\true$ or $(\lam z.z \xor z)\false$, with equal probability $1/2$.		
			This way, $PQ$ evaluates to $\{\false^1\}$, \ie\ $\false$ with probability $1$.\\
			-- If we reduce  the outermost redex first, $PQ$  reduces to 
			$(\true\oplus \false) \xor (\true \oplus \false)$, and   the term evaluates to the distribution $\{\true^{\two}, \false^{\two}\}$. \\
	The two resulting distributions are not even comparable.

		\item 	The same  phenomenon appears even  if we restrict ourselves to 
		call-by-value.  Consider for example the reductions of $PN$ with $P$ as in 1), and $ N=(\lam xy.x\oplus y)$. We obtain the same two different  distributions as above.
		
		\condinc{}
		{$PQ\red \mset{Q \xor Q} \Red^* \mset{1/2 \lam xy.x, 1/2 \lam xy.y}$ ed anche  $PQ\red\mset{\two P(\lam xy.x),\two P(\lam xy.y)}\Red^*\mset{1 \lam xy.y}$   Mi sembra che il constraint e' che la riduzione probabilista sia surface.}
		
	\condinc{}{	\item 	The situation  becomes even more complex if we consider  the possibility of diverging; try the same experiment on the term $PR$, where $R=(\true\oplus \false)\oplus \Delta\Delta$ (for $\Delta=\lam x.xx$).}
	\end{enumerate}

\end{example}

In this paper, we define two probabilistic  $\lambda$-calculi,  respectively based on the call-by-value (CbV) and call-by-name (CbN) calling mechanism. Both enjoy confluence and standardization, \emph{in an extended way}: indeed we revisit these two fundamental notions to take into account the asymptotic behaviour of terms. The common root of the two calculi is a further calculus based on Linear Logic, which is an extension of  Simpson's linear $\lambda$-calculus \cite{Simpson05}, and which allows us to develop a unified, modular approach.\\

\paragraph*{Content and Contributions}

In  Section~\ref{sec:cbv}, we introduce  a call-by-value   calculus, denoted  $\PLambda^\cbv$, as a probabilistic extension of the call-by-value $\lambda$-calculus of Plotkin (where the $\beta$-reduction fires only in case the argument is a value, \ie\, either a variable or a $\lambda$-abstraction). We choose to study in detail \emph{call-by-value} for two main reasons. 
First,  it is the most relevant mechanism  to probabilistic  programming (most of the abstract languages we cited are  call-by-value, but also real-world stochastic programs such as Church \cite{GoodmanMRBT08}). Second, call-by-value is a mechanism in which dealing with functions, and duplication of functions,  is clean and intuitive, which allows us to address   the issue at the core of confluence failure.
The definition of value (in particular, a probabilistic choice is not a value) together with a suitable restriction of the evaluation context for the probabilistic choice, allow us to recover  key results:   confluence and a form of standardization (Section~\ref{sec:conf_and_st}). Let us recall that, in the classical $\lambda$-calculus, standardization means  that there is a  strategy which is \emph{complete} for all reduction sequences, \ie, for every  reduction  sequence $M \red^*N$  there is a \emph{standard} reduction sequence from $M$ to  $N$. 
A standard reduction sequence with the same property exists also here. An unexpected result is that strategies which are complete in the classical case, are not so here,  notably the leftmost  strategy.

In Section~\ref{sec:asymptotic} we study the asymptotic behavior of terms. 
Our leading question is how  the asymptotic behaviour of different sequences starting from the same term compare. 
We first analyze if and  in which sense  confluence implies that  the result of a \emph{probabilistically terminating} computation is unique. We formalize the notion of \emph{asymptotic result} via   \emph{limit distributions}, and establish that there is a \emph{unique} maximal one. 

In Section~\ref{sec:left_evaluation} we address   the question of how to find 
such  greatest limit distribution, a question which arises from the fact that evaluation in $\PLambda^\cbv$ is  non-deterministic, and different sequences may terminate with different probability.
With this aim, we extend the notion of standardization to  limits; this extension is non-trivial, and  demands the development of new  sophisticated proof methods.

We prove that the new notion of standardization supplies a family of \emph{complete reduction strategies} which are guaranteed  to reach  the unique maximal result. 
	Remarkably, we are able to show that,  when evaluating  programs, i.e., closed terms, this family  \emph{does include} the leftmost strategy.  As we have already observed, this is the   deterministic strategy which  is typically adopted in the literature, in either  its call-by-value (\cite{KollerMP97,DalLagoMZ11}) or its  call-by-name version (\cite{DiPierroHW05,EhrhardPT11}), but without any completeness result with respect to \emph{probabilistic} computation. 
	  Our  result offers an ``a posteriori'' justification for its  use!

The study of $\PLambda^\cbv$ allows us to develop a crisp approach, which we are then able to use in the study of  different probabilistic  calculi.
Because the issue of duplication is central, it is natural to expect a benefit from  the fine control over copies which is provided by Linear Logic. In Section~\ref{sec:Plinear} we use our tools to introduce and study a probabilistic \emph{linear} $\lambda$-calculus, $\PLambda^!$.  The linear calculus  provides not only a finer  control on duplication, but also a modular approach to  confluence and standardization, which allow us to formalize a call-by-name version of our calculus, namely $\PLambda^{\cbn}$, in Section~\ref{sec:CBN}. We prove that $\PLambda^{\cbn}$ enjoys  properties analogous to those of  $\PLambda^\cbv$, in particular confluence and standardization.

In Section~\ref{sec:issues} we provide the reader with some    background and motivational observations. 
Basic notions of discrete probability and rewriting are reviewed in Section~\ref{sec:preliminaries}.\\

\SLV{An extended version of this paper, with more details and   proofs,   is available online \cite{long}.}{}

\paragraph*{Related Work}

The idea of extending the $\lambda$-calculus with a probabilistic construct  is  not new; 
without any ambition to be exhaustive, let us cite \cite{Park03,RamseyP02}, \cite{DiPierroHW05,EhrhardPT11,LagoZ12,BorgstromLGS16}. In all these cases, a specific reduction strategy is fixed; they are indeed languages, not calculi, according to   Plotkin's distinction.

The issue about confluence appears every time the $\lambda$-calculus is extended with a  choice effect: quantum, algebraic,  non-deterministic. The ways of framing the same problem in different settings are naturally related, and we were inspired by them.  Confluence for an algebric calculus  is dealt with in  \cite{ArrighiD17}  for the call-by-value, and in   \cite{Vaux09} for the call-by-name. In the quantum case we would like  to cite   \cite{DalLagoMZ11, popl17}, which are based on   Simpson's calculus \cite{Simpson05}. {A probabilistic extension of Simpson's calculus was first  proposed in \cite{DiazMartinez17}.
The language is similar to that of $\PLambda^!$; however in \cite{DiazMartinez17} (as also in \cite{DalLagoMZ11, popl17}) no reduction (not even $\beta$) is allowed in the scope of a !-operator. The reduction there hence corresponds to  surface reduction,  which in Sec.~\ref{sec:Plinear} we show to be the  \emph{standard strategy} for  $\PLambda^!$.}

To our knowledge, the only proposal of a probabilistic $\lambda$-calculus  in which the reduction is independent from a \emph{specific  strategy} is for  \emph{call-by-name}, namely the calculus of \cite{LeventisThesis},  in the line  of work of differential  \cite{EhrhardR03} and  algebric \cite{Vaux09} $ \lambda $-calculus.  
The focus in \cite{LeventisThesis}  is essentially semantical, as the author want to study an equational theory for the $\lambda$-calculus, based on an extension of B\"ohm trees. 
\cite{LeventisThesis} develops results which in their essence are similar to those we obtain for call-by-name in Sec.~\ref{sec:CBN}, in particular confluence and standardization, even if 
his calculus --which internalizes the probabilistic behavior-- is quite different from ours, and so are the proof techniques. 

Finally,
we wish   to mention that  proposals of a probabilistic $\lambda$-calculus  could also be extracted from semantical models, such as  
the one in 
\cite{BacciFKMPS18}, which develops an idea earlier presented in \cite{scott2014}, and in which the notion of graph models for $\lambda$-calculus has been extended with  a probabilistic construct.

\section{Background and Motivational observations}\label{sec:issues}

In this section, we  first review -in a non-technical way-  the specific features of probabilistic programs, and how they differ from classical ones. 
We then focus on some motivational observations which are relevant to  our work.  First, we give  an example of features which are lost if a programming language is characterized by a  strategy which is not rooted in a more general calculus.  Then, we  illustrate some of the  issues which appear when we study a general calculus, instead of a specific reduction strategy. Addressing  these issues  will lead us to  develop new notions and tools.

\subsection{Classical vs. Probabilistic Programs}

A \emph{classical} program
defines a \emph{deterministic input-output} relation; it 
terminates (on a given input), or  does not; if it terminates, the program only runs for  a finite number of steps.
Instead, a \emph{probabilistic} program   generates a probability \emph{distribution over possible outputs}; it
\emph{terminates}  (on a given input) \emph{with a certain probability};
it  may have runs which take \emph{infinitely many steps}	even when termination has probability $1$.

A probabilistic  program is a stochastic model. 
The intuition is that  the probabilistic  program $P$  is executed, and random choices are made by sampling; 
this process  defines a distribution  over all the possible outputs of $P$. 
Even  if  the   termination probability is $1$ (\emph{almost sure termination}), that degree of certitude 
is typically not reached in any finite number of steps, but it appears \emph{as a limit}.  A standard example is a term $M$ which reduces to either the normal form $\true$ or $M$ itself, with equal probability $1/2$. After $n$ steps, $M$ reduces to $\true$ with probability 
$\two + \frac{1}{2^2} + \dots + \frac{1}{2^n}$. 
\emph{Only at the limit} this computation terminates with probability $1$ .

\paragraph*{Probabilistic vs. Quantitative} 
The notion of \emph{probabilistic termination} is what sets apart   probabilistic $\lambda $-calculus from   other quantitative calculi such as those in \cite{ArrighiD17,EhrhardR03,Vaux09}, and from the non-deterministic $\lambda $-calculus \cite{deLiguoroP95}.
For this reason,  the \emph{asymptotic behaviour} of terms will be the focus of this paper. 

\subsection{Confluence of the calculus  is relevant to programming}\label{mot:confluence}
 
 Functional languages have their foundation in the $\lambda$-calculus and its properties, and such properties (notably, confluence and standardization) have theoretical and practical implications. 
A   strength of  \emph{classical} functional languages -which is assuming growing importance-
is that they  are inherently parallel (we refer e.g. to \cite{Marlow} for discussion on \emph{deterministic} parallel programming): every sub-expression can be evaluated in parallel,
because of referential transparency; still, we can perform  reasoning, testing and debugging on a program 	using a  sequential model, 
	because the result of the calculus is independent from the evaluation order.
Not to force a  sequential strategy   impacts  the implementation of the language, but also  the \emph{conception} of  programs. As    advocated  by Harper,  
 the  parallelism of functional languages exposes the 
{\textsl{
		``dependency structure of the computation by not introducing any dependencies that are not forced on us by the nature of the computation itself."}
}

This feature   of functional languages is rooted in the  \emph{confluence} of the $\lambda$-calculus, and is   an example of what   is lost in the probabilistic setting, if we give-up either confluence, or the possibility of non-deterministic evaluation.

\condinc{}{
	The following  example is adapted from  \cite{deLiguoroP95,LagoZ12}.
	\begin{example}[Motivating Example]\label{ex:motivation}
		Let us consider  the  untyped $\lambda$-calculus extended with a binary operator $\oplus$ which models fair, binary probabilistic choice: 
		$M\oplus N$  reduces to either $M$ or $N$ with equal probability $1/2$. 

		Consider the term $PQ$, where $P=(\lam x. x)(\lam x.x \xor x)$ and $Q=(T\oplus F)$ (with $\xor$  the  standard constructs for exclusive $\mathtt{OR}$, and  
		$T,F$ terms which code the boolean values). 
		\begin{itemize}
			\item If we evaluate $P$ and $Q$ independently, from $P$ we obtain $\lam x.x \xor x$, while from $Q$ we have either $T$ or $F$, with equal probability $1/2$. By composing the partial results, we obtain $F$ with probability $1$.
			\item if we evaluate $PQ$ sequentially, in a standard left-most outer-most fashion, $PQ$ reduces to $(\lam x.x \xor x)Q$ which reduces to 
			$(T\oplus F) \xor (T\oplus F)$ and eventually  to $T$ with probability  $1/2$  and $F$ with probability $1/2$.
		\end{itemize}
		The situation  becomes even more complex if we consider also the possibility of diverging; try the same experiment on the term $PR$, where $R=(T\oplus F)\oplus \Delta\Delta$ (with $\Delta=\lam x.xx$)!
	\end{example}
}

\subsection{The result of probabilistic computation}\label{mot:result}

A ground for  our approach is the distinction between calculus and language. 
Some of the issues which we will  need to address  do not appear when working with probabilistic languages, because they are based on a  simplification of the $\lambda$-calculus. 
Programming languages  only evaluate  programs, i.e., \emph{closed terms} (without free variables). A striking  simplification  appears from another crucial restriction,   \emph{weak evaluation}, which does not evaluate function bodies (the scope of  $\lambda$-abstractions). In  \emph{weak call-by-value} (base of  the ML/CAML family of  probabilistic languages) values are normal forms.

What is the result of a probabilistic computation is well understood only in the case of  \emph{programming languages}: 
the result of a program is a distribution on its  possible outcomes, which are  \emph{normal forms} w.r.t. a chosen strategy.
In the literature of probabilistic $ \lambda $-calculus,  two main deterministic strategies have been studied:  weak left strategy in CbV  \cite{LagoZ12} and head strategy in CbN \cite{EhrhardPT11}, whose normal forms are respectively the closed values and the head normal forms. 

When considering a calculus instead of a language, the identity between normal forms and results does not hold anymore, with important consequences in the definition of limit distributions.
We investigate this issue in Sec.~\ref{sec:asymptotic}. 
The approach  we develop is general and uniform  to all our calculi.

\section{Technical Preliminaries}\label{sec:preliminaries}

We  review  basic notions on discrete probability and rewriting which we use through the paper. 
 We assume that the reader has some familiarity  with  the $\lambda$-calculus.

\subsection{Basics on Discrete Probability}\label{sec:proba}

A \emph{discrete probability space} is given by a pair $(\Omega, \mu)$,
 where  $\Omega$  is  a \emph{countable} set, and $\mu$ is  a \emph{discrete probability distribution}
on  $\Omega$, \ie\  is a 
function from $\Omega$ to $[0,1]\subset  \Real$ such that  $\norm \mu:= \sum_{\omega\in \Omega} \mu(\omega) = 1$.  In this case, 
a probability measure is assigned  to any  subset  $\A\subseteq \Omega$ as $\mu(\A)=\sum_{\omega\in \A} \mu(\omega)$.
 In the language of probability theory, a subset of $\Omega$ is called an \emph{event}.

Let  $(\Omega,\mu)$ be as above. Any \emph{function} $F : \Omega \to \Delta$, where  $\Delta$ is another countable set, 
 \emph{induces a  probability distribution} $\mu^F$ on $\Delta$
by composition: $\mu^F(d \in \Delta) := \mu (F^{-1}(d))$ \ie\ $\mu\{\omega \in \Omega: F(\omega) = d\}$. 
In the language of probability theory, $F$ is called a \emph{discrete random variable} on $(\Omega, \mu)$.

\begin{example}[Die]\label{ex:odd} 
\begin{enumerate}
\item Consider tossing a die once. The  space of possible outcomes is the
	set $\Omega = \{1, 2, 3, 4, 5, 6\}$. The probability measure $\mu$ of each outcome is $1/6$. The event \emph{``result is odd"} is the subset $\mathcal{O}= \{1, 3, 5\}$, whose probability measure is $\mu(\mathcal{O})= 1/2$.
	\item Let $\Delta$ be a set with two elements $\{\texttt{Even}, \texttt{Odd}\}$, and $F$ the obvious function from $\Omega$ to 
	$\Delta$. $F$ induces a distribution on $\Delta$, with $\mu^F(\texttt{Even})=1/2$ and $\mu^F(\texttt{Odd})=1/2$.
	\end{enumerate} 
\end{example}


\subsection{Subdistributions and $\pmb{\DST{\Omega}}$}\label{notation:dist}

 Given a countable set $\Omega$,  a function $\mu:\Omega\to[0,1]$ is a probability \emph{subdistribution} if 
 $\norm \mu\leq 1$.
 We write  $\DST{\Omega}$ for the set
 of   subdistributions on $\Omega$.   With a slight abuse of language, we will use the term distribution  also for   subdistribution. 
  Subdistributions allow us to deal with partial results and non-successful computations.
  
\emph{Order:}   $\DST{\Omega}$ is equipped with the standard order relation of functions :  $\mu \leq \rho$ if
$\mu (\omega) \leq \rho (\omega)$ for each $\omega\in \Omega$.
%
%

\emph{Support:} The \emph{support} of  $\mu$ is $\supp{\mu}=\{\omega:\mu(\omega)>0\}$.

\emph{Representation:} 
 We represent a distribution by explicitly indicating the support, and (as superscript) the probability assigned to each element by $\mu$. We  write $\mu=\{a_0^{p_0}, \dots , a_n^{p_n}\}$ if $\mu(a_0)=p_0,\dots , \mu(a_n)=p_n$ and  $\mu(a_j)=0 $ otherwise. 

\subsection{Multidistributions}\label{sec:multi} 
\newcommand{\X}{\mathcal X}
To syntactically represent  the global evolution of a probabilistic system, we rely on the notion of multidistribution \cite{Avanzini}.

A \emph{multiset} is  a (finite) list of elements,  modulo reordering,  \ie\  $\mset{a,b,a}=\mset{a,a,b}\not=\mset{a,b}$; the multiset $\mset{a,a,b}$ has three elements. 
Let 
$\X$ be a  countable set  
and  $\m$  a multiset  of pairs of the form $pM$, with $p\in]0,1]$, and $M\in \X$.
 We call $\m=\mset{p_iM_i\st i\in I}$ (where  the  index set $I$ ranges over the elements of $\m$)   a \emph{multidistribution on $\X$}  if   
$ \sum_{\iI} p_i\leq 1$. We denote by $\MDST \X$  the set of all multidistributions on $\X$.

We  write the multidistribution $\mdist{1M}$ simply as $\mdist{M}$.
The sum of  multidistributions is denoted by $+$, and it is the concatenation of lists. 
The product $q\cdot \m$ of a scalar $q$ and a multidistribution $\m$ is defined pointwise: $q\cdot \multiset{ p_{1}M_{1},..., p_{n}M_{n}}=\multiset{ (qp_{1})M_{1},..., (qp_{n})M_{n}} $. 

Intuitively, a multidistribution $\m\in \MDST{\X} $ is a syntactical representation  of a discrete probability space   where at each element  of the space  is associated a probability and a term of $\X$.
To the multidistribution $\m=\mset{p_iM_i\st i\in I}$    we associate a  probability distribution $\mu \in \DST{\X}$ as follows: 
%
\[ \mu(M) = 
\left\{
\begin{array}{ll}
p& \mbox{ if } p=\sum_{\iI}p_i \mbox{ s.t. }   M_i=M\\
0&\mbox{otherwise;}
\end{array}
\right.
\]
and we call $\mu$ the \emph{probability distribution associated} to $\m$.

\begin{example}[Distribution vs. multidistribution]
	If $\m=\mset{\frac{1}{2}a,\frac{1}{2}a}$, then $\mu=\{a^1\}$. Please observe  the \emph{difference} between distribution and multidistribution: if $\m'=\mset{1a}$, then $\m\not= \m'$, but $\mu=\mu'$.
\end{example}

\newcommand{\R}{\mathcal X}
\subsection{Binary relations (notations and basic definitions)}\label{sec:rel}  
Let  $\red_r$ be  a binary relation on a set $\R$. We denote  
$\red_r^*$  its  reflexive and transitive closure.
We denote $=_r$ the reflexive, symmetric and transitive closure of $\red_r$.
 If $u\in \R$, we   write  $u\not\red_r $
  if there is no $t\in \R$ such that 
$u\red_r t$; in this case, $u$ is in $\red_r$-\emph{normal form}.
\textbf{Figures convention:} as is standard, in the \emph{figures} we depict $\rightarrow^*$  as $\twoheadrightarrow$; solid arrows are universally quantified, dashed arrows are existentially quantified.

\paragraph*{Confluence and Commutation} Let  
$r,s,t,u\in {\R}$. 	The relations $\red_{1}$ and $\red_{2}$  on $\R$
\emph{commute} if  ($r \red^{*}_{1}s$ and $r\red^{*}_{2}t$) imply there is $u$ such that 
($s \red^{*}_{2}u$ and $r_{3}\red^{*}_{1}u$); they 
\emph{diamond-commute} ($\diamond$-commute) if ($r\red_{1}s$ and $r\red_{2}t$) imply there is $u$ such that ($s \red_{2}u$ and $t\red_{1}u$). The relation 
$\red$ is \emph{confluent} (resp. \emph{diamond}) if it commutes (resp. $\diamond$-commutes) with itself. It is well known that 
$ \diamond $-commutation implies commutation, and diamond implies confluence.

\section{Call-by-Value calculus $\PLambda^\cbv$}\label{sec:cbv}
We define $\PLambda^\cbv$, a CbV probabilistic $\lambda$-calculus.

\condinc{}{
\CF{Notare: CbV di per se non garantisce confluenza.
	 Esempio. Let $P:=(\lam z. z \xor z)$ and $ Q:=(\lam xy.x\oplus y)$. Con chiusura per contesto generale $\cc$, la riduzione di $PQ$ non e' confluente. 
 $PQ\red \mset{Q \xor Q} \Red^* \mset{1/2 \lam xy.x, 1/2 \lam xy.y}$ ed anche  $PQ\red\mset{\two P(\lam xy.x),\two P(\lam xy.y)}\Red^*\mset{1 \lam xy.y}$   Mi sembra che il constraint e' che la riduzione probabilista sia surface.}

}

\subsection{Syntax of $\PLambda^\cbv$}\label{sec:syntax}
\subsubsection{The language}\label{def:language}
\emph{Terms} and \emph{values} are generated respectively by the grammars:
\begin{center}
{\footnotesize {\begin{minipage}[c]{0.45\textwidth}
		$
		\begin{array}{lcllr} M,N,P,Q & ::= & x \mid \lambda x.M \mid MM \mid M \oplus M & (\textbf{terms } \PLambda)\\
		V,W & ::= & x \mid \lambda x. M & (\textbf{values } \Val)\\
	
		\end{array}
		$
\end{minipage}}}
\end{center}
where $x$ ranges over a countable set of \emph{variables} (denoted by $x, y,  \dots$).
$\Lambda_\oplus$ and $\Val$ denote respectively the set of terms and of values.
Free variables are defined as usual. $M[N/x]$ denotes the term obtained by capture-avoiding substitution of $N$ for each free occurrence of $x$ in $M$.

 \emph{Contexts} ($\cc$) and \emph{surface contexts} ($\ss$) are generated  by the grammars:
 \begin{center}
{\footnotesize  {\begin{minipage}[c]{0.48\textwidth}
		$
		\begin{array}{lcllr} 
			\ccontext & ::= & \square   \mid M\ccontext \mid \ccontext M \mid \lambda x.\ccontext \mid \ccontext \oplus M \mid
		M \oplus \ccontext  & (\textbf{contexts})\\
		\ss & ::=& \square \mid M\ss \mid \ss M  & (\textbf{surface contexts})
		\end{array}
		$
\end{minipage}}}
 \end{center}
where $\square$ denotes the \emph{hole} of the term context. 
Given a term context $\ccontext$, we denote by $\ccontext(M)$ the term obtained from $\ccontext$ by filling the hole with $M$, allowing the capture of free variables.
 All surface contexts are  contexts.  Since the  hole will be filled with a redex, \emph{surface contexts}  formalize the fact that the redex (the hole) is not in the scope of a $\lambda$-abstraction, nor of a $\oplus$.

$\MDST{\Lambda_{\oplus}}$ denotes the set of  \emph{multi-distributions} on $\Lambda_\oplus$.

\subsubsection{Reductions} We first define reduction rules on terms (Fig.~\ref{fig:rules}),  and  one-step reduction from terms to multidistributions (Fig.~\ref{fig:reductions}). We  then lift the definition of reduction to a binary relation on $\MDST{\PLambda}$.

Observe that, usually, a   reduction step is 
given by the closure under context of the reduction rules. 
  However, to define a  reduction from term to term is not informative enough, because we still have to account for the probability. 
The  meaning of $M\oplus N$ is that this term reduces to
either  $M$ or $N$, \emph{with equal probability} $\frac{1}{2}$. There are various way to formalize this fact; here, we use multidistributions.

\paragraph{Reduction Rules and Steps}
The \emph{reduction rules} on the terms of $\PLambda$ are defined  in Fig.~\ref{fig:rules}.\\
%
%
%

\vspace*{-8pt}
	\begin{figure}[!h]\centering
		{\scriptsize 	
		\begin{minipage}{0.5\textwidth}
		\begin{tabular}{|c|c|} 
			\hline
		\emph{$\beta_{v}$-rule} &\emph{ Probabilistic rules}\\
		$	\begin{array}{l }
		(\lambda x.M)V \mapsto_{\beta_v} M[V/x] ~ \mbox{ if } V \in \Val
		\end{array}$
		& 
		$ \begin{array}{l }  
		M \oplus N \mapsto_{l\oplus} M  \quad 
		M \oplus N \mapsto_{r\oplus} N 
		\end{array}$\\		
		\hline			
	\end{tabular} 
	  	\caption{Reduction Rules }\label{fig:rules}
		\end{minipage}
	}\end{figure}
\vspace*{-4pt}
 The \emph{(one-step) reduction relations}
$\redbv,\redo\subseteq \PLambda \times \MDST{\PLambda}$   are defined in Fig.~\ref{fig:reductions}.  Observe that the probabilistic rules $\mapsto_{r\oplus,l\oplus}$ are  closed only under surface contexts, while  the reduction rule $\mapsto_{\beta_v}$ is closed under general context $\cc$ (hence $\PLambda^\cbv$ is a conservative extension of Plotkin's  CbV $\lambda$-calculus, see \ref{sub:translation}).
We denote by $\red$ the union $\redbv \cup \redo$.\\
\vspace*{-8pt}
	\begin{figure}[!h]\centering
{\scriptsize 
	\fbox{
	\begin{minipage}[c]{0.45\textwidth}
\[
\infer{\ccontext((\lambda x.M)V) \redbv \multiset{\ccontext(M[V/x])}}{	(\lambda x.M)V \mapsto_{\beta_v} M[V/x] & V\in \Val}
\]

\[
\infer{\ss(M\oplus N) \red_{\oplus} \multiset{\frac{1}{2}\ss(M), \frac{1}{2}\ss(N)}}{M \oplus N\mapsto_{l\oplus} M\quad M \oplus N \mapsto_{r\oplus}N}
\]
	\end{minipage}
}}\caption{Reduction Steps }\label{fig:reductions}	\end{figure}
\vspace*{-4pt}

\condinc{}{
\frame{
	\begin{minipage}[c]{0.45\textwidth}
		Closure under context (multidistributions) :
\[
\infer{\ccontext(M) \redbv \multiset{1\ccontext(N)}}{\ccontext(M) \redbv \ccontext(N)}
\]
\[
\infer{\ccontext(M\oplus N) \red_{\oplus} \multiset{\frac{1}{2}\ccontext(M), \frac{1}{2}\ccontext(N)}}{\ccontext(M \oplus N) \red_l \ccontext(M)\quad \ccontext(M \oplus N) \red_r \ccontext(N)}
\]
	\end{minipage}
}
We denote by $\red$ the union $\redbv \cup \redo$.
}

\paragraph{Lifting}\label{def:lift}
We lift  the reduction relation $\red\subseteq \PLambda\times \MDST{\PLambda} $ to a  relation   $\Red\subseteq \MDST{\PLambda}\times \MDST{\PLambda}$, as defined in Fig.~\ref{fig:lifting}. Observe that $\Red$ is a reflexive relation.\\
\vspace*{-8pt}
	\begin{figure}[!h]\centering
	{\scriptsize 
		\frame{
			\begin{minipage}[c]{0.45\textwidth}
				\[
				\infer[L1]{\mset{M}\Red \mset{M}}{} \quad \quad
				\infer[L2]{\mset{M}\Red \m}{M\red\m}  \quad  \quad 
				\infer[L3]{ \multiset{p_{i}M_{i}\mid i\in I} \Red  \sum_{\iI} {p_i\cdot \m_i}} 
				{(\mset{M_i} \Red \m_i)_{\iI} }
				\]
			\end{minipage}	  	
}}\caption{Lifting of $\red$ }\label{fig:lifting}\end{figure}
\vspace*{-4pt}

We define in the same way the lifting of any relation   ${\red}_{r}\subseteq \Lambda_\oplus \times \MDST{\Lambda_\oplus}$ to a binary relation ${\Red}_{r}$ on  $\MDST{\PLambda}$. In particular, we lift  $\redbv,\redo$ to  $\Redbv,\Redo$.

\condinc{}{\begin{remark*}We observe that the set of  rules above which define lifting can also be  stated in a more compact way as follows:	
	\begin{center}
			{\footnotesize $
		\infer[\texttt{Lift}]{\s+ \multiset{p_{j}M_{j}\mid \jJ} \Red \s + \sum_{\jJ} {p_j\cdot \m_j}}
		{(M_j\red \m_j)_{\jJ} }
		$ }
	\end{center}
\end{remark*}
}

	\paragraph{Reduction sequences}
	A $\Red$-sequence (\emph{reduction sequence}) from $\m$ is a 
	sequence $\m=\m_0, \dots, \m_i, \m_{i+1}, \dots$  such that $\m_{i} \Red\m_{i+1}$ ($\forall i$). 
	We write  $\m\Red^*\n$ to indicate that there is a \emph{finite}  sequence from $\m$ to $\n$, and $\seq \m$ 
	 for  an  \emph{infinite   sequence}.
	

\paragraph{$\beta_{v}$ equivalence}\label{eqbeta}
We write 
$\equbetav$ for  the transitive, reflexive and symmetric closure  of $\Redbv$;  abusing the notation, we will 
write   $M\equbetav N$ for $\mset{M} =_{\beta_v}\mset{N} $.

\paragraph{Normal Forms}\label{def:nf} $\Nnf$ denotes the set of $\red$-normal forms.
Given  $\red_r\in \PLambda\times\MDST{\PLambda}$, 
a term $M$ is in  $\red_r$-\emph{normal form} if $M\not\red_r$, \ie\ there is no $\m$ such that $M\red_r \m$.   
It is easy  to check that all closed $\red$-normal forms are values, however a value is not necessarily a $\red$-normal form.

\subsubsection{Full Lifting}\label{def:fulllift}	 The definition of lifting allows us  to apply a reduction step $\red$ to any number of  $M_i$ in the multidistribution $\m=\mset{p_iM_i\st \iI}$. 
If no $M_i$ is reduced, then $\m\Red\m$ (the relation $\Red$ is  reflexive).
Another important case is when \emph{all} $M_i$ for which a reduction step is possible are indeed reduced. 
%
This notion of \emph{full} reduction, denoted by $\full$, is defined as follows.
\begin{center}
	{\footnotesize 	
	$
	\infer {\mset{M}\full\mset{M}}{M\not \red} \quad \quad
	\infer{\mset{M}\full \m}{M\red\m}  \quad  \quad 
	\infer{ \multiset{p_{i}M_{i}\mid i\in I} \full \sum_{\iI} {p_i\cdot \m_i}} 
	{(\mset{M_i} \full\m_i)_{\iI} }
	$ }
\end{center}
Obviously, $\full \subset \Red$. 
Similarly to lifting, also the notion of full lifting can be extended to any reduction. For any ${\red}_{r}\subseteq \Lambda_\oplus \times \MDST{\Lambda_\oplus}$, its full lifting is denoted by $\full_r\subseteq \MDST{\PLambda}\times\MDST{\PLambda}$.
 The relation $\full$  plays an important  role in \ref{sec:left_evaluation}.
 
\subsection{$\PLambda^\cbv$ and the $\lambda$-calculus}	\label{sub:translation}\label{sec:tr_cbv}
A comparison between $\PLambda^\cbv$ and the $\lambda$-calculus is in order. 

Let  $\Lambda$ be  the set  of  $\lambda$-terms; we denote 
by $\Lambda^\cbn$  the CbN $\lambda$-calculus, equipped with the reduction $\red_{\beta}$ \cite{Barendregt}, and by $\Lambda^\cbv$ the  CbV $\lambda$-calculus, equipped with the reduction $\redbv$ \cite{PlotkinCbV}.

$\PLambda^\cbv$  is a conservative extension of $\Lambda^\cbv$. 
A translation
 $(\cdot)_{\lambda}: \Lambda_{\oplus} \rightarrow \Lambda$ can be defined as follows, where $z$ is a fresh variable which is  used by no term:
	$$
	\begin{array}{lcl|lcl}
	(x)_{\lambda}&=&x & 	\tr{MN}&=&\tr{M}\tr{N}\\
	\tr{M \oplus N}&= &z\tr{M}\tr{N}&
	\tr{\lambda x. M}&=& \lambda x. \tr{M}\\
	\end{array}
	$$
The translation is injective (if $\tr M=\tr N$ then $M=N$) and preserves values.
\begin{prop}[Simulation]\label{prop:translation} The translation is sound and complete.
Let $M,N \in \Lambda_{\oplus}$. 
	\begin{enumerate}
		\item $M \redbv N$ implies $\tr M \redbv \tr N$;
		\item $\tr M \redbv Q $ implies there is a (unique) $N$, with $Q=\tr N$ and $M \redbv N$. 	
	\end{enumerate}
\end{prop}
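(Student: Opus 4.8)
The plan is to reduce the statement to two compositionality properties of the translation and then prove each direction by a structural argument. The guiding intuition is that $\tr{\cdot}$ \emph{freezes} every probabilistic subterm $M\oplus N$ into the application $z\tr{M}\tr{N}$, whose head is the fresh variable $z$; since $z$ is a variable and is never bound nor substituted, such an application is \emph{inert}: it is never a $\beta_v$-redex, and it forms no $\beta_v$-redex with its surrounding context. Both directions of the proposition turn on isolating this inertness.

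First I would prove a \emph{substitution lemma}, $\tr{M[N/x]} = \tr{M}[\tr{N}/x]$, by induction on $M$. The only non-routine case is $M = P\oplus Q$, where the identity reduces to the fact that $z[\tr{N}/x] = z$ because $z$ is fresh (in particular $z\neq x$); the variable, abstraction and application cases are immediate from the definition. Since the translation preserves values, $V\in\Val$ gives $\tr{V}$ a value of $\Lambda$, so $\tr{V}$ may legitimately occupy the argument of a $\beta_v$-redex. I would also record that the translation \emph{reflects} values: the only $\PLambda$-terms whose translation is an abstraction are themselves abstractions (applications go to applications, and $\oplus$ goes to an application headed by $z$), so whenever $\tr{W}$ is a value of $\Lambda$ the term $W$ is already a value of $\PLambda$.

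For part 1 (soundness) I would extend $\tr{\cdot}$ to contexts with $\tr{\square}=\square$ and note $\tr{\cc(M)} = \tr{\cc}(\tr{M})$. A step $M\redbv N$ means $M = \cc((\lam x.P)V)$ and $N = \cc(P[V/x])$ for some value $V$. The redex translates to $(\lam x.\tr{P})\tr{V}$, a genuine $\beta_v$-redex of $\Lambda$ since $\tr{V}$ is a value, whose contractum is $\tr{P}[\tr{V}/x] = \tr{P[V/x]}$ by the substitution lemma; placing this inside $\tr{\cc}$ yields $\tr{M}\redbv\tr{N}$. For part 2 (completeness) I would argue, by induction on $M$, that every $\beta_v$-redex of $\tr{M}$ is the image of a genuine $\beta_v$-redex of $M$. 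The inertness observation is the crux: the applications introduced for $\oplus$ have the shape $z\,A\,B = (z\,A)\,B$, whose head subterms $z$ and $z\,A$ are not abstractions, so these are never $\beta_v$-redexes and the induction never picks one of them. Hence a redex $(\lam x.A)V$ in $\tr{M}$ comes from an application $(\lam x.P)W$ in $M$ with $\tr{P}=A$ and $\tr{W}=V$; as $V$ is a value, reflection of values gives $W\in\Val$, so this is a $\beta_v$-redex of $M$, and contracting it produces the required $N$ with $\tr{N}=Q$ (by the soundness computation). Uniqueness of $N$ follows from injectivity of $\tr{\cdot}$.

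The main obstacle is the reflection step in part 2: one must verify that the translation introduces no spurious $\beta_v$-redexes and that the position of a redex in $\tr{M}$ can be transported back to a position in $M$. This is exactly where the design of the translation pays off, mapping $\oplus$ to an application headed by a fresh variable makes those applications inert, so the induction goes through once the inertness of the $z$-headed applications is stated as a lemma. The remaining cases are routine, and the freshness of $z$ is what keeps both the substitution lemma and the reflection argument clean.
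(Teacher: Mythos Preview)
The paper does not supply a proof of this proposition; it is stated and immediately used, the verification being left as routine. Your argument is correct and is exactly the natural elaboration of that implicit reasoning: the substitution lemma $\tr{M[N/x]}=\tr{M}[\tr{N}/x]$ (with the $\oplus$-case handled by freshness of $z$), preservation and reflection of values, compositionality with contexts for part~1, and for part~2 the observation that the applications $z\tr{P}\tr{Q}$ introduced by the translation are inert since neither $z$ nor $z\tr{P}$ is an abstraction, so every $\beta_v$-redex of $\tr{M}$ lifts to one of $M$. Nothing is missing.
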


%

 \subsection{Discussion (Surface Contexts)}\label{sec:surface} 
  The notion of surface context which we defined  is familiar in the setting of $\lambda$-calculus: it corresponds to  \emph{weak evaluation}, which  we  discussed in \ref{mot:result}. 
  In  $\PLambda^\cbv$, the $\redbv$-reduction is \emph{unrestricted}.
 Closing the $\oplus$-rules under surface  context $\ss$ expresses the fact that the   $\oplus$-redex is  not reduced  under  
 $\lambda$-abstraction, nor in the scope of another $\oplus$. The former  is fundamental to  confluence: it means that a  function which samples from a distribution  can be duplicated, but we cannot pre-evaluate the sampling. The latter is a technical simplification, which we adopt to  avoid unessential burdens with associativity. To require no reduction in the  scope of $\oplus$ is very  similar to allow no reduction in the branches of an if-then-else.

%
%

\section{Confluence and Standardization }\label{sec:conf_and_st}
\subsection{Confluence}\label{sec:confluence}
We prove that $\PLambda^\cbv$ is confluent.
We modularize the proof   using the Hindley-Rosen lemma. The notions of commutation and $\diamond$-commutation which we use are reviewed in Sec.~\ref{sec:rel}.
\begin{lemma*}[Hindley-Rosen]
	Let $\red_{1}$ and $\red_{2}$  be binary relations  on the same set $\mathcal R$. Their union $\red_{1}\cup\red_{2}$ is confluent if both $\red_{1}$ and $\red_{2}$ are confluent, and 
		 $\red_{1}$ and $\red_{2}$ commute.
\end{lemma*}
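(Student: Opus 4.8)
The statement is the classical Hindley--Rosen lemma, and the plan is a diagram-tiling argument carried out at the level of reflexive--transitive closures. First I would abbreviate $P := \red_{1}^{*}$, $Q := \red_{2}^{*}$ and $U := (\red_{1}\cup\red_{2})^{*}$, and record the decomposition $U = (P \cup Q)^{*}$: since $P$ and $Q$ are reflexive and transitive and contain $\red_{1}$ and $\red_{2}$ respectively, every $(\red_{1}\cup\red_{2})$-reduction factors as a finite alternating concatenation of \emph{monochromatic blocks}, each block being either a $P$-reduction or a $Q$-reduction. This reformulation lets me measure reductions by their number of blocks and run inductions on that measure. In this language the hypotheses say exactly that the three \emph{elementary peaks} close: a peak of two $P$-blocks closes by confluence of $\red_{1}$, a peak of two $Q$-blocks closes by confluence of $\red_{2}$, and a mixed peak (one $P$-block, one $Q$-block) closes by the commutation hypothesis, yielding in each case a common reduct reached by blocks of the opposite colours.

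Next I would prove a strip lemma: if $a$ reduces to $b$ by a \emph{single} block (a $P$- or a $Q$-reduction) and $a \mathrel{U} c$ is arbitrary, then there is $d$ with $b \mathrel{U} d$ and with $c$ reaching $d$ by a single block \emph{of the same colour} as the $a$-to-$b$ block. This goes by induction on the number of blocks of $a \mathrel{U} c$: the base case is trivial, and in the inductive step I peel off the first block of $a \mathrel{U} c$, close the resulting elementary square using the appropriate one of the three hypotheses above, and apply the induction hypothesis to the residual tail, gluing with transitivity of $U$. The crucial bookkeeping is that closing an elementary square preserves the ``single block of the same colour'' shape on the new edge, which is precisely what makes this induction go through.

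Finally, confluence of $U$ --- which is exactly confluence of $\red_{1}\cup\red_{2}$ --- follows by a second induction on the number of blocks of one of the two given $U$-reductions: peel off its first block, apply the strip lemma to commute that block past the other reduction, then invoke the induction hypothesis on the shorter residual reduction and glue the two subdiagrams together using transitivity of $U$. I do not expect a genuine obstacle here, since the lemma is elementary; the single point requiring care is the induction architecture. One cannot close an $m$-block-versus-$n$-block peak in a single move, so the proof must be organized as the nested induction above (strip lemma, then main induction), and the decomposition $U = (P\cup Q)^{*}$ into monochromatic blocks must be made explicit \emph{before} any tiling, so that the three hypotheses are only ever applied to elementary (single-block-versus-single-block) squares.
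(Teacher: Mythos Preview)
Your proof is correct and is the standard diagram-tiling argument for the Hindley--Rosen lemma. Note, however, that the paper does not supply its own proof of this statement: it is stated as a classical lemma (unnumbered, without proof) and simply invoked as a tool. So there is nothing to compare against; your argument is the textbook one and is fine.
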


The following criterion  allows us to \emph{work pointwise}  in proving commutation and confluence of  binary relations on \emph{multidistributions}, namely $\Redbv$ and $\Redo$.
\begin{lemma}[Pointwise Criterion]\label{lem:pointwise}Let $\red_o,\red_b \subseteq \PLambda\times \MDST{\PLambda}$ and $\Red_o,\Red_b$ their lifting (as defined in \ref{def:lift}).
	Property (*) below implies that  $\Red_o,\Red_b$ $\diamond$-commute.
\begin{center}
		(*)	 If $M\red_b \n$ and $M\red_o \s, $ then  $\exists\r$ s.t. $\n\Red_o \r$ and $\s\Red_b \r$.
\end{center}
\end{lemma}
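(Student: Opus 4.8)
The plan is to prove the statement componentwise over the multidistributions and then aggregate, exploiting that a single lifting step $\Red$ may reduce some elements of a multidistribution while leaving others fixed (via the reflexive rule $L1$).

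First I would isolate an auxiliary fact on how lifting interacts with weighted sums. \textbf{Fact (A):} for any $\red_r$ with lifting $\Red_r$, if $\m_i \Red_r \m_i'$ for every $\iI$, then $\sum_{\iI} p_i\cdot\m_i \;\Red_r\; \sum_{\iI} p_i\cdot\m_i'$ whenever $\mset{p_iM_i\mid\iI}$ is a multidistribution. This follows directly from the lifting rules: by $L3$ each $\m_i\Red_r\m_i'$ unfolds into single-element reductions $\mset{N}\Red_r\m_N$ for the elements $N$ of $\m_i$; gathering all of these across $i$ and re-applying $L3$ to $\sum_i p_i\cdot\m_i$ produces $\sum_i p_i\cdot\m_i'$, the scalars $p_i$ entering only in the final aggregation.

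For the main argument, write $\m=\mset{p_iM_i\mid\iI}$ and suppose $\m\Red_b\n$ and $\m\Red_o\s$. By the definition of lifting there are families $(\m_i^b)$ and $(\m_i^o)$ with $\mset{M_i}\Red_b\m_i^b$ and $\mset{M_i}\Red_o\m_i^o$ for each $i$, and $\n=\sum_i p_i\cdot\m_i^b$, $\s=\sum_i p_i\cdot\m_i^o$. Since $\mset{M_i}$ is a singleton, only rules $L1$ and $L2$ apply, so each step is either \emph{reflexive} (e.g.\ $\m_i^b=\mset{M_i}$) or a \emph{genuine} step (e.g.\ $M_i\red_b\m_i^b$). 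I would then close a local diamond for each $i$, producing $\r_i$ with $\m_i^b\Red_o\r_i$ and $\m_i^o\Red_b\r_i$, by a four-case analysis: if both steps are reflexive, take $\r_i=\mset{M_i}$; if $M_i\red_b\m_i^b$ while the $o$-step is reflexive, take $\r_i=\m_i^b$ (then $\m_i^b\Red_o\m_i^b$ by reflexivity of $\Red_o$, and $\mset{M_i}\Red_b\m_i^b$ by $L2$); symmetrically if only the $o$-step fires; and if both are genuine, so $M_i\red_b\m_i^b$ and $M_i\red_o\m_i^o$, apply hypothesis $(*)$ to $M_i$ to obtain the required $\r_i$.

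Finally I would set $\r=\sum_{\iI}p_i\cdot\r_i$. Applying Fact (A) to the family $\m_i^b\Red_o\r_i$ gives $\n=\sum_i p_i\cdot\m_i^b\Red_o\r$, and applying it to $\m_i^o\Red_b\r_i$ gives $\s=\sum_i p_i\cdot\m_i^o\Red_b\r$; hence $\r$ closes the diamond and $\Red_o,\Red_b$ $\diamond$-commute. The one genuinely delicate point is the treatment of the reflexive components: because a single $\Red$-step need not touch every element, the diamond must close even when one relation fires on $M_i$ and the other leaves it untouched --- this is precisely cases two and three, and is why the reflexivity built into lifting ($L1$, recorded in the remark that $\Red$ is reflexive) is indispensable. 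Beyond that the work is the bookkeeping of Fact (A), where tracking the scalar weights is the natural place for a slip.
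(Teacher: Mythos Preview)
Your proof is correct and follows essentially the same approach as the paper's own proof: decompose $\m$ componentwise via the lifting rule $L3$, close each local diamond using either hypothesis $(*)$ or reflexivity ($L1$), and then reaggregate. The paper compresses your four-case analysis into a single sentence (``using either $(*)$ or reflexivity'') and leaves your Fact~(A) implicit in the final line ``Hence $\r=\sum_i p_i\cdot\r_i$ satisfies $(**)$'', but the structure is identical.
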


\begin{proof}We prove that (**) $\m\Red_b \n$ and $\m\Red_o \s$ imply  exists $\r$ s.t.  $\n\Red_o \r$ and $\s\Red_b \r$. 
Let $\m=\multiset{p_{i}M_{i}\mid \iI}$. By definition of lifting, for each $M_i$, we have $\mset{M_i}\Red_b\n_i$ and $\mset{M_i}\Red_o\s_i$, 
	with $\n=\sum p_i\cdot \n_i$ and $\s =\sum p_i\cdot \s_i$. 
	It is easily  checked,
	 \SLV{ using either  (*) or reflexivity,}{} that for each $M_i$, it exists $\r_i$ s.t. $\n_i\Red_o \r_i$ and $\s_i\Red_b \r_i$.
\SLV{}{If either $\mset{M_i}\Red_b \n_i$ or $\mset{M_i} \Red_o \s_i$ uses  reflexivity (rule $L1$), it is  immediate to obtain  $r_i$. Otherwise, $\r_i$ is given by 
	property (*).}
	Hence $\r = \sum_i p_i\cdot  \r_i$ satisfies (**). 
\end{proof}

We  derive confluence of $\Redbv$ from the same property in the CbV $\lambda$-calculus \cite{PlotkinCbV,RonchiPaolini}, using the simulation of Prop.~\ref{prop:translation}.

\begin{lemma}\label{lem:betaCR}
The reduction $\Redbv$  is confluent.
\end{lemma}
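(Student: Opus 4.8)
The plan is to reduce confluence of $\Redbv$ on multidistributions to confluence of $\redbv$ on plain terms, and the latter to confluence of the ordinary call-by-value $\lambda$-calculus via the simulation of Prop.~\ref{prop:translation}.

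First I would establish that $\redbv$ on the terms of $\PLambda$ is confluent, i.e. that $\redbv^*$ enjoys the diamond property. Suppose $M \redbv^* N_1$ and $M \redbv^* N_2$. Applying part (1) of Prop.~\ref{prop:translation} step by step, I get $\tr M \redbv^* \tr{N_1}$ and $\tr M \redbv^* \tr{N_2}$ in $\Lambda^\cbv$. Since the call-by-value $\lambda$-calculus is confluent \cite{PlotkinCbV,RonchiPaolini}, there is $Q \in \Lambda$ with $\tr{N_1} \redbv^* Q$ and $\tr{N_2} \redbv^* Q$. Now I pull these two reductions back into $\PLambda$ by iterating part (2) of Prop.~\ref{prop:translation} along each reduction: every single $\redbv$-step out of $\tr{N_1}$ (resp. $\tr{N_2}$) is the image of a unique $\redbv$-step in $\PLambda$, so I obtain $P_1,P_2 \in \PLambda$ with $N_1 \redbv^* P_1$, $N_2 \redbv^* P_2$ and $\tr{P_1} = Q = \tr{P_2}$. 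By injectivity of the translation $P_1 = P_2$, which closes the diagram and proves confluence of $\redbv$ on terms.

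Next I would lift this to $\Redbv$. The decisive observation is that the $\beta_v$-step is \emph{singleton-valued}: firing a $\beta_v$-redex always yields a one-element multidistribution (it is the probabilistic rule $\redo$, absent here, that splits probability). Consequently a reduction $\m \Redbv^* \n$ from $\m = \mset{p_i M_i \st \iI}$ amounts to reducing each component separately while keeping the weights fixed: $\n = \mset{p_i N_i \st \iI}$ with $M_i \redbv^* N_i$ for every $i$. Hence, given two divergent reductions $\m \Redbv^* \n^{(1)}$ and $\m \Redbv^* \n^{(2)}$, I read off $\n^{(k)} = \mset{p_i N_i^{(k)} \st \iI}$ with $M_i \redbv^* N_i^{(k)}$, apply the term-level confluence just proved at each index $i$ to get $P_i$ with $N_i^{(1)} \redbv^* P_i$ and $N_i^{(2)} \redbv^* P_i$, and recombine: $\r = \mset{p_i P_i \st \iI}$ is reached from both $\n^{(1)}$ and $\n^{(2)}$ by $\Redbv^*$. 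This is exactly a pointwise argument in the style of Lemma~\ref{lem:pointwise}, now carried out for the reflexive--transitive closure rather than for one-step diamonds.

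The main obstacle is this last lifting step, and specifically making the componentwise reading of $\Redbv^*$ rigorous: a single $\Redbv$-step may reduce several $M_i$ at once, and a multidistribution is a \emph{multiset}, so equal terms are kept as distinct list entries even though they are merged in the associated distribution. One must therefore track the index set $I$ carefully and check that the recombination respects the list structure and the weights $p_i$. What keeps this tractable --- and what separates $\redbv$ from the genuinely probabilistic $\redo$ --- is precisely singleton-valuedness: there is no probabilistic branching to synchronise across the two reductions, the weights are literally preserved, and the recombination is immediate once term-level confluence is in hand.
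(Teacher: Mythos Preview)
Your proposal is correct and follows essentially the same approach as the paper: both exploit the singleton-valued nature of $\redbv$ to reduce $\Redbv^*$ to componentwise $\redbv^*$, then use the translation of Prop.~\ref{prop:translation} together with confluence of Plotkin's CbV $\lambda$-calculus to close the diagram, pulling back via part (2) and injectivity. The only cosmetic difference is that you first isolate term-level confluence of $\redbv$ and then lift, whereas the paper carries out the translation componentwise in a single sweep; the content is the same.
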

\SLV{}{
\begin{proof}
Assume $\m\Redbv^* \n$ and $\m\Redbv^* \s$.
	We first observe that if $\m=\mset{p_iM_i\mid \iI}$, then $\n$ and $\s$ are respectively of the shape $\mset{p_iN_i\mid \iI} $, $\mset{p_iS_i\mid \iI} $, with $M_i\redbv^*N_i$ and $M_i\redbv^*S_i$. By Prop. \ref{prop:translation}, we can project such reduction sequences on $\Lambda^\cbv$, obtaining that for each $i\in I$, $(M_i)_{\lambda}\redbv^* (N_i)_{\lambda}$ and $(M_i)_{\lambda}\redbv^* (S_i)_{\lambda}$.
Since $\redbv$ in CbV $\lambda$-calculus is confluent, there are $R_{i}\in \Lambda$ such that  $(N_i)_{\lambda} \redbv^{*} R_{i}$ and $(S_i)_{\lambda} \redbv^{*}R_{i}$. By Prop. \ref{prop:translation}.2, for each $i\in I$ there is a unique $T_{i} \in \Lambda_{\oplus}$ such that $(T_{i})_{\lambda}=R_{i}$, and the proof is given.	
\end{proof}
}

\SLV{We  prove  that   $\Redo$ is confluent (indeed, diamond).}{We  prove  that the reduction $\Redo$ is diamond, i.e., the reduction diagram closes in one step.}
\begin{lemma}\label{lem:oplusCR}
	The  reduction $\Redo$ is diamond.
\end{lemma}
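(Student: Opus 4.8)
The plan is to invoke the Pointwise Criterion (Lemma~\ref{lem:pointwise}) instantiated with $\red_b = \red_o = \redo$: since $\diamond$-commutation of a relation with itself is exactly the diamond property, it suffices to establish the pointwise property (*), namely that whenever $M \redo \n$ and $M \redo \s$ there is $\r$ with $\n \Redo \r$ and $\s \Redo \r$. This moves the whole argument from multidistributions down to single terms, where a $\redo$-step is completely explicit.

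Next I would unfold what the two hypotheses mean on a term. Each $\redo$-step fires a probabilistic redex sitting in a surface context, so $M = \ss(A \oplus B)$ with $\n = \mset{\two \ss(A), \two \ss(B)}$, and $M = \tc(C \oplus D)$ with $\s = \mset{\two \tc(C), \two \tc(D)}$, for surface contexts $\ss, \tc$. The proof then splits on whether the two fired redexes are the same occurrence. If they coincide, then $\n = \s$ and I close the diagram with a reflexive $\Redo$-step (rule $L1$ of the lifting), taking $\r = \n$.

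The substantive case is when the two redexes are distinct occurrences, and this is the part I expect to be the main obstacle. The key structural observation is that two distinct surface $\oplus$-redexes must be disjoint (neither lies inside the other). This is forced by the shape of surface contexts $\ss ::= \square \mid M\ss \mid \ss M$: the two branches of an $\oplus$ are never surface positions, so neither redex can occur inside a branch of the other; since in a term two occurrences are always either nested or disjoint, distinctness yields disjointness. I would then note that a $\oplus$-step does not duplicate context material, so each redex survives, unchanged and still at a surface position, after the other is fired (the replacement at one occurrence does not touch the root-to-redex path of the other).

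Writing $(X,Y)$ for $M$ with its two redex occurrences replaced by $X$ and $Y$, firing the $\tc$-redex in both components of $\n$ gives $\n \Redo \mset{\frac{1}{4}(A,C), \frac{1}{4}(A,D), \frac{1}{4}(B,C), \frac{1}{4}(B,D)}$, and symmetrically firing the $\ss$-redex in both components of $\s$ yields the very same multidistribution. Taking $\r$ to be this common multidistribution closes the diagram in one $\Redo$-step on each side, which is exactly (*). The only care required here is checking that the probabilities multiply correctly (each coefficient becomes $\two\cdot\two = \frac{1}{4}$) and that the two resulting multidistributions agree as multisets up to reordering.
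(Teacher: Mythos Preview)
Your proposal is correct and follows essentially the same approach as the paper: both invoke the Pointwise Criterion with $\red_b=\red_o=\redo$, observe that distinct surface $\oplus$-redexes cannot overlap by the shape of surface contexts, and close the diagram by firing the remaining redex in each component. Your version is slightly more explicit (you separate out the trivial equal-redex case and spell out the $\tfrac{1}{4}$ coefficients), but the argument is the same.
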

\begin{proof}
We prove  that if $M\redo \n$ and  $M\redo \s$ , then  $\exists \r$ such that $\n\Redo \r$ and 
$\s\Redo \r$. The claim then  follows by  Lemma \ref{lem:pointwise}, by taking $\red_o~=~\red_b~=~\redo$.
%
Let $M=\ss(P\oplus Q) =\ss'(P'\oplus Q')$, $\n=\multiset{\frac{1}{2}\ss(P),\frac{1}{2}\ss(Q)} $ and
$\s=\multiset{\frac{1}{2}\ss'(P'),\frac{1}{2}\ss'(Q')}$. Because of  definition of surface context, the two $\oplus$-redexes do not overlap:  $P'\oplus Q'$ is  a subterm of $\ss$ and $P\oplus Q$ is  a subterm of $\ss'$. Hence we can reduce those redexes  in $\ss$ and $\ss'$,  to  obtain $\r$.
\end{proof}

We prove  commutation of  $\Redo$ and $\Redbv$ by proving a stronger property:  they  $\diamond$-commute. 
\begin{lemma}\label{lem:comm}The reductions 
 $\Redbv$ and $\Redo$ $\diamond$-commute.
\end{lemma}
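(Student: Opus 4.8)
The plan is to reduce the whole statement to a single-term claim by invoking the Pointwise Criterion (Lemma~\ref{lem:pointwise}) with $\red_b=\redbv$ and $\red_o=\redo$: it then suffices to establish property (*), i.e. that whenever $M\redbv\n$ and $M\redo\s$ there is $\r$ with $\n\Redo\r$ and $\s\Redbv\r$. So I fix a term $M$ together with the two contracted redexes, writing the $\beta_v$-step as $M=\cc((\lam x.R)V)\redbv\mset{\cc(R[V/x])}=\n$ and the $\oplus$-step as $M=\ss(P\oplus Q)\redo\mset{\two\ss(P),\two\ss(Q)}=\s$. The argument is then a case analysis on the relative position of the two redexes inside the term tree of $M$, and the lifting machinery of Fig.~\ref{fig:lifting} is what lets a diverging pair close in a single $\Redbv$/$\Redo$ step even when a redex gets duplicated.

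The key observation I would isolate first, and the point where the surface restriction does the real work, is that \emph{the $\beta_v$-redex cannot be a (strict) ancestor of the $\oplus$-redex}. Indeed, a proper subterm of $(\lam x.R)V$ lies either in the body $R$, hence under the binder $\lam x$, or in the argument $V$; but $V\in\Val$, so it is a variable (containing no redex) or an abstraction (all of whose subterms are again under a $\lambda$). In every case the $\oplus$-redex would be in the scope of a $\lambda$, contradicting that it occurs in a surface context $\ss$. This is exactly the configuration---duplicating an unevaluated choice sitting in the argument of a function---that breaks confluence in the naive calculus, and it is ruled out here by design.

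Two configurations then remain. \textbf{(a)} The $\oplus$-redex is an ancestor of the $\beta_v$-redex, so the latter occurs inside one branch, say $P$, giving $P\redbv\mset{P'}$ and $\n=\mset{\ss(P'\oplus Q)}$ with the $\oplus$-redex surviving at the same surface position. Taking $\r=\mset{\two\ss(P'),\two\ss(Q)}$, one $\oplus$-step yields $\n\Redo\r$, and contracting the $\beta_v$-redex in the $\ss(P)$-component of $\s$ (leaving $\ss(Q)$ via reflexivity) yields $\s\Redbv\r$. \textbf{(b)} The two redexes are disjoint, i.e. the $\beta_v$-redex lies in $\ss$, off the path to the hole. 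Then the $\beta_v$-step does not touch the path above the $\oplus$-redex, so the latter stays at a surface position; writing $\ss'$ for $\ss$ after the contraction, $\n=\mset{\ss'(P\oplus Q)}$. The $\oplus$-step, dually, copies the $\beta_v$-redex into both branches. With $\r=\mset{\two\ss'(P),\two\ss'(Q)}$, one $\oplus$-step closes the $\n$-side, while on the $\s$-side a single lifted $\Redbv$-step contracts the duplicated $\beta_v$-redex in both components at once (lifting rules $L2,L3$).

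The main obstacle is the disjoint case, where the $\oplus$-step duplicates the $\beta_v$-redex: I must check that the diamond still closes within a \emph{single} $\Redbv$-step---precisely what lifting to multidistributions provides---and that $\beta_v$-contraction inside $\ss$ preserves the surface position of the $\oplus$-redex (true since it leaves the root-to-hole path unchanged). Conversely, the genuinely dangerous symmetric phenomenon---a $\beta_v$-step duplicating the $\oplus$-redex---never arises, because that would force the $\oplus$-redex to sit inside the value $V$, which the surface restriction forbids; isolating and discharging this configuration is the conceptual heart of the proof.
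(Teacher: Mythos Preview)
Your argument is correct and follows the same plan as the paper: invoke the Pointwise Criterion and then case-analyse the relative positions of the two redexes, the crux being that the surface restriction forbids the $\beta_v$-redex from sitting above the $\oplus$-redex. The only cosmetic difference is that the paper packages the case analysis as an induction on $M$ (recursing into the common subterm when both redexes lie on the same side of an application), whereas you reason directly about the global positions of the redexes in the term tree; the content is the same.
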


\begin{proof}
By using  Lemma~\ref{lem:pointwise},  we only need to prove that 
if $M\redbv \n$ and $M\redo \s$, then $\exists \r$ such that  $\n\Redo \r$ and $\s\Redbv \r$.
The proof is by induction on $M$. 
\SLV{Observe that  $M$ cannot be  $\lam x.P$ or   $(\lam x.P') V$  because neither  can contain a  $\oplus$-redex.\\	
$\bullet$ Case $M= P\oplus Q$ is easy;   $M$ is the only possible  $\oplus$-redex.\\
$\bullet$	Case $M=PQ$. 	
 If the  $\beta_{v}$-redex is inside $P$, and the $\oplus$-redex inside $Q$, then
$PQ\redbv \multiset{P'Q}$ and  $PQ\redo \multiset{\two PQ',\two PQ''}$, with  $P\redbv P'$  and  $Q\redo \multiset{\two Q', \two Q''}$. Clearly  
$\r=\multiset{\two P'Q', \two P'Q''}$ satisfies the claim. 
 The dual case  is similar. 
 
 If  both redexes are inside  $Q$ (or $P$), we use induction. 
We write   $M$ as $\ss(Q)$; assume 
 $Q\redbv \mset{N}$, $Q\redo \mset{\two Q',	\two Q''}$, therefore   $\ss(Q)\redbv  \mset{\ss(N)}=\n$ and $\ss(Q)\redo \mset{\two \ss(Q'), \two \ss(Q'')}=\s$.  We 
 use the i.h. on $Q$ to obtain  $\r'=\mset{\two R', \two R''}$ such that 
 $\mset{N}\Redo\mset{\two R', \two R''}$, $\mset{Q'}\Redbv \mset{R'} $, $\mset{Q''}\Redbv \mset{R''} $.
 We conclude that  for  $\r=\mset{\two \ss(R'), \two \ss(R'')}$, it holds that $\n\Redo \r$ and $\s\Redbv \r$.
}
{
Cases $M=x$ and $M=\lam x.P$ are not possible given the  hypothesis.	
	\begin{enumerate}
		\item\label{case:sum} 
		Case $M= P\oplus Q$. $M$ is the only possible  $\oplus$-redex. Assume the $\beta_{v}$-redex is inside $P$ (the other case is similar), and that  $P\oplus Q\redbv \mset{P'\oplus Q}$, $P\oplus Q\redo \multiset{\two P,\two Q}$. It is immediate that  $\r=\multiset{\two P',\two Q}$ satisfies the claim.
		
		\item Case $M=PQ$. 	$M$ cannot have the form $(\lam x.P') V$ because neither $P$ nor $Q$ could contain a  $\oplus$-redex.
		\begin{enumerate}
			
			\item\label{case:disjoints} Assume that  the $\beta_{v}$-redex is inside $P$, and the $\oplus$-redex inside $Q$.
			We have   $PQ\redbv \multiset{P'Q}$ (with $P\redbv P'$), $PQ\redo \multiset{\two PQ',\two PQ''}$ (with $Q\redo \multiset{\two Q', \two Q''}$). It is immediate that 
			$\r=\multiset{\two P'Q', \two P'Q''}$ satisfies the claim. The symmetric case is similar.
			
\item\label{case:ind}
			Assume that both redexes are inside  $Q$.  Let us write   $M$ as $\ss(Q)$. Assume 
			$Q\redbv \mset{N}$, $Q\redo \mset{\two Q',	\two Q''}$, therefore   $\ss(Q)\redbv  \mset{\ss(N)}=\n$ and $\ss(Q)\redo \mset{\two \ss(Q'), \two \ss(Q'')}=\s$.  We 
			use the inductive hypothesis on $Q$ to obtain  $\r'=\mset{\two R', \two R''}$ such that 
			$\mset{N}\Redo\mset{\two R', \two R''}$, $\mset{Q'}\Redbv \mset{R'} $, $\mset{Q''}\Redbv \mset{R''} $.
			We conclude that  for  $\r=\mset{\two \ss(R'), \two \ss(R'')}$, it holds that $\n\Redo \r$ and $\s\Redbv \r$.

\end{enumerate}
	\end{enumerate}
}
\end{proof}

\begin{thm}
	The reduction $\Red$ is confluent.
\end{thm}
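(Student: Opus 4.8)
The plan is to obtain confluence of $\Red$ as an immediate instance of the Hindley-Rosen lemma, applied to the two component relations $\Redbv$ and $\Redo$ whose union generates $\Red$. First I would record that, as far as confluence is concerned, working with $\Red$ is the same as working with $\Redbv\cup\Redo$: since $\redbv,\redo\subseteq\red$ we have $(\Redbv\cup\Redo)^*\subseteq\Red^*$, while conversely any single $\Red$-step reduces some terms of a multidistribution by $\redbv$, some by $\redo$, and leaves the rest fixed, so it factors as a $\Redbv$-step followed by a $\Redo$-step. Hence $\Red^*=(\Redbv\cup\Redo)^*$, and confluence of one relation is confluence of the other.

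It then remains to verify the three hypotheses of Hindley-Rosen for $\Redbv$ and $\Redo$. Confluence of $\Redbv$ is exactly Lemma~\ref{lem:betaCR}. For $\Redo$, Lemma~\ref{lem:oplusCR} gives that it is diamond, and since diamond implies confluence (as recalled in Sec.~\ref{sec:rel}), $\Redo$ is confluent. Finally, the two relations commute: Lemma~\ref{lem:comm} establishes the stronger fact that they $\diamond$-commute, and $\diamond$-commutation implies commutation (again Sec.~\ref{sec:rel}). With all three hypotheses in place, the Hindley-Rosen lemma yields that $\Redbv\cup\Redo$, and therefore $\Red$, is confluent.

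I do not expect any genuine obstacle at this stage, because all the substantive work has already been discharged by the preceding lemmas and the theorem is essentially an assembly step. The real content lies upstream, in the commutation Lemma~\ref{lem:comm}: there the Pointwise Criterion (Lemma~\ref{lem:pointwise}) reduces the multidistribution-level statement to a term-level local-commutation property, which in turn holds precisely because the surface restriction on the $\oplus$-rules prevents a probabilistic redex from overlapping a $\beta_v$-redex lying under a $\lambda$-abstraction or inside the branches of a choice. If I were to stress-test the present argument, the only point I would double-check is the factorization $\Red^*=(\Redbv\cup\Redo)^*$, to be sure that mixing $\beta_v$- and $\oplus$-reductions within a single lifted step produces no diagram falling outside the scope of Hindley-Rosen.
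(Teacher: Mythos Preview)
Your proposal is correct and follows essentially the same route as the paper: apply Hindley--Rosen to $\Redbv$ and $\Redo$, citing Lemmas~\ref{lem:betaCR}, \ref{lem:oplusCR}, and \ref{lem:comm}. Your explicit check that $\Red^*=(\Redbv\cup\Redo)^*$ is a detail the paper leaves implicit (it simply treats $\Red$ as the union), but it is sound and harmless.
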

\begin{proof}
	By  Hindley-Rosen, from Lemmas \ref{lem:comm}, \ref{lem:betaCR}, and \ref{lem:oplusCR}.
\end{proof}

\SLV{Let us call $\n$  an $\Nnf$-multidistribution if $\n\in \MDST{\Nnf}$.}
{Let us call $\n$ an $\Nnf$-multidistribution if $\n\in \MDST{\Nnf}$
\ie\ $\n=\mset{p_iM_i}$  and all $M_i$ are $\red$-normal forms.}
The following fact is an immediate consequence of  confluence: 
\SLV{	\begin{center}
		\emph{"If   $\m$ reduces to a  $\Nnf$-multidistribution, it is unique."}
\end{center} }
{
\begin{fact*}\label{cor:unicity}
	The   $\Nnf$-multidistribution   to which  $\m$ reduces, if any, is unique. 
\end{fact*}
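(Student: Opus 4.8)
The plan is to read this off directly from confluence of $\Red$ (the Theorem just proved), together with the observation that an $\Nnf$-multidistribution admits no nontrivial reduction step. Suppose $\m\Red^*\n$ and $\m\Red^*\n'$ with both $\n,\n'\in\MDST{\Nnf}$; I want to conclude $\n=\n'$.

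First I would isolate the auxiliary claim that if $\n\in\MDST{\Nnf}$ and $\n\Red^*\r$, then $\r=\n$. This is where the (small) technical content lies, and it is the point to be careful about. Writing $\n=\mset{p_iM_i\st \iI}$ with every $M_i$ a $\red$-normal form, I inspect the lifting rules of Fig.~\ref{fig:lifting}. For a single normal form $M_i$, rule $L2$ is inapplicable, since by definition of normal form there is no $\m_i$ with $M_i\red\m_i$; hence the only available step is the reflexive rule $L1$, giving $\mset{M_i}\Red\mset{M_i}$. Recombining these through $L3$, the only one-step reduct of $\n$ is $\sum_{\iI}p_i\cdot\mset{M_i}=\n$ itself, and an induction on the length of the reduction sequence yields $\r=\n$.

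With this claim in hand I would finish by applying confluence of $\Red$: from $\m\Red^*\n$ and $\m\Red^*\n'$ there is a common reduct $\r$ with $\n\Red^*\r$ and $\n'\Red^*\r$. The auxiliary claim applied to $\n$ gives $\r=\n$, and applied to $\n'$ gives $\r=\n'$, whence $\n=\n'$. The main (and only) subtlety is that $\Red$ is reflexive, so an $\Nnf$-multidistribution is \emph{not} a normal form in the literal sense $\n\not\Red$; the statement one actually uses is the weaker ``$\n$ reduces only to itself,'' which is exactly what the auxiliary claim records. Beyond this the argument is routine.
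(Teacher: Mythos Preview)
Your proposal is correct and is precisely the standard argument the paper has in mind when it says the fact is ``an immediate consequence of confluence''; the paper gives no further proof. You have simply made explicit the one subtlety the paper glosses over, namely that because $\Red$ is reflexive an $\Nnf$-multidistribution is not literally $\Red$-normal but only reduces to itself, which is exactly what is needed to close the confluence diagram.
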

}

\subsubsection{Discussion}\label{discussion:conf}
While immediate, the  above  fact  is hardly useful, for two reasons. 
First, we know that probabilistic termination is  not necessarily reached in a finite number of steps; the relevant notion is not  that $\m\Red^*\n$ $\in \MDST{\Nnf}$,  but rather  that of a distribution which is defined as limit by the sequence $\seq \m$.
Secondly,  in  Plotkin's CbV calculus the result of computation is formalized by the notion of value, and considering normal forms as values is unsound (\cite{PlotkinCbV}, page 135).
In Section~\ref{sec:uniquelim} we introduce a suitable notion of limit distribution,  and  study   the implications of confluence on it.

\condinc{}{We will see, in Section \ref{sec:uniquelim}, that  confluence is a first step in proving that, in $\Lambda_{\oplus}$, the result of a \blue{probabilistically terminating} computation is still unique.   }

\condinc{}{
\subsection{Diamond Properties}
It is easy to revisit the proof in Sec.~\ref{sec:proof_conf} to establish the following

\begin{prop}[Diamonds] (1.) The  reduction $\sRed$ is diamond; (2.) the reduction $\sfull$ is diamond.
\end{prop}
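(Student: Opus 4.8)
The plan is to revisit the confluence argument of this section, observing that the restriction to \emph{surface} contexts upgrades the mere confluence of $\sred$ to the full diamond property. The single new ingredient I would isolate first is a \emph{disjointness} observation: any two \emph{distinct} surface redexes in a term are non-overlapping. Indeed, a surface $\beta_v$-redex $(\lam x.M)V$ has, as proper subterms, only the body $M$ (which sits under a $\lam$) and the argument $V$ (which, being a value, is a variable or an abstraction, hence again carries redexes only under a $\lam$); likewise an $\oplus$-redex $P\oplus Q$ carries redexes only inside $P$ or $Q$, i.e. under $\oplus$. Since surface contexts $\ss$ never enter the scope of a $\lam$ or of an $\oplus$, no surface redex can occur strictly inside another one. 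Consequently two distinct surface redexes are always disjoint, and --- crucially --- reducing one of them neither duplicates nor erases the other: a $\beta_v$-step copies only its value argument $V$ (which contains no surface redex), an $\oplus$-step discards a branch $P$ or $Q$ (again containing no surface redex), while the other redex lives in the surrounding surface context and is therefore preserved intact in every resulting term.

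For part (1), I would apply the Pointwise Criterion (Lemma~\ref{lem:pointwise}) with $\red_o=\red_b=\sred$, so that $\sRed$ $\diamond$-commutes with itself reduces to the term-level statement: if $M\sred\n$ and $M\sred\s$ then there is $\r$ with $\n\sRed\r$ and $\s\sRed\r$. If the two fired redexes coincide then $\n=\s$. Otherwise they are disjoint by the observation above, and I close the diamond by reducing, in each term of $\n$, the untouched residual of the redex fired to obtain $\s$, and symmetrically from $\s$. When both steps are $\beta_v$ this yields a singleton multidistribution on each side; when one or both are $\oplus$-steps the residual is reduced in every branch, and the scalar factors $\two$ multiply correctly, so the two sides meet at the \emph{same} $\r$. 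The purely probabilistic sub-case (both redexes $\oplus$) is exactly Lemma~\ref{lem:oplusCR}.

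For part (2), full surface reduction $\sfull$ fires one surface redex in \emph{every} non-normal term of a multidistribution. By the same factorisation used in the Pointwise Criterion, the diamond for $\sfull$ reduces to the term level: given $\mset{M}\sfull\m_1$ and $\mset{M}\sfull\m_2$, if $M$ is surface-normal both sides are $\mset{M}$; otherwise the two steps fire redexes $R_1\neq R_2$, necessarily disjoint, and I take a full step from $\m_1$ firing the residual of $R_2$ in each of its terms, and symmetrically from $\m_2$. Disjointness guarantees these meet at a common $\r$ (a single term when both are $\beta_v$, a two- or four-element multidistribution when $\oplus$-steps are involved). The step I expect to demand the most care is precisely this disjointness observation together with the bookkeeping for $\oplus$-steps: one must check that the redex not being fired always sits in the surrounding surface context, so that it is neither erased with a discarded $\oplus$-branch nor duplicated by a $\beta_v$-substitution, and that the probabilities recombine so that both reduction orders reach the \emph{same} multidistribution, not merely a common reduct.
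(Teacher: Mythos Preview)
Your argument is correct, and the disjointness observation is exactly the right insight: because a surface context never enters the scope of a $\lambda$ or of a $\oplus$, two distinct surface redexes are necessarily non-overlapping and non-interfering, which gives the one-step closure directly.

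The paper proceeds differently, in a modular (Hindley--Rosen) style: it shows separately that $\sRedbv$ is diamond (because weak $\beta_v$ satisfies the ``equal or close in one step'' property), recalls that $\Redo$ is diamond (Lemma~\ref{lem:oplusCR}), and invokes the $\diamond$-commutation of $\Redbv$ with $\Redo$ (Lemma~\ref{lem:comm}); part~(2) repeats the same three checks for the full liftings. Your disjointness observation is in effect the common explanation behind all three ingredients, so your route is more uniform and arguably more illuminating; the paper's route, on the other hand, reuses the lemmas already proved for confluence without any new case analysis.

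One point in your part~(2) deserves a word of care. The Pointwise Criterion (Lemma~\ref{lem:pointwise}) is stated for the \emph{reflexive} lifting $\Red$, whose proof uses rule~$L1$ to handle components where nothing needs to happen. Full lifting $\sfull$ is not reflexive on non-normal terms, so the ``same factorisation'' does not apply literally: when you recombine the term-level diamonds across a multidistribution $\m=\mset{p_iM_i}$, the components where both sides fired the \emph{same} redex (so $\m_1^{(i)}=\m_2^{(i)}$, a case you skipped by writing ``$R_1\neq R_2$'') still need to take a step if they are not surface-normal. This is harmless because $\sfull$ is total---every multidistribution has a $\sfull$-successor---so you may fire an arbitrary identical step on both sides; but it should be said. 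The paper's proof acknowledges the same wrinkle by writing that one must ``go through the proof'' of Lemma~\ref{lem:pointwise} for the full lifting rather than citing it directly.
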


\begin{proof}(1.) $\sRedbv$ 
	is diamond, because weak  $\redbv$ evaluation satisfies the following property: $b \leftarrow a \rightarrow c$ implies $ b=c$ or $\exists d$ s.t. $b \rightarrow d \leftarrow c  $.
	Together with Lemma.~\ref{lem:oplusCR} and Lemma~\ref{lem:comm}, this  immediately implies that $\sRed$ is diamond.
	
	(2.) is a consequence of the following facts.
	\begin{itemize}
		\item 
		$\sfullbv$ 
		is diamond, for the same reason  $\sRedbv$ is.
		\item If it is immediate to revisit the proof of Lemma~\ref{lem:oplusCR} to check that $\fullo$ is diamond.
		\item Similarly, if we go thorugh the proof  of Lemma~\ref{lem:pointwise}, it is immediate to check that 
		$\sfullbv$ and $\fullo$ 
		$\diamond$-commute. 
	\end{itemize}
\end{proof}
}


\condinc{}{
\CF{
Annoto  qui altri 2  punti - non necessariamente da inserire, piu' per noi \\
\begin{enumerate}
	\item 
	Confluence implies that
	the \emph{result} of the calculus is unique and independent from the evaluation order. 
	If for  result we mean "$M$ can reach a value", confluence implies that if $M\red^*V\in \Val$  and    $M\red^*P$ then $P$ also can reach a value.
	Nel caso del CbV, il risultato non e' tanto lo specifico valore, ma il fatto che un valore puo' essere raggiunto (eval e' difinito).
	
	\item Forse ci conviene restare con la formulazione neutra ma corretta "confluence implies that   for each term,  the result of a terminating computation is unique." A Limit distribution sembra un buon analogo di terminating computation.
\end{enumerate}	
}
}

\subsection{A Standardization Property}\label{sec:finitary_stCBV}

In this section, we first introduce  surface and left reduction as strategies for $\Red$. In the setting of the CbV $\lambda$-calculus,  the former corresponds to weak reduction,
the latter  to  the standard  strategy  originally defined in \cite{PlotkinCbV}.
We then establish  a   standardization result,  namely that every \emph{finite} $\Red$-sequence can be partially ordered as a sequence in which all  surface reductions are performed first. A counterexample shows that in $\PLambda^\cbv$, a  standardization result using left reduction fails.

\subsubsection{Surface and Left Reduction}\label{sec:strategies}

%
%

We remind the reader that in the $\lambda$-calculus, a \emph{deterministic strategy} defines   a function from terms to redexes,  associating to every term the next redex to be reduced. More  generally, we call reduction \emph{strategy}  for $\red$   a reduction \emph{relation}  $\red_a$ such that 
$\red_a \subseteq \red$. The notion of strategy can be easily formalized through the notion of context. With this in mind, let us consider  surface and left  contexts.

\begin{itemize}	
	\item  \emph{Surface contexts} $\ss$
	have been defined in Sec.\ref{def:language}.
	\item   \emph{Left contexts}  $\lc$ are defined by the following grammar:
	\[ \lc ::= \square \mid \lc M \mid V \lc \]
	Note that in particular a left contexts is a surface context.
	\item We call \emph{surface reduction}, denoted by $\sred$ (with lifting $\sRed$) and \emph{left reduction}, denoted by $\lred$ (with lifting $\lRed$), the closure of 
	the reduction rules  in Fig.~\ref{fig:rules} under surface contexts and left contexts, respectively.  It is clear that   $\sred ~= ~\sredbv \cup \redo$. Observe that $\lred \subsetneq\sred$.
	\item  A reduction step  $M\red \m$ is \emph{deep}, written $M\dred \m$, if it is \emph{not}  a surface step. A reduction step is  \emph{internal} (written $M\ired \m$) if it is \emph{not} a left step. Observe that $\dred \subset \ired$.	
\end{itemize}

\begin{example}
	\SLV{
($\lred \subsetneq\sred$)
Let $M = x(II)(II)$, where $I=\lambda x.x$. Then $M \sred \multiset{x I(II)}$ and $M \sred \multiset{x (II)I}$; instead, 
$M \lred \multiset{x I(II)}$,   $M \not\lred \multiset{x (II)I}$.}
{
\begin{itemize}	
\item ($\lred \subsetneq\sred$)
	Let $M = x(II)(II)$, where $I=\lambda x.x$. Then $M \sred \multiset{x I(II)}$ and $M \sred \multiset{x (II)I}$; instead, 
	$M \lred \multiset{x I(II)}$,   $M \not\lred \multiset{x (II)I}$.
	\item ($\dred \subsetneq \ired$) Let $M=(\lambda x. II)(II)$. Then $M\ired (\lambda x. I)(II)$ and $M \ired (\lambda x. II)I$, while 
	$M\dred (\lambda x. I)(II)$ and $M \not\dred (\lambda x. II)I$
		\end{itemize}
	}
\end{example}
Intuitively, left reduction chooses the leftmost of the surface redexes. More precisely, this is the case for closed terms (for example, the term $(xx)(II)$ has a $\sred$-step, but no  $\lred$-step).

\emph{Surface  Normal Forms:}  
We denote by   $\Snf^\cbv$    the set of $\sred$-normal forms. We observe that all values are surface normal forms (but the converse does not hold):
$\Val \subsetneq \Snf^\cbv$\SLV{.}{ (and $\Nnf\subsetneq \Snf^\cbv$).} 
The situation is different if we restrict ourselves to close term, in fact the following result holds, which is easy to check.
	\begin{lemma}\label{lem:closed}If  $M$ is a \emph{closed} term, the following three are equivalent: 
		\SLV{		(i) $M$ is a $\sred$-normal form; (ii)	$M$ is a $\lred$-normal form; (iii) $M$ is a value.}
		{\begin{enumerate}
				\item $M$ is a $\sred$-normal form;
				\item $M$ is a $\lred$-normal form;
				\item $M$ is a value.
		\end{enumerate}}
	\end{lemma}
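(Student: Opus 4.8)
The plan is to establish the equivalence by closing the cycle (iii)~$\Rightarrow$~(i)~$\Rightarrow$~(ii)~$\Rightarrow$~(iii). Of these, (i)~$\Rightarrow$~(ii) is immediate: since every left context is a surface context we have $\lred\,\subseteq\,\sred$, so if $M$ admits no $\sred$-step it admits no $\lred$-step. The implication (iii)~$\Rightarrow$~(i) is a short structural remark, and all the content sits in (ii)~$\Rightarrow$~(iii).

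For (iii)~$\Rightarrow$~(i) I would argue that a \emph{closed} value cannot be a variable, hence must be an abstraction $\lam x.N$. An abstraction is neither a $\beta_v$-redex nor a $\oplus$-redex, and the surface-context grammar $\ss ::= \square \mid M\ss \mid \ss M$ has no clause entering the body of a $\lam$; therefore $\lam x.N$ cannot be written as $\ss(R)$ with $R$ a redex, and is a $\sred$-normal form.

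The core is (ii)~$\Rightarrow$~(iii), which I would prove in contrapositive form: \emph{if $M$ is closed and not a value, then $M$ has a $\lred$-step.} The argument is by structural induction on $M$, using that the immediate subterms of a closed application $PQ$ or a closed choice $P\oplus Q$ are themselves closed. If $M = P\oplus Q$, then $M$ is itself a $\oplus$-redex and reduces under the empty left context $\square$, giving $M\lred\mset{\two P,\two Q}$. If $M = PQ$, I split on whether $P$ is a value. If $P$ is not a value, the induction hypothesis gives a $\lred$-step of $P$, i.e. $P=\lc'(R)$ for a redex $R$ and a left context $\lc'$; then $\lc' Q$ is again a left context (clause $\lc M$), and contracting $R$ in it yields a $\lred$-step of $PQ$. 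If $P$ is a value, closedness forces $P=\lam x.M'$. Should $Q$ not be a value, the induction hypothesis gives $Q=\lc''(R)$, and $P\lc''$ is a left context (clause $V\lc$, applicable exactly because $P$ is a value), so $PQ$ again reduces. Finally, if $Q$ is a value $W$ as well, then $PQ=(\lam x.M')W$ is a $\beta_v$-redex contracted under $\square$. In every case $M$ has a $\lred$-step.

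The only delicate point—and the place where closedness is genuinely used—is the last branch: I need a value $P$ in head position of $M=PQ$ to be an \emph{abstraction}, so that $PQ$ with both components values is a genuine $\beta_v$-redex. This fails for open terms, which is precisely why the lemma is stated for closed $M$; the paper's remark on $(xx)(II)$ exhibits an open $\lred$-normal form that is neither a value nor a $\sred$-normal form. Matching the two left-context clauses $\lc M$ and $V\lc$ to the two argument positions, and checking that the contracted redex is literally the same before and after embedding, are the routine verifications that complete the induction.
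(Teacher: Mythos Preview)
Your proof is correct. The paper does not give a proof of this lemma, stating only that it ``is easy to check'', so there is no detailed argument to compare against; the structural induction you give for (ii)~$\Rightarrow$~(iii) is exactly the natural one. One minor remark: the implication (iii)~$\Rightarrow$~(i) in fact holds for \emph{all} terms, not only closed ones (the paper records $\Val\subsetneq\Snf^\cbv$ just above the lemma), so you need not invoke closedness there; closedness is genuinely required only in (ii)~$\Rightarrow$~(iii), precisely at the point you identify.
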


%

\subsubsection{Finitary Surface Standardization}\label{sec:finitaryst}
The next theorem proves a standardization result,  in the sense that every \emph{finite} reduction sequence can be (partially) ordered in  a sequence of surface  steps followed by a sequence of deep steps.

\begin{thm}[Finitary Surface Standardization]\label{thm:surfacestandard} In $\PLambda^\cbv$, if $\m\Red^*\n$ then exists $\r$ such that $\m \sRed^* \r$ and $\r \dRed^* \n$.
\end{thm}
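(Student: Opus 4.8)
The plan is to prove a standardization-by-swapping result, pushing every surface step before every deep step. The engine is a local \emph{swap lemma}: a deep step immediately followed by a surface step can be turned into a surface step followed by deep steps. The key structural observation I would isolate first is that deep steps are necessarily $\beta_v$-contractions occurring under a $\lambda$ or inside the scope of a $\oplus$ (recall that $\redo$ is surface by definition), whereas a surface redex lies, by definition, outside every such binder. Hence the two kinds of redex can never coincide; moreover a deep contraction can neither destroy a surface redex nor create a new one, since it only rewrites a subterm frozen under a $\lambda/\oplus$ and leaves the surface skeleton untouched. The one genuine complication is duplication: a surface step $(\lambda x.A)V \sred \mset{A[V/x]}$ may copy the value $V$, and $V$ may contain deep redexes.

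To absorb this duplication I would work with \emph{parallel} deep reduction $\dparRed$ rather than single deep steps; since a single deep step is a singleton $\dparRed$ and $\dparRed \subseteq \dRed^{*}$, nothing is lost for the statement. The crux is the term-level swap: if $M \dparred \mset{M^{\circ}}$ (a single term, because deep reduction is $\beta_v$ and hence deterministic) and $M^{\circ} \sred \s$, then there is $\r$ with $M \sred \r$ and $\r \dparRed \s$. By the observation above, the surface redex contracted in $M^{\circ}$ is already present in $M$; contracting it first yields $\r$, in which the residuals of the deep redexes have possibly been duplicated (across the two branches of a $\oplus$, or across copies of a substituted value) or erased. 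Contracting \emph{all} these residuals in one parallel step $\r \dparRed \s$ then closes the diagram. This is a finite-development/residual argument, and it is the step I expect to require the most care: I must verify that the complete development of the residuals lands exactly on $\s$, and that erasure is handled by a (possibly empty) parallel step.

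I would then lift the swap to multidistributions by a pointwise argument in the style of Lemma~\ref{lem:pointwise}. Writing $\m=\mset{p_iM_i}$, a step $\m \dparRed \n$ amounts to $M_i \dparred \mset{M_i^{\circ}}$ for each $i$ (so $\n=\mset{p_iM_i^{\circ}}$), and $\n \sRed \s$ to $\mset{M_i^{\circ}}\sRed \s_i$ with $\s=\sum_i p_i\cdot\s_i$; applying the term-level swap componentwise gives $\r_i$ with $\mset{M_i}\sRed \r_i$ and $\r_i \dparRed \s_i$, and recombining as $\r=\sum_i p_i\cdot\r_i$ yields $\m \sRed \r$ and $\r \dparRed \s$. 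Thus on multidistributions the adjacent swap $\dparRed \, \sRed \subseteq \sRed \, \dparRed$ holds, producing exactly one surface step and one parallel deep step out of one parallel deep step and one surface step.

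Finally I would assemble the theorem. Given $\m \Red^{*}\n$, I first \emph{refine} it, using reflexivity (rule $L1$) and rule $L3$ of Fig.~\ref{fig:lifting}, into a sequence of steps each of which reduces a single component by a single redex; each such step is either one surface step $\sRed$ or one deep step, and every deep step is read as a singleton $\dparRed$. Viewing this sequence as a word over $\{\sRed,\dparRed\}$, the adjacent swap $\dparRed\,\sRed \to \sRed\,\dparRed$ is a transposition that keeps the number of surface steps and deep blocks fixed (or drops an empty deep block in the erasing case); repeatedly rewriting the leftmost inversion terminates because the number of pairs formed by a $\dparRed$ preceding a later $\sRed$ strictly decreases. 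The sorted sequence has the form $\m \sRed^{*}\r$ followed by $\r \dparRed^{*}\n$, and since $\dparRed^{*}\subseteq \dRed^{*}$ we conclude $\m \sRed^{*}\r \dRed^{*}\n$. The main obstacle throughout is the duplication of deep redexes by surface substitution; routing it through $\dparRed$, so that each swap keeps the deep work packaged in a single parallel step and the step counts under control, is precisely what makes the termination of the sorting transparent.
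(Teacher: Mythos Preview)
Your structural observation that deep steps leave the surface skeleton untouched is correct, but the swap lemma you state is too strong. You claim that $M \dparred \mset{M^\circ}$ and $M^\circ \sred \s$ yield $M \sred \r$ with $\r \dparRed \s$, i.e.\ \emph{one} surface step followed by one deep parallel step. This fails: after performing the surface step first, the residuals of the formerly deep redexes need not remain deep. Take $M=(\lambda x.\,II)V$ with $I=\lambda z.z$ and $V$ a value: the redex $II$ is deep (it sits under the $\lambda$), so $M \dparred \mset{(\lambda x.\,I)V}$, and then $(\lambda x.\,I)V \sredbv \mset{I}$. Swapping, the only surface step from $M$ gives $\r=\mset{II}$; but $II$ has \emph{no} deep redex at all---its unique redex is now at a surface position---so $\r \dparRed \mset{II} \neq \mset{I}$, and a \emph{second} surface step is required. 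The same phenomenon arises with $\oplus$: in $M=((\lambda x.x)y)\oplus Q$ the $\beta_v$-redex is deep, but after firing the surface $\oplus$ it surfaces in the left branch.

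The correct local swap is therefore $\dparRed\;\sRed \subseteq \sRed^{*}\;\dparRed$ (this is exactly Lemma~\ref{lem:post}, relying on Lemma~\ref{lem:basicst} for the underlying CbV $\lambda$-calculus). This breaks your termination argument: the number of surface steps is not preserved by a swap, and the newly created surface steps appear \emph{before} the deep block you just moved, hence \emph{after} every earlier deep block, so your inversion count can increase. The paper avoids a global sorting argument altogether: it first strengthens the swap to $\dparRed\;\sRed^{*}\subseteq \sRed^{*}\;\dparRed$ (Corollary~\ref{cor:post}, by induction on the length of the $\sRed^{*}$ segment), and then proves the theorem by induction on the length of the original $\Red^{*}$-sequence---peel off the first step, split it componentwise into a surface part and a single deep part, apply the inductive hypothesis to the tail, and use the corollary to push that one deep block past the surface prefix produced by the IH.
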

\begin{proof}
	We build on an  analogous  result for CbV $\lambda$-calculus,  which is folklore and is  proved explicitly  in 
	\SLV{\cite{long}}{Appendix~\ref{sec:finitary_stCBV}}. We then only need to check that deep steps commute with  $\oplus$-steps, which is straightforward technology \SLV{(\cite{long} gives the full proof)}{(the full proof is in  Appendix~\ref{sec:finitary_stCBV})}.
\end{proof}

\paragraph{Finitary Left Standardization does not hold}	
The following statement  \emph{is false} for $\PLambda^\cbv$.
\begin{center}
	\emph{``If $\m\Red^*\n$ then there exists $\r$ such that $\m \lRed^* \r$ and $\r \iRed^* \n$."}
\end{center}

\begin{example}[Counter-example]\label{counterex:standard}Let us consider the following sequence,  where $I=\lam x.x$ and
	 $M=(II)((\lam x. y\oplus z)I)$.
	$\mset{M}\iRed \mset{(II)(y\oplus z )}$$\Red_{\oplus} \mset{\two (II)y, \two (II)z} \Redbv \mset{\two Iy,\two (II)z}$. 
	If we  anticipate the reduction of $(II)$, we have $M\lredbv \mset{I ((\lam x.y\oplus z)I)}$, from where we cannot reach $\mset{\two Iy,\two (II)z}$. Observe that the sequence  is already  surface-standard!
\end{example}

\section{Asymptotic Evaluation}\label{sec:asymptotic}
The specificity of probabilistic computation is to be concerned with asymptotic behavior; the focus is not  what happens 
after a finite number $n$ of steps, but when $n$ tends to infinity.
In this section, we study  the  asymptotic behavior of $\Red$-sequences with respect to evaluation.  
The intuition is that  a reduction sequence defines a distribution on  the possible outcomes of the program.
We first  clarify what is the outcome of evaluating a probabilistic term, and then 
we formalize the idea of  result ``at the limit" with  the notion of \emph{limit distribution} (Def.~\ref{def:limits}). In Sec.~\ref{sec:uniquelim} we  investigate 
how   the asymptotic result of different sequences starting from the same $\m$ compare.\\

We recall that to each multidistribution $\m$ on $\PLambda$  is  associated a probability distribution $\mu\in \DST{\PLambda}$   (see Sec.\ref{sec:multi}). We   use  the following \textbf{letter convention}: given a multidistribution
$\m,\n,\r, ...$ we denote the associated distribution by the corresponding Greek letter $\mu,\nu,\rho, ...$
If $\seq{\m}$  is  a $\Red$-sequence, then  $\seq \mu$ 
is the sequence of associated distributions.

\subsection{Probabilistic Evaluation}
We start by studying  the property of being valuable (\ref{sec:valuable}) and by analyzing some examples (\ref{sec:resultCbV}). This  motivates the more general approach we introduce in \ref{sec:limit}.

\subsubsection{To be valuable}	\label{sec:valuable}
In the CbV $\lambda$-calculus, the key property of a term $M$ is \emph{to be valuable, i.e., $M$ can reduce to a value. 
	To be valuable is a \emph{yes/no} property,  whose  probabilistic  analogous  is \emph{the probability to reduce to a value}. }
If  $\m$ describes the result of a computation step, the probability that such a result is a value is simply 
$\mu (\Val):=\sum_{V\in \Val} \mu(V)$, \ie\ the probability of the event $\Val\subset \PLambda$.
Since the set of values is closed under reduction, the following property holds: 
\begin{fact}\label{rem:value}
	If $V\in \Val$ and  $V\red \m$, then $\m = \mdist{W}$, with $W\in \Val$, and $V \redbv \mdist{W}$.
\end{fact}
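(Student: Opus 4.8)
The plan is to proceed by a case analysis on the shape of the value $V$, exploiting that $\red = \redbv \cup \redo$ and that the two component reductions are closed under \emph{different} classes of contexts. Since $V \in \Val$, either $V = x$ is a variable or $V = \lam x.M$ is an abstraction. I would first dispose of the variable case: $x$ contains no redex of any kind, so $x \not\red$ and the hypothesis $V \red \m$ cannot hold; this case is vacuous.

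The substance is the abstraction case $V = \lam x.M$. Here I would argue first that no $\oplus$-step is possible. Indeed, $\redo$ is closed only under \emph{surface} contexts $\ss ::= \square \mid M\ss \mid \ss M$, none of which descends under a $\lambda$; and $\lam x.M$ is itself neither an $\oplus$-redex nor an application, so no surface context decomposes it so as to expose an $\oplus$-redex. Hence $\lam x.M \not\redo$, and the step $V \red \m$ must be a $\redbv$-step, giving already $V \redbv \m$. Next I would observe that the contracted $\beta_v$-redex cannot sit at top level, because $\lam x.M$ is not of the form $(\lam y.P)W$; so the redex lies strictly inside $M$, and the reducing general context necessarily has the form $\lam x.\cc$. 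By the definition of the $\redbv$-step in Fig.~\ref{fig:reductions}, the result is then the singleton $\m = \multiset{\lam x.M'}$ for the appropriate contractum $M'$. Since $\lam x.M'$ is an abstraction it is again a value $W$, so $\m = \mdist{W}$ with $W \in \Val$ and $V \redbv \mdist{W}$, as required.

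The only delicate point, and the real crux of the argument, is the exclusion of $\oplus$-steps: it rests entirely on the design choice that the probabilistic rules are closed under surface contexts only, so that an $\oplus$-redex buried in the body of an abstraction is frozen. Everything else is a direct unfolding of the one-step reduction relations; in particular, the fact that a $\redbv$-step always yields a \emph{singleton} multidistribution is immediate from its defining rule, so no work is needed to see that $\m$ has the form $\mdist{W}$ rather than a genuine proper distribution.
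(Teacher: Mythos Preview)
Your argument is correct. The paper does not give a detailed proof of this fact; it simply remarks ``Since the set of values is closed under reduction, the following property holds'' and states it. Your case analysis on the shape of $V$, using that surface contexts cannot descend under $\lambda$ so no $\redo$-step applies to an abstraction, and that a $\redbv$-step from $\lambda x.M$ necessarily uses a context of the form $\lambda x.\cc$ and hence returns a singleton abstraction, is exactly the routine verification the paper leaves implicit.
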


%

Let $\seq{\m}$  be a $\Red$-sequence, and $\seq \mu$ the sequence of associated distributions.
The sequence  of reals $\langle{\mu_n(\Val)}\rangle_{n\in \Nat}$ is nondecreasing  and bounded, 
because of Fact~\ref{rem:value}. 
Therefore \emph{the limit exists}, and is the supremum:  $\lim_{n\to \infty} {\mu_n(\Val)}=\sup_n\{\mu_n(\Val)\}.$
This fact allows us  the following definition.

\begin{itemize}
	\item The sequence	$\seq{\m}$ \emph{evaluates with probability $p$}\\  if $p = \sup_n {\mu_n(\Val)} $,  written
	$\seq{\m} \tolim p$.
	\item $\m$ is \emph{ $p$-valuable} if $p$ is the greatest probability to
	which a sequence from $\m$ can evaluate.
	
\end{itemize}

\begin{example}\label{ex:evaluation}
	Let $\true=\lambda xy.x$ and $\false=\lambda xy.y$. 
	\begin{enumerate}
		\item Consider the term $PP$ where $P=(\lam x.(xx\oplus \true))$. Then $PP \red \multiset{(PP)\oplus \true}\Red \multiset{\frac{1}{2}PP, \frac{1}{2}\true} 
		\Red^{2n}\multiset{\frac{1}{2^{n}}PP, \frac{1}{2}\true, \dots, \frac{1}{2^{n}} \true}$ . 
				Since $\lim_{n\to \infty} { \sum_{1}^{n} \frac{1}{2^{n}}}=1$, $PP$ is $1$-valuable.

		\item Consider the term $QQ$, where  $Q=\lam x.(xx\oplus (\true\oplus \false))$. 
	\SLV{}{	Then $QQ \redbv \multiset{(QQ)\oplus (\true\oplus \false)  } \Red^*
		\multiset{\frac{1}{2}QQ, \frac{1}{4}\true, \frac{1}{4}\false}\Red^*\dots$ }
		It is immediate that $QQ$ is $1$-valuable.
		
		\item Let $\Delta =\lam x.xx$, so that $\Delta\Delta$ is a divergent term, and let
	 $N=\lam x.(xx)\oplus (\true\oplus (\Delta\Delta))$.
		\SLV{}
		{Then
		$NN \redbv \multiset{(NN)\oplus (\true\oplus (\Delta\Delta))  } \Red^* \multiset{\frac{1}{2}NN, \frac{1}{4}\true, \frac{1}{4}(\Delta\Delta)}\Red^*\dots$ }
		$NN$ is $\two$-valuable.
	\end{enumerate}
\end{example}


\subsubsection{Result of a CbV computation}\label{sec:resultCbV} The notion of  \emph{being  $p$-valuable} allows for a simple definition, but it is too coarse. 
Consider  Example \ref{ex:evaluation}; both   1) and 2) give  examples of  $1$-valuable term. However, in 1) the probability is concentrated in the value $\true$, while in 2)  $\true$ and $\false$ have equal probability $\two$. Observe that $\true$ and $\false$ are different normal forms, and are not  $\beta_{v}$-equivalent. To discriminate between $\true$ and $\false$, we need  a finer notion of evaluation. 
Since the calculus is CbV, the result ``at the limit" is intuitively a distribution on  the possible values that the term can reach.
Some care is needed though, as the following example shows. 

\begin{example}\label{ex:nonnf} Consider Plotkin's CbV   $\lambda$-calculus. 
	Let 
	$\omega_{3}=\lam x.xxx$; the term $M={(\lam x.x) \lam x.\omega_{3}\omega_{3}}$  has the following $\redbv$-reduction:  $M={(\lam x.x)( \lam x.\omega_{3}\omega_{3})} \redbv M_1={ \lam x.\omega_{3}\omega_{3}}\redbv M_2= { \lam x.\omega_{3}\omega_{3} \omega_{3}} \redbv\cdots$. We obtain a reduction sequence   where  $\forall n\geq1$,  $M_n= {\lam x. \omega_{3}\underbrace{\omega_{3}...\omega_{3}}_{n}}$.
	Each $M_i$ is a value, but there is not a "final" one in which the reduction ends. 
	Transposing this to  $\PLambda^\cbv$, let $\m_0=\mset{M}$, $\m_i=\mset{M_i}$. The $\Red$-sequence $\seq \m$ is  $1$-valuable, but the distribution on values is different at every step.	In other words, $\forall V\in \Val$,  the  sequence $\langle{\mu_n({V})}\rangle$ has limit $0$.
	Observe 
	that however all the values $M_i$  are $\beta_{v}$-equivalent.
	
\end{example}

\subsubsection{Observations and Limit Distribution}\label{sec:limit}
Example~\ref{ex:nonnf} motivates the approach that we  develop now: 
the result of probabilistic evaluation  is not a distribution on values, but\emph{ a distribution on some events of interest}. In the  case of $\PLambda^\cbv$, the most informative events are equivalence classes of values. 

We first introduce the notion of observation, and then that of limit distribution.

\begin{Def}\label{def:obs} 
A set of \emph{observations} for  $(\PLambda,\Red)$ is a set  $\Obs\subseteq\mathcal P(\PLambda)$ such that  $\forall \bU,\bZ\in \Obs $, if $ \bU \not=\bZ $ then  $\bU\cap\bZ=\emptyset$, and 	if $\m\Red \m'$  then $\mu(\bU) \leq \mu'(\bU)$.
\end{Def}
\condinc{}{A set of \emph{observations} for  $(\PLambda,\Red)$ is a set  $\Obs\subseteq\mathcal P(\PLambda)$ such that  $\forall \bU,\bZ\in \Obs$,   $\bU\cap\bZ=\emptyset$ and 	if $M\in \bU$ and $M\red \mset{p_iM_i\st \iI}$, then $M_i\in \bU$, $(\forall \iI)$.
}
Note that, given  $\mu\in \DST{\PLambda}$,  $\bU\in \Obs$  has probability  $\mu(\bU)$ (similarly to the event "the result is Odd" in Example~\ref{ex:odd}).

{It follows immediately from the definition that,  given a  sequence $\seq{\m}$, then for each $\bU\in \Obs$ the sequence $\langle{\mu_n(\bU)}\rangle_{n\in \Nat}$ is nondecreasing and bounded, and therefore has a limit,  the $\sup$. 
Moreover, monotony implies the following \begin{equation}\label{eq:MCT}
	{ \sup_{n} \{\sum_{\bU\in \Obs} \mu_n(\bU)\} ~=~  \sum_{\bU\in \Obs} \sup _{n}  \{\mu_n(\bU)\}.}
\end{equation}
which guarantees that the distribution $\brho$ in Def.~\ref{def:limits}   is  well defined, because $ \sup_n \norm {\mu_n}\leq 1$ and  (1) gives $ \sup_n \norm {\mu_n}=\norm \brho$.
}
%
\begin{Def}\label{def:limits} Let $\Obs$ be a set of observations. 
	 The sequence 	$\seq{\m}$  defines a distribution $\brho\in \DST{\Obs}$, where   $\forall \bU\in \Obs$,
\begin{center}
		  $ \brho(\bU):= \sup_n \{\mu_{n}(\bU)\}. $
\end{center}
\begin{itemize}
\item We call such a $\brho$ the\emph{ limit distribution} of $\seq{\m}$. \textbf{Letter convention}: greek bold letters   denote  limit distributions.  

	\item The sequence 	$\seq{\m}$   \emph{converges to (or evaluates to)} the limit distribution 
		$\brho$,   written 
\begin{center}
	$\seq{\m} \down_{{}_\Obs}\brho$.
\end{center}

{\item If $\m$ has a sequence  which converges to $\brho$, we  write 
\begin{center}
	$\m\tolimx{\Obs} \brho$.
\end{center}}

	\item Given $\m$, we denote by  $\Lim_{{}_\Obs}(\m)$ the set $\{\brho\st \m\tolimx{\Obs} \brho\}$  of all  limit distributions of $\m$.
	If $\Lim_{{}_\Obs}(\m)$ has a greatest  element, we indicate it by  $\den{\m}_{{}_\Obs}$. 
\end{itemize}
If $\Obs$ is clear from the context,  we omit the index which specifies it, and simply write $\seq{\m} \down\brho$, $\m\tolim \brho$,
$\Lim(\m)$.
\end{Def}{}

The notion of limit distribution formalizes what is the \emph{result of evaluating} a probabilistic term, once we choose the set $\Obs$ of  observations which interest us.  In  \ref{sec:uniquelim} we  prove  that confluence implies that   $\Lim (\m)$ has  a unique maximal element.

%
%
\paragraph{Sets of Observations for $\PLambda^\cbv$} 
Let us consider two partitions of the set  $\Val \subset \PLambda$,  the trivial one  $\{\Val\}$,  and   the set $\Val_\sim$ of 
values up to the equivalence $\equbetav$,  \ie\  the collection of all events $\{W\in \Val \st W\equbetav  V\}$.  For the set $\Nnf$ of  $\red$-normal forms  (see \ref{def:nf}), interesting partitions   are $\{\Nnf\}$ and 
the set of singletons $\Nnf_\sing:=\{\{M\}, M\in \Nnf \}$.
\begin{prop}\label{prop:observations}
	$\{\Val\}$, 	$\Val_\sim$, $\{\Nnf\}$ and $\Nnf_\sing$ are each a set of observations for $(\PLambda, \Red)$.
\end{prop}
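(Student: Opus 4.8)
The plan is to verify, for each of the four candidate sets, the two clauses of Def.~\ref{def:obs}: pairwise disjointness of the elements, and monotonicity of $\mu(\bU)$ along a single step $\m \Red \m'$. Disjointness is immediate in every case: $\{\Val\}$ and $\{\Nnf\}$ are singletons, so the condition is vacuous; $\Val_\sim$ is by construction the set of $\equbetav$-equivalence classes of $\Val$, hence a partition; and $\Nnf_\sing$ consists of pairwise distinct singletons. So the whole content lies in the monotonicity clause.

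For monotonicity I would reduce the multidistribution statement to a pointwise one, in the spirit of Lemma~\ref{lem:pointwise}. Writing $\m = \mset{p_i M_i \mid \iI}$ and unfolding the lifting, $\m \Red \m'$ means $\m' = \sum_{\iI} p_i \cdot \m_i$ with $\mset{M_i} \Red \m_i$, where each such step is either an instance of the reflexivity rule $L1$ (so $\m_i = \mset{M_i}$) or a genuine step $M_i \red \m_i$. Since $\mu(\bU) = \sum_{\{i\,:\, M_i \in \bU\}} p_i$ and $\mu'(\bU) = \sum_{\iI} p_i\, \mu_i(\bU)$, it suffices to prove the pointwise \emph{closure property}: for every $\bU$ in the set under consideration, if $M \in \bU$ and $\mset{M} \Red \m_i$, then $\mu_i(\bU) = 1$, i.e. all the mass of $\m_i$ stays inside $\bU$. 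Indeed this gives $p_i\,\mu_i(\bU) = p_i$ whenever $M_i \in \bU$, and $p_i\,\mu_i(\bU) \geq 0$ otherwise; summing over $\iI$ yields $\mu'(\bU) \geq \mu(\bU)$.

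It then remains to check the closure property in each case, where the reflexivity step $L1$ is trivial, so I may assume $M \in \bU$ and $M \red \m_i$. For $\bU = \Val$ (hence for $\{\Val\}$), this is exactly Fact~\ref{rem:value}: a value reduces only to some $\mdist{W}$ with $W \in \Val$, so $\mu_i(\Val) = 1$. For $\bU$ an $\equbetav$-class in $\Val_\sim$, I would refine this observation: Fact~\ref{rem:value} gives $M \redbv \mdist{W}$, and since $M \redbv W$ implies $M \equbetav W$, the value $W$ lies in the same class $\bU$, whence $\mu_i(\bU) = 1$. For $\bU = \Nnf$ and for $\bU = \{M_0\}$ with $M_0 \in \Nnf$, the closure property is immediate from the definition of normal form: any element of $\bU$ is a $\red$-normal form, so the only available lifting step is $L1$ and the case $M \red \m_i$ simply does not arise.

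The only genuinely delicate point is the $\Val_\sim$ case, and there the work is confined to the remark that $\redbv$ keeps a value within its own $\beta_v$-equivalence class; everything else is bookkeeping on the lifting, already packaged by the argument of Lemma~\ref{lem:pointwise}. I do not foresee any real obstacle, since the four sets were designed precisely so that reduction cannot push probability mass out of an observation.
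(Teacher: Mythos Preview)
Your proposal is correct and follows essentially the same approach as the paper: the paper's proof simply notes that any partition of $\Nnf$ trivially satisfies Def.~\ref{def:obs} (since normal forms do not reduce), and that the $\{\Val\}$ and $\Val_\sim$ cases follow from Fact~\ref{rem:value}. Your write-up is a faithful and more explicit unfolding of this, making the reduction from the lifted monotonicity condition to the pointwise closure property precise and spelling out why Fact~\ref{rem:value} also settles $\Val_\sim$ (namely because $V\redbv W$ entails $V\equbetav W$).
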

\begin{proof}Clearly, any partition of  $\Nnf$ satisfies the conditions in Def.~\ref{def:obs}.
	For $\{\Val\}$ and $\Val_\sim$, the result  follows from Fact~\ref{rem:value}.
\end{proof}
Notice  that convergence w.r.t.   $\{\Val\}$ corresponds to the notion  of being $p$-valuable. Instead   $\{\Nnf\}$ and   $\Nnf_\sing$  correspond to \SLV{}{normalization and reaching a specific normal form, respectively; however these are}   events which are \emph{not  significant}  in a CbV perspective, as we already discussed  in \ref{discussion:conf}. For this reason, in Sec.~\ref{sec:left_evaluation}
we will focus  on the study of $\Obs:=\Val_\sim$ (Sec.~\ref{sec:left_evaluation}). 

\begin{example}\begin{itemize}
		\item 
  Let  $\Obs$ be either $\Val_\sim$ or  $\Nnf_\sing$.
		\begin{enumerate}
			\item Let $\seq{\m}$ be the sequence in Example \ref{ex:evaluation}.1, starting from $\multiset{PP}$.
			Then $\seq{\m}\down_{{}_\Obs}\{\pmb{\true}^{1}\}$.
			\item Let $\seq{\m}$ be the computation in Example \ref{ex:evaluation}.2, starting from $\multiset{QQ}$.
			Then $\seq{\m}\down_{{}_\Obs}\{\pmb{\true}^{\two}, \pmb{\false}^{\two}\}$.
			\item Let $\seq{\m}$ be the computation in Example \ref{ex:evaluation}.3, starting from $\multiset{NN}$.
			Then $\seq{\m}\down_{{}_\Obs}\{\pmb{\true}^{\two}\}$.
		\end{enumerate}
		\item  Let $\seq{\m}$ be the reduction sequence in Example \ref{ex:nonnf}, starting with $\multiset{(\lam x.x) \lam x.\omega_{3}\omega_{3}}$. By taking as set of observations  $\Val_\sim$,  the sequence  converges to $\{\pmb{\lam x.\omega_{3}\omega_{3}}^{1}\}$.
		 	\end{itemize}
\end{example}


\paragraph{Discussion}\label{rem:result} 
Each observation expresses  a result of interest for the evaluation of the term $M$. To better understand this, 
let us examine what become our notions of observation in the case of usual (non-probabilistic) CbV $\lambda$-calculus.
Let $M\red^*N\in \bU$ and $\bU\in \Obs$; if $\bU\in \{\Val\}$ then $M$ is valuable, if $\bU\in \Val_\sim$, then $M$ reduces to the value $N$ up to $\beta_{v}$-equivalence, if
 $\bU\in\{\Nnf\}$, then $M$ normalizes, finally  $\bU=\{N\}\in \Nnf_\sing$ means that $M$ has normal form  $N$.
 We say that  $\bU\in \Obs$ is a result of evaluating $M$,  if $M\red^*N\in \bU$.
Clearly,  fixed $\Obs$, confluence implies  that the result of evaluating $M$, if any, is unique.

\condinc{}{Observe that a common property to each such set, is that  $\forall \bU,\bZ\in \Obs$,  $\bU\cap\bZ=\emptyset$ and 	if $M\in \bU$ and $M\redbv M'$, then $M'\in \bU$.
%
}

\SLV{}{	
	\paragraph{Sets of observations for Surface Reduction}	It is interesting to examine   the \emph{set of observations for  surface reduction} $\sRed$.
	When considering $\sred$, values are $\sred$-normal forms (the converse does not hold!). Therefore  $\set{\{V\} \st V\in \Val }$ (where $\{V\}$ is a singleton) is a set of observations for $(\PLambda,\sRed)$.  In other words, when restricting oneself to surface reduction,  the result of a probabilistic computation (\ie\ the limit distribution) is a distribution on the possible values of the term.
	Observe that all set of observations for $\Red$ (Prop.~\ref{prop:observations}) are also set of observations for $\sRed$.
}

\subsection{Uniqueness and Adequacy of the Evaluation}\label{sec:uniquelim}
In this section, we adapt  similar results from \cite{pars}, to which we refer for 
 details. We  assume  a set $\Obs$ to be fixed, hence we omit  the index. For concreteness,  think of  $\Val_\sim$, but the results only depend on the properties  
in Def.~\ref{def:obs}, and on confluence.

How do different reduction sequences from the same initial $\m$ compare? More precisely, assume $\m \tolim \brho$ and $\m \tolim \bmu$, how do $\brho $ and $\bmu$ compare? 
Intuitively,  the limit distributions of $\m$ (which are the result of a \emph{probabilistically terminating} sequence)  play the role of normal forms in finitary termination.  As confluence implies uniqueness of normal forms, a  similar property holds when considering \emph{probabilistic termination} and limits, 
in the sense that each $\m$ has a \emph {unique {maximal} limit distribution} (Thm.~\ref{thm:unique}). While the property is similar, the  proof   is not as immediate as in the finitary  case.
The key  result   is  Lemma~\ref{lem:NFP} which implies  both that $\Lim(\m) $ has a greatest element (Thm.~\ref{thm:unique}), and adequacy of the evaluation (Thm.~\ref{thm:adequacy}). 

Recall that  the order  $ \leq$  on distributions is defined  pointwise (Sec.~\ref{sec:proba}).

\begin{lemma}[Main Lemma]\label{lem:NFP}$\PLambda^\cbv$ has the following property:
	$\forall \m,\s$,	if   $\bmu \in \Lim(\m)$, and   $\m \Red^*\s$,  then $\s\tolim \bsigma$ with  $\bmu \leq \bsigma$.	
	 Moreover, if $\bmu$ is maximal in $\Lim(\m)$ then $\bsigma =\bmu$.
\end{lemma}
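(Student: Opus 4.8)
The plan is to exploit confluence of $\Red$ (the Theorem above) in order to transport the sequence witnessing $\bmu$ along the finite reduction $\m\Red^*\s$, thereby building from $\s$ a new reduction sequence whose limit dominates $\bmu$. Concretely, fix a sequence $\m=\m_0\Red\m_1\Red\cdots$ with $\seq{\m}\down\bmu$, so that $\bmu(\bU)=\sup_n\mu_n(\bU)$ for every $\bU\in\Obs$. First I would construct, by induction on $n$, a chain $\s=\r_0\Red^*\r_1\Red^*\r_2\Red^*\cdots$ such that $\m_n\Red^*\r_n$ for every $n$. For the base case, set $\r_0=\s$, which satisfies $\m_0\Red^*\r_0$ since $\m_0=\m\Red^*\s$. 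For the inductive step, given $\r_n$ with $\m_n\Red^*\r_n$, I apply confluence of $\Red$ to the two reductions $\m_n\Red^*\r_n$ and $\m_n\Red\m_{n+1}$ to obtain $\r_{n+1}$ with $\r_n\Red^*\r_{n+1}$ and $\m_{n+1}\Red^*\r_{n+1}$. Concatenating the finite reductions $\r_n\Red^*\r_{n+1}$ yields a genuine $\Red$-sequence from $\s$ in which the $\r_n$ occur cofinally; let $\bsigma$ be its limit distribution, so that $\s\tolim\bsigma$.

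Next I would compare the two limits pointwise. For each $n$, from $\m_n\Red^*\r_n$ and the monotonicity clause of Def.~\ref{def:obs} (iterated along that finite reduction) I get $\mu_n(\bU)\leq\rho_n(\bU)$ for every $\bU\in\Obs$. Since the $\r_n$ occur cofinally in the sequence from $\s$ and, for each $\bU$, the $\bU$-mass is nondecreasing along any $\Red$-sequence, we have $\bsigma(\bU)=\sup_n\rho_n(\bU)$. Taking suprema then gives $\bmu(\bU)=\sup_n\mu_n(\bU)\leq\sup_n\rho_n(\bU)=\bsigma(\bU)$, i.e. $\bmu\leq\bsigma$, which is the first claim.

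For the \emph{moreover} part, I would prefix the finite reduction $\m\Red^*\s$ to the sequence just built, obtaining a $\Red$-sequence from $\m$. Along the prefix the $\bU$-masses are bounded above by that of $\s$ (by monotonicity), hence by every $\rho_n(\bU)$, so the prefix does not change the supremum and this sequence still converges to $\bsigma$. Therefore $\bsigma\in\Lim(\m)$. As $\bmu$ is maximal in $\Lim(\m)$ and $\bmu\leq\bsigma$, maximality forces $\bsigma=\bmu$.

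The main obstacle is the chain construction in the first paragraph: the confluence diagrams must be stacked so that the common reducts $\r_n$ assemble into an \emph{actual} reduction sequence out of $\s$ (not merely a family of unrelated reducts), while simultaneously maintaining $\m_n\Red^*\r_n$ so that the domination $\mu_n\leq\rho_n$ survives. Once this cofinal chain is in place, the remainder is a routine manipulation of suprema of monotone, bounded real sequences, together with the well-definedness of the limit distribution guaranteed by Def.~\ref{def:limits}.
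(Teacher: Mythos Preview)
Your proposal is correct and follows essentially the same argument as the paper: both build, via iterated confluence, a chain $\s=\r_0\Red^*\r_1\Red^*\cdots$ with $\m_n\Red^*\r_n$ (the paper writes $\s_{\m_n}$ for your $\r_n$), then use monotonicity of observations to compare suprema and conclude $\bmu\leq\bsigma$, with the maximality clause handled by noting $\bsigma\in\Lim(\m)$. Your presentation is in fact a bit more explicit than the paper's (you spell out the cofinality argument and why the prefix does not affect the limit), but the underlying idea is identical.
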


\begin{proof} Let  $\bmu\in \Lim(\m)$, and $\seq \m$ be a sequence from $\m=\m_0$ which converges to $\bmu$.   Assume  $\m\Red^*\s$. As illustrated in Fig.~\ref{fig:nfp}, 
	 from $\s$ we build a sequence    $\s=\s_{\m_0}\Red^* \s_{\m_1} \Red^*\s_{\m_2}\dots$, 
	where each  segment $\s_{\m_{i}}\Red^*\s_{\m_{i+1}}$  ($i \geq 0$) is    given by confluence from  $\m_i\Red^*\s_{\m_{i}}$ and $\m_i\Red \m_{i+1}$. 	Let  $\seq \s$ be the  concatenation of  all such segments 
	and let  $\bsigma $ be its limit distribution. Clearly, $\bsigma \in \Lim(\m)$. Since by  construction $\m_i\Red^* \s_{\m_i}$,
	then 	   for each $\bV\in \Obs$, $ {\mu_i(\bV)}    \leq \bsigma(\bV)$ (because $\mu_i(\bV)\leq  {\sigma_{\m_i}(\bV)} $ by  definition of observation). 
	Therefore 
	$\sup_n \{\mu_n (\bV)\}=\bmu(\bV)\leq \bsigma(\bV) $. If $\bmu$ is maximal, then  $\bsigma= \bmu$.
\end{proof}

\begin{thm}[Greatest Limit Distribution]\label{thm:unique}
	$\Lim(\m)$ has a greatest element, which we indicate by  $\den \m$.
\end{thm}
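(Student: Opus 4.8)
The plan is to locate a maximal element of the poset $(\Lim(\m),\leq)$ and then use the Main Lemma to upgrade it to the greatest element. First note that $\Lim(\m)\neq\emptyset$: since $\Red$ is reflexive, the stationary sequence $\m\Red\m\Red\cdots$ converges to the limit distribution $\brho$ with $\brho(\bU)=\mu(\bU)$ for every $\bU\in\Obs$. I claim it suffices to exhibit \emph{any} maximal element, because Lemma~\ref{lem:NFP} forces a maximal element to be the greatest. Indeed, let $\bmu$ be maximal and let $\bnu\in\Lim(\m)$ be arbitrary, witnessed by $\m=\n_0\Red\n_1\Red\cdots$ with $\bnu(\bU)=\sup_n\nu_n(\bU)$. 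For each $n$ we have $\m\Red^*\n_n$, so Lemma~\ref{lem:NFP} gives $\n_n\tolim\bsigma_n$ with $\bmu\leq\bsigma_n$, and maximality of $\bmu$ forces $\bsigma_n=\bmu$. Since $\bsigma_n$ is the limit of a sequence issued from $\n_n$, it dominates the distribution of its first term, i.e. $\nu_n(\bU)\leq\bsigma_n(\bU)=\bmu(\bU)$; taking $\sup_n$ yields $\bnu\leq\bmu$. As $\bnu$ was arbitrary, $\bmu$ is greatest, and we name it $\den{\m}$.

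The substantive task is therefore to produce a maximal element, and here the engine is confluence together with the monotonicity of observations (Def.~\ref{def:obs}). The basic construction is a \emph{merge}: given two sequences $\m=\m_0\Red\m_1\Red\cdots\tolim\bmu$ and $\m=\n_0\Red\n_1\Red\cdots\tolim\bnu$, I build by confluence a sequence $\m=\r_0\Red^*\r_1\Red^*\cdots$ in which each $\r_k$ is a common reduct of $\m_k$, $\n_k$ and $\r_{k-1}$; monotonicity then gives $\mu_k(\bU),\nu_k(\bU)\leq\rho_k(\bU)$, so the limit $\brho$ of $\langle\r_k\rangle$ lies in $\Lim(\m)$ and satisfies $\bmu,\bnu\leq\brho$. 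Hence $\Lim(\m)$ is directed. To realize its top, set $s=\sup\{\norm{\brho}:\brho\in\Lim(\m)\}\leq 1$ and pick $\brho^{(k)}\in\Lim(\m)$ with $\norm{\brho^{(k)}}\to s$, which directedness lets me take increasing. A diagonal version of the merge, enumerating all pairs $(k,n)$ and repeatedly completing confluence diagrams so that some term of the resulting single sequence from $\m$ dominates the $n$-th element of the sequence witnessing $\brho^{(k)}$, produces $\btau\in\Lim(\m)$ with $\btau\geq\brho^{(k)}$ for all $k$, whence $\norm{\btau}\geq s$ and so $\norm{\btau}=s$. This $\btau$ is order-maximal: if $\bpi\in\Lim(\m)$ and $\bpi\geq\btau$ with $\bpi\neq\btau$, then $\bpi$ is strictly larger on some $\bU$, forcing $\norm{\bpi}>\norm{\btau}=s$ and contradicting $\norm{\bpi}\leq s$. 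Applying the first paragraph to $\btau$ gives the greatest element.

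The main obstacle is precisely this diagonal construction realizing the norm supremum: directedness alone never guarantees a greatest element in a general poset, and one must check that the supremum of a countable increasing chain of limit distributions is again carried by a \emph{single} reduction sequence out of $\m$. Confluence is exactly what makes the interleaving of the countably many witnessing sequences possible, while the fact that $\Obs$ is countable (a pairwise-disjoint family of subsets of the countable set $\PLambda$) is what legitimizes the monotone passage to suprema and the finite-mass comparison $\norm{\bpi}>\norm{\btau}$. The remaining ingredients — non-emptiness, the pointwise comparisons, and the reduction to a maximal element via Lemma~\ref{lem:NFP} — are routine.
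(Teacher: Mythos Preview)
Your proof is correct and follows essentially the same architecture as the paper: existence of a maximal element is obtained by realizing the supremum of norms via a confluence-based interleaving of witnessing sequences, and then the Main Lemma (Lemma~\ref{lem:NFP}) upgrades any maximal element to the greatest one, exactly as in the paper's uniqueness argument. The only cosmetic difference is that the paper packages the existence construction through repeated invocations of Lemma~\ref{lem:NFP} (its fact~(c.) in the appendix), whereas you unfold this and work directly with confluence in your merge/diagonal step; since the Main Lemma is itself proved by precisely that confluence merge, the two routes coincide in substance.
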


\begin{proof}The proof of both existence and uniqueness of  maximal elements relies on Lemma~\ref{lem:NFP}.
	 Let us explicitly show  uniqueness. 
Let $\bmu\in \Lim(\m)$ be  maximal. Given any $\brho\in  \Lim(\m)$, we prove that $\brho\leq \bmu$. Let $\seq \r$ be a sequence from $\m$ such that $\seq \r\down\brho$. By Lemma \ref{lem:NFP},   $\forall \r_n$  there is a  $\Red$-sequence  from  $\r_n$ which has limit $\bmu$. Therefore    $\forall \bV\in \Val$, $\forall n$, $\rho_n(\bV)\leq \bmu (\bV)$, hence  
	$\brho(\bV)\leq \bmu(\bV) $. If $\brho$ is maximal, $\brho=\bmu$.
\end{proof}

\SLV{
	\begin{thm}[Adequacy]\label{thm:adequacy} If $\m \Red^* \s$, then	$\den\m = \den\s$.
	\end{thm}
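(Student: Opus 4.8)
The plan is to establish the two pointwise inequalities $\den\m \leq \den\s$ and $\den\s \leq \den\m$ separately, and then conclude by antisymmetry of the order $\leq$ on $\DST{\Obs}$. Both directions rely on Theorem~\ref{thm:unique}, which guarantees that $\den\m$ and $\den\s$ are the greatest elements of $\Lim(\m)$ and $\Lim(\s)$ respectively; the hypothesis $\m \Red^* \s$ is used in opposite ways for the two inequalities.

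For $\den\m \leq \den\s$, I would apply the Main Lemma (Lemma~\ref{lem:NFP}) directly. Take $\bmu := \den\m$, which is (greatest, hence) maximal in $\Lim(\m)$. Since $\m \Red^* \s$, the lemma yields a limit distribution $\bsigma \in \Lim(\s)$ with $\bmu \leq \bsigma$, and the maximality clause forces $\bsigma = \bmu$. Thus $\den\m \in \Lim(\s)$, and since $\den\s$ is the greatest element of $\Lim(\s)$ we obtain $\den\m \leq \den\s$.

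For the reverse inequality $\den\s \leq \den\m$, the idea is to prove the inclusion $\Lim(\s) \subseteq \Lim(\m)$. Given any $\brho \in \Lim(\s)$, pick a sequence $\seq\s$ from $\s = \s_0$ converging to $\brho$. Writing $\m \Red^* \s$ as a finite chain $\m = \n_0 \Red \cdots \Red \n_k = \s$ and concatenating it in front of $\seq\s$ produces a genuine $\Red$-sequence from $\m$. Its limit distribution is again $\brho$: for every $\bU \in \Obs$ the reals $\langle \mu_n(\bU) \rangle$ are nondecreasing along any $\Red$-sequence (by the monotonicity clause of Def.~\ref{def:obs}), so prepending the finite prefix cannot raise the supremum, and the $\sup$ over the whole concatenation coincides with the $\sup$ over the tail, namely $\brho(\bU)$. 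Hence $\brho \in \Lim(\m)$, so $\Lim(\s) \subseteq \Lim(\m)$; in particular $\den\s \in \Lim(\m)$, giving $\den\s \leq \den\m$.

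I expect the only delicate point to be this second direction: one must verify that gluing the finite prefix $\m \Red^* \s$ onto a converging sequence from $\s$ yields a valid sequence whose limit is unchanged. This is precisely where the monotonicity of observation-probabilities in Def.~\ref{def:obs} is essential, since it is what makes the supremum insensitive to any finite prefix. The first direction, by contrast, is essentially an immediate corollary of Lemma~\ref{lem:NFP} once $\den\m$ is recognized as maximal.
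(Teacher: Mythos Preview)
Your proof is correct and follows essentially the same approach as the paper: one inequality via the Main Lemma (Lemma~\ref{lem:NFP}) applied to the maximal element $\den\m$, the other via concatenation of the finite prefix $\m\Red^*\s$ with a converging sequence from $\s$. The paper presents the two directions in the opposite order and is terser about why concatenation preserves the limit, but the argument is the same.
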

}
{
	\begin{thm}[Adequacy of evaluation]\label{thm:adequacy}\quad\\ If $\m \Red^* \s$, then	$\den\m = \den\s$.
	\end{thm}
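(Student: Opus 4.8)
The plan is to derive $\den\m = \den\s$ by proving the two inequalities $\den\m \leq \den\s$ and $\den\s \leq \den\m$ separately. Both reduce to establishing a membership in some $\Lim(\cdot)$ and then invoking that $\den\cdot$ is, by Theorem~\ref{thm:unique}, the \emph{greatest} element of that set.

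For $\den\m \leq \den\s$, I would apply the Main Lemma (Lemma~\ref{lem:NFP}) with $\bmu := \den\m$. By definition $\den\m$ is the maximal element of $\Lim(\m)$, and we are given $\m \Red^* \s$, so the lemma yields a $\bsigma$ with $\s \tolim \bsigma$ and $\den\m \leq \bsigma$; the ``moreover'' clause, applicable because $\den\m$ is maximal, forces $\bsigma = \den\m$. Hence $\den\m \in \Lim(\s)$, and since $\den\s$ is the greatest element of $\Lim(\s)$ we conclude $\den\m \leq \den\s$.

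For the reverse inequality $\den\s \leq \den\m$, I would prove the inclusion $\Lim(\s) \subseteq \Lim(\m)$ directly. Given $\brho \in \Lim(\s)$, choose a $\Red$-sequence $\seq{\s}$ from $\s$ with $\seq{\s} \down \brho$ and prepend the finite sequence witnessing $\m \Red^* \s$; this yields a $\Red$-sequence starting from $\m$. Its limit distribution equals $\brho$: for each $\bU \in \Obs$ the map $n \mapsto \mu_n(\bU)$ is nondecreasing (Def.~\ref{def:obs}), so inserting finitely many earlier entries cannot change the supremum, which therefore still equals $\brho(\bU)$. Thus $\brho \in \Lim(\m)$; in particular $\den\s \in \Lim(\m)$, and as $\den\m$ is greatest in $\Lim(\m)$ we obtain $\den\s \leq \den\m$. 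Combining the two inequalities gives $\den\m = \den\s$.

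The substantive input here is the Main Lemma, which performs the genuinely non-trivial ``forward'' transport of a limit distribution along $\m \Red^* \s$ through the confluence-based diagram of Fig.~\ref{fig:nfp}; the reverse inclusion is comparatively routine. The one point I would take care over is precisely that the limit distribution is insensitive to a finite prefix: this rests entirely on the monotonicity condition built into the notion of observation, and it is worth stating explicitly that $\sup$ over the concatenated sequence coincides with $\sup$ over its tail. I do not expect any other obstacle, since existence and uniqueness of the greatest limit distribution are already secured by Theorem~\ref{thm:unique}.
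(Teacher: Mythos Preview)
Your proof is correct and follows essentially the same approach as the paper: one inequality comes from the Main Lemma (showing $\den\m \in \Lim(\s)$), the other from prepending the finite prefix $\m \Red^* \s$ to a sequence witnessing $\den\s$ (showing $\den\s \in \Lim(\m)$). The paper merely presents the two halves in the opposite order and is slightly terser about the prefix-insensitivity point you make explicit.
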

}

\begin{proof}  Observe first  that  $\den \s \in \Lim(\m)$,   hence   $\den \s \leq \den \m$.
	\SLV{}{ Indeed,  if $\seq \s \down\den \s$,
		by  concatenanting $\m \Red^* \s$ with $\seq \s$, 
		we have $\m \tolim \den \s$. }   By Lemma \ref{lem:NFP}, it  holds that $\den \m \in \Lim(\s)$, hence $\den \m \leq \den \s$. \SLV{}{Therefore $\den\m = \den\s$. }
\end{proof}

\SLV
{
	\vspace*{-8pt}
	\begin{figure}[h]\centering
		\begin{minipage}[l]{0.23\textwidth}		
			\fbox{		\includegraphics[width=\textwidth]{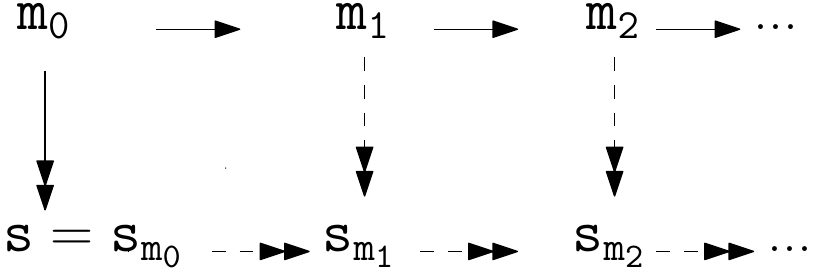}}
			\caption{Proof of Main Lemma}\label{fig:nfp}
		\end{minipage}
		\hfill
		\begin{minipage}[r]{0.23\textwidth}
			\fbox{		\includegraphics[width=1\textwidth]{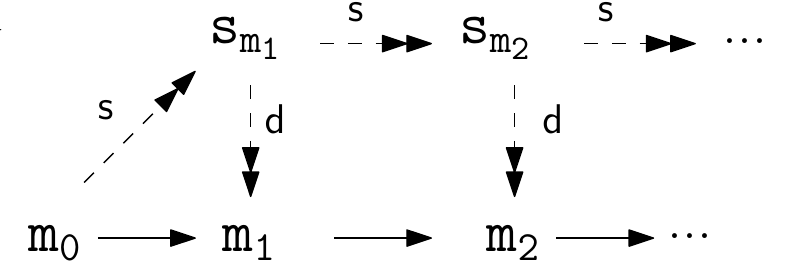}}
			\caption{Surface evaluation}\label{fig:surfEV}	
		\end{minipage}
	\end{figure}
	\vspace*{-8pt}
}
{
	\begin{figure}[h]\centering
		\begin{minipage}[c]{0.35\textwidth}		
			\fbox{		\includegraphics[width=\textwidth]{fig_nfp}}
			\caption{Proof of Main Lemma}\label{fig:nfp}
		\end{minipage}
		
		\vspace*{6pt}
		
		\begin{minipage}[c]{0.35\textwidth}
			\fbox{		\includegraphics[width=1\textwidth]{fig_surfEV}}
			\caption{Surface evaluation}\label{fig:surfEV}	
		\end{minipage}
	\end{figure}
	\vspace*{-4pt}
}

\section{Asymptotic Standardization}\label{sec:left_evaluation}

\condinc{}{
\CF{OSSERVAZIONE: mi hai convinta che, come dicevi,   non abbiamo bisogno di restrizione ai chiusi...  la restrizione e' "embedded" nella definizione di convergenza (che filtra i valori). Semplicemente:
	\begin{center}
		$x(II)\down_{{}_{\Val_{\sim}}} {\bm 0} $ (la distribuzione che ha valore 0 ovunque) iff $\L \down {\bm 0} $ (con le definizioni di
		Thm\ref{thm:eval_final})
	\end{center} Mi sembra che la situazione sia un po' diversa rispetto alla left di lambda CbV, la cui correttezza richiede  termini chiusi
(Su $M$, $\lred$ termina iff  $M$ is valuable) ?}
}

In this section, we focus on  $\Val_\sim$ as set of observations, which is the most natural choice in a CbV setting, in particular if we want to evaluate \emph{programs}, i.e., closed terms.

We proved, in Thm.~\ref{thm:unique}, that each $\m$  has a unique maximal limit distribution $\den \m$. Now we address the question: is there a reduction strategy which is guaranteed to converge to  $\den \m$?
We show that surface evaluation  provides such a strategy; indeed, \emph{any} limit distribution in $\Lim(\m)$ can be reached by surface evaluation (Thm.~\ref{thm:surfeval}). This result of \emph{asymptotic completeness}  is the main technical contribution of the section. \\ 

Following the notation introduced in \ref{sec:limit}, we  denote by $\bV$ the set $\{W\in \Val \st W\equbetav  V\}$. 
 We observe that:
\begin{fact}  
  Let  $M\dred \m$, then $\m$ has form $\mset{P}$ and $M =_{\beta_v} P$; 
		  $M$ is a value if and only if $P$ is a value.
\end{fact}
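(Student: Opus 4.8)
The plan is to reduce the statement to a structural analysis of what a deep step can be. First I would recall that $M\dred\m$ means $M\red\m$ is \emph{not} a surface step, and that the probabilistic rules are closed only under surface contexts, so every $\redo$-step is already surface. Hence a deep step must be a $\beta_v$-step $M\redbv\m$ whose firing context is non-surface. Since every $\beta_v$-step rewrites a unique redex into a unique contractum, its lifting to multidistributions yields a singleton: writing $M=\cc((\lam x.M')V)$ we obtain $\m=\mset{\cc(M'[V/x])}$, so $\m=\mset{P}$ with $P=\cc(M'[V/x])$. The relation $M=_{\beta_v}P$ then follows immediately, since $M\redbv\mset{P}$ is a single $\beta_v$-step.

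For the equivalence ``$M$ is a value iff $P$ is a value'', the crucial observation is that a deep step never contracts a redex at the root of $M$. Indeed, a $\beta_v$-redex contracted at the root would use the empty context $\square$, which is a surface context, so that step would be surface rather than deep; likewise the root cannot be a $\oplus$-redex, since $\redo$-steps are surface. Therefore the non-surface firing context $\cc$ of a deep step is never the hole, the reduction takes place strictly inside $M$, and the outermost constructor of $M$ is preserved in $P$.

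I would then conclude by a short case analysis on the shape of $M$. If $M=\lam x.M_0$ then $\cc=\lam x.\cc_0$ and $P=\lam x.P_0$, so both are abstractions, hence values. If $M=M_1M_2$ then $\cc$ is $\cc_1 M_2$ or $M_1\cc_2$ and $P$ is again an application, so neither $M$ nor $P$ is a value. If $M=M_1\oplus M_2$ then $\cc$ is $\cc_1\oplus M_2$ or $M_1\oplus\cc_2$ and $P$ is again a $\oplus$-term, so neither is a value; the case $M=x$ is impossible since a variable contains no redex. In every case $M$ and $P$ share the same root constructor, and since a term is a value exactly when this constructor is a variable or a $\lambda$-abstraction, $M$ is a value iff $P$ is. The only delicate point is the second paragraph, namely establishing that a deep step cannot fire at the root; this is precisely where the surface-context restriction on $\oplus$ (and the fact that a top-level $\beta_v$-contraction is surface) is used, and once it is pinned down the remaining inspection of the context grammar is routine.
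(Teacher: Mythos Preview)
Your proof is correct. The paper states this as a \emph{Fact} without proof, treating it as an immediate observation; your argument spells out precisely the details the paper leaves implicit. The key points you identify---that every $\redo$-step is surface (so a deep step is necessarily a $\beta_v$-step, hence yields a singleton $\mset{P}$ with $M=_{\beta_v}P$), and that a non-surface firing context is never $\square$ (so the outermost constructor is preserved)---are exactly what makes the fact ``easy'' in the authors' eyes. Your case analysis on the shape of $M$ is the natural way to finish, and it is sound.
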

As a consequence of the previous fact, we have  
\begin{lemma}\label{lem:int}
	If $\m\dRed \s$ then $\mu(\Val)=\sigma (\Val)$, and $\mu(\bV)=\sigma(\bV)$, for each $\bV\in \Val_\sim$.
\end{lemma}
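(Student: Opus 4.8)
The statement to prove is Lemma~\ref{lem:int}: if $\m \dRed \s$ then $\mu(\Val) = \sigma(\Val)$ and $\mu(\bV) = \sigma(\bV)$ for each $\bV \in \Val_\sim$. The plan is to reduce the multidistribution-level statement to the term-level Fact stated immediately above, and then lift pointwise through the definition of lifting (Fig.~\ref{fig:lifting}). The crucial input is that a deep step $M \dred \m$ produces a singleton $\m = \mset{P}$ with $M =_{\beta_v} P$, and that being a value is preserved in both directions along such a step. This tells us that a single deep step changes neither the total value-mass nor the mass in any $\beta_v$-equivalence class.

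**Main steps.** First I would unfold $\m \dRed \s$ via the lifting rules. Writing $\m = \mset{p_i M_i \mid \iI}$, by definition of lifting each $\mset{M_i}$ either stays fixed (reflexivity, rule $L1$) or satisfies $\mset{M_i} \dRed \s_i$ with $\s = \sum_{\iI} p_i \cdot \s_i$. Since a deep step is never a $\oplus$-step (deep steps are $\beta_v$-steps: they fire under a $\lambda$, hence cannot be probabilistic), the Fact gives $\s_i = \mset{P_i}$ with $M_i =_{\beta_v} P_i$, and $M_i \in \Val$ iff $P_i \in \Val$. In the reflexive case set $P_i = M_i$, which trivially satisfies both properties. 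Thus $\s = \mset{p_i P_i \mid \iI}$ with the same weights $p_i$ and with each $P_i$ a value exactly when $M_i$ is.

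**Concluding the equalities.** Now I would compute the two quantities directly. For the value mass, $\sigma(\Val) = \sum_{P_i \in \Val} p_i = \sum_{M_i \in \Val} p_i = \mu(\Val)$, using that $P_i \in \Val \iff M_i \in \Val$ and that the weights are unchanged. For a fixed class $\bV = \{W \in \Val \st W \equbetav V\}$, membership $P_i \in \bV$ is equivalent to $M_i \in \bV$: indeed $M_i =_{\beta_v} P_i$, so $M_i \equbetav V$ iff $P_i \equbetav V$ by transitivity of $\equbetav$, and $M_i \in \Val \iff P_i \in \Val$. Hence $\sigma(\bV) = \sum_{P_i \in \bV} p_i = \sum_{M_i \in \bV} p_i = \mu(\bV)$. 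This gives both claimed equalities. Strictly, the associated distributions $\mu, \sigma$ aggregate weights of syntactically identical terms, but since the sums above range over the same index set with identical weights and since class-membership is preserved, the aggregation does not affect either total.

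**Where the difficulty lies.** The statement is essentially a bookkeeping lemma once the Fact is in hand, so no single step is a genuine obstacle; the only point requiring care is making sure that a deep step is always a $\beta_v$-step (never a probabilistic step), so that the Fact — which concerns $\dred$ producing a singleton with $=_{\beta_v}$ — actually applies to every summand. This follows because $\oplus$-redexes are closed only under surface contexts, so they can never be deep; a deep step is therefore a $\beta_v$-reduction fired strictly inside a $\lambda$-abstraction or a $\oplus$-branch. Keeping the reflexive (identity) summands and the genuinely-reducing summands on the same footing, via the $P_i = M_i$ convention, is the one place where a sloppy treatment could introduce a gap.
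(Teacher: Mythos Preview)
Your proof is correct and follows exactly the route the paper intends: the paper states the lemma as an immediate consequence of the preceding Fact, and you have simply spelled out the pointwise bookkeeping through the lifting rules. There is nothing to add.
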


We write $\m \tolims \bmu$ (resp. $\m \toliml \bmu$) if there is a sequence $\seq \m$ such that all steps $\m_i\Red \m_{i+1}$ are surface (resp. left) reductions
and $ \seq \m \down\bmu$.
Remember that given $\m$, we write   $\den \m$ for  the unique maximal element of $\Lim(\m)$, and  $\m\tolim \bmu$ if there \emph{exists} a $\Red$-sequence from $\m$ which converges to $\bmu$.



We now prove asymptotic completeness for  surface evaluation.  
We  exploit  finitary  standardization  (Thm.~\ref{thm:surfacestandard}) and  \emph{extend it to the limit}. In the proof, it is essential  the fact that $\dRed$-steps preserve the distributions (Lemma~\ref{lem:int}).

\begin{thm}[Asymptotic Completeness of Surface Reduction]\label{thm:surfeval}	$\m\tolim\bmu$ if and only if   $\m \tolims \bmu$.
\end{thm}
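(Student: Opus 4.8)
The plan is to dispatch the easy implication immediately and then build a single surface sequence for the hard one. The implication $\m\tolims\bmu \Rightarrow \m\tolim\bmu$ is trivial: a surface sequence is in particular a $\Red$-sequence (since $\sRed\subseteq\Red$) and convergence is defined identically, so the witnessing surface sequence already witnesses $\m\tolim\bmu$. All the work lies in the converse, which is the asymptotic strengthening of finitary standardization.

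For $\m\tolim\bmu \Rightarrow \m\tolims\bmu$, fix a $\Red$-sequence $\seq\m$ from $\m=\m_0$ with $\seq\m\down\bmu$, so that $\bmu(\bV)=\sup_n\mu_n(\bV)$ for every $\bV\in\Val_\sim$. I would construct, by induction on $n$, multidistributions $\r_n$ satisfying (i) $\r_n\sRed^*\r_{n+1}$, (ii) $\r_n\dRed^*\m_n$, and (iii) $\m\sRed^*\r_n$. The base case is $\r_0:=\m_0$. For the inductive step, from (ii) and $\m_n\Red\m_{n+1}$ I get $\r_n\Red^*\m_{n+1}$; applying finitary surface standardization (Thm.~\ref{thm:surfacestandard}) to this finite reduction yields $\r_{n+1}$ with $\r_n\sRed^*\r_{n+1}$ and $\r_{n+1}\dRed^*\m_{n+1}$, giving (i) and (ii), and (iii) follows by transitivity of $\sRed^*$. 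Concatenating the segments $\r_n\sRed^*\r_{n+1}$ produces a single surface sequence $\seq\s$ from $\m$ (padded with the reflexive step should it stabilize), along which every $\r_n$ occurs, with associated distribution $\rho_n$.

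The final step identifies the limit $\bsigma$ of $\seq\s$ with $\bmu$ by a squeeze. Since each $\r_n\dRed^*\m_n$ consists of deep steps, Lemma~\ref{lem:int} gives $\rho_n(\bV)=\mu_n(\bV)$ for every $\bV\in\Val_\sim$; as the $\r_n$ occur in $\seq\s$, this already yields $\bsigma(\bV)\ge\sup_n\rho_n(\bV)=\sup_n\mu_n(\bV)=\bmu(\bV)$. For the reverse inequality I invoke monotonicity of observations (Def.~\ref{def:obs}): any intermediate term $\s_k$ lies on a segment with $\s_k\sRed^*\r_{n+1}$ for some $n$, hence $\sigma_k(\bV)\le\rho_{n+1}(\bV)=\mu_{n+1}(\bV)\le\bmu(\bV)$, so $\bsigma(\bV)=\sup_k\sigma_k(\bV)\le\bmu(\bV)$. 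Therefore $\bsigma=\bmu$, i.e. $\m\tolims\bmu$.

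The main obstacle is precisely coordinating the standardized prefixes into a coherent infinite surface sequence without overshooting: a greedy surface reduction could well converge to $\den\m\ge\bmu$ rather than to the prescribed $\bmu$. The inductive device — re-standardizing $\r_n\Red^*\m_{n+1}$ at each stage so that the surface part grows monotonically while the deep tail keeps tracking $\m_n$ — is what anchors the construction to the given sequence. Lemma~\ref{lem:int}, stating that deep steps preserve the $\Val_\sim$-distribution, is what makes this anchoring \emph{observable}, and the monotonicity bound on intermediate terms is what rules out overshooting; together they pin the limit of $\seq\s$ exactly to $\bmu$.
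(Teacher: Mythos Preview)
Your proof is correct and follows essentially the same approach as the paper: both construct the surface sequence by iterating finitary surface standardization (Thm.~\ref{thm:surfacestandard}) on $\r_n\dRed^*\m_n\Red\m_{n+1}$, then use Lemma~\ref{lem:int} to transfer the $\Val_\sim$-observations along the deep tails and close with a two-sided squeeze. Your treatment of the upper bound $\bsigma(\bV)\le\bmu(\bV)$ is in fact slightly more explicit than the paper's, which relies implicitly on the cofinality of the $\r_n$ in the concatenated sequence.
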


\begin{proof} We prove  that $ \m \tolim \bmu$ implies $ \m \tolims \bmu$ (the other direction holds by definition). 
		Assume $\seq{\m} \down\bmu $, with $\m=\m_0$. As illustrated in Fig.~\ref{fig:surfEV},
	we  build  a sequence $\langle \s_{\m_n}\rangle $ such that  $\m_0=\s_{\m_0}$ and  $\forall i$ ($ \s_{\m_i}\sRed^* \s_{\m_{i+1}}$  and $\s_{\m_{i+1}}\dRed^* \m_{i+1}$).   If $i=0$, 
	 by Thm.~\ref{thm:surfacestandard} it exists  $\s_{\m_1}$  such that $\m_0=\s_{\m_0}\sRed^* \s_{\m_1}\dRed^* \m_{1}$.
	We then procede by induction: for  each $i>0$, we apply  Thm.~\ref{thm:surfacestandard} to the sequence  $\s_{\m_{i}}\dRed^* \m_{i}\Red \m_{i+1}$,  and obtain 
	the multidistribution $\s_{\m_{i+1}}$ such that  $ \s_{\m_i}\sRed^* \s_{\m_{i+1}}$  and $\s_{\m_{i+1}}\dRed^* \m_{i+1}$, as wanted. 
	The concatenation of  all segments  $\s_{\m_0}\sRed^* \s_{\m_1}, ..., \s_{\m_i} \sRed^*\s_{\m_{i+1}} , ...$ is a $\sRed$-sequence. Let $\bsigma$ be its limit.
	By Lemma \ref{lem:int} and the fact that  $\s_{\m_{i}}\dRed^* \m_{i}$, we have   $\sigma_{\m_{i}}(\bV) = \mu_i(\bV)$, for each $\bV\in \Val_\sim$. We conclude  $\bsigma=\bmu$ because $\forall i$:
	\begin{enumerate}
		\item $\sigma_{\m_{i}}(\bV) = \mu_i(\bV)\leq \bmu(\bV)$, therefore $\bsigma(\bV) \leq \bmu(\bV)$.
		\item $ \mu_i(\bV)= \sigma_{\m_{i}}(\bV) \leq \bsigma(\bV)$, therefore $\bmu(\bV) \leq \bsigma(\bV)$.
	\end{enumerate}
\end{proof}

\begin{remark}We observe that  completeness of  surface evaluation  (Thm.~\ref{thm:surfeval}) is specific to convergence w.r.t. $\Val_{\sim}$  and  $\{\Val\}$  (the most natural set of observations in CbV).
	\SLV{It does not necessarily hold}{Surface evaluation  is not  necessarily complete} if we evaluate  w.r.t. other sets of observations, such as normal forms, where deep  steps may be needed. Consider, for example,  the term 
	$\lambda z. II \dred \lambda z. I$.
	To define a complete strategy  w.r.t. $\Nnf_\sing$   demands to  refine the approach.
\end{remark}

\subsection{Surface and Left  Evaluation}\label{sec:fullcomplete}
We are now equipped to tackle  the goal of this section, namely the existence of  a strategy to find the greatest limit distribution of a program.

Since our aim is to   reach the greatest limit, it makes sense to reduce "whenever is possible", and use the full lifting $\full$ (Def.~\ref{def:fulllift}).
The reason is easy to see. Consider for example  $\m=\mset{\two \Delta\Delta, \two II}$, which has greatest limit  $\den \m =\{\mathbf{I}^\two\}$. We observe  that a $\Red$-sequence from $\m$ may very well keep reducing only the diverging term $\Delta\Delta$ and  never reach $\den\m$. The reduction $\full$, instead, forces the reduction of each term which is not in normal form for $\red$.
\begin{lemma}Let $\brho$ be maximal  among the limit distribution of all   $\full$-sequences from $\m$. Then $\brho =\den{\m}$. 
\end{lemma}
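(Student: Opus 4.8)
The plan is to prove the two inequalities $\brho \leq \den{\m}$ and $\den{\m} \leq \brho$ separately, the first being immediate and the second carrying the content. For $\brho \leq \den{\m}$: since $\full \subset \Red$, every $\full$-sequence from $\m$ is in particular a $\Red$-sequence from $\m$, so its limit distribution belongs to $\Lim(\m)$. As $\den{\m}$ is the greatest element of $\Lim(\m)$ (Thm.~\ref{thm:unique}), every $\full$-limit is $\leq \den{\m}$; in particular $\brho \leq \den{\m}$. The remaining task is to exhibit \emph{one} $\full$-sequence whose limit distribution is exactly $\den{\m}$: such a $\den{\m}$ is then a $\full$-limit dominating every $\full$-limit, hence the \emph{greatest} one, hence the unique maximal one, which forces $\brho = \den{\m}$.

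To build such a sequence I would start from asymptotic completeness of surface reduction. Since $\den{\m}\in\Lim(\m)$, Thm.~\ref{thm:surfeval} gives $\m \tolims \den{\m}$, i.e.\ a surface sequence $\seq{\s}$ with $\seq{\s}\down\den{\m}$. I would then \emph{pad} $\seq{\s}$ into a full sequence $\seq{\tt}$: at each step, besides firing the surface redexes prescribed by $\seq{\s}$ (made fair, so that every surface-non-normal term is reduced), I also reduce, by a $\dred$-step, every term that is surface-normal but not $\red$-normal, as fullness requires. The resulting $\seq{\tt}$ reduces every non-normal term at each step, so it is a genuine $\full$-sequence. The point is that the padding is invisible to the chosen observations: by Lemma~\ref{lem:int} a $\dred$-step leaves $\mu(\bV)$ unchanged for every $\bV\in\Val_\sim$. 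Maintaining the invariant $\s_n \dRed^* \tt_n$, Lemma~\ref{lem:int} yields $\tau_n(\bV)=\sigma_n(\bV)$ at every stage, whence $\seq{\tt}$ converges to $\sup_n \sigma_n(\bV) = \den{\m}(\bV)$, that is to $\den{\m}$.

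The main obstacle is the bookkeeping behind the padding, namely keeping the invariant $\s_n \dRed^* \tt_n$ stable under the surface steps. Two ingredients are needed. First, a $\dred$-step preserves surface-normality: reducing inside a $\lambda$-abstraction or an $\oplus$-branch cannot expose a surface redex, since the surface applicative spine is untouched and, by the Fact preceding Lemma~\ref{lem:int}, a value stays a value and a non-value stays a non-value. Second, a surface step performed on $\tt_n$ must simulate the corresponding surface step on $\s_n$ up to deep reductions, i.e.\ $\dred$-steps must commute with and be postponable past $\sred$-steps; this is precisely the well-behavedness underlying finitary surface standardization (Thm.~\ref{thm:surfacestandard}). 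Granting this commutation, the invariant is preserved, $\seq{\tt}$ is a legitimate $\full$-sequence converging to $\den{\m}$, and the lemma follows. One subtlety to handle carefully is that making the surface part \emph{fair} may fire more surface redexes than $\seq{\s}$ itself; this can only increase the accumulated value-mass, so together with the bound $\brho\leq\den{\m}$ from the first paragraph the limit is squeezed back to exactly $\den{\m}$.
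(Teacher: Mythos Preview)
Your high-level plan—$\brho \leq \den{\m}$ from $\full \subset \Red$, then exhibiting a $\full$-sequence with limit $\geq \den{\m}$ and squeezing—matches the paper's two-line proof exactly. The difference is in how that $\full$-sequence is produced. The paper does it directly: take \emph{any} $\Red$-sequence converging to $\bmu$ and, at each step, also reduce every element that was left idle but is not yet $\red$-normal; by monotonicity of observations (Def.~\ref{def:obs}) this can only raise each $\mu(\bU)$, so the resulting $\full$-sequence has limit $\geq \bmu$. No surface/deep split, no appeal to Thm.~\ref{thm:surfeval}, no Lemma~\ref{lem:int}.

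Your route through asymptotic completeness of surface reduction is an unnecessary detour that manufactures its own complications. The invariant $\s_n \dRed^* \tt_n$ you propose cannot survive once you force fairness (the extra surface steps are not deep), as you yourself notice; and your closing squeeze (``extra surface steps only increase value-mass'') tacitly assumes $\s_n \Red^* \tt_n$, which your construction does not obviously guarantee step by step: $\tt_n$ is the $n$-th term of an independently running $\full$-sequence, not a reduct of $\s_n$. This is repairable (re-index and use confluence to maintain $\s_n \Red^* \tt_{f(n)}$ for some increasing $f$), but once you do that you are essentially reproducing the paper's direct completion argument with surplus machinery attached.
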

\begin{proof}
	Obviously, $\brho \in \Lim(\m)$. It is straightforward to check that if $\bmu$ is the limit of a $\Red$-sequence, then  there is a $\full$-sequence, whose limit is greater or equal to  $\bmu$.
\end{proof}
 We write $\sfull$ (resp. $\lfull$) for the full lifting of $\sred$  (resp. $\lred$). Observe that given $\m$, there is only one $\lfull$-sequence.
We use  the letters $\L=\seq l, ~\S=\seq s, ~\T=\seq t$  to indicate (infinite) reduction sequences.
 We say that $\m$ is \emph{closed} if it is a multidistribution  on closed terms\SLV{.}{ \ie\ 
$\m=\mset{p_iM_i\st \iI}$ with $M_i$ closed $\forall \iI$.}
\begin{prop}[Left Evaluation]\label{thm:lefteval} 	Let $\m$ be closed.
	\begin{enumerate}
		\item 	Let $\S, \T$ be  $\sfull$-sequences from $\m$; $\S\down \bmu$ if and only if $\T \down \bmu$.
		
		
		\item Let  $\S $ be  any $\sfull$-sequence from $\m$, and $\L$   the $\lfull$-sequences from $\m$. Then $\S\down \bmu$ if and only if $\L\down \bmu$.
	\end{enumerate}
\end{prop}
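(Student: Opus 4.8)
The plan is to establish Part 1 — that any two $\sfull$-sequences from a closed $\m$ have the same limit distribution (with $\Obs=\Val_\sim$ fixed) — and then obtain Part 2 as an immediate instance. For Part 2, I would first note that $\Red$ preserves closedness, so every multidistribution occurring in a sequence from a closed $\m$ is closed; by Lemma~\ref{lem:closed}, for closed terms the $\sred$-normal forms, the $\lred$-normal forms and the values all coincide. Hence an $\lfull$-step reduces exactly the non-value terms, each at its leftmost (surface) redex, so the unique $\lfull$-sequence $\L$ is one particular $\sfull$-sequence. Part 1 applied to $\S$ and $\L$ then yields $\S\down\bmu$ if and only if $\L\down\bmu$.

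For Part 1 the engine is the diamond property of $\sfull$. I would record as a lemma that $\sfull=\sfullbv\cup\fullo$ is diamond: $\sfullbv$ is diamond because weak $\redbv$ evaluation is, $\fullo$ is diamond by the argument of Lemma~\ref{lem:oplusCR}, and the two diamond-commute by the argument of Lemma~\ref{lem:comm} (lifted pointwise as in Lemma~\ref{lem:pointwise}); the diamond analogue of Hindley--Rosen then applies. Given two $\sfull$-sequences $\S=\seq s$ and $\T=\seq t$ with $s_0=t_0=\m$, I would tile a grid $r_{i,j}$ with $r_{i,0}=s_i$, $r_{0,j}=t_j$, each square closed by the diamond so that $r_{i,j}\sfull r_{i+1,j}$, $r_{i,j}\sfull r_{i,j+1}$, $r_{i+1,j}\sfull r_{i+1,j+1}$ and $r_{i,j+1}\sfull r_{i+1,j+1}$.

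The crucial observation is that value mass is insensitive to the choice of surface redex: if $M\sred\m_1$ and $M\sred\m_2$ are two distinct surface steps on a term $M$, then $\mu_1(\bV)=\mu_2(\bV)$ for every $\bV\in\Val_\sim$. Indeed, a single surface step produces a value only when it contracts a top-level redex, i.e.\ when $M=(\lam x.P)V$ or $M=P\oplus Q$; in both cases that redex is the unique surface redex of $M$ (the components of a $\beta_v$-redex are values, hence surface normal, and surface contexts never reach under $\oplus$). Consequently, whenever $M$ admits a genuine choice of surface redex, $M$ is an application all of whose redexes lie strictly inside a subterm, so every one-step reduct is again an application and contains no value, giving $\mu_1(\bV)=0=\mu_2(\bV)$; and when the redex is unique the two reducts coincide. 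Lifting this pointwise, any two $\sfull$-reducts of a multidistribution carry the same mass on each $\bV$.

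Applying this to the grid gives $\rho_{i+1,j}(\bV)=\rho_{i,j+1}(\bV)$ for all $i,j$, since $r_{i+1,j}$ and $r_{i,j+1}$ are two $\sfull$-reducts of $r_{i,j}$; a discrete induction on anti-diagonals then shows $\rho_{i,j}(\bV)$ depends only on $i+j$. In particular $\sigma_n(\bV)=\rho_{n,0}(\bV)=\rho_{0,n}(\bV)=\tau_n(\bV)$ for every $n$, so $\bsigma(\bV)=\sup_n\sigma_n(\bV)=\sup_n\tau_n(\bV)=\btau(\bV)$, proving Part 1. The step I expect to be the main obstacle is the diamond property of $\sfull$ at the level of multidistributions — lifting the single-term diamond (weak $\redbv$ interleaved with $\oplus$) uniformly across the support, where an $\oplus$-step replaces a single term by a two-point sub-multidistribution. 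By contrast, the value-insensitivity observation is elementary once one isolates that a value can only be created by contracting a top-level redex, and the anti-diagonal induction is purely combinatorial.
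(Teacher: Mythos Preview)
Your proposal is correct and follows essentially the same route as the paper. The paper delegates Part~1 to \cite{pars}, which it summarizes as establishing $\mu_k(V)=\rho_k(V)$ for any two $k$-step $\sfull$-sequences via a diamond property --- precisely the grid-tiling plus value-mass-insensitivity argument you spell out (your observation that a surface step creates a value only from a top-level redex, which is then the unique surface redex, is the key pointwise fact); Part~2 is then obtained from Part~1 via Lemma~\ref{lem:closed}, exactly as you do by noting that on closed terms the $\lfull$-sequence is a particular $\sfull$-sequence.
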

\begin{proof}
 \cite{pars}, Sec.6,  studies a CbV probabilistic $\lambda$-calculus with surface reduction ($\PLambda^{\texttt{weak}}$)  and proves  using a diamond property that if  
	$\m  {\oset[-.5ex]{\surf~~~~}{\full^{k}}}   \m_k$ and $\m {\oset[-.5ex]{\surf~~~~}{\full^k}} \r_k$ (both sequence have  k steps)
	then $\forall V\in \Val$, $\mu_k(V) = \rho_k(V)$. Hence   claim (1.) follows. 
	
 	Claim (2.) follows from (1.) and from Lemma~\ref{lem:closed}, which  implies that if $M\sred \n$ is \emph{closed}, we can always choose a surface step which is a $\lred$-step.
\end{proof}

Putting all elements together, we have proved that the limit distribution of  \emph{any} $\sfull$-sequence from $\m$ is $\den \m$.
In particular, 
$\den \m$ is also the limit distribution of the  $\lfull$-sequence from $\m$.
\begin{thm} \label{thm:eval_final} For $\m$ closed, the following hold.
	\begin{enumerate}
		\item Let  $\S$ be  any $\sfull$-sequence from $\m$. Then $\S\down \den{\m}$.
		\item Let  $\L$ be  the $\lfull$-sequence from $\m$. Then $\L \down \den{\m}$.
		\item 		The  sets  		$\{\rho \st \m\tolim \rho\}, \{\rho \st \m\tolims \rho\}$, and 
		$\{\rho \st \m\toliml \rho\}$ have the same greatest element, which is $\den \m$.
	\end{enumerate}
\end{thm}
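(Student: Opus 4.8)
The plan is to assemble the three claims around the pivot object $\den\m$, using the machinery already in place and fixing once and for all $\Obs=\Val_\sim$, so that by Lemma~\ref{lem:int} every $\dRed$-step leaves the observed distribution unchanged. Since a $\sfull$-sequence is in particular a $\sRed$-sequence, along any such sequence the observed reals $\mu_n(\bV)$ are nondecreasing and bounded (Def.~\ref{def:obs}); hence the limit distribution exists and is well defined, and I may write $\S\down\bmu$ for it.

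For part (1), I would first invoke Proposition~\ref{thm:lefteval}.1: for closed $\m$, every $\sfull$-sequence converges to one and the same limit $\bmu$, so it suffices to identify $\bmu$ with $\den\m$. The inequality $\bmu\le\den\m$ is immediate, since a $\sfull$-sequence is a surface sequence: from $\m\tolims\bmu$ Theorem~\ref{thm:surfeval} gives $\m\tolim\bmu$, so $\bmu\in\Lim(\m)$ and thus $\bmu\le\den\m$ by Theorem~\ref{thm:unique}. For the reverse inequality I would use that $\den\m$ is itself surface-reachable: from $\m\tolim\den\m$ Theorem~\ref{thm:surfeval} yields a $\sRed$-sequence $\seq\n$ with $\seq\n\down\den\m$. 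Applying the surface analogue of the domination property established for $\full$ in the Lemma preceding Proposition~\ref{thm:lefteval} — namely that the limit of any $\sRed$-sequence is dominated by the limit of some $\sfull$-sequence — produces a $\sfull$-sequence whose limit is $\ge\den\m$; since by Proposition~\ref{thm:lefteval}.1 all $\sfull$-sequences share the limit $\bmu$, this forces $\bmu\ge\den\m$, and hence $\bmu=\den\m$.

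Part (2) is then a direct corollary: for closed $\m$ there is a unique $\lfull$-sequence $\L$, and Proposition~\ref{thm:lefteval}.2 says it converges to the same limit as any $\sfull$-sequence, which by (1) is $\den\m$. For part (3), I would note that Theorem~\ref{thm:surfeval} gives $\{\rho\st\m\tolims\rho\}=\{\rho\st\m\tolim\rho\}=\Lim(\m)$, so both have greatest element $\den\m$ by Theorem~\ref{thm:unique}; and since $\lred\subset\sred$, every left sequence is a surface sequence, whence $\{\rho\st\m\toliml\rho\}\subseteq\Lim(\m)$ has all elements $\le\den\m$, while the $\lfull$-sequence of part (2) exhibits $\den\m$ as a member — so its greatest element is again $\den\m$.

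The step I expect to be the main obstacle is the reverse inequality $\bmu\ge\den\m$ in part (1): it is the only place needing the ``greediness'' of full reduction, i.e.\ that forcing a reduction of every non-surface-normal component cannot lose value-probability against an arbitrary surface schedule. Making this precise amounts to transporting the domination argument of the Lemma from $\full/\Red$ to $\sfull/\sRed$; the essential ingredients are the diamond property of $\sfull$ imported from \cite{pars} (already underlying Proposition~\ref{thm:lefteval}.1), together with the observational inertness of deep steps (Lemma~\ref{lem:int}), which is exactly what lets distinct surface schedules be compared on the common observable $\Val_\sim$.
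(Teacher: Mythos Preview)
Your proposal is correct and follows essentially the same assembly that the paper has in mind; indeed the paper offers no proof beyond the sentence ``Putting all elements together, we have proved that the limit distribution of any $\sfull$-sequence from $\m$ is $\den\m$,'' so your writeup is in fact more explicit than the original. Your identification of the reverse inequality $\bmu\ge\den\m$ as the one genuinely nontrivial step, and your plan to obtain it via a surface-level analogue of the domination argument in the preceding Lemma (full surface reduction dominates an arbitrary surface schedule), is exactly the glue needed between Theorem~\ref{thm:surfeval}, that Lemma, and Proposition~\ref{thm:lefteval}; the paper leaves this implicit.
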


While   left reduction is not standard for finite sequences (as Example~\ref{counterex:standard} shows), still is able to reach $\den \m$, if we only evaluate programs, i.e., closed terms.
Thm.~\ref{thm:eval_final}  justifies (a posteriori!) the use of the leftmost-outermost  strategy in the literature of probabilistic $\lambda$-calculus: left evaluation actually produces  the  best asymptotic result. However,  it is not the only strategy to achieve this:  any $\sfull$-sequence will.


%


\section{Summing-up and Overview} 

The definition of reduction in $\PLambda^\cbv$ is based on two components: the $\beta_{v}$-rule and the $\oplus$-rule. We stress that  only the $\oplus$-step is constrained,     while   $\beta_{v}$ is inherited "as is" from the  $\lambda$-calculus. The $\beta_{v}$-rule is allowed in \emph{all} contexts, while the $\oplus$-rule is disabled in  a function body.
This avoids  confusion between duplicating a function which performs a choice, and duplicating the choice, that  is   the core of  confluence failure. 
It is then natural to expect that the fine control on duplication which is offered by  linear logic could be  beneficial. 

In Sec.~\ref{sec:Plinear} we apply the methods and tools which we have developed to study $\PLambda^\cbv$ to 
define a probabilistic linear calculus $\PLambda^!$ which extends with a probabilistic choice  Simpson's   linear  $\lambda$-calculus  \cite{Simpson05}. 
   This  is a result of interest in its own, but also evidence that our  approach  is robust, as it transfers well to other probabilistic calculi.
In Sec.~\ref{sec:CBN} we then define a call-by-name probabilistic calculus, $\PLambda^{\cbn}$, and we show that  similar results to the ones we have established for $\PLambda^\cbv$ hold. 

As we will see, the three calculi follow the same pattern:  the $\oplus$-reduction  (and \emph{only this} reduction) is restricted to  surface contexts. In Sec. \ref{sec:conclusion} we 
discuss how the three calculi relate.

\section{Probabilistic Linear Lambda Calculus}\label{sec:Plinear}
$\Lambda^!$ \cite{Simpson05} is an untyped linear $\lambda$-calculus which is closely based on linear logic.
Abstraction is refined into linear abstraction $\lam x.M$ and  non-linear abstraction $\lam!x.M$, which allows duplication of the argument. The argument of  $\lam!x.M$ is required  to be suspended as thunk  $!N$, that  corresponds to the $!$-box of linear logic.
In this section, we define a probabilistic linear $\lambda$-calculus $\PLambda^!$ by extending $\Lambda^!$ with an operator $\oplus$. 
\SLV{}{We demand that probabilistic choice is not reduced under the scope of a $!$ operator, while  the {$\beta$-reduction is unrestricted}.
We show that this suffices to preserve confluence; we then study the  properties of  the calculus.}

\subsection{Syntax of $\PLambda^!$}
\subsubsection{The language}
Raw \emph{terms}  $M, N, \dots$   are built up from a countable set of variables $x, y, \dots$  according to the grammar:
\begin{center}
	{\footnotesize 
	$
	\begin{array}{lcllr} M & ::= & x \mid !M \mid
	\lambda x.M   \mid 	\lambda! x.M \mid
	MM \mid	M\oplus N  & (\textbf{terms } \PLambda^!)
\end{array}
$
}
\end{center}
We say that $x$ is affine (resp. linear) in $M$ if $x$ occurs free at most (resp. exactly) once in $M$, and moreover, the free occurrence of $x$ does not lie within the scope of a $!$ operator.  A term $M$ is affine (resp. linear) if for
  every subterm $\lam x.P$ of $M$, $x$ is so in $P$.
Henceforth, we consider affine  terms only.


\SLV{}{
	It is immediate to observe that if $M$ is affine (linear) and $M\red N$, then $N$ is affine (linear).
}

\emph{Contexts} ($\cc$) and \emph{surface contexts} ($\ss$) are generated  by the grammars:
\begin{center}
	{\footnotesize  
$
\begin{array}{lclll} 
\ccontext & ::= & \square  \mid M\cc \mid \cc M   \mid \lambda x.\cc \mid  \lambda! x.\cc  \mid !\cc 
\mid \cc \oplus M \mid	M \oplus \cc &(\textbf{contexts})\\

\ss & ::=& \square \mid M\ss \mid \ss M   \mid \lambda x.\ss \mid  \lambda! x.\ss & (\textbf{surface c.})
\end{array}
$
}
\end{center}
where $\square$ denotes the \emph{hole} of the term context. 
Observe  that a \emph{surface} context is defined in a  different way than in \ref{sec:syntax}. Here it expresses the fact that  a 
\emph{surface redex cannot occur  in the scope of a $!$ operator} (nor in the scope of a $\oplus$).

\subsubsection{Reductions}We follow the same pattern as for $\PLambda^\cbv$.
The beta  rules $\mapsto_{\beta}$ are given in Fig.~\ref{fig:ruleslin}. 
The probabilistic rules $\mapsto_{l\oplus},\mapsto_{r\oplus}$
are as in  Fig.~\ref{fig:rules}. The reduction steps are in Fig.~\ref{fig:steps}:  the $\beta$-rule is closed under general context, while the $\oplus$-rules are closed under surface contexts. The $\beta$-rules   also can be restricted to  the closure under surface contexts, as shown in Fig.~\ref{fig:steps}.  A $\red$-step is \emph{deep} (written $\dred $) if  it is \emph{not} surface.
The \emph{lifting} of the relation $\red:\PLambda^!\times \MDST{\PLambda^!}$ to a binary relation  on $\Red$ $\MDST{\PLambda^!}$  is defined as in   Fig.~\ref{fig:lifting}.

\vspace{-2pt}
\begin{figure}[h]\centering
	{\scriptsize 
		\begin{tabular}{|l | l|}
			\hline
			\textbf{	Beta	 Step $\redb$}  & 	\textbf{Surface	Beta Step $\sredb$}\\[4pt]
			\infer{\cc(M) \redb \multiset{\cc(M')}}{M\mapsto_{\beta} M'} & 
			\infer{\ss(M) \sredb \multiset{\ss(M')}}{M\mapsto_{\beta} M'} \\[4pt]
			\hline \hline
	         \quad   &	\textbf{ (Surface) $\oplus$-Step  $\redo:=~\sredo$}\\[4pt]
              & 
				\infer{\ss(M\oplus N) \sred_{\oplus} \multiset{\frac{1}{2}\ss(M), \frac{1}{2}\ss(N)}}{M \oplus N\mapsto_{l\oplus} M\quad M \oplus N \mapsto_{r\oplus}N}\\
				&  \\
			\hline
			\hline
			\textbf{		Reduction Step $\red$}  & 	\textbf{Surface		Reduction Step $\sred$}\\[4pt]	
			$\red~:=~\redb \cup \redo$& $ \sred~:=~ \sredb \cup \redo$\\[4pt]
			\hline
		\end{tabular}				
		\caption{Reduction Steps }\label{fig:steps}
	}
\end{figure}
\vspace{-6pt}

\begin{figure}[h]\centering
	{\scriptsize 	
		\fbox{
			\begin{minipage}[c]{0.45\textwidth}\centering
				$
				(\lambda x.M)N \mapsto_{\beta} M[N/x] 
				\quad		\quad
				(\lambda !x.M)!N \mapsto_{\beta} M[N/x]
				$
			\end{minipage}
	}}		\caption{$\beta$ reduction rules for $\PLambda^!$ }\label{fig:ruleslin}\end{figure}

\vspace{-4pt}

\begin{remark}
	To limit   notations for reductions and contexts, 
	we use the same as for  $\PLambda^\cbv$, clearly the meaning is  different.
\end{remark}

\subsection{$\PLambda^!$ is a conservative extension of $\Lambda^!$}

As in \ref{sec:tr_cbv}, 
we  denote by $\redb$ both the reduction in $\Lambda^!$ and the $\beta$ reduction in $\PLambda^!$; we prove that   $(\PLambda^!, \Redb)$ is a conservative extension of $(\Lambda^!,\redb)$. 
\begin{Def}[Translation]\label{def:tr_lin}
$\trb{\cdot}: \PLambda^! \rightarrow \Lambda^!$ is defined in the following way, where $z$ is a fixed fresh variable
$$
\begin{array}{lcl|lcl}
\trb{x}&=&x & \trb{\lambda x. M}&=& \lambda x. \trb{M}\\
\trb{M \oplus N}&= &z ~!\trb{M} ~!\trb{N} & \trb{\lambda !x. M}&=& \lambda !x. \trb{M}\\
\trb{MN}&=&\trb{M}\trb{N}&\trb{!M}&=& ! \trb{M}\\
\end{array}
$$
\end{Def}
 Note that the translation of terms of the form $M \oplus N$ is designed so to preserves  surface  reduction.
\SLV
{\begin{prop}[Simulation]\label{prop:translationlin} 
		The translation $\trb{\cdot}$ is  sound and complete, and preserve surface reduction. 
		Let   $\red_b\in \{\redb, \sredb\}$. For $M \in \PLambda^!$, the following hold
		\begin{enumerate}
			\item $M \red_b \mset{N}$ implies $\trb{M}\red_b\trb{N}$.
			\item $\trb{M}\red_b P $ implies that exists (unique) $N\in \PLambda^!$, with $N=\trb{P}$ and $M \red_b  \mset{N}$. 
		\end{enumerate}
	\end{prop}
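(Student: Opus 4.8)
The plan is to prove the two directions by exploiting the fact that the only role of the reserved variable $z$ is to "freeze" a probabilistic choice into an inert application whose head is a variable, so that $z\,!\trb{M}\,!\trb{N}$ contains no $\beta$-redex at either of the two application nodes it introduces. First I would record two routine preliminaries. The \emph{substitution lemma} $\trb{M[N/x]} = \trb{M}[\trb{N}/x]$ (the substituted variable $x$ being distinct from the reserved $z$), proved by a direct induction on $M$; and the \emph{context lemma}, that translation commutes with context filling, $\trb{\cc(M)} = \widehat{\cc}(\trb{M})$ for a suitable context $\widehat{\cc}$ of $\Lambda^!$, and that $\widehat{\cc}$ is a \emph{surface} context of $\Lambda^!$ whenever $\cc$ is a surface context of $\PLambda^!$ (immediate from the clauses, since surface contexts enter neither $!$ nor $\oplus$, and $\trb{\lambda x.\ss}=\lambda x.\widehat{\ss}$, $\trb{\lambda!x.\ss}=\lambda!x.\widehat{\ss}$). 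I would also use injectivity of $\trb{\cdot}$, which follows by induction once one observes that the outermost constructor of $\trb{M}$ determines that of $M$, the only subtlety being that an application whose head is $z$ in the pattern $z\,!\cdot\,!\cdot$ can only arise from an $\oplus$.

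For soundness (part 1) I would argue by induction on the derivation of $M \red_b \mset{N}$. Writing the step as $M = \cc(P)$, $N = \cc(P')$ with $P \mapsto_\beta P'$ and $\cc$ a (surface) context, the substitution lemma handles both redex shapes: $\trb{(\lambda x.A)B} = (\lambda x.\trb{A})\trb{B}\mapsto_\beta\trb{A}[\trb{B}/x]=\trb{A[B/x]}$, and $\trb{(\lambda!x.A)(!B)} = (\lambda!x.\trb{A})(!\trb{B})\mapsto_\beta\trb{A[B/x]}$. The context lemma then lifts this to $\trb{M}=\widehat{\cc}(\trb{P})\red_b\widehat{\cc}(\trb{P'})=\trb{N}$, and $\red_b$ is $\sredb$ exactly when $\cc$, hence $\widehat{\cc}$, is surface.

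For completeness (part 2) I would instead induct on the structure of $M$, reading off the possible $\beta$-redexes of $\trb{M}$. The cases $M = x,\ !M',\ \lambda x.M',\ \lambda!x.M'$ and $M = M_1M_2$ are routine: a redex either sits at the top, in which case the outermost constructor of $\trb{M_1}$ forces $M_1$ to be an abstraction of the matching kind (and, for $\lambda!$, forces $\trb{M_2}=\,!\trb{M_2'}$, i.e. $M_2=\,!M_2'$), so the step is the image of a redex of $M$; or it sits inside a proper subterm, where the induction hypothesis applies. The decisive case is $M = M_1 \oplus M_2$, with $\trb{M} = z\,!\trb{M_1}\,!\trb{M_2}$: since the head $z$ is a variable, neither introduced application is a $\beta$-redex, so every redex of $\trb{M}$ lies strictly inside $!\trb{M_1}$ or $!\trb{M_2}$; by the induction hypothesis such a redex is the image of a step $M_i \redb \mset{N_i}$, whence $P = \trb{(N_1\oplus M_2)}$ (resp. $\trb{(M_1\oplus N_2)}$) with $M \redb \mset{N_1\oplus M_2}$ (resp. $\mset{M_1\oplus N_2}$). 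Crucially these steps are \emph{deep}, the redex being under a $!$, matching the fact that $M_1\oplus M_2$ admits no surface $\beta$-step; hence for $\red_b = \sredb$ this case is vacuous, which is exactly what "preserves surface reduction" requires. In every case the witness $N$ is unique since $\trb{\cdot}$ is injective, so $P = \trb{N}$ determines $N$.

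I expect the main obstacle to be the bookkeeping in part 2, namely guaranteeing that the translation creates \emph{no spurious redexes} and that surface/deep status matches on the two sides: one must verify both that the inert application nodes headed by $z$ never form redexes, and that the positions blocked in $\PLambda^!$ (under $!$, under $\oplus$) correspond exactly to the positions the translation places under a $!$, so that surface reductions are in bijection. Everything else reduces to the substitution and context lemmas.
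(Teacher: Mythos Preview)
Your proposal is correct. The paper does not actually give a proof of this proposition: it is stated without argument, with only the remark that ``the translation of terms of the form $M \oplus N$ is designed so to preserve surface reduction''. Your approach---substitution lemma, context lemma (including that surface contexts of $\PLambda^!$ translate to surface contexts of $\Lambda^!$), injectivity via the reserved variable $z$, and then structural induction on $M$ for the completeness direction---is precisely the expected routine verification. You correctly isolate the one nontrivial point: in $\trb{M_1\oplus M_2}=z\,!\trb{M_1}\,!\trb{M_2}$ the two new application nodes are inert (their heads are $z$ and $z\,!\trb{M_1}$, neither an abstraction), so any $\beta$-redex lies under a $!$ and is therefore deep, matching the fact that $\oplus$ is not a surface context in $\PLambda^!$. (As an aside, the printed statement contains a typo: clause~(2) should read $P=\trb{N}$ rather than $N=\trb{P}$; you have interpreted it in the intended direction.)
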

}
{
\begin{prop}[Simulation]\label{prop:translationlin}
Let $M \in \PLambda^!$. 
\begin{enumerate}
\item $M \redb \mset{N}$ implies $\trb{M}\redb \trb{N}$.
\item $\trb{M}\redb  P $ implies that exists (unique) $N\in \PLambda^!$, with $N=\trb{P}$ and $M \redb  \mset{N}$. 
\item $M \sredb  \mset{N}$ implies $\trb{M}\sredb  \trb{N}$.
\item $\trb{M}\sredb  P $ implies  exists (unique) $N\in \PLambda^!$, s.t.  $N=\trb{P}$ and $M \sredb \mset{N}$. 
\end{enumerate}
\end{prop}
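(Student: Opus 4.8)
The plan is to follow exactly the pattern of the CbV simulation (Prop.~\ref{prop:translation}), the only new ingredients being the $!$-construct and the thunks introduced by the translation of $\oplus$. First I would establish a \emph{substitution lemma}: for all $M,N\in\PLambda^!$ and every variable $x$,
\[
\trb{M[N/x]} \;=\; \trb{M}[\trb{N}/x].
\]
This is proved by straightforward structural induction on $M$. The only cases worth a remark are $M = M_1\oplus M_2$ and $M=!M_1$: since the reserved variable $z$ occurs in no source term (in particular $z\neq x$, and $z$ is never captured), the substitution commutes through the subterms $z\,!\trb{M_1}\,!\trb{M_2}$ and $!\trb{M_1}$ without interference.

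Next I would record the two facts that make the reserved variable $z$ do its work. The translation is \emph{injective}: a term of the shape $z\,!A\,!B$ can only be the image of some $M_1\oplus M_2$, because $\trb{PQ}=\trb{P}\,\trb{Q}$ could match this shape only if some proper subterm of the source translated to the single variable $z$, which is impossible as $z$ occurs in no source term. Consequently the $\beta$-redexes of $\trb{M}$ are in bijection with those of $M$: the translation introduces no new abstraction, the head $z$ of a translated $\oplus$ is a variable (so $z\,!A$ is never a $\beta$-redex), and therefore every $\beta$-redex of $\trb{M}$ is the image of a unique $\beta$-redex of $M$, and conversely. Moreover this bijection respects the surface/deep distinction: translating a surface context $\ss$ of $\PLambda^!$ never places the hole under a $!$ (the grammar of surface contexts has no $!\ss$ nor $\ss\oplus M$ production), so the image of a surface context is again a surface context of $\Lambda^!$, whereas the branches of an $\oplus$ are sent under a $!$, \ie\ into a deep position.

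With these in hand, soundness (items 1 and 3) is an induction on the context $\cc$ (resp.\ surface context $\ss$) witnessing $M=\cc(R)$ with $R$ a root $\beta$-redex. In the base case $\trb{(\lambda x.P)Q}=(\lambda x.\trb{P})\trb{Q} \redb \trb{P}[\trb{Q}/x]=\trb{P[Q/x]}$ and $\trb{(\lambda !x.P)!Q}=(\lambda !x.\trb{P})\,!\trb{Q} \redb \trb{P}[\trb{Q}/x]=\trb{P[Q/x]}$, both closing by the substitution lemma; the inductive step uses that translation commutes with contexts, $\trb{\cc(R)}=\trb{\cc}(\trb{R})$, together with closure of $\redb$ under all contexts of $\Lambda^!$ (including $!$). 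For item 3 the same computation applies, and since $\trb{\ss}$ is a surface context the resulting step is a surface step. Completeness (items 2 and 4) is then immediate from the redex bijection: given $\trb{M}\redb Q$, the contracted redex of $\trb{M}$ corresponds to a redex of $M$; contracting the latter yields $N$ with $M\redb\mset{N}$, and by soundness the two contractions agree, so $Q=\trb{N}$; uniqueness of $N$ follows from injectivity. The surface case 4 is identical, using that a surface redex of $\trb{M}$ corresponds to a surface redex of $M$.

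The main obstacle I anticipate is not the inductions themselves but isolating cleanly the \emph{redex-reflection} property, \ie\ that the translation creates no spurious $\beta$-redex at the $z\,!(\cdot)\,!(\cdot)$ juncture and that it preserves the surface/deep status of each redex. Everything hinges on $z$ being a variable (blocking a $\beta$-redex at the head of a translated $\oplus$) and on $z$ being globally fresh (yielding injectivity and the recognizability of the image); once this is made precise, the four statements follow uniformly, exactly as in the CbV case.
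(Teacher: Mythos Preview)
The paper does not give an explicit proof of this proposition; it only remarks that ``the translation of terms of the form $M\oplus N$ is designed so to preserve surface reduction,'' treating the rest as routine in the spirit of Prop.~\ref{prop:translation}. Your proof is correct and fills in exactly the details one would expect: the substitution lemma, injectivity via the reserved variable $z$, the bijection of $\beta$-redexes (with the key observation that $z$ being a variable blocks any new redex at the head of a translated $\oplus$, and that the translation introduces no $\lambda$ or $\lambda!$), and---crucially for items 3 and 4---that the branches of $\oplus$ are sent under $!$-boxes, so the surface/deep status of every redex is preserved in both directions. The ``main obstacle'' you single out is precisely the content of the paper's one-line remark, and you handle it correctly.
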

}

The translation tells us that the reduction $\mset M\Redb \mset N$ on $\PLambda^!$ behaves as the reduction $\trb{M}\redb
\trb{N}$ on $\Lambda^!$.

\subsection{Confluence and Finitary Standardization  for  $\PLambda^!$}

The following properties hold for  $\Lambda^!$  \cite{Simpson05}.  
\begin{theorem*}[Simpson 05] The following hold in $\Lambda^!$.
\begin{enumerate}
\item \textbf{Confluence}. $\redb$ is confluent.
\item \textbf{Surface Standardization.} If $M\redb^* N$ then exists $R$ such that $M\sredb^* R$ and $R\dred^* N$.
\condinc{}{	\item $\sredbvl$ is quasi-diamond (see Sec.~\ref{sec:diamond}).}
\end{enumerate}
\end{theorem*}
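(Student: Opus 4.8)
The plan is to establish the two properties separately, in each case adapting the standard rewriting toolkit of the $\lambda$-calculus to the two substitution mechanisms of $\Lambda^!$. A useful preliminary observation, which I would record first, is that the outermost constructor of a term is invariant under $\redb$ (in particular a box $!M$ reduces only to a box $!M'$). Consequently the redex status of a $\mapsto_\beta$-redex is stable — notably, whether the argument of a $\lambda!$ is a thunk $!N$ cannot change — which keeps residuals well behaved and makes a box a genuine barrier that reduction never crosses from the inside out.

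For \textbf{confluence}, I would use the Tait--Martin-L\"of / Takahashi method of parallel reduction. First I would define a parallel beta reduction $\parredb$ contracting a chosen set of redexes simultaneously: $x \parredb x$; congruence under $!(\cdot)$, $\lambda x.(\cdot)$, $\lambda! x.(\cdot)$ and application; and the two contraction clauses $(\lambda x.M)N \parredb M'[N'/x]$ and $(\lambda! x.M)!N \parredb M'[N'/x]$ whenever $M \parredb M'$ and $N \parredb N'$. The two lemmas to prove are: (i) a \emph{substitution lemma}, that $M \parredb M'$ and $N \parredb N'$ imply $M[N/x] \parredb M'[N'/x]$ (where the linear case is copy-bounded by affinity and the $!$-case replaces each boxed occurrence of $x$ by a parallel copy of $N$); and (ii) Takahashi's \emph{triangle property}: with $M^{*}$ the complete development of $M$ (all its redexes fired at once), $M \parredb M'$ implies $M' \parredb M^{*}$. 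The triangle gives the diamond for $\parredb$, and since $\redb \subseteq \parredb \subseteq \redb^{*}$, confluence of $\redb$ follows.

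For \textbf{surface standardization}, the strategy is to show that deep steps \emph{postpone} after surface ones and then reorganise any $\redb$-sequence into a surface phase followed by a deep phase. The core is the single-step postponement $\dred \cdot \sredb \subseteq \sredb \cdot \dred^{*}$: if $M \dred P \sredb N$, then $M \sredb Q \dred^{*} N$ for some $Q$. This holds because a deep step only rewrites the interior of some box and hence neither creates nor destroys a surface redex; so the surface redex contracted in $P$ already occurs in $M$ and may be fired first. When the two redexes are disjoint they commute outright; the interesting case is a non-linear surface redex $(\lambda! x.A)!B$ whose argument box is exactly where the postponed deep step lives, so firing the surface step first duplicates $B$ into the boxed occurrences of $x$ in $A$ and the single deep step reappears as \emph{several} deep residuals — whence $\dred^{*}$ on the right. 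From this I would derive $\dred \cdot \sredb^{*} \subseteq \sredb^{*} \cdot \dred^{*}$ by induction on the length of the surface segment, and finally $\redb^{*} = (\sredb \cup \dred)^{*} \subseteq \sredb^{*} \cdot \dred^{*}$ by induction on the length of the mixed sequence, which is exactly the statement with $R$ the endpoint of the surface phase.

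I expect the \textbf{main obstacle} to be the duplication bookkeeping common to both halves: the rule $(\lambda! x.M)!N \mapsto_\beta M[N/x]$ copies its argument an unbounded number of times, so a single parallel step, or a single postponed deep step, can split into many residuals, and the substitution and postponement lemmas must track these copies carefully (this is precisely why postponement produces $\dred^{*}$ rather than one deep step). The affine discipline confines all genuine copying to the $!$-rule — the linear rule $(\lambda x.M)N$ never duplicates, since $x$ occurs at most once — which keeps the residual analysis finite and makes the plain length inductions for the diamond and for the sorting argument well founded, so no extra termination measure is required.
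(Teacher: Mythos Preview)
The paper does not prove this theorem at all: it is quoted verbatim from Simpson's work and used as a black box. The paper's own contribution is to lift these two properties from $\Lambda^!$ to the probabilistic extension $\PLambda^!$ via the translation $\trb{\cdot}$. So there is no ``paper's own proof'' to compare against here.

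That said, your plan is the standard one and is essentially what Simpson does. Two remarks.

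For confluence, the Tait--Martin-L\"of/Takahashi scheme you describe is correct. The only point that deserves explicit care is that the substitution lemma splits into two shapes: when $x$ is affine (linear abstraction) at most one copy of $N$ is produced and the copy sits at a surface position; when $x$ is $!$-bound, arbitrarily many copies may appear, each under some $!$. Both cases go through, but they are genuinely different inductive steps.

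For surface standardization, your single-step postponement $\dred \cdot \sredb \subseteq \sredb \cdot \dred^{*}$ is correct (with the erasure case giving zero residuals), but the iteration you sketch has a small gap. From $M \dred P \sredb P_1 \sredb^{n-1} N$ you get $M \sredb Q \dred^{*} P_1 \sredb^{n-1} N$; now you face \emph{several} deep steps before the remaining surface segment, and the inductive hypothesis ``$\dred \cdot \sredb^{n-1} \subseteq \sredb^{*}\cdot\dred^{*}$'' only handles one. The clean fix is to first strengthen the core lemma to $\dred^{*} \cdot \sredb \subseteq \sredb \cdot \dred^{*}$ (a straightforward induction on the number of deep steps, since the surface side stays a single step), and only then induct on the length of the surface segment to obtain $\dred^{*}\cdot\sredb^{*}\subseteq\sredb^{*}\cdot\dred^{*}$. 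Alternatively, one packages the deep phase as a single deep \emph{parallel} step, which is what the paper does in the analogous $\PLambda^{\cbv}$ argument (Lemma~\ref{lem:basicst} and Corollary~\ref{cor:post}).
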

We  show, using  the methods   developed for $\PLambda^\cbv$ and the translation in  Def.~\ref{def:tr_lin}, that the same properties  hold  for $\PLambda^!$.

\subsubsection{Confluence} We 
follow the same approach as  in Sec.~\ref{sec:confluence}.
In fact, we already have most of the building blocks for the proof.  Observe   that Lemma  \ref{lem:pointwise} is general enough to  apply  also  to binary relations on $\MDST{\PLambda^!}$.
\begin{lemma}\label{lem:conf_lin}
\begin{enumerate}
\item 	The  reduction $\Redo$ is diamond.
\item 	The reduction $\Redb$  is confluent.
\item 	The reductions $\Redb$ and $\Redo$ commute.
\end{enumerate}
\end{lemma}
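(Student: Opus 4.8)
The plan is to replay, almost verbatim, the modular development of Section~\ref{sec:confluence}. Since Lemma~\ref{lem:pointwise} is stated generally enough to apply to binary relations on $\MDST{\PLambda^!}$, each of the three multidistribution-level claims can be reduced to a \emph{pointwise} statement about single terms of $\PLambda^!$ and then lifted. Concretely, for (1) and (3) I would establish the local property (*) of Lemma~\ref{lem:pointwise} and invoke it to obtain $\diamond$-commutation (which in particular yields commutation), while for (2) I would import confluence from $\Lambda^!$ through the translation $\trb{\cdot}$.

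For (1), the argument is identical to Lemma~\ref{lem:oplusCR}. I would show that if $M\redo\n$ and $M\redo\s$ via two surface $\oplus$-redexes, then these redexes are disjoint: by the grammar of surface contexts a surface $\ss$ never descends into the scope of a $\oplus$ (nor of a $!$), so neither $\oplus$-redex can lie inside the other. Hence both can be fired in the residual context, closing the diamond in one step, and Lemma~\ref{lem:pointwise} gives that $\Redo$ is diamond. For (2) I would follow Lemma~\ref{lem:betaCR}. Given $\m\Redb^*\n$ and $\m\Redb^*\s$, the definition of lifting writes $\m=\mset{p_iM_i}$, $\n=\mset{p_iN_i}$, $\s=\mset{p_iS_i}$ with $M_i\redb^*N_i$ and $M_i\redb^*S_i$. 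I would project these sequences onto $\Lambda^!$ via soundness (Prop.~\ref{prop:translationlin}.1), apply Simpson's confluence of $\redb$ in $\Lambda^!$ to get common reducts $R_i$ of $\trb{N_i}$ and $\trb{S_i}$, and pull the $R_i$ back to $\PLambda^!$ via completeness (Prop.~\ref{prop:translationlin}.2), which provides unique $T_i$ with $\trb{T_i}=R_i$ and the required reductions; collecting the $T_i$ into a multidistribution closes the diagram.

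For (3), I would prove the pointwise local commutation: if $M\redb\n$ and $M\redo\s$, then there is $\r$ with $\n\Redo\r$ and $\s\Redb\r$, by induction on $M$, mirroring Lemma~\ref{lem:comm} but with the extra cases coming from the richer grammar ($!M$, $\lambda!x.M$, and the duplicating rule $(\lambda!x.P)!Q\mapsto_\beta P[Q/x]$). The main obstacle, and the only genuinely new point compared to $\PLambda^\cbv$, is the interaction of $\beta$-duplication with the $\oplus$-step. The crucial observation is that the surface $\oplus$-redex fired by $M\redo\s$ is, by definition of surface context, \emph{never} in the scope of a $!$; since $\beta$ can duplicate a subterm only when it is the content $Q$ of a thunk $!Q$, the fired $\oplus$-redex is never copied by the $\beta$-step, so after reducing $\beta$ it still occurs exactly once and can be reduced to meet the other branch. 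Conversely, the $\oplus$-step may duplicate a $\beta$-redex sitting in the surrounding surface context $\ss$ into both branches $\ss(P)$ and $\ss(Q)$, but lifting lets us reduce it in both simultaneously in a single $\Redb$-step.

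Splitting into the usual cases (disjoint redexes; the $\oplus$-redex inside the $\beta$-redex; the $\beta$-redex inside the $\oplus$-redex; and deeper nestings dispatched by the induction hypothesis through the context constructors) closes the diagram pointwise, and Lemma~\ref{lem:pointwise} lifts this to $\diamond$-commutation of $\Redb$ and $\Redo$, whence their commutation. Together the three parts supply exactly the hypotheses of the Hindley--Rosen lemma, so that $\Red=\Redb\cup\Redo$ will be confluent, in complete analogy with $\PLambda^\cbv$.
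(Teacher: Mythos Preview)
Your proposal is correct and follows essentially the same route as the paper: reuse Lemma~\ref{lem:oplusCR} for (1), transfer Simpson's confluence via $\trb{\cdot}$ and Prop.~\ref{prop:translationlin} for (2), and establish the pointwise hypothesis of Lemma~\ref{lem:pointwise} by induction on $M$ for (3). The one point worth making explicit is the key case of (3), namely $M=(\lambda x.P)Q$ with the surface $\oplus$-redex inside $Q$: your observation that ``$\beta$ can duplicate only the content of a thunk $!Q$'' is exactly right, but it silently relies on the affineness restriction on linear abstraction (at most one free occurrence of $x$ in $P$), which is what guarantees that the linear $\beta$-step does not copy the $\oplus$-redex; the paper states this affineness hypothesis explicitly in that case.
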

\begin{proof}
\SLV{}{The details of the proof are in Appendix~\ref{sec:app_conflin}.}	
The proof of	 1) and 2) is  as for  Lemmas \ref{lem:oplusCR} and   \ref{lem:betaCR}; 3) is proved using   Lemma~\ref{lem:pointwise}, by induction on the term.
\end{proof}
By  Hindley-Rosen Lemma, we obtain 
\begin{thm}\label{thm:conf_lin} The reduction $\Red$ of $\PLambda^!$   is confluent.
\end{thm}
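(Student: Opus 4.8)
The plan is to apply the Hindley--Rosen Lemma to the decomposition $\Red = \Redb \cup \Redo$, in exactly the same modular fashion used for $\PLambda^\cbv$ in the confluence theorem of Section~\ref{sec:confluence}. Since $\red := \redb \cup \redo$ at the level of terms (Fig.~\ref{fig:steps}) and lifting is monotone, the union $\Redb \cup \Redo$ and $\Red$ have the same reflexive--transitive closure, so it suffices to verify the three hypotheses of Hindley--Rosen for the pair $(\Redb,\Redo)$.

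All three ingredients are already furnished by Lemma~\ref{lem:conf_lin}. First, part~(1) states that $\Redo$ is diamond, and since $\diamond$-commutation (with itself) implies confluence --- as recorded in Section~\ref{sec:rel} --- $\Redo$ is confluent. Second, part~(2) gives confluence of $\Redb$ directly. Third, part~(3) gives commutation of $\Redb$ and $\Redo$. With both relations confluent and the two commuting, Hindley--Rosen yields confluence of $\Redb \cup \Redo$, hence of $\Red$.

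The only point that needs a word of care is the passage from $\Redb \cup \Redo$ to $\Red$ itself: a single $\Red$-step on $\m = \mset{p_i M_i \mid \iI}$ may reduce some components $M_i$ by a $\redb$-step and others by a $\redo$-step, so it is neither a pure $\Redb$-step nor a pure $\Redo$-step. However, because the components of a multidistribution are independent, such a mixed step factors as a $\Redb$-step (reducing the $\beta$-components, idle elsewhere via rule $L1$) followed by a $\Redo$-step, so $\Red^* = (\Redb \cup \Redo)^*$ and confluence of the union transfers to $\Red$.

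I expect essentially no obstacle at this stage: the substantive work lies entirely in Lemma~\ref{lem:conf_lin}, and in particular in its commutation part~(3), whose proof mirrors Lemma~\ref{lem:comm} and relies on the fact that $\oplus$-redexes are confined to surface contexts --- so they never occur under a $!$ (nor under another $\oplus$) and hence cannot be created or destroyed by a $\beta$-step in a way that breaks the local diamond. Granting Lemma~\ref{lem:conf_lin}, the present theorem is an immediate corollary of Hindley--Rosen.

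\begin{proof}
By the Hindley--Rosen Lemma applied to $\Red = \Redb \cup \Redo$: by Lemma~\ref{lem:conf_lin}, $\Redo$ is diamond (hence confluent), $\Redb$ is confluent, and $\Redb$ and $\Redo$ commute.
\end{proof}
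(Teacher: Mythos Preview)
Your proof is correct and follows exactly the paper's approach: apply Hindley--Rosen to the pair $(\Redb,\Redo)$ using the three parts of Lemma~\ref{lem:conf_lin}. Your extra remark that a mixed $\Red$-step factors as a $\Redb$-step followed by a $\Redo$-step (so that $\Red^* = (\Redb \cup \Redo)^*$) is a useful clarification that the paper leaves implicit.
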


\subsubsection{Surface standardization}

\begin{prop}[Finitary Surface Standardization]\label{thm:surfacestandard_lin} In $\PLambda^!$, if $\m\Red^*\n$ then exists $\r$ such that $\m \sRed^* \r$ and $\r \dRed^* \n$.
\end{prop}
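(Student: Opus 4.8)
The plan is to replay, in the linear setting, the two-step strategy used for Theorem~\ref{thm:surfacestandard}. The crucial simplification is that in $\PLambda^!$ the $\oplus$-rule is closed \emph{only} under surface contexts, so $\redo = \sredo$ and every $\oplus$-step is already a surface step (Fig.~\ref{fig:steps}). Consequently the only deep steps are deep $\beta$-steps $\dredb$, and the statement reduces to showing that in any finite $\Red$-sequence all deep $\beta$-steps can be postponed after all surface steps (the latter being $\sredb$- and $\redo$-steps). I would therefore establish two local postponement facts and then combine them.

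First, postponement of deep $\beta$ past surface $\beta$. This is exactly surface standardization for the pure $\beta$-reduction, which I would import from Simpson's Surface Standardization for $\Lambda^!$ \cite{Simpson05} through the translation $\trb{\cdot}$. Since a $\beta$-step does not branch, a $\Redb^*$-sequence acts component-wise on a multidistribution, so it suffices to standardize each component term; the simulation (Prop.~\ref{prop:translationlin}), which both preserves and reflects surface reduction, transports each component reduction to $\Lambda^!$, where Simpson's theorem reorders it as $\sredb^*\cdot\dred^*$, and the reordered sequence lifts back uniquely to $\PLambda^!$. Thus $\m \Redb^* \n$ implies $\m \sRedb^* \cdot \dRed^* \n$.

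Second, postponement of deep $\beta$ past surface $\oplus$. Here I would argue directly on terms: a deep $\beta$-redex (lying in the scope of a $!$ or of an $\oplus$) and a surface $\oplus$-redex cannot overlap, since they have distinct head constructors and a surface position can never lie below a deep one. Hence, given $M \dredb \mset{M'}$ with $M' = \ss(P\oplus Q) \redo \s$, the surface $\oplus$-redex is already present in $M$ and can be fired first. If the $\beta$-redex lies in the surface-context part (inside a $!$-box of $\ss$), the $\oplus$-step duplicates it into both branches, so the step is replayed by two (lifted) deep $\beta$-steps; if it lies inside one operand, it is carried into a single branch, where it may even become a surface step once the enclosing $\oplus$ is removed. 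In every case the $\oplus$-step commutes to the front and the $\beta$-activity is replayed afterwards; this is the ``deep commutes with $\oplus$'' ingredient of Theorem~\ref{thm:surfacestandard}, and the term-level swap lifts to $\MDST{\PLambda^!}$ by the pointwise lifting of Lemma~\ref{lem:pointwise}.

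Combining the two facts by iterated adjacent swaps moves every deep $\beta$-step to the end, yielding the factorization $\m \sRed^* \r \dRed^* \n$. The main obstacle is precisely the duplication exhibited in the second fact: postponing a single deep $\beta$-step past an $\oplus$-step can create two deep $\beta$-steps, so a naive count of deep steps need not decrease and the reordering is not obviously terminating. I expect this to be resolved, exactly as for the underlying calculus, by using a well-founded measure attached to the term structure (the same one underlying Simpson's standardization) rather than the raw number of steps --- the ``straightforward technology'' already invoked in the proof of Theorem~\ref{thm:surfacestandard}.
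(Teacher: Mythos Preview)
Your decomposition into the two local postponement facts (deep $\beta$ past surface $\beta$, deep $\beta$ past surface $\oplus$) is exactly the one the paper uses. Where you diverge from the paper, and where the gap lies, is in how the two facts are combined.

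You propose iterated adjacent swaps on single steps and then appeal to an unspecified well-founded measure to guarantee termination. You correctly diagnose the obstacle: pushing one deep $\beta$-step past an $\oplus$-step may duplicate it, so the number of deep steps need not decrease. But you do not actually produce the measure, and ``the same one underlying Simpson's standardization'' is not an argument: Simpson's measure (whatever it is) is designed for pure $\beta$, not for a reduction that branches and duplicates work. So as written the proof is incomplete precisely at the point you flag.

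The paper avoids the termination issue altogether by working with \emph{deep parallel reduction} $\dparred$ instead of single deep steps. The swap lemma is stated as: if $M \dparred M'$ and $M' \sred \n$, then $\mset{M}\sRed^* \s \dparRed \n$. The point is that the output is again a \emph{single} $\dparRed$ step; the duplication caused by $\oplus$ is absorbed into the parallel step rather than increasing a step count. An easy corollary pushes one $\dparRed$ past an arbitrary $\sRed^*$. The main induction is then simply on the length of the original $\Red^*$-sequence: peel off the first step, split it as $\sRed\cdot\dRed$, regard the deep part as $\dparRed$, and use the corollary to slide it past the surface prefix given by the inductive hypothesis. No measure is needed.

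So the fix is concrete: replace your single-step swap with a parallel-swap (one $\dparred$ against one $\sred$), prove it by induction on $M$ for the $\oplus$-case and by simulation in $\Lambda^!$ for the $\beta$-case, and then run the length induction. Everything else in your outline is fine.
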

	\SLV{The proof  (see \cite{long})  builds on the analogous result for $\Lambda^!$ and is straightforward.}
	{\begin{proof}
       The proof is given  in Appendix~\ref{sec:postponement_lin}.
		\end{proof} }

\subsection{Asymptotic behaviour}

\emph{Normal forms} are defined as in  \ref{def:nf};  we denote  by $\Nnf^!$ the set of $\red$-normal forms, and by  $\Snf^!$
the set of the \emph{surface normal forms} (\ie\ the $\sred$-normal forms). Clearly $\Nnf^!  \subsetneqq \Snf^!$.
We  define $\Nnf^{!}_\sing:=\{\{M\}, M\in \Nnf^{!} \}$, 
and $\Snf^!_\sim$ as the set of all  events  $\textbf{R}:= \{S\in \Snf^! \st S=_{\beta}  R\} $.

\paragraph{Observations} 
A set of \emph{observations} for $(\PLambda^!, \Red)$ is defined in the same way as  that for  $(\PLambda, \Red)$ (Def.~\ref{def:obs} ).

\begin{prop}\label{prop:obs_lin} Each of the following sets $\set{ \Nnf^!}$, $ \set{\Snf^!}$, $\Nnf^!_{\sing}$, 
	 $\Snf^!_\sim$, is a  set  of observations for $(\PLambda^!, \Red)$.
\end{prop}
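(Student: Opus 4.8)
The plan is to verify, for each of the four families, the two defining conditions of a set of observations (Def.~\ref{def:obs}): pairwise disjointness of the events, and monotonicity $\mu(\bU)\le\mu'(\bU)$ whenever $\m\Red\m'$. As in the proof of Prop.~\ref{prop:observations}, I would reduce the monotonicity condition to a purely term-level \emph{closure} property, showing that it suffices to check that each event $\bU$ in the family satisfies ``$M\in\bU$ and $M\red\mset{p_iM_i\mid\iI}$ imply $M_i\in\bU$ for all $\iI$''. Indeed, by the definition of lifting, a single step $\m\Red\m'$ contracts a redex in some of the terms of $\m$; closure guarantees that the probability mass sitting on $\bU$ never leaves $\bU$, while mass coming from terms outside $\bU$ can only be added, so $\mu(\bU)\le\mu'(\bU)$.

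The two families built from $\red$-normal forms, $\set{\Nnf^!}$ and $\Nnf^!_\sing$, are then immediate: a $\red$-normal form $M$ admits no $\m$ with $M\red\m$, so the closure condition is vacuously satisfied by every subset of $\Nnf^!$. Disjointness is trivial, since $\set{\Nnf^!}$ is a one-element family and $\Nnf^!_\sing$ consists of pairwise distinct singletons.

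The substance of the proof is the analogous closure property for surface normal forms, which I regard as the main obstacle since it is the only point requiring genuine inspection of the reduction and context grammars. The key lemma I would establish is: \emph{if $M\in\Snf^!$ and $M\red\m$, then $\m=\mset{M'}$ with $M\dred\m$, $M'\in\Snf^!$, and $M\eqbeta M'$.} The argument runs as follows. Since $M$ is a $\sred$-normal form, it has no surface redex, hence every redex of $M$ — whether $\beta$ or $\oplus$ — occurs in a \emph{deep} position, i.e. within the scope of a $!$ or inside a $\oplus$-branch, these being exactly the positions not reachable by the surface-context grammar of Sec.~\ref{sec:Plinear}. Consequently any step $M\red\m$ is deep; and because $\redo=\sredo$ is surface by definition (Fig.~\ref{fig:steps}), a deep step can only be a single $\beta$-contraction, giving $\m=\mset{M'}$ and $M\eqbeta M'$. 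Finally, contracting a redex located in a deep position modifies the term only within that deep region, which does not contribute to the surface; hence $M'$ has the same (empty) set of surface redexes as $M$, i.e. $M'\in\Snf^!$.

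With the lemma in hand I would conclude. For $\set{\Snf^!}$, disjointness is trivial (a one-element family) and closure is exactly the lemma, so it is a set of observations. For $\Snf^!_\sim$, the events are the $\eqbeta$-equivalence classes of $\Snf^!$, which partition $\Snf^!$ and are therefore pairwise disjoint; and if $M\in\mathbf R$ (so $M\in\Snf^!$ with $M\eqbeta R$) and $M\red\mset{M'}$, the lemma gives $M'\in\Snf^!$ and $M\eqbeta M'$, whence $M'\eqbeta R$ and $M'\in\mathbf R$, establishing closure. Thus all four families meet the conditions of Def.~\ref{def:obs}.
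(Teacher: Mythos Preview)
Your proof is correct and matches the paper's (implicit) approach: the paper states Prop.~\ref{prop:obs_lin} without proof, but the key ingredient you isolate --- that a surface normal form can only take deep $\beta$-steps, which preserve $\Snf^!$ and $\eqbeta$-class --- is exactly the content of Fact~\ref{fact:key_lin}, stated a few lines later and declared ``easy to verify''. Your reduction of monotonicity to the term-level closure property, and the treatment of the $\Nnf^!$ families as vacuous, mirror the argument of Prop.~\ref{prop:observations} in the CbV case.
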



\paragraph{Limit distributions and evaluation} Once we fix a set of observations $\Obs$ for $(\PLambda^!, \Red)$,
the definition of  evaluation and limit distribution, and the notations  	$\seq{\m} \down\brho$, 	$\m\tolim \brho$ and   $\Lim(\m)$ 
are as in Def.~\ref{def:limits}.
We already observed that  Thm.~\ref{thm:unique} and \ref{thm:adequacy} only depends on confluence, and on the definition of observations;
therefore both  hold.

\begin{thm} For any choice of $\Obs$, 
	$\PLambda^!$ has the  properties:
\begin{itemize}
\item 	$\Lim(\m)$ has a greatest element, which we indicate as  $\den \m$.
\item 	If $\m \Red^* \s$, then	$\den\m = \den\s$.
\end{itemize}
\end{thm}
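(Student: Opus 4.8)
The plan is to exploit the deliberate modularity of Section~\ref{sec:uniquelim}: the Main Lemma (Lemma~\ref{lem:NFP}), the Greatest Limit Distribution theorem (Thm.~\ref{thm:unique}), and the Adequacy theorem (Thm.~\ref{thm:adequacy}) were all proved \emph{abstractly}, invoking only two facts about $(\PLambda,\Red)$: confluence of $\Red$, and the defining clauses of a set of observations (Def.~\ref{def:obs}), from which well-definedness of limit distributions (via \eqref{eq:MCT}) follows. Nothing in those three proofs is CbV-specific; in particular Fact~\ref{rem:value} and Prop.~\ref{prop:observations} are used only to \emph{produce} concrete sets of observations, not inside the proofs themselves. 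So the whole development transfers to $\PLambda^!$ as soon as the two abstract ingredients are secured.

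First I would check that both ingredients are available for $\PLambda^!$. Confluence of $\Red$ is exactly Thm.~\ref{thm:conf_lin}. The notion of observation for $(\PLambda^!,\Red)$ is, by design, literally Def.~\ref{def:obs}, and Prop.~\ref{prop:obs_lin} exhibits concrete instances ($\set{\Nnf^!}$, $\set{\Snf^!}$, $\Nnf^!_\sing$, $\Snf^!_\sim$). The monotonicity clause of Def.~\ref{def:obs} gives, for any sequence $\seq\m$ and any $\bU\in\Obs$, that $\langle\mu_n(\bU)\rangle_n$ is nondecreasing and bounded, so the limit distribution $\brho(\bU):=\sup_n\mu_n(\bU)$ is well-defined by the same monotone-convergence argument as before.

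With these in place I would replay the three proofs essentially verbatim, substituting Thm.~\ref{thm:conf_lin} for the CbV confluence result. For the $\PLambda^!$-analogue of the Main Lemma: given $\bmu\in\Lim(\m)$ realised by $\seq\m$ and a reduction $\m\Red^*\s$, I build from $\s$ a sequence $\s=\s_{\m_0}\Red^*\s_{\m_1}\Red^*\cdots$ whose $i$-th segment is closed by confluence from $\m_i\Red^*\s_{\m_i}$ and $\m_i\Red\m_{i+1}$; its limit $\bsigma$ lies in $\Lim(\m)$, and observation-monotonicity gives $\mu_i(\bV)\le\sigma_{\m_i}(\bV)\le\bsigma(\bV)$ for every $\bV\in\Obs$, whence $\bmu\le\bsigma$, with equality when $\bmu$ is maximal. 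The first bullet (greatest element $\den\m$) then follows as in Thm.~\ref{thm:unique}: taking a maximal $\bmu$, for any $\brho\in\Lim(\m)$ realised by $\seq\r$, applying the Main Lemma to each $\m\Red^*\r_n$ forces $\r_n\tolim\bmu$, so $\rho_n(\bV)\le\bmu(\bV)$ for all $n$, hence $\brho\le\bmu$; this makes the maximal element greatest and unique. The second bullet (adequacy) follows as in Thm.~\ref{thm:adequacy}: $\den\s\in\Lim(\m)$ by concatenating $\m\Red^*\s$ with a sequence converging to $\den\s$, giving $\den\s\le\den\m$, while the Main Lemma applied to $\m\Red^*\s$ with the maximal $\den\m$ yields $\den\m\in\Lim(\s)$, i.e.\ $\den\m\le\den\s$.

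I do not expect a genuine obstacle, precisely because the earlier results were stated at the level of multidistributions and $\Red$ alone. The only point requiring care is the \emph{existence} (not merely uniqueness) of a maximal/greatest element, which in the CbV case the paper attributes to Lemma~\ref{lem:NFP}; that argument too uses only confluence and monotonicity of observations, both now guaranteed for $\PLambda^!$, so it goes through unchanged.
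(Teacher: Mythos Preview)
Your proposal is correct and follows essentially the same approach as the paper. The paper's own justification is the one-line remark preceding the theorem: ``We already observed that Thm.~\ref{thm:unique} and \ref{thm:adequacy} only depends on confluence, and on the definition of observations; therefore both hold''; you simply unfold this observation in detail, confirming that Lemma~\ref{lem:NFP}, Thm.~\ref{thm:unique}, and Thm.~\ref{thm:adequacy} invoke nothing beyond confluence (here supplied by Thm.~\ref{thm:conf_lin}) and the abstract clauses of Def.~\ref{def:obs}.
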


\paragraph{Asymptotic  Standardization} 
For the rest of the section we focus on   $\Obs:=\Snf^!_\sim$.
\SLV{}{Notice that   if  $\brho$ is a limit distribution,  $\brho\in \MDST{\Snf^!_\sim}$.}
We have established that for each  $\m\in \PLambda^!$,  $\Lim(\m)$ has a unique maximal element $\den \m$. We  now want  to 
have  a strategy to find $\den \m$. Surface reduction plays that role.
We use the following fact, which  is easy to verify.

\SLV{
\begin{fact}\label{fact:key_lin}
	Let $M\dred \n$. Then
	1) $\n$ is of the form $\mset{N}$, and $M=_{\beta} N$;
		2) $M\in \Snf^!$ if and only if $N\in \Snf^!$.
\end{fact}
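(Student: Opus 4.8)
The plan is to exploit two structural features of $\PLambda^!$: the probabilistic rule is closed \emph{only} under surface contexts, and surface reduction never enters the inside of a $!$-box nor of a $\oplus$.

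For part 1), the key remark is that in $\PLambda^!$ every $\oplus$-step is by definition surface (recall $\redo = \sredo$), so the only way a step can fail to be surface---i.e.\ the only way it can be \emph{deep}---is to be a $\beta$-step. Since each $\beta$-redex has a unique contractum and $\redb$ is its closure under a single general context, a deep step necessarily has the shape $\cc(P) \dred \mset{\cc(P')}$ with $P \mapsto_{\beta} P'$; its target is therefore the Dirac multidistribution $\n = \mset{N}$ with $N = \cc(P')$. At the level of terms this is one $\redb$-step from $M$ to $N$, whence $M =_{\beta} N$. This settles 1).

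For part 2) I first locate the redex. A context is surface exactly when the path from its root to the hole avoids the productions $!\cc$, $\cc\oplus M$, $M\oplus\cc$; hence a deep step contracts a redex lying beneath at least one $!$ or one $\oplus$. Taking the topmost such constructor $B$ on that path, I write $M=\ss(B)$ with $\ss$ a surface context and $B$ equal to $!Q$ or to $Q_1\oplus Q_2$, the contracted redex sitting inside $B$; then $N=\ss(B')$ with $B'$ of the same top shape, since the step leaves $\ss$ and the leading constructor of $B$ untouched. Now two cases. If $B$ is a $\oplus$-term, then $B$---and equally $B'$---is itself a surface $\oplus$-redex, so neither $M$ nor $N$ lies in $\Snf^!$ and the equivalence is trivial. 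If $B=!Q$, I use that surface reduction treats a $!$-box as opaque: a surface redex of $\ss(!Q)$ either lies entirely within $\ss$, hence is untouched, or has $!Q$ as the argument of an application $(\lambda x.R)(!Q)$ or $(\lambda!x.R)(!Q)$, whose redex status depends only on the leading $!$ of the argument and not on $Q$. In either situation the redex survives the replacement of $!Q$ by $!Q'$, and no new surface redex can be created; hence $M\in\Snf^!$ iff $N\in\Snf^!$.

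The tidiest way to make this opacity argument fully rigorous is to re-prove 2) by induction on $M$, the only substantial cases being $M=PQ$, $M=\lambda x.P$ and $M=\lambda!x.P$: one applies the inductive hypothesis to the immediate subterm carrying the deep redex, together with the remark that a deep step never alters the \emph{top} constructor of the term it rewrites (so the status of a possible head redex $(\lambda x.R)Q$ or $(\lambda!x.R)!S$ is preserved); the cases $M=!P$ and $M=P\oplus Q$ are immediate, as both sides are then surface normal forms, respectively both retain the head $\oplus$-redex. The main obstacle is precisely this bookkeeping in part 2): destruction of a surface redex is clearly impossible since the step only rewrites material hidden inside a box or a $\oplus$-branch, while ruling out the \emph{creation} of a surface redex is exactly the opacity principle, whose cleanest justification is the top-constructor-preservation remark above.
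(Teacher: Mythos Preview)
Your argument is correct. The paper itself does not give a proof of this fact; it simply states that it ``is easy to verify'' and moves on. Your treatment supplies exactly the verification the paper omits: part~1) is the observation that $\redo=\sredo$, so any deep step is a $\beta$-step and hence yields a Dirac multidistribution; part~2) is the decomposition $M=\ss(B)$ with $B$ the topmost $!$- or $\oplus$-node on the path to the redex, together with the opacity of $!$-boxes and $\oplus$-arguments with respect to surface reduction. The only point worth tightening is the ``no creation'' direction: your remark that a deep step preserves the top constructor of the term it rewrites is exactly what is needed, and it also covers the case $(\lambda!x.R)\,Q \dred (\lambda!x.R)\,Q'$, where one must check that $Q$ of the form $!S$ forces $Q'$ of the form $!S'$. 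This follows from the same top-constructor-preservation observation applied to $Q$, so the inductive sketch you close with is indeed the cleanest formal packaging.
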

}{
\begin{fact}\label{fact:key_lin}
Let $M\dred \n$. Then
\begin{enumerate}
\item $\n$ is of the form $\mset{N}$, and $M=_{\beta} N$;
\item $M\in \Snf^!$ if and only if $N\in \Snf^!$.
\end{enumerate} 
\end{fact}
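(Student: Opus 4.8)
The plan is to treat the two items separately, the crucial feature being that in $\PLambda^!$ the probabilistic rule is closed \emph{only} under surface contexts, i.e. $\redo = \sredo$. For item~1, the key observation is that a deep step can never be an $\oplus$-step: since every $\oplus$-step is by definition surface, any step that is \emph{not} surface must be a $\beta$-step. Thus $M \dred \n$ fires a single $\beta$-redex lying in a non-surface position, and the reduction-step rule for $\redb$ produces a singleton multidistribution, so $\n = \mset{N}$ with $M \redb N$. As a single $\beta$-step is in particular a $=_{\beta}$-step, we get $M =_{\beta} N$ at once.

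For item~2, I would first make precise what ``deep'' means positionally: the path from the root of $M$ to the fired redex must cross at least one barrier, namely either a $!$ or a branch of an $\oplus$ (these are exactly the constructors that a surface context $\ss$ cannot traverse). Taking the \emph{topmost} such barrier, everything above it is a surface context, so I can write $M = \ss_0(B)$ with $\ss_0$ a surface context and $B$ either of the form $!P$ (with the redex inside $P$) or of the form $P \oplus Q$ (with the redex inside $P$ or $Q$); correspondingly $N = \ss_0(!P')$ or $N = \ss_0(P' \oplus Q)$, where $P'$ is the result of firing that same $\beta$-redex.

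The two cases are then settled as follows. If the topmost barrier is a $!$, I argue that $\ss_0(!P)$ and $\ss_0(!P')$ have exactly the same surface redexes: surface positions never enter a $!$-box, and a $!$-headed subterm is never itself a redex, so whether $\ss_0$ creates a redex by plugging in a $!$-term depends only on $\ss_0$ and not on its content; since $!P'$ is again $!$-headed, the surface-redex sets of $M$ and $N$ are in bijection, whence $M \in \Snf^!$ iff $N \in \Snf^!$. If instead the topmost barrier is an $\oplus$, then $\ss_0(P \oplus Q)$ already exhibits the surface $\oplus$-redex $P \oplus Q$, so $M \notin \Snf^!$; the same holds for $N = \ss_0(P' \oplus Q)$, and the equivalence is vacuously true (both sides false).

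The only point requiring a little care — the ``main obstacle'', though it is mild — is the $!$-barrier case: one must check that firing a $\beta$-redex strictly inside a $!$-box, whose right-hand side may duplicate or erase material through substitution, cannot alter the surface structure. This is precisely what the topmost-barrier decomposition buys us, since all the rewriting stays confined below the opaque $!$, leaving $\ss_0$ and the $!$-headedness of the plugged subterm untouched. The argument is the linear-logic analogue of the standard fact that deep steps commute with surface structure (cf.\ Lemma~\ref{lem:int} and the Fact preceding it for $\PLambda^\cbv$), and becomes routine once the decomposition is in place.
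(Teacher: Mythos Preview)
Your argument is correct. Item~1 is exactly right: since $\redo = \sredo$ in $\PLambda^!$, any non-surface step is necessarily a $\beta$-step, and the $\redb$ step-rule yields a singleton multidistribution. For item~2, your topmost-barrier decomposition $M = \ss_0(B)$ with $B \in \{!P,\, P\oplus Q,\, Q\oplus P\}$ is the natural way to go, and your case analysis is sound: in the $!$-case the surface-redex set of $\ss_0(!P)$ depends only on $\ss_0$ and on the plugged subterm being $!$-headed (so replacing $P$ by $P'$ changes nothing at surface level); in the $\oplus$-case both $M$ and $N$ exhibit the surface redex $P\oplus Q$ (resp.\ $P'\oplus Q$) at the hole position, so neither is in $\Snf^!$.

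As for comparison with the paper: there is nothing to compare. The paper does not prove Fact~\ref{fact:key_lin}; it merely introduces it with ``which is easy to verify'' and moves on. Your write-up is therefore strictly more detailed than the paper's treatment, and the barrier-decomposition you give is the standard way one would verify such a claim in a calculus where surface contexts are blocked precisely by $!$ and $\oplus$.
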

}

\begin{thm}[Asymptotic Completeness]\label{thm:surfeval_lin} In $\PLambda^!$ it holds that	$\m\tolim\bmu$ if and only if   $\m \tolims \bmu$.
\end{thm}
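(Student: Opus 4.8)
The plan is to transcribe, almost verbatim, the proof of Theorem~\ref{thm:surfeval} for $\PLambda^\cbv$, replacing each ingredient by its $\PLambda^!$ counterpart: Finitary Surface Standardization for $\PLambda^!$ (Prop.~\ref{thm:surfacestandard_lin}) in place of Thm.~\ref{thm:surfacestandard}, the observation set $\Snf^!_\sim$ in place of $\Val_\sim$, and Fact~\ref{fact:key_lin} in place of the value-preservation fact feeding Lemma~\ref{lem:int}. First I would dispatch the easy direction: since $\sred\subseteq\red$ and lifting preserves inclusions, we have $\sRed\subseteq\Red$, so every $\sRed$-sequence is in particular a $\Red$-sequence with the same associated distributions; hence $\m\tolims\bmu$ immediately gives $\m\tolim\bmu$. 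All the work is in the converse.

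For the converse I first need the $\PLambda^!$ analogue of Lemma~\ref{lem:int}, namely that deep steps leave the observation distribution on $\Snf^!_\sim$ unchanged. This follows from Fact~\ref{fact:key_lin} by the same pointwise lifting: if $\m\dRed\s$, then each constituent deep step $M\dred\n$ gives $\n=\mset{N}$ with $M=_\beta N$ and $M\in\Snf^!$ iff $N\in\Snf^!$; since the classes $\textbf{R}\in\Snf^!_\sim$ are closed under $=_\beta$ and deep steps never leave $\Snf^!$, we obtain $\mu(\Snf^!)=\sigma(\Snf^!)$ and $\mu(\textbf{R})=\sigma(\textbf{R})$ for every $\textbf{R}\in\Snf^!_\sim$.

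Next, given a $\Red$-sequence $\seq{\m}$ with $\m=\m_0$ and $\seq{\m}\down\bmu$, I would build, as in Fig.~\ref{fig:surfEV}, a family $\langle\s_{\m_n}\rangle$ with $\s_{\m_0}=\m_0$ and, for all $i$, $\s_{\m_i}\sRed^*\s_{\m_{i+1}}$ and $\s_{\m_{i+1}}\dRed^*\m_{i+1}$. The construction is iterative: at stage $0$ apply Prop.~\ref{thm:surfacestandard_lin} to $\m_0\Red\m_1$; at stage $i+1$ apply it to the composite $\s_{\m_i}\dRed^*\m_i\Red\m_{i+1}$, factoring it as $\s_{\m_i}\sRed^*\s_{\m_{i+1}}\dRed^*\m_{i+1}$. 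Concatenating the surface segments yields a genuine $\sRed$-sequence, with some limit $\bsigma$, so $\m\tolims\bsigma$. From $\s_{\m_i}\dRed^*\m_i$ and the deep-preservation lemma, $\sigma_{\m_i}(\textbf{R})=\mu_i(\textbf{R})$ for every $i$ and every $\textbf{R}$. Since observations are monotone along reduction (Def.~\ref{def:obs}), the supremum over the whole $\sRed$-sequence equals the supremum over the checkpoints $\langle\s_{\m_i}\rangle$, whence $\bsigma(\textbf{R})=\sup_i\sigma_{\m_i}(\textbf{R})=\sup_i\mu_i(\textbf{R})=\bmu(\textbf{R})$; equivalently, the two inequalities $\sigma_{\m_i}(\textbf{R})\le\bmu(\textbf{R})$ and $\mu_i(\textbf{R})\le\bsigma(\textbf{R})$ of the $\PLambda^\cbv$ proof both hold, giving $\bsigma=\bmu$ and hence $\m\tolims\bmu$.

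The only step demanding genuine care, and the natural candidate for the main obstacle, is the deep-preservation lemma itself, i.e. Fact~\ref{fact:key_lin}. In the linear calculus ``deep'' means ``inside a $!$-box'', so one must verify that firing a $\beta$-redex under a $!$ neither creates nor destroys a surface normal form and keeps the term in its $=_\beta$-class. This is exactly what makes $\Snf^!_\sim$ the correct set of observations and what forces the deep tails $\s_{\m_{i+1}}\dRed^*\m_{i+1}$ to contribute nothing to the limit; everything else is a faithful transcription of the $\PLambda^\cbv$ argument, since Prop.~\ref{thm:surfacestandard_lin} already provides the finitary standardization that the limit construction iterates.
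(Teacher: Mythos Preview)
Your proposal is correct and matches the paper's own proof, which simply says to argue as for Thm.~\ref{thm:surfeval} using Fact~\ref{fact:key_lin} and Prop.~\ref{thm:surfacestandard_lin}; you have spelled out exactly that transcription. One minor imprecision: in $\PLambda^!$ a deep step may also occur inside a $\oplus$, not only inside a $!$-box, but Fact~\ref{fact:key_lin} already covers all deep steps, so this does not affect your argument.
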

\begin{proof}
As for    Thm.~\ref{thm:surfeval}, now using Fact~\ref{fact:key_lin} and Prop.~\ref{thm:surfacestandard_lin}.
\end{proof}
Similarly to Sec.~\ref{sec:left_evaluation}, we can establish that any (infinitary) $\sfull$-sequences from $\m$ converges precisely to $\den \m$, where $\sfull$ indicate the full lifting of  the relation $\sred \subseteq \PLambda^!\times \MDST{\PLambda^!}$.
\begin{thm}[Surface Evaluation] \label{thm:surflin}
Let  $ \S = \seq \s$ be  any  $\sfull$-sequences   from $\m$. It holds that  $\S \down \den{\m}.$
\end{thm}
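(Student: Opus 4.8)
The plan is to transport to $\PLambda^!$ the argument of Section~\ref{sec:left_evaluation}, resting on two ingredients: a \emph{diamond property} for the full surface reduction $\sfull$, which forces all $\sfull$-sequences from $\m$ to share one and the same limit, and the \emph{asymptotic completeness} of surface reduction (Thm~\ref{thm:surfeval_lin}), which locates that common limit at $\den\m$. Throughout I fix the set of observations $\Snf^!_\sim$.

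First I would establish that $\sfull$ is diamond on $\MDST{\PLambda^!}$. Following the recipe used for confluence (Lemma~\ref{lem:conf_lin}) together with the pointwise criterion (Lemma~\ref{lem:pointwise}), this decomposes into three points: (i) the full lifting $\fullo$ of the $\oplus$-step is diamond, exactly as in Lemma~\ref{lem:oplusCR}, since two surface $\oplus$-redexes never overlap; (ii) the full surface beta lifting $\sfullb$ is diamond, which I would deduce from the observation that firing a surface $\beta$-redex never duplicates another surface redex --- the affine discipline bounds copies and surface contexts never enter a $!$-box nor the argument of a $\oplus$ --- so surface beta is quasi-diamond in $\Lambda^!$, and this transfers to $\PLambda^!$ through the simulation of Prop~\ref{prop:translationlin}; and (iii) the two $\diamond$-commute, by re-running the induction of Lemma~\ref{lem:conf_lin}.3 with the full lifting in place of the plain lifting. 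The quantitative consequence, in the style of \cite{pars}, is that whenever $\m \sfull^{k} \m_k$ and $\m \sfull^{k} \r_k$ with the same number $k$ of steps, one has $\mu_k(\bU)=\rho_k(\bU)$ for every $\bU\in\Snf^!_\sim$; passing to suprema, every $\sfull$-sequence from $\m$ converges to a single limit $\bmu$.

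It then remains to identify $\bmu$ with $\den\m$. Since every $\sfull$-sequence is in particular a $\Red$-sequence, its limit lies in $\Lim(\m)$, so $\bmu\leq\den\m$. For the converse, Thm~\ref{thm:surfeval_lin} turns $\m\tolim\den\m$ into $\m\tolims\den\m$, i.e.\ some $\sRed$-sequence has limit $\den\m$; and a short domination argument (as in the lemma preceding Thm~\ref{thm:eval_final}) shows that any $\sRed$-sequence is asymptotically dominated by some $\sfull$-sequence, because $\sfull$ reduces every non-normal term at each step and hence never lags behind. Thus some $\sfull$-sequence has limit $\geq\den\m$; but all $\sfull$-sequences share the limit $\bmu$, whence $\bmu\geq\den\m$, and therefore $\bmu=\den\m$, giving $\S\down\den\m$ for the arbitrary $\sfull$-sequence $\S$.

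The main obstacle is the diamond property of $\sfull$, and specifically clause (ii): one must check carefully that surface $\beta$-reduction in $\PLambda^!$ cannot cause a surface redex to be copied --- it is precisely the affine constraint, together with the exclusion of $!$-boxes and $\oplus$-arguments from surface contexts, that makes this hold --- and that the residual bookkeeping survives the lifting to multidistributions. Here Fact~\ref{fact:key_lin}, which guarantees that deep steps neither create nor destroy surface normal forms, is what keeps the observations stable under the diamond completions and makes the passage to the limit legitimate.
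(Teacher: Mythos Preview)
Your proposal is correct and follows essentially the same approach the paper intends: the paper states Thm.~\ref{thm:surflin} without proof, pointing back to Sec.~\ref{sec:left_evaluation} with the remark ``Similarly to Sec.~\ref{sec:left_evaluation}, we can establish that any $\sfull$-sequence from $\m$ converges precisely to $\den\m$''. Your two ingredients --- a diamond property for $\sfull$ (so that all $\sfull$-sequences share one limit, in the style of \cite{pars} as used in Prop.~\ref{thm:lefteval}) and asymptotic completeness (Thm.~\ref{thm:surfeval_lin}) to pin that limit at $\den\m$ --- are exactly the elements combined in the CbV argument leading to Thm.~\ref{thm:eval_final}, and your justification of the quasi-diamond of surface $\beta$ via the affine discipline is the right point to stress in the linear setting.
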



\section{Call-by-Name calculus $\PLambda^{\cbn}$}\label{sec:CBN}
We  show that  results similar to those for $\PLambda^\cbv$ hold for a CbN calculus, denoted $\PLambda^\cbn$.
We could adapt all the proofs, but  now we prefer to follow a different way.
Once we take the point of view of linear logic, we  have  a roadmap to  CbN  via Girard's translation of intuitionistic  into linear logic.  More precisely, we rely on recent work \cite{EhrhardG16,gm18} which  expresses those translations  in  \emph{untyped} $\lambda$-calculus. We exploit the faithful nature of the translation   to transfer both confluence and standardization from $\PLambda^!$ to $\PLambda^\cbn$, essentially for free.

\subsection{Syntax of $\PLambda^\cbn$}
We write $\PLambda^\cbn$ for the set of terms $\PLambda$ equipped with the reduction relation $\Red$  defined below.
\subsubsection{The language} \emph{Terms and contexts} $(\cc)$ are the same as in  $\PLambda^\cbv$.
\emph{Surface  contexts} ($\ss$) are generated  by the grammar:
\begin{center}
	{\footnotesize 
$
\begin{array}{lcllr} 
\ss & ::=& \square \mid \lam x.\ss \mid \ss M    & (\cbn \textbf{ surface contexts})
\end{array}
$
}
\end{center}
\subsubsection{Reductions} 
The $\beta$-rule $\mapsto_{\beta}$  is as in the CbN $\lambda$-calculus (Fig.~\ref{fig:rulescbn}). The probabilistic rules $\mapsto_{l\oplus},\mapsto_{l\oplus}$
are as in  Fig.~\ref{fig:rules}. 

\vspace{.1in}
\begin{figure}[h]\centering
	{\scriptsize 	
		\fbox{
			\begin{minipage}[c]{0.45\textwidth}\centering
				$
				(\lambda x.M)N \mapsto_{\beta} M[N/x] 
				$
			\end{minipage}
	}}		\caption{  Beta Reduction Rule for $\PLambda^\cbn$}\label{fig:rulescbn}\end{figure}
\vspace{.1in}

\emph{Reduction steps} $\red, \redb,\redo\subseteq \PLambda\times \MDST{\PLambda}$ and \emph{surface reduction steps}
$\sred, \sredo,\sredb  \subseteq \PLambda\times \MDST{\PLambda}$  are defined 
 in Fig.~\ref{fig:steps},  following the usual pattern. 
By  definition of  surface context, a reduction step is \emph{surface} if  it \emph{does not occur in argument position} (nor in the scope of  $\oplus$). 

\condinc{}{A reduction step is \emph{deep} (written $\dred $) if  it is \emph{not} surface.}

The \emph{lifting} of $\red\subseteq\PLambda\times \MDST{\PLambda}$ to a binary relation $\Red$ on $\MDST{\PLambda^\cbn}$  is defined as in Fig.~\ref{fig:lifting}. The full lifting $\full$  is defined as in \ref{def:fulllift}.

%
%

\subsubsection{Normal Forms } \label{sub:normal}We denote  by $\Nnf^\cbn$ the set of $\red$-normal forms, and by  $\Snf^\cbn$
the set of the \emph{surface normal forms} (\ie\ the $\sred$-normal forms). Clearly $\Nnf^\cbn  \subsetneqq \Snf^\cbn$.

Let us extend to $\PLambda^{\cbn}$ the notion of \emph{head normal form}.
 Head reduction $\hred$ is 
the closure of both the $\beta$ and the probabilistic rules  under head context $\hh$,
which is defined by the following grammar
\begin{center}
	{\footnotesize $\hh  ::= \lam x.\hh \mid  \kk   \quad\quad \kk::= \square \mid  \kk M \quad\quad (\textbf{ head contexts })$}
\end{center}

\SLV{}
{\begin{remark*}A common way to write head context $\hh$ is as follows:
	{\footnotesize $$\hh  ::= \lam x_1 \dots \lam x_k.\square P_1\dots P_n \quad \quad (\textbf{ head contexts })$$}
\end{remark*} }

Observe  that $\hred ~\subsetneqq ~ \sred$ (for example, the reduction  $(\lam x. (\lam y.y) P) Q \sred (\lam x. P) Q  $    is not a head reduction). However, the two relations have the same normal forms. Let us  write  $\Hnf$ for the set of \emph{head normal forms} . 
\SLV{}{If $M$ is in  surface normal form, it is also in head normal form.} 
It is easy to verify that a head normal form has no  $\sred$-redex, and   conclude:
\begin{center}
$\Snf^\cbn = \Hnf$
\end{center}
\subsubsection{$\PLambda^{\cbn}$ to $\PLambda^{!}$.}\label{sec:tr_cbn}
 In  \cite{gm18},  the  translation from  $\Lambda^\cbn$ into a linear $\lambda$-calculus  is  proved sound and complete.
We follow their work to define a similar  translation  $\trn{\cdot}:\PLambda^\cbn \to \PLambda^!$:
\[
\begin{array}{lcl|lcl}
\trn x&= &x &  \trn{\lam x.M}&=&\lam !x.\trn{M}\\
\trn{MN} &=&  \trn{M}!\trn{N} &
\trn{M\oplus N} &=& \trn{M}\oplus \trn{N}\\
\end{array}
\]
\[
\begin{array}{lcl}
 \trn{\mset{p_iM_i \st \iI}} &=& \mset{p_i\trn{M_i} \st \iI} 
\end{array}
\]
The following   extend to the probabilistic setting an  analogous result proved in \cite{gm18}.
Observe that,  with a slight abuse of  notation,  reductions in the two calculi are denoted in the same way, the meaning being clear from the context.
%
%


\begin{prop}[Simulation]\label{thm:ntr} The  translation $\trn{.}$ is sound and complete; it preserves surface reduction and surface normal forms.
\SLV{
		Let   $\red_b\in \{\redb, \sredb\}$. For	
		 $M\in \PLambda^\cbn$, it holds that:
		\begin{enumerate}
			\item if $M \red_b \n$  then $\trn{M}\red_b\trn \n$ ;
			
			\item if $\trn{M}\red_b \s$ then $\exists ! \n$  such that $\s= \trn \n$ and $M\red_b \n$;

			\item $M\in \Hnf$ if and only if $\trn{M} \in \Snf^!$.
	\end{enumerate} 	
}
{	
Let $M\in \PLambda^\cbn$; the following hold:
	\begin{enumerate}
	\item if $M \red \n$  then $\trn{M}\red\trn \n$;
	\item if $M \sred \n$  then $\trn{M}\sred\trn \n$;
	
	\item if $\trn{M}\red \s$ then $\exists ! \n$  such that $\s= \trn \n$ and $M\red \n$;
	\item if $\trn{M}\sred \s$ then $\exists ! \n$  such that $\s= \trn \n$ and $M\sred \n$;
	
	\item $M\in \Hnf$ if and only if $\trn{M} \in \Snf^!$.
\end{enumerate} 	}
\end{prop}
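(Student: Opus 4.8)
The plan is to reduce all five items to two routine structural facts about $\trn\cdot$ and then read them off. First I would prove the \emph{substitution lemma} $\trn{M[N/x]}=\trn{M}[\trn{N}/x]$ by induction on $M$; the only clause that uses the shape of the translation is the application case, where $\trn{(PQ)[N/x]}=\trn{P[N/x]}\,!\trn{Q[N/x]}$ matches $(\trn P\,!\trn Q)[\trn N/x]$ through the induction hypothesis. Second, I would extend $\trn\cdot$ to contexts, setting $\trn\square=\square$ and translating the remaining productions exactly as on terms (so $\trn{M\cc}=\trn M\,!\trn\cc$, $\trn{\cc M}=\trn\cc\,!\trn M$, $\trn{\lam x.\cc}=\lam!x.\trn\cc$, and the two $\oplus$-clauses pointwise), and check by induction that $\trn{\cc(R)}=\trn\cc(\trn R)$ for every $R$. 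The key qualitative observation is that a CbN context is \emph{surface} (hole not under $\oplus$ and not in argument position) \emph{if and only if} $\trn\cc$ is a surface context of $\PLambda^!$: the surface clauses $\lam x.\ss$ and $\ss M$ go to the surface clauses $\lam!x.\trn\ss$ and $\trn\ss\,!\trn M$, whereas a hole in argument position $M\cc$ lands under a $!$ in $\trn M\,!\trn\cc$, hence becomes deep — which is exactly why argument position is deep in CbN.

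Soundness (items 1--2) then follows immediately. For a $\beta$-step $\cc((\lam x.P)Q)\redb\mset{\cc(P[Q/x])}$, the redex translates to the $\PLambda^!$ redex $(\lam!x.\trn P)\,!\trn Q$, which contracts to $\trn P[\trn Q/x]=\trn{P[Q/x]}$ by the substitution lemma, and the surrounding context is $\trn\cc$; thus $\trn M\redb\mset{\trn{\cc(P[Q/x])}}=\trn\n$. For an $\oplus$-step, which in CbN fires only in a surface context $\ss$, the context $\trn\ss$ is again surface, so $\trn\ss(\trn P\oplus\trn Q)\redo\mset{\two\,\trn{\ss(P)},\two\,\trn{\ss(Q)}}=\trn\n$. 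Keeping track of whether the context is surface and using the surface-iff-surface equivalence yields items 1 and 2 together.

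The crux is completeness (items 3--4). Here I would first prove, by induction on $M$, that \emph{every subterm of $\trn M$ is either $\trn N$ or $!\trn N$ for some subterm $N$ of $M$}; in particular no subterm of $\trn M$ is a linear abstraction $\lam x.(\cdots)$, since the translation only ever produces $\lam!$. Consequently the only $\beta$-redexes occurring in $\trn M$ have the form $(\lam!x.A)\,!B$, and such a redex must sit at a position of the form $\trn N$ (a box $!\trn N$ is never a redex). Unwinding $\trn N=(\lam!x.A)\,!B$ forces $N=(\lam x.P)Q$ with $A=\trn P$ and $B=\trn Q$, i.e. a genuine CbN $\beta$-redex; likewise an $\oplus$-redex in $\trn M$ forces a CbN $\oplus$-redex $P\oplus Q$. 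Writing $M=\cc(N)$ for the recovered CbN redex, the context around the image redex is exactly $\trn\cc$, so the $\PLambda^!$ step is surface iff $\cc$ is, i.e. iff the recovered CbN step is surface; contracting and invoking the substitution lemma shows the result equals $\trn\n$ with $M\red\n$ (resp. $M\sred\n$), and uniqueness of $\n$ is immediate from injectivity of $\trn\cdot$. The main obstacle is precisely this redex-correspondence bookkeeping: one must ensure $\PLambda^!$ creates no spurious redexes outside the image (handled by the subterm characterization) and that surface/deep status transports faithfully (handled by the context lemma).

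Finally, item 5 is a corollary of the surface items already proved, together with the identity $\Snf^\cbn=\Hnf$ established above: $M$ admits a $\sred$-step iff $\trn M$ admits one (by items 2 and 4), so $M$ is a CbN surface normal form iff $\trn M\in\Snf^!$; since $\Snf^\cbn=\Hnf$, this is exactly $M\in\Hnf$ iff $\trn M\in\Snf^!$.
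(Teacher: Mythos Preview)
Your argument is correct, but it takes a markedly different route from the paper's. The paper does \emph{not} reprove the $\beta$-part from scratch: it factors the translation through the non-probabilistic calculi, using the (straightforward) translations $\PLambda^\cbn\to\Lambda^\cbn$ and $\PLambda^!\to\Lambda^!$ and then invoking the sound-and-complete, surface-preserving simulation $\Lambda^\cbn\to\Lambda^!$ already established in \cite{gm18}. Only the $\oplus$-part is handled directly, via a small lemma (Lemma~\ref{lem:preserveS}) stating that $\trn{\cdot}$ sends a CbN surface context with an $\oplus$-redex in the hole to a $\PLambda^!$ surface context with the translated $\oplus$-redex, and conversely; item~5 is then declared immediate.

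Your approach is instead fully self-contained: you prove the substitution lemma, extend $\trn{\cdot}$ to contexts, establish the ``surface iff surface'' correspondence, and for completeness you characterise the subterms of $\trn M$ to rule out spurious $\PLambda^!$-redexes (in particular, no linear $\lambda$'s ever appear). This is exactly what the cited result of \cite{gm18} contains for the $\beta$-fragment, so you are in effect re-deriving it. The trade-off is clear: the paper's proof is shorter and modular (it reuses an existing theorem and only treats the genuinely new $\oplus$-case), while yours is more elementary and does not depend on an external reference---which also makes the redex-correspondence bookkeeping explicit rather than hidden inside a citation.
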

\begin{proof}
\SLV{The proof is in \cite{long}.}{	The proof is in  Appendix~\ref{sec:tr_cbn}.}
\end{proof}

\subsection{Confluence and Finitary  Standardization  for $\PLambda^\cbn$ }
  The fact that surface reduction is preserved by $\trn{.}$  is crucial to transfer the standardization result from $\PLambda^!$ to $\PLambda^\cbn$.  We show that via translation, 
$\PLambda^{\cbn}$ inherits both the confluence and  the surface standardization property from $\PLambda^!$.

\begin{thm}[Confluence]
The relation $\Red_\cbn$ is confluent.
\end{thm}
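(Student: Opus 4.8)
The plan is to transfer confluence from $\PLambda^!$ to $\PLambda^\cbn$ along the translation $\trn{\cdot}$, in the same spirit as confluence of $\Redbv$ was derived from the CbV $\lambda$-calculus in Lemma~\ref{lem:betaCR}. The two ingredients are the simulation of Prop.~\ref{thm:ntr} and the confluence of $\PLambda^!$ proved in Thm.~\ref{thm:conf_lin}; once they are in hand, confluence of $\Red_\cbn$ is essentially free, because the translation is a faithful (sound and complete, in both directions) encoding of the whole reduction $\Red$.

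First I would record two elementary properties of $\trn{\cdot}$. It is \emph{injective} on terms, by a straightforward structural induction (each constructor of $\PLambda^\cbn$ is sent to a distinct constructor of $\PLambda^!$); since $\trn{\cdot}$ is extended to multidistributions pointwise, $\trn{\mset{p_iM_i \st \iI}} = \mset{p_i\trn{M_i} \st \iI}$, it commutes with $+$ and with scalar product, and it is therefore injective on $\MDST{\PLambda^\cbn}$ as well. Next I would lift the simulation from terms to multidistributions: unfolding the definition of lifting, a step $\m \Red \n$ decomposes as $\m = \mset{p_iM_i \st \iI}$ and $\n = \sum_{\iI} p_i \cdot \n_i$, where each $\mset{M_i} \Red \n_i$ is either reflexivity or a genuine step $M_i \red \n_i$. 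Applying the soundness direction of Prop.~\ref{thm:ntr} termwise, and using that $\trn{\cdot}$ commutes with $\sum_{\iI} p_i \cdot (-)$, gives $\trn{\m} \Red \trn{\n}$. Symmetrically, the completeness direction lifts to the statement: if $\trn{\m} \Red \mathbf{n}$, then there is a (unique) $\n$ with $\mathbf{n} = \trn{\n}$ and $\m \Red \n$. Iterating both statements yields soundness and completeness for $\Red^*$; in particular the image of $\PLambda^\cbn$ under $\trn{\cdot}$ is closed under $\Red$.

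With this machinery the argument is immediate. Assume $\m \Red^* \n$ and $\m \Red^* \s$ in $\PLambda^\cbn$. By soundness, $\trn{\m} \Red^* \trn{\n}$ and $\trn{\m} \Red^* \trn{\s}$ in $\PLambda^!$, so by Thm.~\ref{thm:conf_lin} there is a multidistribution $\rr$ with $\trn{\n} \Red^* \rr$ and $\trn{\s} \Red^* \rr$. Applying completeness to $\trn{\n} \Red^* \rr$ produces $\r$ with $\trn{\r} = \rr$ and $\n \Red^* \r$; applying it to $\trn{\s} \Red^* \rr$ produces $\r'$ with $\trn{\r'} = \rr$ and $\s \Red^* \r'$. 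Since $\trn{\r} = \rr = \trn{\r'}$ and $\trn{\cdot}$ is injective on multidistributions, $\r = \r'$; hence $\n \Red^* \r$ and $\s \Red^* \r$, which is exactly confluence of $\Red_\cbn$.

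The only real work is the bookkeeping of the second paragraph: lifting both directions of the term-level simulation through the definition of $\Red$ and then through its reflexive and transitive closure, and checking that $\trn{\cdot}$ interacts well with the multidistribution operations and is injective. This is routine; the genuine content, namely faithfulness of the CbN-to-linear translation, has already been discharged in Prop.~\ref{thm:ntr}, so I do not expect a substantial obstacle here.
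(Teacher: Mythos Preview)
Your proposal is correct and follows exactly the approach the paper indicates: the paper's proof is the one-liner ``From Thm.~\ref{thm:conf_lin}, using back-and-forth Thm~\ref{thm:ntr}'', and you have simply spelled out what that back-and-forth argument amounts to (lifting soundness and completeness to $\Red^*$, then transferring the confluence diagram, with injectivity of $\trn{\cdot}$ ensuring the two pulled-back joins coincide). There is nothing to correct.
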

\begin{proof}
From Thm.~\ref{thm:conf_lin},  using back-and-forth  Thm~\ref{thm:ntr}.
\end{proof}

\begin{thm}[Finitary Surface standardization]\label{thm:surface_cbn} 
If $\m\Red^* \n$ then exists $\r$ such that $\m\sRed^* \r$ and $\r\dRed^* \n$.
\end{thm}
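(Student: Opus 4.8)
The plan is to transfer the result from $\PLambda^!$ along the translation $\trn{\cdot}$, exactly as was done for confluence: the property is already available in $\PLambda^!$ (Prop.~\ref{thm:surfacestandard_lin}), and the Simulation proposition (Prop.~\ref{thm:ntr}) is faithful enough to carry a surface-then-deep factorization back and forth. First I would lift the single-term simulation of Prop.~\ref{thm:ntr} to multidistributions: since $\trn{\cdot}$ acts pointwise on $\m=\mset{p_iM_i\st\iI}$ and commutes with the sum and scalar product used to define lifting (Fig.~\ref{fig:lifting}), items~(1)--(4) of Prop.~\ref{thm:ntr} extend to the lifted relations $\Red$ and $\sRed$ on $\MDST{\PLambda^\cbn}$ and $\MDST{\PLambda^!}$. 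In particular $\m\Red^*\n$ implies $\trn{\m}\Red^*\trn{\n}$.

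Given $\m\Red^*\n$, I would then translate to obtain $\trn{\m}\Red^*\trn{\n}$ in $\PLambda^!$ and apply finitary surface standardization there (Prop.~\ref{thm:surfacestandard_lin}), producing $\r'\in\MDST{\PLambda^!}$ with $\trn{\m}\sRed^*\r'$ and $\r'\dRed^*\trn{\n}$. The core of the argument is the pullback. For the surface prefix, I would reflect $\trn{\m}\sRed^*\r'$ step by step: the source $\trn{\m}$ is a translation, and by surface completeness (Prop.~\ref{thm:ntr}(4)) each surface step out of a translated multidistribution is the image of a unique surface step in $\PLambda^\cbn$; iterating maintains the invariant ``the current multidistribution is a translation'' and yields $\m\sRed^*\r$ with $\trn{\r}=\r'$. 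For the deep suffix I need to reflect $\trn{\r}\dRed^*\trn{\n}$, which hinges on the fact that $\trn{\cdot}$ reflects the surface/deep distinction on single steps: if $\trn{M}\dred\trn{\n}$ (a step that is $\red$ but not $\sred$), then by completeness (Prop.~\ref{thm:ntr}(3)) it lifts to a unique $\red$-step $M\red\n$, and this step cannot be surface, for otherwise Prop.~\ref{thm:ntr}(2) would make $\trn{M}\sred\trn{\n}$, contradicting deepness; hence $M\dred\n$. Pulling back $\trn{\r}\dRed^*\trn{\n}$ in this way gives $\r\dRed^*\r''$ with $\trn{\r''}=\trn{\n}$, and since $\trn{\cdot}$ is injective (an easy structural induction, as in the call-by-value case of Sec.~\ref{sub:translation}) we conclude $\r''=\n$. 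Thus $\m\sRed^*\r\dRed^*\n$, as required.

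The main obstacle is precisely this step-level reflection of the surface/deep distinction, which is what guarantees that the standardized factorization obtained in $\PLambda^!$ pulls back to a genuine surface-then-deep factorization in $\PLambda^\cbn$, rather than merely to some $\Red^*$-sequence. This is where the clause of Prop.~\ref{thm:ntr} stating that $\trn{\cdot}$ \emph{preserves surface reduction and surface normal forms} is essential, and it is the reason the translation was designed so as not to reduce under the scope of $!$. The remaining ingredients---lifting the simulation to multidistributions and the injectivity of $\trn{\cdot}$---are routine, so once the reflection observation is in place the theorem follows, mirroring the confluence proof ``back-and-forth'' through Prop.~\ref{thm:ntr}.
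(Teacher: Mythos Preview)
Your proposal is correct and follows exactly the paper's approach: translate into $\PLambda^!$, apply finitary surface standardization there (Prop.~\ref{thm:surfacestandard_lin}), and reflect the surface/deep factorization back via the simulation Prop.~\ref{thm:ntr}. The paper's own proof is a single line to this effect; you have simply unfolded the back-and-forth argument, correctly isolating that the crucial point is the reflection of the surface/deep distinction along $\trn{\cdot}$.
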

\begin{proof}
From Thm.~\ref{thm:surfacestandard_lin}, by using  back-and-forth Thm~\ref{thm:ntr}, and the fact that the translation preserves surface reduction. 
\end{proof}

In the classical $\lambda$-calculus, the standardization property (Barendregt, Th.~11.4.7) says that every reduction sequence can be ordered in such a way to perform first only  left $\beta$-redexes, reading the term from left to right, and then  internal ones (a redex is internal if it is not the leftmost one).  

In $\PLambda^\cbn$ this  notion of standardization fails, as the following example (which we take from \cite{LeventisThesis}) shows.
\begin{example}\label{ex:counter2}
	In each step, we underline the redex. Consider 
	$\mset{(\lam x. \underline{I (y\oplus z)}) I} \Red  \mset{ (\lam x. \underline{y\oplus z}) I} \Red \mset{\underline{{\two}(\lam x. y) I}, {\two}(\lam x.  z) I}   \Red \mset{{\two}y, {\two}(\lam x.  z) I}$,
	where only the last step reduces a left redex.
	If we perform the left redex first, we have  $\mset{(\lam x.I (y\oplus z)) I}\Red \mset{I(y\oplus z)}$, from which 
	$\mset{{\two}y, {\two}(\lam x.  z) I}$ cannot be reached. 
\end{example}

A consequence of standardization 
 is that $M$ has a head normal form iff the $\hred$-sequence from $M$ terminates. 
In the following section we  retrieve an analogue of this result.

\subsection{Asymptotic behaviour}
We denote by $\Hnf_\sim$  the  set of  head normal forms  up to the equivalence $=_{\beta}$, and we define $\Nnf^{\cbn}_\sing=\{\{M\}, M\in \Nnf^{\cbn} \}$.	

\paragraph{Observations} Observations are defined as in Def.~\ref{def:obs}.

\begin{prop}
Each of the following  is a set of observations for $\PLambda^\cbn$:  $\{\Hnf\}$, $\{\Nnf^\cbn\}$,  $\Hnf_\sim $, $ \Nnf^\cbn_\sing $.
\end{prop}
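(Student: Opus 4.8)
The plan is to verify, for each of the four candidate sets, the two defining conditions of Def.~\ref{def:obs}: pairwise disjointness of the members, and monotonicity $\m \Red \m' \implies \mu(\bU) \leq \mu'(\bU)$ for every member $\bU$. As in the proofs of Prop.~\ref{prop:observations} and Prop.~\ref{prop:obs_lin}, I would first reduce monotonicity to a pointwise statement about single terms. By the structure of lifting (rules $L1$--$L3$), it suffices to establish the \emph{closure property}: for each $\bU \in \Obs$, if $M \in \bU$ and $M \red \n$ then $\supp{\n} \subseteq \bU$. Indeed, writing $\m = \mset{p_i M_i \st \iI}$ and $\m' = \sum_i p_i \cdot \n_i$ with $\mset{M_i} \Red \n_i$, the inequality $\mu(\bU) \leq \mu'(\bU)$ reduces to $\mu_{\mset{M_i}}(\bU) \leq \mu_{\n_i}(\bU)$ for each $i$; this is trivial when $M_i \notin \bU$ (the left side is $0$) or when $\n_i = \mset{M_i}$, and follows from the closure property when $M_i \in \bU$ and $M_i \red \n_i$ (every $\red$-step produces a multidistribution of total mass $1$, so $\supp{\n_i} \subseteq \bU$ forces $\mu_{\n_i}(\bU) = 1$).

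Disjointness is immediate in all four cases: $\{\Hnf\}$ and $\{\Nnf^\cbn\}$ are singletons; $\Hnf_\sim$ consists of the $=_\beta$-equivalence classes of $\Hnf$, which partition $\Hnf$; and $\Nnf^\cbn_\sing$ is the partition of $\Nnf^\cbn$ into singletons. For the closure property, the two sets built from $\Nnf^\cbn$ are handled at once: a $\red$-normal form admits no $\red$-step, so the closure condition is vacuous for every member of a partition of $\Nnf^\cbn$. The work therefore concentrates on $\{\Hnf\}$ and $\Hnf_\sim$, for which I would establish the CbN analogue of Fact~\ref{fact:key_lin}: if $M \in \Hnf$ and $M \red \n$, then $\n = \mset{N}$ with $M =_\beta N$ and $N \in \Hnf$. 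This uses two observations already available. First, since $\Snf^\cbn = \Hnf$, a head normal form has no surface redex; as $\oplus$-steps fire only in surface contexts ($\redo = \sredo$), no $\oplus$-step is possible, so $M \red \n$ must be a deep $\beta$-step $M \dredb \mset{N}$, whence $M =_\beta N$. Second, a head normal form $\lambda \vec{x}. y N_1 \cdots N_k$ has redexes only within its arguments $N_1,\dots,N_k$, and reducing such a redex leaves the head variable $y$ untouched, so $N$ is again a head normal form.

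With this fact in hand both remaining cases close: for $\bU = \Hnf$ we get $\supp{\n} = \{N\} \subseteq \Hnf$; for $\bU = \mathbf{R} \in \Hnf_\sim$, from $M =_\beta N$ and $N \in \Hnf$ we obtain $N =_\beta R$, hence $N \in \mathbf{R}$. I expect the only genuinely non-routine step to be the key fact on head normal forms — specifically checking that reduction in a head normal form is confined to argument positions and hence preserves both the head structure and the $=_\beta$-class. Everything else is the same bookkeeping used for the CbV and linear calculi.
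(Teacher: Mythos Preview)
Your proposal is correct and follows precisely the pattern the paper uses for the analogous results (Prop.~\ref{prop:observations} for $\PLambda^\cbv$ and Prop.~\ref{prop:obs_lin} for $\PLambda^!$): partitions of normal forms satisfy the closure condition vacuously, while for $\Hnf$ and $\Hnf_\sim$ one establishes the CbN analogue of Fact~\ref{rem:value}/Fact~\ref{fact:key_lin}, namely that a head normal form only admits deep $\beta$-steps and these preserve both membership in $\Hnf$ and the $=_\beta$-class. The paper does not spell out a proof for this particular proposition, but your argument is exactly the intended one.
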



\paragraph{Convergence and  Limit distributions} Once we fix a set of observations $\Obs$ for $\PLambda^\cbn$,
the definition of  convergence and limit distribution
are as in Def.~\ref{def:limits}.
We observe that  Theorems~\ref{thm:unique} and \ref{thm:adequacy} both hold. Hence in particular
\begin{thm}For any choice of $\Obs$, the following holds in 
	$\PLambda^\cbn$ : given $\m$, 
	$\Lim(\m)$ has a greatest element  $\den \m$.
\end{thm}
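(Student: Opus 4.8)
The plan is to observe that Theorem~\ref{thm:unique} was proved for $\PLambda^\cbv$ in a purely abstract fashion: its argument never invokes any call-by-value-specific syntactic feature, but only (i) confluence of $\Red$ and (ii) the defining clauses of a set of observations (Def.~\ref{def:obs}), in particular the monotonicity clause $\m\Red\m'\Rightarrow\mu(\bU)\le\mu'(\bU)$. Both ingredients are available for $\PLambda^\cbn$, and uniformly in $\Obs$, so the same proof applies verbatim. Concretely, confluence of $\PLambda^\cbn$ has just been established (by transport along the translation $\trn{\cdot}$ from the confluence of $\PLambda^!$, Thm.~\ref{thm:conf_lin}), and observations for $\PLambda^\cbn$ are defined exactly as in Def.~\ref{def:obs}, with $\{\Hnf\}$, $\{\Nnf^\cbn\}$, $\Hnf_\sim$ and $\Nnf^\cbn_\sing$ all shown to be sets of observations.

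First I would recall that Theorem~\ref{thm:unique} is derived from the Main Lemma (Lemma~\ref{lem:NFP}), whose proof, given $\bmu\in\Lim(\m)$ witnessed by a sequence $\seq\m$ and given $\m\Red^*\s$, builds a $\Red$-sequence out of $\s$ that ``catches up'' with $\seq\m$; each catching-up segment $\s_{\m_i}\Red^*\s_{\m_{i+1}}$ is produced by confluence, and the bound $\mu_i(\bU)\le\bsigma(\bU)$ comes solely from the monotonicity clause of Def.~\ref{def:obs}. Since confluence and the observation structure both hold for $\PLambda^\cbn$, the Main Lemma holds there unchanged. From it, existence and uniqueness of a greatest element follow as in Theorem~\ref{thm:unique}: if $\bmu\in\Lim(\m)$ is maximal and $\brho\in\Lim(\m)$ is arbitrary, pick a sequence $\seq\r$ from $\m$ with $\seq\r\down\brho$; applying the Main Lemma to each $\r_n$ yields a sequence from $\r_n$ with limit $\bmu$, so $\rho_n(\bU)\le\bmu(\bU)$ for every $\bU\in\Obs$ and every $n$, whence $\brho(\bU)=\sup_n\rho_n(\bU)\le\bmu(\bU)$. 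Thus $\brho\le\bmu$, which shows both that the maximal element is unique and that it is in fact the greatest element $\den\m$.

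The one point I would check carefully is that the abstraction is faithful, i.e.\ that neither the Main Lemma nor Theorem~\ref{thm:unique} silently relies on a call-by-value fact such as Fact~\ref{rem:value} (closure of values under reduction) or the concrete shape of $\Val_\sim$. Inspection confirms that Fact~\ref{rem:value} is used only to certify that $\{\Val\}$ and $\Val_\sim$ are observations; once the observation structure is assumed as a hypothesis, the remaining reasoning is order-theoretic together with confluence, and nothing else is needed. Correspondingly, in $\PLambda^\cbn$ the role previously played by $\Val_\sim$ is played by the candidate sets listed in the proposition on observations for $\PLambda^\cbn$, and the argument goes through for each of them, hence for any choice of $\Obs$.

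The main obstacle I anticipate is the existence of a maximal (hence greatest) element of $\Lim(\m)$, which is the only step that is not a mere order-theoretic consequence of uniqueness: one must exhibit a sequence realizing the pointwise supremum $\sup\{\brho(\bU):\brho\in\Lim(\m)\}$ over all limit distributions. This too is inherited, since it rests on the same Main Lemma (and ultimately on the abstract development of~\cite{pars}), which depends only on confluence and on Def.~\ref{def:obs}; having verified these hypotheses for $\PLambda^\cbn$, the conclusion transfers. By the same token Thm.~\ref{thm:adequacy} transfers as well, giving $\den\m=\den\s$ whenever $\m\Red^*\s$.
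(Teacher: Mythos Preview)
Your proposal is correct and matches the paper's approach: the paper simply observes that Theorems~\ref{thm:unique} and~\ref{thm:adequacy} depend only on confluence of $\Red$ and on the defining properties of a set of observations (Def.~\ref{def:obs}), both of which have been established for $\PLambda^\cbn$, so the results transfer verbatim. Your more detailed unpacking of the Main Lemma and of the existence argument is faithful to (and more explicit than) what the paper does.
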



\condinc{}{
\begin{thm} With the choices and notations above,  and for any choice of $\Obs$ as in Prop.~\ref{prop:obs_cbn}$,
	\PLambda^\cbn$ has the following properties:
	\begin{itemize}
		\item 	$\Lim(\m)$ has a greatest element, which we indicate as  $\den \m$.
		\item 	If $\m \Red^* \s$, then	$\den\m = \den\s$.
	\end{itemize}
\end{thm}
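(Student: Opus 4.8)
The plan is to recognize that the statement is precisely an instance of Theorem~\ref{thm:unique}, and that the proof of Theorem~\ref{thm:unique} is entirely parametric in the calculus: it uses only (i) confluence of the lifted reduction $\Red$ and (ii) the abstract axioms of a set of observations (Def.~\ref{def:obs}), namely that distinct observations are disjoint and that $\m\Red\m'$ implies $\mu(\bU)\leq\mu'(\bU)$ for every $\bU\in\Obs$. Indeed, the very well-definedness of limit distributions needs only that each sequence $\langle\mu_n(\bU)\rangle_{n}$ is nondecreasing (by the monotonicity axiom) and bounded (since $\norm{\mu_n}\leq 1$), so that the $\sup$ exists and the monotone-convergence identity~\eqref{eq:MCT} applies; and the Main Lemma (Lemma~\ref{lem:NFP}) never appeals to a call-by-value specific fact such as Fact~\ref{rem:value}, which served only to verify that $\{\Val\}$ and $\Val_\sim$ are observation sets in $\PLambda^\cbv$. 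Hence it suffices to supply the two ingredients in the call-by-name setting.

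First I would invoke the two prerequisites for $\PLambda^\cbn$. Confluence of $\Red_\cbn$ is the confluence theorem established just above, obtained back-and-forth through the translation $\trn{\cdot}$ from $\PLambda^!$. For observations, the notion for $\PLambda^\cbn$ is defined exactly as in Def.~\ref{def:obs}, and the proposition stated above already certifies that each of $\{\Hnf\}$, $\{\Nnf^\cbn\}$, $\Hnf_\sim$ and $\Nnf^\cbn_\sing$ satisfies those axioms; thus, for any admissible $\Obs$, the monotonicity property $\m\Red\m'\Rightarrow\mu(\bU)\leq\mu'(\bU)$ holds by definition. With these in hand, the limit distribution of every $\Red$-sequence from $\m$ is well defined and $\Lim(\m)$ is non-empty.

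Next I would transport the Main Lemma verbatim. Given $\bmu\in\Lim(\m)$ realized by a sequence $\seq\m$ with $\m=\m_0$, and given $\m\Red^*\s$, I build a sequence $\s=\s_{\m_0}\Red^*\s_{\m_1}\Red^*\cdots$ segment by segment, closing each diagram $\m_i\Red^*\s_{\m_i}$, $\m_i\Red\m_{i+1}$ by confluence to obtain $\s_{\m_i}\Red^*\s_{\m_{i+1}}$. Letting $\bsigma$ be the limit of the concatenated sequence, monotonicity gives $\mu_i(\bV)\leq\sigma_{\m_i}(\bV)\leq\bsigma(\bV)$ for each observation $\bV$, whence $\bmu(\bV)=\sup_n\mu_n(\bV)\leq\bsigma(\bV)$, with equality when $\bmu$ is maximal. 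Theorem~\ref{thm:unique} then follows exactly as for $\PLambda^\cbv$: take a maximal $\bmu\in\Lim(\m)$; for any $\brho\in\Lim(\m)$ realized by $\seq\r$, applying the Main Lemma at each $\r_n$ yields a sequence from $\r_n$ with limit $\bmu$, so $\rho_n(\bV)\leq\bmu(\bV)$ for all $n$ and all $\bV$, hence $\brho\leq\bmu$; this exhibits $\bmu$ as the greatest element $\den\m$ and forces its uniqueness.

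The main obstacle is not a computation but a bookkeeping check: one must confirm that nothing in the proofs of Lemma~\ref{lem:NFP} and Theorem~\ref{thm:unique} covertly depends on CbV-specific features (values closed under reduction, the particular shape of $\oplus$-steps, and so on). Once it is verified that these proofs use only confluence and Def.~\ref{def:obs}—both now available for $\PLambda^\cbn$—the transfer is immediate, and the theorem holds for every admissible choice of $\Obs$.
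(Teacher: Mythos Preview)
Your approach is exactly the paper's: the paper simply remarks that Theorems~\ref{thm:unique} and~\ref{thm:adequacy} depend only on confluence and on the axioms of Def.~\ref{def:obs}, both of which are available for $\PLambda^\cbn$, and concludes. You spell this out in more detail by replaying the Main Lemma construction, which is fine and correct.

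One small omission: you only derive the first bullet (existence of $\den\m$) and do not explicitly address the second (adequacy, $\den\m=\den\s$). This is Theorem~\ref{thm:adequacy}, not Theorem~\ref{thm:unique}, and you should mention it as well. The argument is immediate from the Main Lemma you have already transported: concatenating $\m\Red^*\s$ with a sequence realizing $\den\s$ shows $\den\s\in\Lim(\m)$, hence $\den\s\leq\den\m$; conversely the Main Lemma gives $\den\m\in\Lim(\s)$, hence $\den\m\leq\den\s$.
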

}

We now study the notion of  convergence induced by\emph{ choosing head normal forms as outcome}, \ie\ $\Obs:=\Hnf_\sim$.
Therefore,  if $\brho \in \Lim(\m)$, it holds  $\brho\in \MDST{\Hnf_\sim}$.
The following results match the analogous results in $\PLambda^!$ (Thm.~\ref{thm:surfeval_lin} and \ref{thm:surflin}).
\begin{thm} Let $\Obs:=\Hnf_\sim$. For every multidistribution $\m$:
\begin{itemize}
\item $\m\tolim\bmu$ if and only if   $\m \tolims \bmu$.
\item If  $\seq \s$ is a $\sfull$-sequences of full surface reductions from $\m$, then  $\seq \s \down \den{\m}.$
\end{itemize}
\end{thm}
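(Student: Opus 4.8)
The plan is to run, step for step, the two proofs already carried out for the linear calculus, namely the Asymptotic Completeness theorem (Thm.~\ref{thm:surfeval_lin}) and the Surface Evaluation theorem (Thm.~\ref{thm:surflin}), instantiated with the CbN analogues of the facts they rely on; equivalently, since by Prop.~\ref{thm:ntr} the translation $\trn{\cdot}$ preserves general reduction, preserves surface reduction, and sends $\Hnf$ onto $\Snf^!$, one may transport both statements from $\PLambda^!$. I describe the direct route and indicate where the translation can replace an argument. The engine of everything is the CbN counterpart of Fact~\ref{fact:key_lin}/Lemma~\ref{lem:int}. Because the $\oplus$-rule is closed only under surface contexts, every $\oplus$-step is a surface step, so a deep step can only be a deep $\beta$-step: if $M\dred\n$ then $\n=\mset{N}$ is a singleton with $M\eqbeta N$, and since a deep redex sits in argument position it leaves the head of $M$ untouched, whence $M\in\Hnf$ iff $N\in\Hnf$. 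Lifting this, $\m\dRed\s$ implies $\mu(\bV)=\sigma(\bV)$ for every $\bV\in\Hnf_\sim$: the two reducts either lie in the same $=_\beta$-class of head normal forms or both fail to be head normal forms and contribute $0$. In short, deep steps preserve all observations in $\Hnf_\sim$.

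For the completeness statement ($\m\tolim\bmu$ iff $\m\tolims\bmu$) I would copy the proof of Thm.~\ref{thm:surfeval}. Only the nontrivial direction needs work: taking a sequence with $\seq{\m}\down\bmu$, I apply Finitary Surface Standardization for $\PLambda^\cbn$ (Thm.~\ref{thm:surface_cbn}) repeatedly to build a sequence $\langle\s_{\m_n}\rangle$ with $\m_0=\s_{\m_0}$ and $\s_{\m_i}\sRed^*\s_{\m_{i+1}}\dRed^*\m_{i+1}$ for every $i$. Concatenating the surface segments yields a $\sRed$-sequence, and since each trailing deep segment $\s_{\m_i}\dRed^*\m_i$ leaves every observation unchanged (by the fact just recorded), its limit $\bsigma$ satisfies $\sigma_{\m_i}(\bV)=\mu_i(\bV)$ for all $\bV\in\Hnf_\sim$; comparing suprema in both directions gives $\bsigma=\bmu$.

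For the evaluation statement I would mirror Thm.~\ref{thm:surflin}. Two ingredients are needed: (a) a uniqueness (diamond-type) property ensuring that all $\sfull$-sequences from $\m$ share the same limit; and (b) that reducing maximally dominates, i.e. the limit of any $\sRed$-sequence is below the limit of some $\sfull$-sequence (which holds because a full surface step fires every available surface redex). Granting (a) and (b), the unique $\sfull$-limit $\bsigma$ dominates every surface limit; by the completeness statement $\den{\m}$ is itself a surface limit, so $\den{\m}\leq\bsigma$, while $\bsigma\in\Lim(\m)$ forces $\bsigma\leq\den{\m}$ by Thm.~\ref{thm:unique}; hence every $\sfull$-sequence converges to $\den{\m}$.

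The main obstacle is property (a): in CbN a multidistribution may carry several independent surface redexes (under the leading $\lambda$'s and along the spine), so one must check that firing all of them at once is confluent at the level of \emph{observation values}, not merely of terms. I expect the cleanest way to discharge it is to transport the already-established linear diamond property through the faithful translation $\trn{\cdot}$: since $\trn{\cdot}$ is a bijective simulation on surface reduction and preserves $\sred$-normal forms (as $\Hnf=\Snf^\cbn$ and $\trn{\cdot}$ maps $\Hnf$ to $\Snf^!$), it sends $\sfull$-sequences of $\PLambda^\cbn$ to $\sfull$-sequences of $\PLambda^!$ with matching per-step observation values. Making this rigorous requires first verifying that $\trn{\cdot}$ induces a bijection between $\Hnf_\sim$ and the reachable part of $\Snf^!_\sim$ and commutes with the formation of limit distributions; once that bookkeeping is in place, both statements drop out of Thm.~\ref{thm:surfeval_lin} and Thm.~\ref{thm:surflin}.
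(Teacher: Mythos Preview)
Your proposal is correct and follows the paper's (very terse) approach: the paper merely points to the analogous linear results (Thm.~\ref{thm:surfeval_lin} and Thm.~\ref{thm:surflin}) and expects the reader to rerun those proofs with the CbN ingredients, exactly as you spell out. Your treatment of the completeness bullet is identical to the paper's template (Fact~\ref{fact:key_lin} in its CbN form plus Thm.~\ref{thm:surface_cbn}, then the construction of Thm.~\ref{thm:surfeval}).

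For the evaluation bullet there is one genuine difference in route. In the paragraph following the theorem, the paper obtains the diamond-type property for $\sfull$ in $\PLambda^\cbn$ \emph{directly}, ``similarly to Prop.~\ref{thm:lefteval}'' (ultimately citing the diamond argument from \cite{pars}). You instead propose to \emph{transport} the diamond property from $\PLambda^!$ via the translation $\trn{\cdot}$. Both routes work: Prop.~\ref{thm:ntr} does give the bijective back-and-forth on surface steps and surface normal forms that your transport needs. The trade-off is exactly the one you identify: the translation route buys uniformity (one diamond proof for all three calculi) at the price of the bookkeeping that $\trn{\cdot}$ respects $=_\beta$-classes and hence commutes with the formation of limit distributions on $\Hnf_\sim$ versus $\Snf^!_\sim$; the paper's direct route avoids that bookkeeping but repeats the diamond argument.
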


Similarly to Prop.~\ref{thm:lefteval}, it is not hard to 
  prove that in $\PLambda^\cbn$, $\sfull$ satisfies a diamond property 
in the sense of \cite{pars}, and hence 
 all  $\sfull$-sequences  from $\m$  
 converge to the same limit distribution.
 \SLV{}{Since $\hred \subset \sred$ and  since head reduction  and surface reduction have the same normal forms, we can always  choose a $\lred$ step whenever a $\sred$-step is possible.  This allows us to retrieve a result of completeness for head reduction:}
\emph{\begin{center}
Let   $\seq \s$ be  \emph{the}  $\hfull$-sequences of full head reductions from $\m$. It holds that  $\seq \s \down \den{\m}.$
\end{center}}
Once again, this justifies a posteriori the choice of head reduction in probabilistic CbN (such as \cite{EhrhardPT11}).
Observe that  we follow   the same reasoning as in the case of $\PLambda^\cbv$ (with  $\Val_{\sim}$ as  set of observations). First we proved that surface reduction is sufficient to reach the greatest limit distribution, then we observed that in particular left reduction can be chosen.  There is a close parallelism between $\PLambda^\cbv$ and $\PLambda^\cbn$: similar results hold if we consider as set of observations $\Val_{\sim}$ and $\Hnf_{\sim}$ respectively.

\section{Conclusion and discussion}\label{sec:conclusion}
\subsection{Summary}
In this paper we  design  two  probabilistic extensions of respectively  the CbV and CbN $\lambda$-calculus, $\PLambda^{\cbv}$ and $\PLambda^{\cbn}$,   which we propose as foundational calculi for  probabilistic computation. Both  calculi enjoy \emph{confluence} and \emph{standardization, in an extended way}. Namely, first we prove both properties  for the finite  sequences, exploiting classical methods,  then we extend these properties to  the limit, developing new sophisticated proof methods. 
In particular, we prove the \emph{uniqueness} of the (maximal) result, parametrized by the notion of \emph{set of observations},
and  that  the  asymptotic extension of surface standardization supplies a family of \emph{complete reduction strategies} which are guaranteed  to reach  the best  result.
The two calculi have a common root in the linear $\lambda$-calculus  $\PLambda^!$, which is both a technical tool
and  a calculus of interest in its own, in which  a fine control of the interaction between copying and choice is possible.

{In all three calculi, $\beta$-reduction is unconstrained; hence for each calculus, its  restriction  to only $\beta$-reduction exactly gives  the  usual corresponding (CbN, CbV, or linear) $\lambda$-calculus; this is not the case for extensions in which a  strategy is fixed.
}

New proof methods include  the \emph{asymptotic extension of surface standardization} (Thm.~\ref{thm:surfeval}), and the use of a \emph{translation to transfer standardization} properties, namely  from $\PLambda^!$ to  $\PLambda^{\cbn}$.
It is  worth stressing  a crucial  element: the fact that the  translation is  sound, complete and {preserves surface contexts} is what allows us to transfer the results. 

%
%

\subsection{Discussion}
\paragraph{Relating the  calculi (Girard's Translations)}\label{sec:bigpic}
The key to understand how $\PLambda^\cbv$, $\PLambda^\cbn$, and  $\PLambda^!$ relate are 
the two Girard's  translations which embed intuitionistic logic into linear logic, and which are well known to respectively  correspond to   CbN and  CbV  computations. Let us clarify this.
Let us  start from   $\PLambda^!$: the natural constraint to avoid copying the result of a choice is  "no $\oplus$-reduction  in the scope of $!$" (\ie, inside a !-box). 
Using the intuition provided by Girard's translations as a guide, the constraint  above becomes respectively  
"no $\oplus$-reduction in the scope of a $\lambda$-abstraction" (in  CbV) and  "no $\oplus$-reduction in argument position" (in  CbN).
Our three notions of surface context 
express  these three constraints.


The intuitive reasoning above can  be formalized  thanks to a recent line of work \cite{EhrhardG16,gm18}, which internalizes the insights coming from linear logic and proof nets into a $\lambda$-syntax.  The resulting  calculus  subsumes both CbN and CbV $\lambda$-calculi via Girard's translation.
The idea of a system which subsumes both CbV and CbN   had been already advocated and developed  by Levy, via the Call-By-Push-Value paradigm \cite{Levy99}.
And indeed, \cite{EhrhardG16} can be seen as an untyped version of Levy's calculus. We  leave to the future a comprehensive approach, where a probabilistic linear calculus is the  metalanguage in which all the results are developed.

\paragraph{On non-deterministic $\lambda$-calculi}
The finitary results we presented (namely, confluence and finitary surface standardization)  also hold if the probabilistic choice is replaced by  \emph{non-deterministic} choice (just forget the coefficients). Asymptotic results, instead,  are specific to probabilistic computation.

	\paragraph{$\PLambda^!$ and quantum  $\lambda$-calculi}
	The fine control of duplication which  $\Lambda^!$ inherits from linear logic has made it an ideal base for quantum $\lambda$-calculi (such as \cite{DalLagoMZ11,popl17}).
	In those calculi,\emph{ surface reduction} is the key ingredient to  
	allow for  the coexistence of quantum bits with duplication and erasing. 
	\SLV{}{}
	{\emph{No reduction} (not even $\beta$) is allowed in the scope of a $!$ operator.}
	Our results show that $\beta$-reduction can be unrestricted, only measurement (the quantum analogue of $\oplus$) needs to be surface.

%

\appendix

\bibliographystyle{abbrv}
\bibliography{biblioPARS}

\SLV{}{
\newpage

\appendix

\subsection{Proofs of Section~\ref{sec:finitary_stCBV} }
We prove Thm.~\ref{thm:surfacestandard}, \ie\ finitary Surface Standardization for  $\PLambda^\cbv$. We start by   establishing  Surface Standardization for the (non probabilistic) call-by-value $\lambda$-calculus, $\Lambda^\cbv$, in  \ref{sec:cbvfolklore}. This result is folklore, but we could not find it in the literature. In \ref{sec:postponement_cbv}  we extend the result to $\PLambda^\cbv$.

\subsubsection{Preliminary definitions}
\paragraph{Surface and left reduction}
Surface and left reduction have been defined in Sec.~\ref{sec:strategies}; 
Fig.~\ref{fig:sred} and Fig.~\ref{fig:lred}	 give explicitly the  inference rules for  surface   and left  steps;  we use the notation defined below:

\begin{notation*}\label{not:app} 
	If $\m= \mdist{p_iM_i\st \iI}$,  we write    $ \m @ Q$ for  $ \mdist{p_i (M_i Q)\st \iI}$, and  $Q @\m$ for $\mdist{p_i(Q M_i)\st \iI}$.
\end{notation*}
	We recall that  a reduction step $\red$ is \emph{deep}, written $\dred$, (resp. \emph{internal}, written $\ired $) if it is not  a surface step (a left step). We have already observed  that  $\dred \subset \ired$, and 
that since a $\oplus$-redex is always surface, a $\dred$ step is always a $\redbv$ step.

\begin{figure}\centering
{\scriptsize 	
\fbox{
	\begin{minipage}[c]{0.48\textwidth}
		$$\infer{(\lam x. M)V\sred  \mdist{ M[V/x]}}{V \in \Val} \quad \infer{M\oplus N \sred \mdist{\frac{1}{2}M,
				\frac{1}{2}   N}}{}
		$$
		$$\infer{MN\sred  \m @N}{M\sred \m} \quad 
		\infer{MN\sred  N @\n} { N\sred  \n}
		$$
		\caption{Surface Reduction }\label{fig:sred}
	\end{minipage}
}}
\end{figure}

\begin{figure}
	{\scriptsize 	
		\fbox{
			\begin{minipage}[c]{0.45\textwidth}
				$$\infer{(\lam x. M)V\lred  \mdist{ M[V/x]}}{V \in \Val } \quad \infer{M\oplus N \lred \mdist{\frac{1}{2}M,
						\frac{1}{2}   N}}{}
				$$
				$$\infer{MN\lred  \m @N}{M\lred \m} \quad 
				\infer{VN\lred  V @\n}{V \in \Val & N\lred  \n}
				$$
				\caption{Left Evaluation }\label{fig:lred}
		\end{minipage}}
	}
\end{figure}

\paragraph{Parallel $\beta_{v}$-reduction}
\begin{itemize}
\item \emph{Parallel} $\beta_{v}$-reduction is  a standard definition, and is given in Fig.~\ref{fig:parredb}. We define its   lifting  $\parRed_{\beta_{v}}$ as usual (see Section \ref{sec:syntax}).
\item \emph{Deep parallel} reduction	($\dparred$, with lifting $\dparRed$) 
indicates that $M\parRed \mset{S}$ and $M {\dRed}^* \mset{S}$. We  make  the rules explicit in  Fig.~\ref{fig:dparred}. 
\end{itemize}

\begin{figure}\centering

\fbox{
	\begin{minipage}[c]{0.48\textwidth}
		\[ x\parredbv \mdist{x}  \quad\quad   \infer{\lam x.M\parredbv \mdist{\lam x. N} }  {M\parredbv\mdist{N} } \]
		\[\infer{MN\parredbv \mdist{M'N'} }{M\parredbv \mdist{M'} & N\parredbv \mdist{N'}}\]
\[	\infer{(\lambda x. M)W\parredbv \mdist{M'[W'/x]} }{M\parredbv \mdist{M'} & W\parredbv \mdist{W'}  & W \texttt{ value}} 
		\]
		\[\infer{M\oplus N\parredbv \mdist{M'\oplus N'} }{M\parredbv \mdist{M'} & N\parredbv \mdist{N'}}\]
		\caption{$\beta$-Parallel Reduction }\label{fig:parredb}
	\end{minipage}
}
\fbox{
	\begin{minipage}[c]{0.48\textwidth}
		$$ x\dparredbv \mdist{x}  \quad \quad  \infer{\lam x.M\dparredbv\mdist{\lam x. N} }  {M\parredbv\mdist{N} } $$
		$$\infer{M N\dparredbv  \mdist{S T}}{ M\dparredbv \mdist{S} & N\dparredbv \mdist{T} }$$
		$$\infer{M\oplus N\dparredbv  \mdist{S \oplus T}}{ M\parredbv \mdist{S} & N\parredbv \mdist{T} }$$
		\caption{Deep Parallel Reduction}\label{fig:dparred}
	\end{minipage}
}
\end{figure}

\begin{fact}\label{prop:trans}  The following holds
$$\dRed_{\beta_v} ~~ \subseteq ~~ \dparRed_{\beta_v} ~~\subseteq ~~\dRed_{\beta_v}^*$$
\end{fact}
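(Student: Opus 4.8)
The statement is the standard ``sandwich'' relating a single deep step, deep parallel reduction, and the reflexive--transitive closure of deep steps; the plan is to prove the two inclusions separately, working first at the level of terms and then transporting everything to multidistributions by monotonicity of the lifting. As preliminaries I would record two routine facts about the ordinary $\beta_v$-parallel reduction of Fig.~\ref{fig:parredb}: that it is reflexive ($M \parredbv \mset{M}$ for every $M$) and that $\redbv ~\subseteq~ \parredbv ~\subseteq~ \redbv^*$ (a single step is parallel; a parallel step is a finite sequence of single steps), both by the usual Tait--Martin-L\"of inductions. From reflexivity of $\parredbv$ one obtains immediately, by induction on $M$, that $\dparredbv$ (Fig.~\ref{fig:dparred}) is also reflexive, $M \dparredbv \mset{M}$.

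For the first inclusion $\dRed_{\beta_v} \subseteq \dparRed_{\beta_v}$, I would establish the term-level statement $\dredbv ~\subseteq~ \dparredbv$ by induction on the term in which the deep redex is contracted. The key observation is that a reduction inside an application is surface exactly when the corresponding reduction of the subterm is surface; hence a deep step in $P$ or $Q$ remains deep in $PQ$, while any step under a $\lam$ or inside a branch of $\oplus$ is always deep. In the cases $M=\lam x.P$ and $M=P\oplus Q$ the inner single step is a $\parredbv$ step (by $\redbv\subseteq\parredbv$) and the remaining subterm reduces to itself by reflexivity, so the matching $\dparredbv$ rule applies; in the case $M=PQ$ the deep step is, by the observation above, a deep step in $P$ or in $Q$, so the induction hypothesis together with reflexivity of $\dparredbv$ on the other component and the application rule close the case. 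The contraction of the outer redex of an application is a surface step, not a deep one, so that subcase does not arise. The passage to multidistributions is then purely formal: the lifting is monotone in its base relation and reflexivity (rule $L1$) is available on both sides, so $\dredbv\subseteq\dparredbv$ lifts to $\dRed_{\beta_v}\subseteq\dparRed_{\beta_v}$.

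For the second inclusion $\dparRed_{\beta_v} \subseteq \dRed_{\beta_v}^*$, I would prove the term-level statement that $M \dparredbv \mset{S}$ implies $\mset{M} \dRed_{\beta_v}^* \mset{S}$, by induction on the derivation in Fig.~\ref{fig:dparred}. For $M=\lam x.P$ and $M=P\oplus Q$ the premises are $\parredbv$ steps, which by $\parredbv\subseteq\redbv^*$ unfold into finite sequences of single $\beta_v$ steps; since each such step takes place under a $\lam$ or in a $\oplus$-branch, it is deep, so the whole sequence is a $\dRed_{\beta_v}^*$ reduction. For $M=PQ$ the two premises are treated by the induction hypothesis, giving $\mset{P}\dRed_{\beta_v}^*\mset{S}$ and $\mset{Q}\dRed_{\beta_v}^*\mset{T}$; as noted above these steps stay deep when placed in the application context, and concatenating them yields $\mset{PQ}\dRed_{\beta_v}^*\mset{ST}$ (here I use that deep steps produce singleton multidistributions, so the reduction never branches). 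Finally I lift to an arbitrary $\m$: writing $\m=\mset{p_iM_i}$, each component satisfies $\mset{M_i}\dRed_{\beta_v}^*\n_i$, and reducing the components one at a time inside the sum gives $\m \dRed_{\beta_v}^* \sum_i p_i\n_i$, as required.

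I do not expect any step to be a genuine obstacle; the only points demanding care are bookkeeping ones. The first is checking that ``deep'' is both preserved and reflected through application contexts, since this is what legitimises the induction in the first inclusion and guarantees that the single steps produced in the second inclusion really are deep. The second is making precise the lifting-compatibility argument, namely that independent $\dRed_{\beta_v}^*$ reductions of the individual components of a multidistribution can be sequenced (one component at a time, keeping the others fixed via the lifting rule) into a single $\dRed_{\beta_v}^*$ reduction of the whole multidistribution.
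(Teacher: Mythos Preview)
Your proposal is correct. The paper states this result as a \emph{Fact} without proof, treating it as routine; indeed, the sentence introducing deep parallel reduction already says that $\dparred$ ``indicates that $M\parRed \mset{S}$ and $M {\dRed}^* \mset{S}$'', so from the paper's viewpoint the second inclusion is essentially definitional and the first follows from $\redbv\subseteq\parredbv$. Your argument unpacks exactly this, working from the inductive rules of Fig.~\ref{fig:dparred}: the verification that surface/deep is preserved and reflected through the application constructor is the one point that genuinely needs checking, and you handle it correctly; the lifting to multidistributions is routine since multidistributions are finite and $\dredbv$ only produces singletons.
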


\paragraph{Translation}
\newcommand{\z}{z}
\newcommand{\lamred} {\redbv}   
\newcommand{\parlamred}{\parred}
We refine the translation given in in Sec.~\ref{sec:tr_cbv} in order to \emph{preserves surface reduction}. 

Let $z,w$ be fresh variables.
	$(\cdot)_{\lambda}: \Lambda_{\oplus} \rightarrow \Lambda$ is    defined as follows:

$$
\begin{array}{lcl|lcl}
(x)_{\lambda}&=&x & 	(MN)_{\lambda}&=&(M)_{\lambda }(N)_{\lambda}\\
 \tr{M\oplus N} & = &  z(\lam w.\tr M)\lam w.\tr N &  \tr{\lambda x. M} &=& \lambda x. \tr{M}\\
\end{array}
$$

\vskip 8pt
The following is straightforward to check.
\begin{lemma}\label{simulation}
Assume $M\in \PLambda$.

\begin{enumerate}
\item $P\redbv \mdist{Q} $ and  $\tr Q = S$ (in $\PLambda$) $\Longleftrightarrow $     $\tr P \lamred S$ (in $\Lambda$).  
\item $P\sredbv \mdist{Q} $ and  $\tr Q = S$ (in $\PLambda$) $\Longleftrightarrow $     $\tr P \sred_{\beta_v} S$ (in $\Lambda$).  
\item  $P\parredbv \mdist{Q} $ and   $\tr Q = S$  (in $\PLambda$) $\Longleftrightarrow $       $\tr P \parlamred_{\beta_v} \tr Q$  (in $\Lambda$).  

\end{enumerate}
\end{lemma}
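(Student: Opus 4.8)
The plan is to reduce all three equivalences to a few structural properties of the refined translation $\tr{\cdot}$ and then prove each by a routine induction. First I would record the properties that make $\tr{\cdot}$ well behaved: it is \emph{injective}; it \emph{preserves values}, since a variable or abstraction is sent to a variable or abstraction while the non-value $M\oplus N$ is sent to the application $z(\lam w.\tr M)(\lam w.\tr N)$, which is not a value; and it \emph{commutes with substitution}, $\tr{M[N/x]}=\tr M[\tr N/x]$, which holds precisely because $z$ and $w$ are fresh and so are never substituted nor captured. I would also check that $\tr{\cdot}$ maps a context $\cc$ to a context with $\tr{\cc(M)}=\tr{\cc}(\tr M)$, and --- the point that makes claim~2 work --- that it sends \emph{surface} contexts $\ss$ to surface contexts: the only clause that buries the hole under a $\lam$ or into the head of an application is the $\oplus$-clause, and those are exactly the non-surface (deep) positions of $\PLambda^\cbv$.

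The second ingredient, which I expect to be the technical heart of the backward directions, is a \emph{redex-reflection} property: every $\beta_v$-redex of $\tr P$ is the image of a $\beta_v$-redex of $P$. This is where the shape of the $\oplus$-clause pays off. In $z(\lam w.\tr M)(\lam w.\tr N)$ the free variable $z$ sits in head position, so neither $z(\lam w.\tr M)$ nor $(z(\lam w.\tr M))(\lam w.\tr N)$ is a redex; hence the translation creates no spurious redex, and the $\lam w$-wrappers force any redex occurring inside a branch to lie under a $\lam$, i.e.\ in deep position. I would prove this by induction on $P$, examining $\tr P$ clause by clause and verifying that an outermost $\beta_v$-redex $(\lam x.A)B$ of $\tr P$ can only come from an application $(\lam x.M)V$ of $P$, with $\tr M=A$, $\tr V=B$, and $V$ a value by value-preservation.

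With these facts available, claim~1 follows by induction on the context of the reduced redex: forward, commutation with contexts and substitution turns $P=\cc((\lam x.M)V)\redbv\cc(M[V/x])=Q$ into $\tr P=\tr{\cc}((\lam x.\tr M)\tr V)\redbv\tr{\cc}(\tr M[\tr V/x])=\tr Q$; backward, redex-reflection together with injectivity recovers the unique $Q$ with $\tr Q=S$. Claim~2 is the same argument restricted to surface contexts, invoking the surface-preservation clause of the context lemma in both directions. Claim~3 I would prove by induction on the derivation of $P\parredbv\mdist{Q}$ along the rules of Fig.~\ref{fig:parredb}, the only interesting case being $\oplus$, where the two $\lam w$-wrappers let the parallel $\Lambda$-reduction fire independently inside each branch. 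The main obstacle throughout is this backward/reflection direction: making the case analysis on the shape of $\tr P$ watertight, and in particular confirming that the free head variable $z$ genuinely blocks redex creation at every $\oplus$-site, so that redexes and reductions on the two sides are in exact correspondence.
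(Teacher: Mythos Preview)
Your plan is correct and matches the paper's intended argument: the paper simply declares the lemma ``straightforward to check'' without proof, and the hidden sketch it contains does exactly what you propose, namely induction on $P$ with the $\oplus$ case as the only nontrivial one, relying on the head variable $z$ to block redex creation and on the $\lambda w$-wrappers to push branch redexes into deep position. Your identification of injectivity, value preservation/reflection, substitution commutation, and surface-context preservation as the structural ingredients is precisely what is needed, so there is nothing to add.
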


\condinc{}{
\begin{proof}Immediate. Let us explicitely $\Leftarrow$, for example in Case 2,  by induction on $P$. We only discuss the case $P=M\oplus N$.
Assume $\tr {(M\oplus N)} \parlamred Q$. Then $Q=X\z_pY$ and the derivation must be as follows
\[\infer{ \tr M \z_p \tr N\parlamred X \z_p Y}{\infer{\tr M\z_p\parlamred  X\z_p}{\z_p\parlamred \z_p & \tr M\parlamred X} & \tr N \parlamred Y}\] 

By induction, there are terms $S,T\in \PLambda$ such that  $X=\tr S$, $Y=\tr T$, $M\parredbv \mdist{S}$ and $N\parredbv \mdist{T}$. We therefore conclude that $M\oplus N \parredbv \mdist{R}$, with $\tr R= X\z_pY$	
\end{proof}
}



\subsubsection{$\Lambda^\cbv$ and Surface Standardization}\label{sec:cbvfolklore}
With the standard definition of left, internal, and parallel reduction (denoted $\lredbv,\iredbv,\parredbv$, respectively) the following results are well known to hold (see \cite{PlotkinCbV, RonchiPaolini}).
\begin{itemize}
\item[(a)] If $M\redbv^* N$ then  exists $ S $ such that  $$M\lredbv^*S\iredbv^* N.$$
\item[(b)] If $M\parred_{\beta_v} N$ then  exists $ S $ s.t.  $M\lredbv^*S\iparred_{\beta_v} N$.
\item[(c)] If $M\iparred_{\beta_v} M' \lredbv N$, it exists $S$ s.t. $M\lredbv^* S\iparred_{\beta_v} N$.
\end{itemize}


The results of the following lemma are immediately obtained from the previous ones, by observing that a left reduction is a surface reduction, a deep reduction is always an internal reduction, and that $\iredbv$ does not modify the shape of a term (see  \cite{RonchiPaolini}).
\begin{lemma}\label{lem:basicst}
\begin{enumerate}
\item If $M\redbv^* N$ then  exists $ S $ such that   $$M\sredbv^*S\dredbv^* N.$$	
\item If $M\parredbv N$ then  exists $ S $ s.t. $M\sredbv^*S\dparredbv N$.
\item If $M\dparredbv M'\sredbv N$ then exists $ S $ s.t. $M\sredbv^* S\dparredbv N$.
\end{enumerate}
\end{lemma}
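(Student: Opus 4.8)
The plan is to reduce the whole statement to the corresponding \emph{weak/strong} factorization of the ordinary call-by-value calculus $\Lambda^\cbv$, and to obtain that from the classical left/internal facts (a)--(c). The bridge is the refined translation $\tr{\cdot}$ of Lemma~\ref{simulation}, which transports $\redbv$, $\sredbv$ and $\parredbv$ faithfully in both directions and, crucially, \emph{preserves surface reduction}: a step of $\PLambda^\cbv$ that enters neither a $\lambda$ nor a $\oplus$ is sent to a step of $\Lambda^\cbv$ that does not enter a $\lambda$, and conversely. Since on the $\beta_v$-fragment every one-step reduction ($\redbv$, $\sredbv$, $\dredbv$, $\parredbv$, $\dparredbv$) yields a Dirac multidistribution $\mdist{N}$, each item of the Lemma is equivalent, through $\tr{\cdot}$, to a purely syntactic statement about single $\lambda$-terms, the setting where (a)--(c) live.

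I would prove the items in the order 3, 2, 1, treating item~3 as the engine. For item~3 the hypotheses $M\dparredbv M'\sredbv N$ translate, after noting $\dred\subseteq\ired$ (a deep parallel step is an internal parallel step), to a configuration to which the classical postponement fact~(c) applies: it moves the internal parallel step past the left one, and reading $\lred\subseteq\sred$ plus translating back yields $M\sredbv^* S\dparredbv N$, the trailing step being deep because the postponed redexes sit under binders. Item~2 follows in exactly the same manner from fact~(b). The genuinely delicate point, discussed below, is a surface step $M'\sredbv N$ that is \emph{not} leftmost; here I would use that internal $\beta_v$-reduction preserves the applicative shape of a term, so that the contracted surface redex and the stuck prefix making it non-leftmost are already present in $M$, which is what lets the two steps be commuted.

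For item~1 I would first observe $\redbv\subseteq\parredbv$, so $M\redbv^* N$ gives $M\parredbv^k N$, and then push all surface activity to the front by induction on $k$: item~2 factors the first parallel step as $M\sredbv^* S_1\dparredbv M_1$, the induction hypothesis factors the tail as $M_1\sredbv^* S_2\dparredbv^* N$, and iterating item~3 across the intervening surface block commutes the deep-parallel step of $S_1$ rightward, collapsing everything to $M\sredbv^* S\dparredbv^* N$. Finally Fact~\ref{prop:trans} converts $\dparredbv^*$ into $\dredbv^*$, giving $M\sredbv^* S\dredbv^* N$. (Alternatively item~1 is the direct image of fact~(a), modulo the same surface-versus-left reconciliation.)

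The main obstacle is precisely item~3, and within it the Takahashi-style phenomenon that contracting a deep redex first can \emph{expose} fresh redexes to the surface: reducing a surface redex $(\lam x.P)V$ in $M'$, after $M$ has been deep-reduced to $M'$, substitutes an already deep-reduced body and value, so the naive commutation produces a plain parallel step $M''\parredbv N$ rather than a deep-parallel one. I expect to control this through the substitution lemma for parallel reduction -- $P\parredbv P'$ and $V\parredbv V'$ imply $P[V/x]\parredbv P'[V'/x]$ -- together with an induction on the structure of $M$, reorganising the exposed redexes into a single trailing $\dparredbv$ step. This is the one place where the left/internal template of (a)--(c) must be genuinely re-examined rather than merely re-read through the inclusions $\lred\subseteq\sred$ and $\dred\subseteq\ired$, and it is where I would spend most of the effort.
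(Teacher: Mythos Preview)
Your proposal has a genuine gap at the central step. You claim that after applying (c) to $M\dparredbv M'\sredbv N$ (via $\dparredbv\subseteq\iparred_{\beta_v}$) the conclusion $M\lredbv^* S\iparred_{\beta_v} N$ can be read as $M\sredbv^* S\dparredbv N$ because ``the postponed redexes sit under binders''. This is false: an \emph{internal} parallel step may very well contract \emph{surface} redexes that are simply not leftmost (e.g.\ the second $II$ in $x(II)(II)$ is internal but surface). So $\iparred_{\beta_v}\not\subseteq\dparredbv$, and neither (b) nor (c) delivers the conclusions of items~2 and~3 just by the inclusions $\lred\subseteq\sred$ and $\dred\subseteq\ired$. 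The whole content of the lemma is precisely this gap between left/internal and surface/deep, which your plan glosses over. (The translation detour is also unnecessary: the lemma is already stated for the pure calculus $\Lambda^\cbv$, and (a)--(c) live there.)

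The paper closes this gap not by commutation tricks but by a structural induction that \emph{refines} the internal residue into a deep one. For items~1 and~2 it applies (a) resp.\ (b), then inducts on the shape of the target $N$: since internal reduction preserves the top applicative spine, if $N=PQ$ then the intermediate $S$ is $P'Q'$ with $P'\redbv^* P$, $Q'\redbv^* Q$, and one recurses on the components to push all remaining surface steps forward. Item~3 is then proved by induction on $M$; the key case $M=(\lambda x.P)V$ uses that the surface step out of $(\lambda x.P')V'$ is necessarily the top $\beta_v$-redex, hence a left step, so (c) applies and gives $M\lredbv^* S\iparred_{\beta_v} N$, and item~2 (already established) upgrades the trailing $\iparred_{\beta_v}$ to $\dparredbv$. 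Thus item~3 depends on item~2, so your proposed order 3--2--1 would not work without an independent argument for item~2, which your proposal does not supply.
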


\begin{proof}
	The first two are by induction on $N$. We recall that $\lred \subseteq \sred$.
	\begin{enumerate} 
		\item  By (a) $M\redbv^* N$ implies $M\lredbv^*S\iredbv^* N$. We  examine $N$.
		\begin{itemize}
			\item $N=x$. Then $S=N$ and the result holds trivially.
			\item $N=\lam x.P$. Hence  $S =\lambda x.Q \iredbv^* \lambda x.P$. 
			Then  $M \lredbv^* \lambda x.Q \dredbv^* \lambda x.P$.
			\item $N=PQ$. Then $S=P'Q'$, where $P' \redbv^* P$ and $Q' \redbv^* Q$.
			By induction $P' \sredbv^* P'' \dredbv^* P$ and $Q' \sredbv^* Q'' \dredbv^* Q$, and the desired sequence  is
			$M \lredbv^* P'Q' \sredbv^* P''Q'' \dredbv^* PQ$. 
		\end{itemize}
		  The result follows since $\lredbv \subset \dredbv$.
		\item Similar to the previous one, using (b) , i.e., the fact that  $M\parred_{\beta_v} N$ implies $M\lredbv^*S\iparred_{\beta_v} N$.
			\begin{itemize}
			\item $N=\lam x.P$. Then  $S =\lambda x.Q$ and $Q \parredbv P$. By 
			definition,  $M \lredbv^* \lambda x.Q \dparred_{\beta_v}\lambda x.P$.
			\item $N=PQ$. Then $S=P'Q'$, with  $P' \parredbv P$ and $Q' \parredbv Q$.
			By induction $P' \sredbv^* P'' \dparredbv P$ and $Q' \sredbv^* Q'' \dparredbv Q$, and the desired sequence  is
			$M \lredbv^* P'Q' \sredbv^* P''Q'' \dredbv^* PQ$.
		\end{itemize}

		\item By induction on $M$.
		\begin{itemize}
			\item $M=x$ or $M=\lam x.P$.  Immediate.
			\item $M=(\lam x.P)V$. Assume $(\lam x.P)V\dparredbv (\lam x.P')V'\sredbv N$. Since the deep step is an internal step, the surface step is a left step, we have 
			$(\lam x.P)V\iparred_{\beta_v} (\lam x.P')V'\lredbv N$. From (c),  it exists $S$, $M\lredbv^*  S\iparred_{\beta_v} N$.
			The $\iparred_{\beta_v}$ step is in particular a $\parred_{\beta_v}$, hence  from point 2 it holds that  $S\sredbv^*S'\dparredbv N$, hence the claim.
			\item $M=PQ$. By hypothesis, $PQ\dparredbv P'Q'\sred_{\beta_v} N$; the surface redex is inside either $P'$ or $Q'$, say $Q'$. We have  $N=P'R$, $Q\dparredbv Q'\sredbv R$ and by induction
			$Q\sredbv^* R'\dparredbv R$. Hence $PQ\sredbv^* PR'\dparredbv P'R$.
		\end{itemize}
	\end{enumerate}
\end{proof}

\condinc{}{
\begin{proof}
The first two are by induction of the shape of $N$. 
\begin{itemize} 
\item[1']  $M\redbv^* N$ implies $M\lredbv^*S\iredbv^* N$. Let examine $N$.
\begin{itemize}
	\item $N=x$. Then $S=N$ and the result holds trivially.
	\item $N=\lam x.P$. Then also $S =\lambda x.Q \iredbv^* \lambda x.P$. By 
	definition,  $M \sredbv^* \lambda x.Q \dredbv^* \lambda x.P$.
	\item $N=PQ$. Then $S=P'Q'$, where $P' \redbv^* P$ and $Q' \redbv^* Q$.
	By induction $P' \sredbv^* P'' \dredbv^* P$ and $Q' \sredbv^* Q'' \dredbv^* Q$, and the desired sequence  is
	$M \sredbv^* P'Q' \sredbv^* P''Q'' \dredbv^* PQ$.
\end{itemize}

\item[2'] By induction on the definition of $\parredbv$.
If $M \parredbv N$ since $M=\lambda x.M'$, $N=\lambda x. N'$ and $M' \parredbv N'$, then the proof come from induction and the definition of deep.\\
If $M \parredbv N$ since $M=PQ$, $N=P'Q'$ and $P \parredbv P'$ , $Q \parredbv Q'$, then by induction
$P \sredbv^*P''\dparredbv P'$ and $Q \sredbv^*Q''\dparredbv Q'$, then 
$PQ \sredbv^* P'' Q'' \dparredbv P'Q'$.\\
If $M \parredbv N$ since $M=(\lambda x.P)V \parredbv P'[V'/x]$, where $P \parredbv P'$ and $V \parredbv V'$,
then by induction $P \sredbv^* P'' \dparredbv P'$ and $V \sredbv^* V'' \dparredbv V'$. But in a value, all reductions are deep, so $V \dparredbv V'$. Then 
$M=(\lambda x.P)V \sredbv P[V/x]  \sredbv^* P''[V/x]  \dparredbv P'[V'/x] $.
\item[3']  By induction on the shape of $M$.
\begin{itemize}
\item $M=x$ or $M=\lam x.P$ immediate
\item $M=(\lam x.P)V$. Assume $(\lam x.P)V\dparredbv (\lam x.P')V'\sredbv N$ The deep step is an internal step, the surface step is a left step.
Therefore $(\lam x.P)V\iparred_{\beta_v} (\lam x.P')V'\lredbv N$. From (3) exists $S$, $M\lredbv^*  S\iparred_{\beta_v} N$ and from (2') $S\sredbv^*S'\dparredbv N$.
\item $M=PQ\dparredbv P'Q'$, and the surface redex is inside either $P'$ or $Q'$, say $Q'$. We have  $N=P'R$, $Q\dparredbv Q'\sredbv R$ and by induction
$Q\sredbv^* R'\dparredbv R$. Hence $PQ\sredbv^* PR'\dparredbv P'R$.
\end{itemize}
\end{itemize}
\end{proof}
}

\subsubsection{Surface Standardization in $\PLambda^\cbv$}\label{sec:postponement}\label{sec:postponement_cbv}
In order to prove Theorem ~\ref{thm:surfacestandard}, we need a lemma.

\begin{lemma}\label{lem:post}If $M\dparred \mset{M'}$ and $M' \sred \n$, then it exists $\s$, such that  $\mset{M} \sRed^* \s$ and $\s \dparRed \n$.
\end{lemma}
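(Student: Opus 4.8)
The plan is to split on the nature of the single surface step $M' \sred \n$, reducing the claim to the two finitary standardization facts already available for $\beta_v$ in Lemma~\ref{lem:basicst}. Since every deep step is a $\beta_v$-step, the hypothesis $M \dparred \mset{M'}$ is a deep parallel $\beta_v$-reduction $M \dparredbv \mset{M'}$; the only genuinely new ingredient is the case where the surface step fires a $\oplus$-redex, which is \emph{not} mirrored by the term translation of Sec.~\ref{sec:tr_cbv} (the translation sends $\oplus$ to a stuck application, not to a $\beta_v$-redex) and so must be handled directly. If instead $M' \sred \n$ is a surface $\beta_v$-step, then $\n = \mset{N}$ and, reading everything at the level of terms, $M \dparredbv M'$ and $M' \sredbv N$. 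Lemma~\ref{lem:basicst}.3 then supplies $S$ with $M \sredbv^* S \dparredbv N$; lifting gives $\mset{M} \sRed^* \mset{S}$ and $\mset{S} \dparRed \mset{N} = \n$, so $\s = \mset{S}$ works.

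Now suppose $M' \sred \n$ is a surface $\oplus$-step, say $M' = \ss(P' \oplus Q')$ with $\ss$ a surface context and $\n = \mset{\two \ss(P'), \two \ss(Q')}$. The key structural observation is that a deep parallel $\beta_v$-reduction preserves the surface applicative skeleton: the top constructor is preserved by $\dparredbv$ (an application reduces to an application, a $\oplus$-term to a $\oplus$-term), and deep steps never lift a redex across a $\lambda$ or $\oplus$ boundary to the surface. Hence, by induction on $\ss$, one produces a surface context $\ss_M$ and terms $P,Q$ with $M = \ss_M(P \oplus Q)$, where the proper subterms of $\ss_M$ deep-parallel-reduce to the corresponding subterms of $\ss$, and where the $\oplus$-rule of $\dparredbv$ forces $P \parredbv P'$ and $Q \parredbv Q'$. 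By congruence of parallel reduction this yields $\ss_M(P) \parredbv \ss(P')$ and $\ss_M(Q) \parredbv \ss(Q')$, the contractions of $P$ and $Q$ now sitting at surface position.

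To conclude the $\oplus$-case, first fire the exposed $\oplus$-redex in $M$: this is a surface step, so $\mset{M} \sRed \mset{\two \ss_M(P), \two \ss_M(Q)}$. Applying Lemma~\ref{lem:basicst}.2 to $\ss_M(P) \parredbv \ss(P')$ and to $\ss_M(Q) \parredbv \ss(Q')$ gives $S_1,S_2$ with $\ss_M(P) \sredbv^* S_1 \dparredbv \ss(P')$ and $\ss_M(Q) \sredbv^* S_2 \dparredbv \ss(Q')$. Lifting and recombining with weights $\two,\two$ we obtain $\mset{\two \ss_M(P), \two \ss_M(Q)} \sRed^* \mset{\two S_1, \two S_2} =: \s$ and $\s \dparRed \mset{\two \ss(P'), \two \ss(Q')} = \n$. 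Chaining the initial $\oplus$-step with these surface steps shows $\mset{M} \sRed^* \s$ and $\s \dparRed \n$, as required.

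The main obstacle is precisely the structural lemma used in the $\oplus$-case: one must verify carefully that the surface $\oplus$-redex fired in $M'$ is already present, in a corresponding surface position, in $M$ — i.e. that the deep parallel reduction neither creates it nor transports it across a $\lambda/\oplus$ boundary. Once this skeleton-preservation is established by the induction on $\ss$, the remainder is bookkeeping: firing the exposed $\oplus$ and then re-using the purely $\beta_v$ standardization of Lemma~\ref{lem:basicst}.2 on each of the two resulting branches.
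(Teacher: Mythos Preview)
Your proof is correct and follows essentially the same plan as the paper: split on whether the surface step is $\beta_v$ or $\oplus$, dispatch the $\beta_v$ case via Lemma~\ref{lem:basicst}.3, and handle the $\oplus$ case by exploiting that $\dparredbv$ preserves the top constructor so that the fired $\oplus$-redex in $M'$ already sits in surface position in $M$.

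The only organizational difference is in the $\oplus$ case. The paper runs an induction on $M$, using the lemma itself as inductive hypothesis: in the case $M=PQ$ it applies the IH to whichever of $P,Q$ contains the $\oplus$-redex, and in the base case $M=P\oplus Q$ it fires the $\oplus$ and invokes Lemma~\ref{lem:basicst}.2 on each branch. You instead factor out the ``skeleton preservation'' as an explicit induction on the surface context $\ss$, locate the corresponding $\oplus$-redex in $M$ in one shot, fire it, and then appeal once to Lemma~\ref{lem:basicst}.2 on the two full parallel reductions $\ss_M(P)\parredbv \ss(P')$ and $\ss_M(Q)\parredbv \ss(Q')$. Your version thus avoids making Lemma~\ref{lem:post} its own inductive hypothesis, at the cost of stating the structural fact separately; the paper's inline induction is terser. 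Both are the same argument at bottom.
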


\begin{proof}If $M' \sredbv \n$, the claim holds by simulation in $\Lambda^\cbv$ and Lemma~\ref{lem:basicst}, point (3).
If $M' \sred_{\oplus} \n$, we procede by induction on $M$. 
\begin{enumerate}
\item The case $M=x$ and $M=\lam x.P$ do not apply.
\item Let $M=P \oplus Q$.  Assume $P \oplus Q \dparred \mset{R \oplus S}$, so $P\parredbv R$, $Q\parredbv S$, and $R \oplus S\sredo \mset{\two R,\two S}$. By  Lemma~\ref{lem:basicst}, point 2 and  simulation  in $\Lambda^\cbv$, it holds that $P\sred^* P'\dparred R$  and $Q\sred^* Q'\dparred S$. Therefore  $P \oplus Q \sredo \mset{\two P, \two Q} \sRed^*\mset{\two P' ,\two Q'}\dparRed \mset{\two R, \two S}$
\item Let $M=PQ$. Assume  $PQ\dparred \mset{RT}$ (with  $P\dparred R$ and $Q\dparred T$)   and $RT\red_{\oplus} \n$ with the $\oplus$-redex in either $R$ or $T$, say is in $R$. Hence $R\redo \r=\mset{\two R_i\mid i\in \{1,2\}} $ and $RT\sred \mset{\two R_iT\mid  i\in \{1,2\}}=\n$. 
By induction, from $P\dparred \mset{R}\sRed \r$ we have $\mset{P}\sRed^* \mset{\two S_i\mid  i\in \{1,2\}}$  and $ S_i\dparred R_i$ .
Therefore $\mset{PQ}\sred^* \mset{\two S_iQ\mid  i\in \{1,2\}} \dparRed \mset{\two R_iT\mid  i\in \{1,2\}}=\n$ . 
\end{enumerate}

\end{proof}

\begin{cor}\label{cor:post}
If $\m\dparRed \n$ and $\n \sRed^* \r$, then exists $\s$ with $\m \sRed^* \s$ and $\s \dparRed \r$.
\end{cor}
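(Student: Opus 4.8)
The plan is to obtain the corollary from Lemma~\ref{lem:post} by two successive generalizations: first from a single term and a single surface step to whole multidistributions (still one surface step), and then from one surface step to a surface sequence of arbitrary length. Throughout, the deep-parallel reduction stays a \emph{single} $\dparRed$-step, exactly as in Lemma~\ref{lem:post}, so no closure of $\dparRed$ is ever needed.

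For the first generalization I would prove the one-step statement: if $\m \dparRed \n$ and $\n \sRed \r$, then there is $\s$ with $\m \sRed^* \s$ and $\s \dparRed \r$. Write $\m = \mset{p_i M_i \st \iI}$. Since every term deep-parallel-reduces to a singleton (Fig.~\ref{fig:dparred}), $\m \dparRed \n$ forces $\n = \mset{p_i M_i' \st \iI}$ with $M_i \dparred \mset{M_i'}$ for each $i$. By definition of the lifting, the single surface step $\n \sRed \r$ reduces some subset of the components, so $\r = \sum_{\iI} p_i \cdot \r_i$ where, for each $i$, either $M_i' \sred \r_i$ or $\r_i = \mset{M_i'}$ (reflexivity, rule $L1$). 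In the first case I apply Lemma~\ref{lem:post} to $M_i \dparred \mset{M_i'}$ and $M_i' \sred \r_i$, obtaining $\s_i$ with $\mset{M_i} \sRed^* \s_i$ and $\s_i \dparRed \r_i$; in the second case I take $\s_i = \mset{M_i}$, which trivially satisfies $\mset{M_i} \sRed^* \s_i$ and $\s_i = \mset{M_i} \dparRed \mset{M_i'} = \r_i$. Setting $\s = \sum_{\iI} p_i \cdot \s_i$ then gives $\m \sRed^* \s$ and $\s \dparRed \r$ by recombining the componentwise reductions through the lifting, precisely as in the proof of the Pointwise Criterion (Lemma~\ref{lem:pointwise}).

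For the second generalization I would induct on the number $n$ of steps in $\n \sRed^* \r$. If $n = 0$ then $\r = \n$ and $\s := \m$ works, since $\m \sRed^* \m$ by reflexivity and $\m \dparRed \n = \r$ by hypothesis. If $n > 0$, split $\n \sRed \n' \sRed^{n-1} \r$. Applying the one-step statement to $\m \dparRed \n$ and $\n \sRed \n'$ yields $\s'$ with $\m \sRed^* \s'$ and $\s' \dparRed \n'$. Applying the induction hypothesis to $\s' \dparRed \n'$ and $\n' \sRed^{n-1} \r$ yields $\s$ with $\s' \sRed^* \s$ and $\s \dparRed \r$. Concatenating, $\m \sRed^* \s' \sRed^* \s$, so $\m \sRed^* \s$ and $\s \dparRed \r$, as required.

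The only genuinely delicate point is the recombination step in the first generalization: the sequences $\mset{M_i} \sRed^* \s_i$ produced by Lemma~\ref{lem:post} may have different lengths for different $i$. To sum them componentwise through the lifting I would pad the shorter ones with reflexive $\sRed$-steps (rule $L1$) up to a common length, and then combine them stage by stage using rule $L3$. This is exactly the bookkeeping already carried out for $\Redb$ and $\Redo$ in Lemma~\ref{lem:pointwise}, so I expect no new difficulty beyond observing that surface reduction is reflexive, which is immediate from the lifting rules.
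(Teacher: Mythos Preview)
Your proof is correct and follows the same strategy as the paper's: induct on the length of $\n \sRed^* \r$, peeling off the first surface step and invoking Lemma~\ref{lem:post}. The paper's proof is terser and applies Lemma~\ref{lem:post} directly at the multidistribution level without spelling out the componentwise lifting; your first generalization makes that implicit step explicit, including the padding argument for recombining per-component $\sRed^*$-sequences of different lengths via $L1$ and $L3$ (this works since multidistributions are finite multisets, so the maximum length exists).
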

\begin{proof}
By induction on the length $k$ of $\n \sRed^{(k)} \r$. If $k=0$ the result is trivial.
Otherwise, let $\n \sRed^* \r$ be $\n\sRed \n_1 \sRed^{(k-1)} \r$. By Lemma~\ref{lem:post}, from   $\m\dparred  \n\sRed \n_1$ we have  that $\m\sRed^* \s\dparRed \n_1 \sRed^{(k-1)} \r$.
By inductive hypothesis, $\s \sRed^* r' \dparRed r$, hence $\m\sRed^* \s \sRed^* \r' \dparRed \r$.
\end{proof}

Now we are able to prove the theorem: 
\begin{center}
\textbf{Thm.~\ref{thm:surfacestandard}}:	if $\m\Red^*\n$ then  then exists $\r$ such that  $\m \sRed^* \r$ and $\r \dRed^* \n$. 
\end{center}
\begin{proof}
By induction on the length $k$ of the reduction $\m\Red^*\n$,  using  Corollary \ref{cor:post}. 


If $k=0$, the result is trivial ($\r=\m$). Otherwise, $\m\Red\m_1\Red^*\n$. By induction, we have 
$\m_1\sRed^*\r\dRed^*\n$.  We can separate the first step in two: $\m\sRed\m'\dRed \m_1$, by reducing first only the elements of $\m$ which have a surface reduction, and then only the elements which have a deep reduction. The   step $\m'\dRed \m_1$ can be regarded as a parallel step. By Corollary  \ref{cor:post}, from $\m'\dparRed \m_1\sRed^*\r$ we obtain $\m'\sRed^*\s \dparRed\r$, hence it holds that 
$\m\sRed\m' \sRed^*\s \dRed^*\r \dRed^*\n$.

\end{proof}

\condinc{}{
\subsection{Controesempi alla standardizzazione basata su  INTERNAL POSTPONEMENT...}
La claim "$\m\Red^*\n$ then $\m\lRed^*s\iRed^*\n$" non vale
con nessuna scelta per i redex  $\oplus$. La definizione di redex principale e interno di una riduzione $\redbv$ e' chiara, quella standard. 
Resta da scegliere come classificare i  redex  $\oplus$.  
In particolare, un redex
$\oplus$ che non e' leftmost (in CbV sense) deve essere considerato interno, o principale? Or is neither, and we reduce a $\oplus$-redex  whenever is possible?  
Abbiamo controesempi per entrambe le  scelte:
\begin{enumerate}
\item an $\oplus$-redex is principal if it is the "leftmost" redex (in the CbV sense), otherwise it is internal
$$ (II)(P\oplus Q)\ired_{\oplus} \mset{\two(II)P,\two (II)Q}\lredbv \mset{\two IP,\two (II)Q} $$
If we postpone $\redo$, the result is not the same
\item  let us decide that we reduce $\oplus$-redex whenever is possible, 
$$(II)((\lam x.P\oplus Q)V)\iredbv (II)P'\oplus Q'\lred_{\oplus} \mset{\two (II)P', \two (II)Q'} \lredbv \mset{\two IP,\two (II)Q}$$
We cannot postpone the $\iredbv$ reduction.
\end{enumerate}
}

\subsection{Proofs of Section~\ref{sec:asymptotic}}

\paragraph{Monotone Convergence.}

We recall the following standard result.
\begin{theorem*}[Monotone Convergence for Sums]
	Let $\mathcal X$ be a countable set,  $f_n: \X\to[0, \infty]$ a non-decreasing sequence of  functions, such that 
	$f(x):= \lim_{n\to\infty} f_n(x)=\sup_n  f_n(x) $ exists for each $x\in \X$. Then 
	$$\lim_{n\to \infty} \sum_{x\in \X} f_n (x) ~=~  \sum_{x\in \mathcal X} f(x)$$
\end{theorem*}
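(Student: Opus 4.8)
The plan is to establish the two inequalities $\lim_{n\to\infty}\sum_{x\in\X} f_n(x) \le \sum_{x\in\X} f(x)$ and $\sum_{x\in\X} f(x) \le \lim_{n\to\infty}\sum_{x\in\X} f_n(x)$ separately, and then combine them. First I would record that, since $\seqf{f}{n}{\Nat}$ is non-decreasing and nonnegative, the sequence of values $S_n := \sum_{x\in\X} f_n(x)$ is itself non-decreasing in $[0,\infty]$; hence $\lim_{n\to\infty} S_n = \sup_n S_n$ exists (possibly $+\infty$) and the left-hand side of the statement is well defined. This monotonicity is exactly what makes the interchange legitimate.

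For the easy inequality $\le$, from the hypothesis $f(x)=\sup_n f_n(x)$ I get $f_n(x)\le f(x)$ pointwise for every $n$, whence $S_n = \sum_{x\in\X} f_n(x) \le \sum_{x\in\X} f(x)$ for all $n$; taking the supremum over $n$ gives $\sup_n S_n \le \sum_{x\in\X} f(x)$.

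The reverse inequality is the step requiring care, and I would drive it through the defining property of a sum of nonnegative terms over a countable set, namely $\sum_{x\in\X} f(x) = \sup\{\sum_{x\in F} f(x) \st F\subseteq\X \text{ finite}\}$. Fix any finite $F\subseteq\X$. Because $F$ is finite, a finite sum commutes with the limit even valued in $[0,\infty]$, so $\lim_{n\to\infty}\sum_{x\in F} f_n(x) = \sum_{x\in F}\lim_{n\to\infty} f_n(x) = \sum_{x\in F} f(x)$. Given any real $M < \sum_{x\in F} f(x)$ there is then an index $n$ with $\sum_{x\in F} f_n(x) > M$, and since $S_n \ge \sum_{x\in F} f_n(x)$ this yields $\sup_n S_n > M$. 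As $M$ was an arbitrary real below $\sum_{x\in F} f(x)$, I conclude $\sum_{x\in F} f(x) \le \sup_n S_n$; taking the supremum over all finite $F$ gives $\sum_{x\in\X} f(x) \le \sup_n S_n$. Together with the easy inequality this proves $\lim_{n\to\infty} S_n = \sum_{x\in\X} f(x)$.

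The only delicate point is the bookkeeping with $[0,\infty]$-valued terms. I must ensure the argument tolerates the cases $f(x)=+\infty$ and $\sum_{x\in\X} f(x)=+\infty$; this is handled automatically by phrasing the reverse inequality through arbitrary real thresholds $M$ together with finite partial sums, so that the limit--sum interchange is ever performed only over the finite index set $F$, where it is elementary. The genuine content is thus just the reduction of the countable series to its finite partial sums, which is precisely the definition of a series of nonnegative reals; no measure-theoretic apparatus beyond this is needed.
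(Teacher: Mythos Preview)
Your proof is correct and follows essentially the same approach as the paper's: both establish the easy inequality $\sup_n S_n\le\sum_x f(x)$ from $f_n\le f$ pointwise, and for the reverse inequality both reduce to finite partial sums where the limit--sum interchange is elementary, then pass to the supremum. The only cosmetic difference is that the paper fixes an enumeration $x_1,x_2,\dots$ of $\X$ and works with initial segments $\{x_1,\dots,x_K\}$, whereas you work with arbitrary finite $F\subseteq\X$ and an explicit threshold $M$; your version is slightly more careful about the $[0,\infty]$-valued case, but the underlying argument is the same.
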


\condinc{}{
\begin{proof}For indexes $m<n$, we have , for each $x\in \X$
		$$ f_m(x) \leq f_n(x) \leq f(x)$$	
		Hence  $ \sum_{k:1}^\infty f_m(x_k) \leq \sum_{k:1}^\infty f_n(x_k) \leq \sum_{k:1}^\infty  f(x_k)$ and therefore
		$$L=\lim_n \sum_{k:1}^\infty f_n(x_k)  \leq  \sum_{k:1}^\infty f(x_k)$$
		
		To prove the opposite inequality,  fix $K\in \Nat$, and observe
		\[ \sum_{k:1}^K f(x_k) = \sum_{k:1}^K \lim_n f_n(x_k) = \lim_n \sum_{k:1}^K f_n(x_k) \leq
		\lim_n \sum_{k:1}^\infty f_n(x_k) = L\]
		
		Letting $K\to\infty$ gives  $\sum_{k:1}^\infty f(x_k) \leq L$
		
\end{proof}

}

Hence, given $\mu_n: \Obs\to [0,1]$ and $\brho(\bU) = \lim_{n\to \infty} \mu_n(\bU)$, the following holds:
{ $$\lim_{n\to \infty}\sum_{\bU\in \Obs} \mu_n(\bU) ~=~  \sum_{\bU\in \Obs} \brho(\bU)$$}

\condinc{}{
Recall:
\begin{itemize}
	
	\item If $\m,\s$ are multi-distributions (Sec. \ref{sec:multi} ), $\mu, \sigma$ denote the associated distribution.
	
	\item Greek bold letters denote limit distributions (Def.\ref{def:limits} )
	
	\item Given a distribution $\mu$ on a countable set $\mathcal X$,  we recall the definition of norm $\norm \mu= \sum_{x\in \mathcal X} \mu(x)$.
\end{itemize}
}

\vskip 4pt
\paragraph{Existence of maximals.}

We recall the definition of norm $\norm \mu= \sum_{x\in \mathcal X} \mu(x)$.
\newcommand{\Norms}{\texttt{Norms}}
\begin{lemma}[Existence of maximals] Confluence implies that:
	
	\begin{enumerate}
		\item  $\Norms(\m) = \{ \norm {\bmu} ~\mid~  \bmu\in Lim(\m) \}$ has a greatest element;
		\item $Lim(\m)$ has maximal elements.
	\end{enumerate} 
\end{lemma}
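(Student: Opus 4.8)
The plan is to make everything hinge on a single supremum-achievement argument driven by confluence, and then to read off maximality (Part 2) almost for free from Part 1. For a multidistribution $\s$ write $w(\s):=\sum_{\bU\in\Obs}\sigma(\bU)$ for its total observation mass. Two facts set up the argument. First, $w$ is monotone along reduction: summing the inequalities $\mu(\bU)\le\sigma(\bU)$ of Def.~\ref{def:obs} over all $\bU\in\Obs$ shows that $\m\Red^*\s$ implies $w(\m)\le w(\s)$. Second, by the monotone convergence identity~\eqref{eq:MCT}, any $\seq\m\down\brho$ satisfies $\norm\brho=\sum_{\bU\in\Obs}\sup_n\mu_n(\bU)=\sup_n w(\m_n)$. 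Combining these, $s:=\sup\Norms(\m)$ coincides with $\sup\{w(\s)\mid\m\Red^*\s\}$: every $w(\s)$ is realised as $\norm\brho$ for the padded sequence $\m\Red^*\s\Red\s\Red\cdots$, while conversely each $\norm\bmu$ is a supremum of values $w(\m_n)$ with $\m\Red^*\m_n$. Note $s\le 1$.

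For Part 1 I would pick reducts $\s_1,\s_2,\dots$ with $\m\Red^*\s_k$ and $w(\s_k)\to s$, and weave them into one sequence using confluence. Inductively build $\m=\r_0\Red^*\r_1\Red^*\r_2\Red^*\cdots$, where $\r_k$ is a common reduct of $\r_{k-1}$ and $\s_k$ obtained by applying confluence of $\Red$ to $\m\Red^*\r_{k-1}$ and $\m\Red^*\s_k$; thus $\r_{k-1}\Red^*\r_k$ and, crucially, $\s_k\Red^*\r_k$. Concatenating these finite segments (filling any empty one with a reflexive step, so as to obtain a genuine $\omega$-indexed $\Red$-sequence) yields $\seq\r$ from $\m$; let $\brho$ be its limit. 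Monotonicity of $w$ gives $w(\s_k)\le w(\r_k)\le\sup_n w(\r_n)=\norm\brho$ for every $k$, hence $s\le\norm\brho$, while $\norm\brho\le s$ by definition of $s$. So $\norm\brho=s\in\Norms(\m)$, and $s$ is the greatest element of $\Norms(\m)$.

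For Part 2 I claim this same $\brho$ is maximal in $(\Lim(\m),\le)$. If $\brho\le\bsigma$ with $\bsigma\in\Lim(\m)$, then $\norm\brho\le\norm\bsigma\le s=\norm\brho$, so the two finite sums $\sum_{\bU}\brho(\bU)$ and $\sum_{\bU}\bsigma(\bU)$ agree; together with the pointwise inequality $\brho(\bU)\le\bsigma(\bU)$ this forces every summand to be equal, i.e. $\brho=\bsigma$. Hence $\brho$ is maximal, proving Part 2.

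The only genuinely delicate point is the weaving in Part 1: one must check that the countable concatenation of finite $\Red^*$-segments is a legitimate reduction sequence (reflexivity of $\Red$ handles empty segments) and that each chosen $\s_k$ really reduces into the chain so its mass is dominated — and this domination is exactly what the common-reduct clause of confluence delivers. All remaining steps are routine manipulations of the identity~\eqref{eq:MCT}. Notice that this argument uses only confluence and Def.~\ref{def:obs}, so it does not depend on Lemma~\ref{lem:NFP} or Thm.~\ref{thm:unique} and no circularity arises.
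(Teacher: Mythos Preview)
Your proof is correct and takes a genuinely different route from the paper's. The paper proves this lemma \emph{via} the Main Lemma (Lemma~\ref{lem:NFP}): for each $k$ it picks a limit distribution $\bmu^{(k)}\in\Lim(\m)$ with $\norm{\bmu^{(k)}}\geq p-\tfrac{1}{2}\epsilon_k$, then invokes the Main Lemma (its fact~(c)) to push this mass forward along the partially-built chain $\m\Red^*\s^{(k-1)}$, obtaining $\s^{(k)}$ with $\norm{\sigma^{(k)}}\geq p-\epsilon_k$; the concatenation of the segments $\s^{(k-1)}\Red^*\s^{(k)}$ does the job. You bypass the Main Lemma entirely: your key move is the reformulation $s=\sup\{w(\s)\mid\m\Red^*\s\}$, which lets you work with \emph{finite} reducts $\s_k$ rather than with auxiliary limit distributions $\bmu^{(k)}$, and then a single bare application of confluence per step produces the common reduct $\r_k$ with $w(\r_k)\geq w(\s_k)$. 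This is more elementary (no dependency on Lemma~\ref{lem:NFP}, so your remark about non-circularity is to the point) and avoids the $\epsilon/2$ bookkeeping the paper carries. What the paper's route buys is uniformity: once the Main Lemma is in place, existence, uniqueness (Thm.~\ref{thm:unique}) and adequacy (Thm.~\ref{thm:adequacy}) all fall out of the same tool, whereas your argument is tailored to existence. Your derivation of Part~2 from Part~1 (equal finite sums plus pointwise inequality force equality) is the same as the paper's implicit fact~(a).
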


\begin{figure}[h]\centering
	\fbox{		\includegraphics[width=0.4\textwidth]{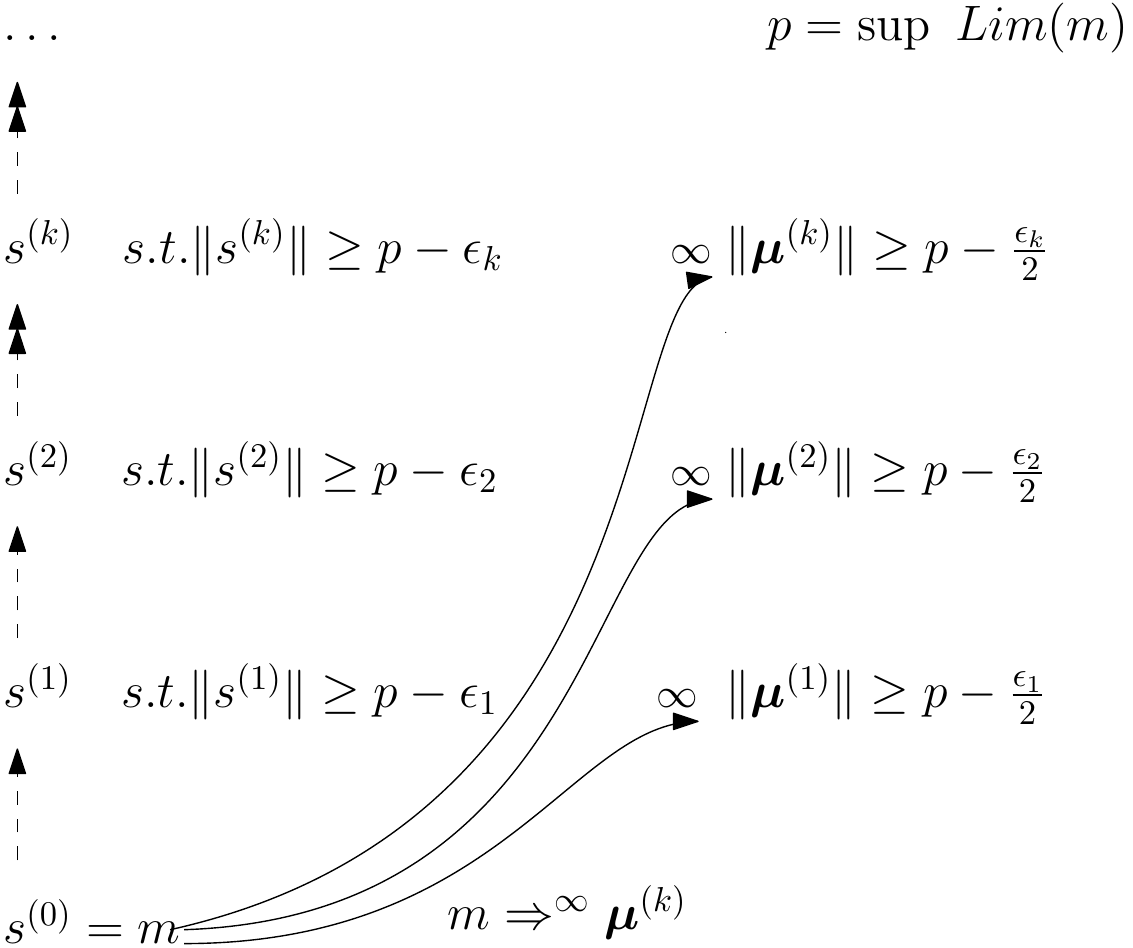}}
	\caption{A sequence whose limit distribution is a maximal element of $\Lim(\m)$}\label{fig:exists}
\end{figure}	
\begin{proof}
	
	\textbf{(1. )}
	Let $p =\sup~{\Norms(\m)}$. 
	We show that   $p\in \Norms(\m)$, 
	by providing  a rewrite  sequence $\seq \m$ from $\m$ such that  $\seq \m \tolim \btau$ and $\norm \btau =p$.

	The following  facts are all easy to check:
	\begin{itemize}
		\item[a.] If $\alpha <\beta$ then  $\norm \alpha < \norm \beta$.
		
		\item[b.]\label{Lk}  If  $p\not\in \Norms(\m)$, then for each $\epsilon$, there exists   $\bmu\in Lim(\m) $ such that   $ \norm {\bmu} \geq p-\epsilon$. 
		
		\item[c.]\label{Lmain} The Main Lemma implies that, fixed $\epsilon$,  if $\m \tolim \bmu$ with $\norm \bmu \geq (p-\epsilon)$, and $\m \Red^* \s$, then  there exists $\s'$, such that $\s \Red^* \s' $ and $\norm{\sigma'} \geq ( p-2\epsilon)$.\\ (Proof: 
		Main Lemma implies that there is a rewrite sequence $\seq \s$ from $\s$ which converges to  $\bsigma\geq \bmu$.
		Therefore $ \seq \s \tolim \bsigma$ where $\norm \bsigma \geq (p-\epsilon)$. For the same $\epsilon$,  there is an index $N$   such that  $\s\Red^* \s_N$ and $\norm {\sigma_N}\geq (\norm \bsigma-\epsilon) $, hence $\norm {\sigma_N}\geq p-2\epsilon$. )
		
	\condinc{}{	\SC{Qui non capisco il passaggio $\norm {\sigma_N}\geq (\norm \bsigma-\epsilon) $. Io capisco il punto c. in questo modo:\\
			By point b.), there is $\bmu\in Lim(\m) $ such that   $ \norm {\bmu} \geq p-\epsilon$. So, by Main Lemma, for each rewrite sequence $\seq \s$ such that $\m \Red^* \s $,  $\bsigma\geq \bmu$. So, by point b., there is $\s_{2\epsilon}$ such that $\m \Red^* \s_{2 \epsilon} $ and
			$\norm \bsigmaep \geq (p - 2\epsilon)$.
		}	}	
		
		\item[d.] $\forall \delta\in \Real^+$ there exists $k$ such that $ \frac{p}{2^k} \leq \delta$.
	\end{itemize}


	For each $k\in \Nat$, let  $\epsilon_k=\frac{p}{2^k}$. Let $\s^{(0)}=\m$. From here, we build a sequence of reductions
	$\m\Red^* \s^{(1)}\Red^*s^{(2)}\Red^*\dots$ whose limit has norm $p$, as illustrated in Fig.~\ref{fig:exists}.
	For each $k>0$, we observe that:
	
	\begin{itemize}
		\item By (b.) there exists {$\bmu^{(k)}\in Lim(\m)$} such that   $ \norm {\bmu^{(k)}} \geq (p-\frac{1}{2} \frac{p}{2^k})$. 
		\item  From $\m\Red^* \s^{(k-1)}$, we use (c.) to establish that there exists  $\s^{(k)}$ such that $\s^{(k-1)}\Red^* \s^{(k)}$ and  
		$\norm{{\sigma^{(k)}}} \geq (p- \frac{p}{2^k})$. Observe that  $\bmu^{(k)},\s^{(k-1)},\s^{(k)} $ resp.  instantiate $\bmu, \s,\s'$ of (c.).
	\end{itemize}
	
	Let $\seq \s$ be the concatenation of all the  finite sequences $\s^{(k-1)}\Red^* \s^{(k)}$. By construction,  
	$\seq \s \tolim \btau$ such that  $\norm \btau=p$. Hence  $p\in  \Norms(\m)$.
	
	\textbf{(1. $\implies$ 2.) } We observe that if  $\seq \m \tolim \bmu $ and $\norm \bmu$ is maximal in $\Norms(\m)$, then $\bmu$ is maximal in $Lim(\m)$, because  of (a.).
	
\end{proof}

\subsection{Proofs of Section~\ref{sec:Plinear}}

\subsubsection{Confluence of $\PLambda^!$}\label{sec:app_conflin}
We prove Theorem~\ref{thm:conf_lin}. First, we need to prove some preliminary results.
\begin{lemma}\label{lem:comm_pointwiseLin}
If $M\redbl \n$ and $M\redo \s$, then exists $\r$ such that $\n\Redo \r$ and $\s\Redbl \r$ 
\end{lemma}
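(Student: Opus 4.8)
The plan is to prove this pointwise property by structural induction on $M$, closely following the proof of Lemma~\ref{lem:comm} for $\PLambda^\cbv$, but adapting the case analysis to the linear syntax and to the fact that here a surface context may enter argument position. Since $\beta$-contraction is deterministic once the redex is fixed, write the hypotheses as $M\redbl\mset{N}$ and $M=\ss(P\oplus Q)\redo\s=\mset{\two\ss(P),\two\ss(Q)}$, where $\ss$ is the surface context hosting the contracted choice redex; the goal is to exhibit $\r$ with $\n\Redo\r$ and $\s\Redbl\r$. First I would dispatch the routine cases. If $M=x$ there is nothing to do, and if $M=\,!M'$ then $M$ has no surface $\oplus$-redex at all, so the hypothesis $M\redo\s$ is vacuous. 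If $M=\lambda x.M'$ or $M=\lambda!x.M'$, both redexes lie inside $M'$ (surface contexts pass under abstractions), and the claim follows from the induction hypothesis on $M'$, re-wrapping the resulting multidistributions under the abstraction. If $M=P\oplus Q$, then $M$ itself is the only surface $\oplus$-redex, so $\s=\mset{\two P,\two Q}$ while the $\beta$-redex sits inside $P$ or $Q$, say $P\redbl\mset{P'}$; then $\r=\mset{\two P',\two Q}$ closes the diagram. Finally, when $M=PQ$ is \emph{not} itself the $\beta$-redex, the two redexes are either both confined to $P$, or both to $Q$ (apply the induction hypothesis), or disjoint with one in each factor; in the disjoint case the residuals do not interfere and a single lifted $\redo$-step on $\n$ together with a single lifted $\redbl$-step on $\s$ meet at the obvious $\r$.

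The crux is the overlap case, where $M=PQ$ is itself the contracted $\beta$-redex, so $P=\lambda x.A$ (linear) or $P=\lambda!x.A$ with $Q=\,!N$, and the surface $\oplus$-redex lies in the body $A$ or in the argument. If it is in $A$, contracting the $\beta$-redex substitutes into $A$ without touching the redex: being surface, it is not under a $!$, and the substituted variable is affine (replaced at most once and never under a $!$), so the $\oplus$-redex survives as a single surface residual and the diagram closes as in the $\oplus$-at-root case. The structural fact doing the essential work is that \emph{a surface $\oplus$-redex is never in the scope of a $!$}: consequently it can never lie inside a thunk argument $\,!N$ of a $\lambda!$-redex, so it is never duplicated when $N$ is copied into the several occurrences of $x$; and for a linear $\lambda x$-redex the argument $Q$ is used affinely, so contracting the $\beta$-redex leaves at most one copy of the redex. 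This is exactly the role played by the value restriction in the proof of Lemma~\ref{lem:comm}, transposed to the $!$-box discipline.

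The main obstacle, and the only genuinely delicate point, is bookkeeping of multidistribution multiplicities in this overlap case: one must check that contracting the surface $\oplus$ in the single residual copy of $Q$ inside $\mset{A[Q/x]}$ yields precisely the same multidistribution, \emph{as a multiset}, as the one reached from $\s$ by contracting the top $\beta$-redex in each of its two $\two$-weighted components. It is here that the affine discipline combined with the surface restriction on $\oplus$ guarantees that the two sides agree with no spurious copies created or deleted, so that $\n\Redo\r$ and $\s\Redbl\r$ genuinely meet at a common $\r$ rather than merely at a common associated distribution. Once this pointwise property is secured, $\diamond$-commutation of $\Redb$ and $\Redo$ follows immediately from the Pointwise Criterion (Lemma~\ref{lem:pointwise}), exactly as for $\PLambda^\cbv$, completing the building block needed for confluence of $\PLambda^!$.
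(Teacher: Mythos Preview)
Your proposal takes essentially the same route as the paper: structural induction on $M$, with the same case split and the same key observation in the overlap case---the linear redex $(\lambda x.P)Q$ with the surface $\oplus$-redex inside $Q$, handled via the affine discipline ensuring at most one occurrence of $x$ in $P$. Your explicit treatment of the abstraction cases $M=\lambda x.M'$ and $M=\lambda!x.M'$ is subsumed in the paper under the generic ``$M=\ss(Q)$, both redexes inside $Q$'' case, but the content is identical.

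One step deserves more care than you give it (and, to be fair, more than the paper gives it). In the linear overlap case $(\lambda x.P)Q$ with the $\oplus$-redex in $Q$, after $\beta$-contraction the $\oplus$-redex lands at the unique position of $x$ inside $P$. For the step $\n\Redo\r$ to go through, that position must be a \emph{surface} position---not only not under a $!$ (which affinity guarantees) but also not under a $\oplus$. The affine condition as stated does not forbid $x$ from sitting under $\oplus$ (take $P=x\oplus y$, $Q=a\oplus b$); in that situation the residual $\oplus$-redex in $P[Q/x]=(a\oplus b)\oplus y$ is no longer surface, and the proposed $\r$ is not reachable from $\n$ by a single $\Redo$-step. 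You argue non-duplication (``at most one copy of the redex''), which is correct, but you do not argue that the single copy remains \emph{surface}; your claim ``survives as a single surface residual'' is made only for the body case. Your final paragraph flags ``bookkeeping of multidistribution multiplicities'' as the main obstacle, but the actual delicacy is this positional one, not the counting. Since the paper's own proof treats this case in exactly the same way, you are matching it faithfully; just be aware that this is where the argument is thinnest.
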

\begin{proof}
We reason by induction on $M$. The key case is case~\ref{case:key}. Case $M=x$ and $M=!P$ are not possible given the  hypothesis.

\begin{enumerate}
\item 
Case $M= P\oplus Q$. Similar to Lemma \ref{lem:comm}, case (\ref{case:sum})


 \item Case $M =\ss(Q)$, and both redexes are inside $Q$.  Similar to Lemma \ref{lem:comm}, case  (2.\ref{case:ind}).

\item Case $M=PQ$, with  the $\beta$-redex inside $P$, and the $\oplus$-redex inside $Q$. Similar to Lemma \ref{lem:comm},
case (2.\ref{case:disjoints}).

\item Case $M=(\lam !x. P)!Q$, where $M$ is the  $\beta$-redex.  The $\oplus$-redex needs to be inside $P$.
Assume $P\redo\mset{\two P_1, \two P_2}$.
We have $M \redo \mset{\two (\lam !x.P_1) !Q, \two (\lam !x. P_2) !Q}$, and 
$M \redbl  \mset{P[Q/x]}$. It is immediate that the multidistribution $\r=\mset{\two P_1[Q/x], \two P_2[Q/x]}$ satisfies the claim.

\item\label{case:key} Case $M=(\lam x. P)Q$, where $M$ is the  $\beta$-redex. 
If the $\oplus$-redex is inside $P$, we reason as above. Assume  that the $\oplus$-redex is inside $Q$, and we have $Q\redo\mset{\two Q_1, \two Q_2}$. The key observation is that in $P$ there is \textit{at most one occurrence }of $x$. Let  assume there is exactly one occurrence (the case of none is easy). Let  $\cc$ be the context such that $P=\cc(x)$ (\ie, $\cc$ is $P$, with a hole in the place of $x$). Observe that $P[Q/x]=\cc(Q)$. We have $M\redbl \mset{P[Q/x]=\cc(Q)}$, and $M\redo \mset{\two (\lam x. P)Q_1, \two (\lam x. P)Q_2}$.
The  multidistribution $\r=\mset{\two \cc(Q_1), \two \cc(Q_2)}$ satisfies the claim.

\end{enumerate}	
\end{proof}

\textbf{Lem.~\ref{lem:conf_lin}}:	
\begin{enumerate}
			\item 	The  reduction $\Redo$ is diamond.
			\item 	The reduction $\Redb$  is confluent.
						\item 	The reductions $\Redb$ and $\Redo$ commute.
		\end{enumerate}
	\begin{proof}
		\begin{enumerate}

			\item Same proof as for Lemma \ref{lem:oplusCR}.
			\item Inherited from $\Lambda^!$ via the translation  $\trb{.}$  and Prop.~\ref{prop:translationlin}.
				\item 
			
			We prove that $\Redb$ and $\Redo$ $\diamond$-commute, 
			by using  Lemma~\ref{lem:pointwise} and Lemma~\ref {lem:comm_pointwiseLin}.
			
%
		\end{enumerate}
	\end{proof}
	
\textbf{Thm.~\ref{thm:conf_lin}}
		 The reduction $\Red$ of $\PLambda^!$   is confluent.
	\begin{proof}
		By  Hindley-Rosen Lemma, from Lemma~\ref{lem:conf_lin}.
	\end{proof}


\subsubsection{Surface Standardization in $\PLambda^!$}\label{sec:postponement_lin}
In order to prove Proposition \ref{thm:surfacestandard_lin}, first we prove a lemma.
\begin{lemma}\label{lem:post_lin}If $M\dparred M'$ and $M' \sred \n$, then $\mset{M} \sRed^* \s$ and $\s \dparRed \n$.
\end{lemma}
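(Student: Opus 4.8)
The plan is to mirror the proof of Lemma~\ref{lem:post} for $\PLambda^\cbv$, adapting it to the different notion of surface context of $\PLambda^!$. First I would fix the auxiliary reductions for $\PLambda^!$: parallel $\beta$-reduction $\parredb$ (which contracts a set of $\beta$-redexes simultaneously, and never an $\oplus$-redex), and deep parallel reduction $\dparred$ (the restriction of $\parredb$ to deep positions, so $M \dparred \mset S$ iff $M \parredb \mset S$ and $M \dRed^* \mset S$; in particular $\dparred$ always yields a single term). The inference rules for $\dparred$ must reflect that the surface contexts of $\PLambda^!$ descend under $\lambda$ and $\lambda!$ and into both positions of an application, but \emph{not} under $!$ nor inside $\oplus$. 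Hence $\dparred$ recurses with $\dparred$ under $\lambda,\lambda!$ and in both sides of an application, while it switches to full $\parredb$ under $!$ and inside $\oplus$. I would also record the $\Lambda^!$-analogue of Lemma~\ref{lem:basicst}, obtained from Simpson's surface standardization for $\Lambda^!$ together with the surface-preserving translation $\trb{\cdot}$ of Prop.~\ref{prop:translationlin}: the facts I need are (2') \emph{if $M \parredb N$ in $\Lambda^!$ then $M \sredb^* S \dparredb N$ for some $S$}, and (3') \emph{if $M \dparredb M' \sredb N$ in $\Lambda^!$ then $M \sredb^* S \dparredb N$ for some $S$}.

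Then I would split on the nature of the surface step $M' \sred \n$. If $M' \sredb \n$, then $\n = \mset N$ is a single term, and $M \dparred M'$ is single-term too; the claim follows by simulation in $\Lambda^!$ and fact (3'), exactly as in the CbV case. Concretely, push $M \dparred M' \sredb N$ through $\trb{\cdot}$ to get $\trb M \dparredb \trb{M'} \sredb \trb N$ in $\Lambda^!$ (using that the translation preserves surface reduction, hence also deep and deep-parallel reduction), apply (3') to obtain $\trb M \sredb^* T \dparredb \trb N$, and pull the result back along the translation to $\mset M \sRed^* \mset S \dparRed \mset N = \n$.

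If instead $M' \sredo \n$, I would argue by induction on $M$. The cases $M = x$ and $M = !P$ are vacuous: here $M'$ is $x$ resp.\ $!P'$ and has no surface $\oplus$-redex, precisely because surface contexts do not enter $!$-boxes. If $M = \lambda x.P$ or $M = \lambda!x.P$, then $M \dparred \lambda x.P'$ with $P \dparred P'$, the surface $\oplus$-redex lies inside $P'$, so $P' \sredo \r'$ with $\n = \lambda x.\r'$; the induction hypothesis on $P$ gives $\mset P \sRed^* \s_P \dparRed \r'$, which I wrap under $\lambda x$ (legitimate since $\dparred$ recurses under $\lambda$). If $M = P \oplus Q$, then $M \dparred R \oplus S$ with $P \parredb R$ and $Q \parredb S$, and the surface redex is the top $\oplus$, so $\n = \mset{\two R, \two S}$; I first fire the top $\oplus$ of $M$, giving $P \oplus Q \sredo \mset{\two P, \two Q}$, and then use (2') on each branch (via the translation) to get $P \sred^* P' \dparred R$ and $Q \sred^* Q' \dparred S$, whence $\mset{\two P, \two Q} \sRed^* \mset{\two P', \two Q'} \dparRed \n$. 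Finally if $M = PQ$, then $M \dparred RT$ with $P \dparred R$ and $Q \dparred T$; the surface $\oplus$-redex sits properly inside $R$ or inside $T$ (symmetric), say $R \sredo \mset{\two R_1, \two R_2}$ with $\n = \mset{\two R_1T, \two R_2T}$; the induction hypothesis on $P$ gives $\mset P \sRed^* \mset{\two S_1, \two S_2}$ with $S_i \dparred R_i$, and appending $Q$ yields $\mset{PQ} \sRed^* \mset{\two S_1Q, \two S_2Q} \dparRed \n$, using the application rule of $\dparred$ together with $Q \dparred T$.

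I expect the main obstacle to be the bookkeeping around the auxiliary reductions rather than the induction itself: establishing the $\Lambda^!$-analogue of Lemma~\ref{lem:basicst}(2)(3) and checking that $\trb{\cdot}$ transports $\dparred$ and $\parredb$ faithfully in both directions — this is exactly where the hypothesis that the translation \emph{preserves surface reduction} (Prop.~\ref{prop:translationlin}) is essential. The genuinely new point compared with the CbV proof is that the $\lambda/\lambda!$ and $!$ cases behave oppositely to CbV: reduction under $\lambda,\lambda!$ is now surface, so those cases are resolved by induction rather than being vacuous, whereas the $!$-box is the deep barrier that makes the $M = !P$ case vacuous.
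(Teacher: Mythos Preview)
Your proposal is correct and follows essentially the same route as the paper: split on whether the surface step is a $\beta$-step (handled by simulation in $\Lambda^!$ together with the $\Lambda^!$-analogue of Lemma~\ref{lem:basicst}(3), which the paper leaves implicit) or an $\oplus$-step (handled by induction on $M$, with $x$ and $!P$ vacuous, $P\oplus Q$ via fact~(2'), and the remaining constructors via the induction hypothesis). Your observation that the $\lambda/\lambda!$ cases are now inductive while $!P$ is the vacuous case is exactly the point; the paper simply groups $PQ$, $\lambda x.P$, $\lambda!x.P$ together as ``similar to Lemma~\ref{lem:post}, Point~(3)'' without spelling this out.
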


\begin{proof}If $M' \sredbl \n$, the claim holds by \emph{simulation} in $\Lambda^!$.
	If $M' \sred_{\oplus} \n$, we procede by induction on $M$. 
	\begin{enumerate}
		\item The case $M=x$ and $M= !P$ do not apply.
		\item Let $M=P\oplus Q$. Similar to Lemma~\ref{lem:post}, Point (2.).
		\item 	 Let $M=PQ,~ \lam !x.P$ or $\lam x.P$. Similar to Lemma~\ref{lem:post}, Point (3.).
	\end{enumerate}
	
\end{proof}

\begin{cor}\label{cor:post_lin}
	If $\m\dparRed \n$ and $\n \sRed^* \r$, then exists $\s$ with $\m \sRed^* \s$ and $\s \dparRed \r$.
\end{cor}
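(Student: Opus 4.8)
The plan is to mirror verbatim the proof of Corollary~\ref{cor:post} for $\PLambda^\cbv$, since Corollary~\ref{cor:post_lin} has an identical statement and the only ingredient that changes is the underlying postponement lemma: here I would use Lemma~\ref{lem:post_lin} in place of Lemma~\ref{lem:post}. I argue by induction on the length $k$ of the surface reduction $\n \sRed^{(k)} \r$. The base case ($k=0$) is trivial: here $\n=\r$, so taking $\s := \m$ gives $\m \sRed^* \m$ by reflexivity together with $\m \dparRed \n = \r$ by hypothesis.

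For the inductive step, I decompose $\n \sRed^* \r$ as $\n \sRed \n_1 \sRed^{(k-1)} \r$. From $\m \dparRed \n$ and the single surface step $\n \sRed \n_1$, I apply Lemma~\ref{lem:post_lin} (lifted to multidistributions) to obtain $\s$ with $\m \sRed^* \s$ and $\s \dparRed \n_1$. Since then $\s \dparRed \n_1 \sRed^{(k-1)} \r$, the inductive hypothesis yields $\r'$ with $\s \sRed^* \r'$ and $\r' \dparRed \r$. Concatenating the two surface segments gives $\m \sRed^* \s \sRed^* \r'$, i.e. $\m \sRed^* \r'$, together with $\r' \dparRed \r$, which closes the induction.

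The only point requiring care is that Lemma~\ref{lem:post_lin} is stated at the level of terms ($M \dparred M'$ and $M' \sred \n$), whereas the corollary concerns multidistributions. Before invoking it in the inductive step I would note that term-level postponement lifts pointwise: a multidistribution step $\n \sRed \n_1$ reduces a single term occurring in (a component of) $\n$, while $\m \dparRed \n$ is the lifting of term-level deep parallel steps; applying Lemma~\ref{lem:post_lin} to the affected component and recombining via the definition of lifting (Fig.~\ref{fig:lifting}) produces the required $\m \sRed^* \s \dparRed \n_1$. This lifting is routine and identical to the one already implicit in the $\PLambda^\cbv$ argument. I expect no genuine obstacle here: all the real work is discharged in Lemma~\ref{lem:post_lin}, and the corollary is a purely combinatorial induction over reduction lengths.
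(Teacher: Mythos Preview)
Your proposal is correct and matches the paper's own proof, which simply says ``Same as Lemma~\ref{cor:post}, using Lemma~\ref{lem:post_lin}''---exactly the induction on the length of $\n \sRed^{(k)} \r$ that you spell out. One small imprecision: a lifted step $\n \sRed \n_1$ may reduce several components of $\n$ simultaneously (rule~L3 in Fig.~\ref{fig:lifting}), not just one; but your pointwise-lifting argument goes through unchanged by applying Lemma~\ref{lem:post_lin} to each reduced component, so this does not affect the proof.
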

\begin{proof}
	Same as Lemma~\ref{cor:post}, using Lemma~\ref{lem:post_lin}.
\end{proof}

Then we can prove: 

\textbf{Prop. \ref{thm:surfacestandard_lin}}\
	If $\m\Red^*\n$ then  there exists $\r$ such that  $\m \sRed^* \r$ and $\r \dRed^* \n$. 

\begin{proof}
	Same as the proof of Thm.~\ref{thm:surfacestandard}, by induction on the length of the reduction $\m\Red^*\n$,  using  this time Corollary \ref{cor:post_lin}. 
	
	%
\end{proof}


\subsection{Proofs of Section~\ref{sec:CBN}}
\subsubsection{ $\PLambda^!$ is a conservative extension of $\PLambda^\cbn$ (\ref{sec:tr_cbn})}

To prove Proposition~\ref{thm:ntr}, we first prove that $\trn{\cdot}$ preserves both surface contexts and  $\oplus$-redexes.
\begin{lemma}\label{lem:preserveS} Given $M\in \PLambda^\cbn$, (S1) holds $\Longleftrightarrow$ (S2) holds, where:
	\begin{enumerate}
		\item[S1:] in $\PLambda^\cbn$ , there exists $\ss$ surface context and a redex $r=R_1\oplus R_2$ such that $M=\ss(r)$;
		\item[S2:] in $\PLambda^!$ there exists $\tc$ surface context and a redex $u=U_1\oplus U_2$ such that $\trn{M}=\tc(u)$;
	\end{enumerate}	
	and moreover  $\trn{\ss(R_i)}=\tc(U_i)$, for $i\in \{1,2\}$.
	
\end{lemma}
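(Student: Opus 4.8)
The plan is to prove the equivalence by structural induction on $M\in\PLambda^\cbn$, showing in each case that a CbN surface decomposition $M=\ss(R_1\oplus R_2)$ exists exactly when a $\PLambda^!$ surface decomposition $\trn M=\tc(U_1\oplus U_2)$ does, and producing the witnesses in parallel so that the closing identity $\trn{\ss(R_i)}=\tc(U_i)$ is maintained at each step. The engine is the compositional behaviour of $\trn{\cdot}$: an abstraction $\lam x.M'$ becomes $\lam!x.\trn{M'}$ (which a $\PLambda^!$ surface context may still traverse, since the grammar admits $\lam!x.\tc$), an application $M_1M_2$ becomes $\trn{M_1}\,!\trn{M_2}$ (so the argument is placed inside a $!$-box), and $\oplus$ is preserved on the nose.

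I would first dispatch the two leaf-like cases. If $M=x$, then $\trn M=x$; neither $x$ nor its image can be decomposed as a surface context wrapping a $\oplus$-redex (the empty context would require the term itself to be a $\oplus$-redex), so (S1) and (S2) are both false and the equivalence holds vacuously. If $M=M_1\oplus M_2$, then $\trn M=\trn{M_1}\oplus\trn{M_2}$; since neither surface-context grammar has a production descending into a $\oplus$, the only decomposition on each side is the trivial one, forcing $\ss=\square$, $R_i=M_i$ and $\tc=\square$, $U_i=\trn{M_i}$, whence $\trn{\ss(R_i)}=\trn{M_i}=\tc(U_i)$. Both (S1) and (S2) hold with matching data.

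The two genuinely inductive cases are abstraction and application. For $M=\lam x.M'$ we have $\trn M=\lam!x.\trn{M'}$; on each side the only surface context whose top symbol matches is of the form $\lam x.\ss'$ (resp. $\lam!x.\tc'$)—the empty context is excluded because the redex is not an abstraction, an application-shaped context is excluded because the term is not an application, and on the $\PLambda^!$ side the linear production $\lam x.\tc$ is excluded because $\trn M$ begins with $\lam!$. Thus (S1) for $M$ is equivalent to (S1) for $M'$, and likewise for (S2), so the inductive hypothesis on $M'$ closes the case after prepending the abstractions to the respective contexts. For $M=M_1M_2$ we have $\trn M=\trn{M_1}\,!\trn{M_2}$: on the CbN side a surface context over an application can only descend into the function part, so $\ss=\ss'M_2$ with the redex inside $M_1$, while on the $\PLambda^!$ side a surface context over $\trn{M_1}\,!\trn{M_2}$ could a priori descend into either the function or the argument, and here is the crux.

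The main obstacle—the only place where the restriction built into $\PLambda^!$ surface contexts really bites—is ruling out a $\PLambda^!$ surface context entering the boxed argument $!\trn{M_2}$. I would argue that if $\tc$ placed the hole in argument position, then $\tc(U_1\oplus U_2)=!\trn{M_2}$ would have to be a $!$-term; but the surface-context grammar has no production of the form $!\tc'$, so the only surface context producing a $!$-term is the empty one, forcing $U_1\oplus U_2=!\trn{M_2}$, which is impossible since a $\oplus$-redex is not a $!$-term. Hence $\tc=\tc'\,(!\trn{M_2})$ with the redex inside $\trn{M_1}$, exactly mirroring the CbN side; (S2) for $M$ reduces to (S2) for $M_1$, and the inductive hypothesis on $M_1$ finishes the case with $\trn{\ss(R_i)}=\trn{\ss'(R_i)M_2}=\trn{\ss'(R_i)}\,!\trn{M_2}=\tc'(U_i)\,!\trn{M_2}=\tc(U_i)$, using only that $\trn{\cdot}$ commutes with application and that the argument box is untouched on both sides.
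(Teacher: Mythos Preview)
Your proof is correct and follows essentially the same structural argument as the paper's. The only organizational difference is that you induct on the term $M$ and treat both directions simultaneously, whereas the paper inducts on the surface context $\ss$ for the forward direction and case-analyzes $\tc$ for the backward direction; since $M=\ss(r)$ (resp.\ $\trn{M}=\tc(u)$) forces the shapes of $M$ and $\ss$ (resp.\ $\tc$) to match, the two inductions are equivalent, and your explicit exclusion of the argument-position case for $\tc$ in the application step is exactly the point the paper leaves implicit.
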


\begin{proof}
	$  \Longrightarrow $. By induction on the form of the surface context.
	\begin{itemize}
		\item  $\square$. Since $M=r$, then  $\trn{M}=\trn{R_1}\oplus \trn{R_2}$. Hence $u=\trn{R_1}\oplus \trn{R_2}$ and $\tc= \square$
		satisfy the claim.

		\item $\ss Q$. We have that $M=\ss Q(r)=\ss(r)Q$. Hence  $ \trn{\ss(r) Q} = \trn{\ss(r)}!\trn{Q}$. By inductive hypothesis, there exist $\tc'$ and $u$ such that 
		$ \trn{\ss(r)}=\tc'(u) $, and   $\trn{\ss(R_i)}=\tc'(U_i)$.
		By definition of surface context in $\PLambda^!$, the claim hold with $\tc=\tc'!\trn{Q}$, and the same $u$.

		\item $\lam x.\ss$.  $\trn{ \lam x. \ss (r)} = \lam ! x. \trn{ \ss' (r)}$, and the claim holds by inductive hypothesis.
	\end{itemize}

	$ \Longleftarrow $.
	We examine the possible form of $\tc$, given that $\trn{M}=\tc(u)$; we prove that 
	$M=\ss(r)$ and that $\trn{\ss(R_i)}=\tc(U_i)$.
	\begin{itemize}
		\item  $\square$. Immediate.
		
		\item $\tc' Q$. We have that $(\tc' Q)(u)=\tc'(u)Q$, and
		$\tc'(u)Q=\trn{LN}=\trn{L} !\trn{N}$ with $M=LN$. Therefore  $\tc'(u)=\trn{L}$,  and the claim holds by inductive hypothesis and definition of surface context. 
		
		\item $\lam !x.\tc'$.  We have that $(\lam !x.\tc')(u)=\lam !x.\tc'(u)$ and  $\lam !x.\tc'(u)= \lam !x.\trn{M'}$, with $M=\lam x. M'$. The claim holds by inductive hypothesis. 
		
	\end{itemize}

\end{proof}

\begin{proposition*}\textbf{\ref{thm:ntr}}. [Simulation] The  translation $\trn{.}$ is sound and complete; it preserves surface reduction and surface normal forms.
		Let $M\in \PLambda^\cbn$; the following hold:
		\begin{enumerate}
			\item if $M \red \n$  then $\trn{M}\red\trn \n$;
			\item if $M \sred \n$  then $\trn{M}\sred\trn \n$;
			
			\item if $\trn{M}\red \s$ then $\exists ! \n$  such that $\s= \trn \n$ and $M\red \n$;
			\item if $\trn{M}\sred \s$ then $\exists ! \n$  such that $\s= \trn \n$ and $M\sred \n$;
			
			\item $M\in \Hnf$ if and only if $\trn{M} \in \Snf^!$.
	\end{enumerate} 	
\end{proposition*}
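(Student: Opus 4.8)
The plan is to reduce all five statements to two structural lemmas about $\trn{\cdot}$ — that it commutes with substitution, and that it preserves contexts together with redexes — and then to read off the items by an easy case analysis, mirroring the treatment of $\PLambda^!$ and the non-probabilistic result of \cite{gm18}.

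First I would establish the substitution lemma $\trn{M[N/x]} = \trn{M}[\trn{N}/x]$ by a routine induction on $M$; the only point worth noting is the application case, where $\trn{(PQ)[N/x]} = \trn{P[N/x]}\,!\trn{Q[N/x]}$ matches $(\trn{P}\,!\trn{Q})[\trn{N}/x]$. This lemma is what makes a single $\beta$-step commute with $\trn{\cdot}$: a CbN redex $(\lam x.P)Q$ is sent to $(\lam!x.\trn{P})!\trn{Q}$, a \emph{non-linear} $\beta$-redex of $\PLambda^!$ whose contractum $\trn{P}[\trn{Q}/x]$ equals $\trn{P[Q/x]}$.

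Next I would record the context-preservation statement. For $\oplus$-redexes this is exactly Lemma~\ref{lem:preserveS}: $M$ factors as $\ss(R_1 \oplus R_2)$ in $\PLambda^\cbn$ iff $\trn{M}$ factors as $\tc(U_1 \oplus U_2)$ in $\PLambda^!$, with $\trn{\ss(R_i)} = \tc(U_i)$. I would prove the analogous statement for $\beta$-redexes (and the general-context version needed for items 1 and 3) by the same induction on contexts, observing that a translated application $\trn{P}\,!\trn{Q}$ offers a surface context only in left position, so CbN surface contexts — which forbid argument position — map exactly onto $\PLambda^!$ surface contexts — which forbid the scope of $!$ — the argument always living under the $!$. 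With these in hand, soundness (items 1 and 2) is immediate: decompose the reducing term as context-plus-redex, translate the redex using the substitution lemma for $\beta$ or directly for $\oplus$, and close under the translated context; the resulting multidistribution is the translation of the CbN result since $\trn{\cdot}$ distributes over $+$ and scalar product by definition.

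For completeness (items 3 and 4) I would use that $\trn{\cdot}$ is injective and that its image is characterised by: every abstraction has the form $\lam!x$, every application has the form $P\,!Q$, and every $!$ occurs only in argument position. Consequently a linear $\beta$-redex $(\lam x.P)Q$ never occurs inside a translated term, so every $\beta$-redex of $\trn{M}$ is non-linear and is the image of a CbN $\beta$-redex, while every $\oplus$-redex of $\trn{M}$ is the image of a CbN $\oplus$-redex by the $\Longleftarrow$ direction of the context lemma. Hence any step $\trn{M} \red \s$ (resp. $\sred$) is the translation of a unique step $M \red \n$ (resp. $\sred$), uniqueness coming from injectivity. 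The main obstacle is precisely this completeness bookkeeping: one must verify that no $\PLambda^!$ redex outside the image can arise in $\trn{M}$, and in particular that the two $\beta$-forms of $\PLambda^!$ create no ambiguity — a point the image characterisation settles cleanly. Finally, item 5 follows without extra work: by items 2 and 4, $M$ has a surface redex iff $\trn{M}$ has one, so $M$ is $\sred$-normal iff $\trn{M}$ is $\sred$-normal; since $\Snf^\cbn = \Hnf$ was already observed, this is exactly $M \in \Hnf \iff \trn{M} \in \Snf^!$.
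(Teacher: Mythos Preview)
Your proposal is correct and self-contained; the two lemmas you isolate (substitution commutes with $\trn{\cdot}$, and contexts-with-redexes are preserved and reflected) are exactly what is needed, and your derivation of item~5 from items~2 and~4 together with $\Snf^\cbn=\Hnf$ is the right closing move.

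The paper takes a slightly different route for the $\beta$-case. Rather than proving the substitution lemma and the $\beta$-context correspondence directly in the probabilistic setting, it factors through the non-probabilistic calculi: it uses the forgetful translations $\PLambda^\cbn\to\Lambda^\cbn$ and $\PLambda^!\to\Lambda^!$ (already shown sound, complete, and surface-preserving), invokes the analogous simulation result of \cite{gm18} for $\Lambda^\cbn\to\Lambda^!$, and reads off items 1--4 for $\redb$ by chasing the square. For $\redo$ it proceeds exactly as you do, via Lemma~\ref{lem:preserveS}. So the difference is purely in how the $\beta$ part is obtained: you redo the argument of \cite{gm18} inline (substitution lemma plus image characterisation), while the paper delegates to \cite{gm18} via a commuting-square argument. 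Your approach is more explicit and does not depend on the reader trusting that the two forgetful translations and \cite{gm18} compose cleanly; the paper's approach is more modular and avoids repeating a standard induction.
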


\begin{proof}

We prove (1.)-(4.); since
$\red~=~\red_{\beta}  \cup \red_{\oplus} $, we deal separately with the two reductions. Point (5.) is an immediate consequence of the other points.

%
%
\begin{itemize}
	\item[$\redbn$]We deal with $\red_{\beta}$  via simulation in $\Lambda^\cbn$ and $\Lambda^!$, since the analogous   result is proved in  \cite{gm18}.
We have defined a  translation $(-)_{!}:\PLambda^!\to \Lambda^!$ which is sound and complete, and preserves surface reduction. It is straightforward to define a similar translation from $\PLambda^\cbn$ into $\Lambda^\cbn$. 
Therefore, if 
in $\PLambda^\cbn$ it holds $M \red_{\beta} N$, 
we translate in $\Lambda^\cbn$, 
use the  result in \cite{gm18} and conclude (via simulation) that
 $\trn{M}\red_{\beta}\trn N$ in $\PLambda^!$. 
 Similarly for (2.)-(3.)-(4.). 
 
 \item[$\redo$] Immediate consequence of Lemma~\ref{lem:preserveS}, 
 which proves that  that $\trn{\cdot}$ preserves both surface contexts and  $\oplus$-redexes.
\end{itemize}
\end{proof}

\condinc{}{
	\begin{proposition*}[\ref{thm:ntr}. Simulation] The  translation $\trn{.}$ is sound and complete; it preserves surface reduction and surface normal forms.
		Let $M\in \PLambda^\cbn$; the following hold:
		\begin{enumerate}
			\item if $M \red \n$  then $\trn{M}\red\trn \n$;
			\item if $M \sred \n$  then $\trn{M}\sred\trn \n$;
			
			\item if $\trn{M}\red \s$ then $\exists ! \n$  such that $\s= \trn \n$ and $M\red \n$;
			\item if $\trn{M}\sred \s$ then $\exists ! \n$  such that $\s= \trn \n$ and $M\sred \n$;
			
			\item $M\in \Hnf$ if and only if $\trn{M} \in \Snf^!$.
		\end{enumerate} 	
	\end{proposition*}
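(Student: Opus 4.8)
The plan is to decompose the reduction $\red=\redb\cup\redo$ of $\PLambda^\cbn$ and treat the two components separately, since the translation $\trn\cdot$ behaves very differently on $\beta$- and on $\oplus$-redexes. For the $\beta$-fragment I would first establish the \emph{substitution lemma} $\trn{M[N/x]}=\trn{M}[\trn{N}/x]$ by a routine induction on $M$, the only interesting clause being the application clause, which works precisely because the CbN translation wraps every argument in a $!$-box, so that $\trn{MN}=\trn{M}\,!\trn{N}$. Soundness of $\beta$-steps (clauses~1 and~2) then follows by induction on the context enclosing the redex: the base case $(\lam x.M)N\redb\mset{M[N/x]}$ translates to $(\lam! x.\trn{M})!\trn{N}\redb\mset{\trn{M}[\trn{N}/x]}=\mset{\trn{M[N/x]}}$ using the $!$-$\beta$ rule and the substitution lemma, and the inductive cases are discharged by the context lemma below. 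Alternatively, as the paper does, one may route the $\beta$-fragment through the non-probabilistic calculi $\Lambda^\cbn$ and $\Lambda^!$ and invoke the corresponding simulation result of \cite{gm18}; the direct argument is just as short.

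The technical heart is a \emph{context-preservation lemma}: a term $M$ has the form $\ss(r)$ for a surface context $\ss$ and an $\oplus$-redex $r=R_1\oplus R_2$ in $\PLambda^\cbn$ if and only if $\trn{M}=\tc(u)$ for a $\PLambda^!$ surface context $\tc$ and an $\oplus$-redex $u$, with moreover $\trn{\ss(R_i)}=\tc(U_i)$. The forward direction is an induction on the CbN surface-context grammar $\ss::=\square\mid\lam x.\ss\mid\ss M$, checking that $\lam x.\ss$ maps to the $\PLambda^!$ surface context $\lam!x.\tc'$ and $\ss M$ to $\tc'\,!\trn{M}$. The key point is that the CbN constraint \emph{``no $\oplus$ in argument position''} is carried exactly onto the $\PLambda^!$ constraint \emph{``no $\oplus$ in the scope of a $!$''}, because an argument position in CbN becomes a $!$-box under $\trn\cdot$. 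With this lemma, soundness of $\oplus$-steps is immediate: $\ss(M\oplus N)\redo\mset{\frac12\ss(M),\frac12\ss(N)}$ translates to $\tc(\trn{M}\oplus\trn{N})\redo\mset{\frac12\tc(\trn{M}),\frac12\tc(\trn{N})}=\trn{\mset{\frac12\ss(M),\frac12\ss(N)}}$.

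For completeness (clauses~3 and~4) I would use that $\trn\cdot$ is \emph{injective} and that it creates \emph{no spurious redexes}. Injectivity is immediate from the disjoint syntactic shapes of the translation clauses. For the absence of spurious redexes, observe that $\trn\cdot$ never produces a linear abstraction $\lam x$ nor a bare, non-$!$-boxed argument: every abstraction in the image is a $\lam!$, and every application has the form $\trn{M}\,!\trn{N}$. Hence the only $\beta$-redexes in $\trn{M}$ are of the shape $(\lam!x.P)!Q$ and must come from an application $(\lam x.M')N'$ of $M$; likewise, by the $\Longleftarrow$ direction of the context lemma, every surface $\oplus$-redex of $\trn{M}$ comes from a surface $\oplus$-redex of $M$. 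Reading the soundness correspondence backwards and invoking injectivity yields the unique $\n$ with $\s=\trn{\n}$ for each of $\red$ and $\sred$. Finally, clause~5 follows from clauses~2 and~4 together with the identity $\Snf^\cbn=\Hnf$ established earlier: $M\in\Hnf$ iff $M$ has no surface redex iff (by~2 and~4) $\trn{M}$ has no surface redex iff $\trn{M}\in\Snf^!$.

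The step I expect to be the main obstacle is the $\Longleftarrow$ direction of the context-preservation lemma, namely showing that \emph{every} $\PLambda^!$ surface context wrapping a translated $\oplus$-redex is itself the image of a CbN surface context. This requires a case analysis on the shape of $\tc$ (the cases $\square$, $\tc' Q$, and $\lam!x.\tc'$), matching each against its possible pre-images under $\trn\cdot$, and is exactly the place where one must verify that the two superficially different surface disciplines (``not in argument position'' versus ``not under $!$'') genuinely coincide through the translation. Once this correspondence is in place, everything else is bookkeeping.
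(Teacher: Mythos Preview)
Your proposal is correct and essentially mirrors the paper's proof: the same decomposition into $\beta$- and $\oplus$-fragments, the same context-preservation lemma for $\oplus$ (which is exactly Lemma~\ref{lem:preserveS} in the paper, proved in both directions just as you outline, with the $\Longleftarrow$ case analysis on $\tc$ being the crux), and the same derivation of clause~5 from the others. The only difference is that for the $\beta$-fragment the paper routes entirely through the non-probabilistic simulation of \cite{gm18} via the auxiliary translations $\PLambda^\cbn\to\Lambda^\cbn$ and $\PLambda^!\to\Lambda^!$, rather than giving your direct substitution-lemma argument---but you already flag that alternative.
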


	\begin{proof}
		(5.) is an immediate consequence of the previous points. We prove (1.)-(4.).
		$\red_{\cbn}=\red_{\cbn\beta}  \cup \red_{\cbn\oplus} $ and $\red_{!}=\red_{!\beta } \cup \red_{!\oplus}$. We deal with beta reductions via simulation, since the corresponding results are proved in  \cite{gm18}.
		We have defined a  translation $(-)_{!}:\PLambda^!\to \Lambda^!$ which is sound and complete, and preserves surface reduction. It is straightforward to define a similar translation from $\PLambda^\cbn$ into $\Lambda^\cbn$. 
		If $M \red_{\cbn\beta} N$, we translate, 
		use the  result in \cite{gm18} and conclude that $\trn{M}\red_{!\beta}\trn N$. Similarly for (2.)-(4.). We prove (1.)-(4.) for $\redo$.
		
		\begin{itemize}
			\item (1.) and (2.). We prove that if $M \sred_{\cbn\oplus} \m=\mset{M_1,M_2}$  then $\trn{M}\red_{!\oplus}\trn \m= \mset{\trn{M_1},\trn{M_2}}$, by induction on $M$. We observe that $M$ has the form $\ss(R)$.
			\begin{enumerate}
				\item $\square$.  $M=M_1\oplus M_2$, and $\trn{M}=\trn{M_1}\oplus \trn{M_2}$, hence  the claim holds.
				\item $\lam x.\ss$.  $M = \lam x. N$ and $N\sred_{\cbn\oplus} \mset{N_1,N_2}$. By i.h., $\trn{N}\sred_{\cbn\oplus} \mset{\trn{N_1},\trn{N_2}}$, hence the claim is immediate.
				\item $\ss U$. $M=NU$,   and $N\sred_{\cbn\oplus} \mset{N_1,N_2}$. 
				By i.h. $\trn{N}\sred_{!\oplus} \mset{\trn{N_1},\trn{N_2}}$.
				Therefore $\trn{M}=\trn{N} !\trn{U}$  
				$\sred_{!\oplus}$ $\mset{\trn{N_1}\trn{U},\trn{N_2}\trn{U}}=\trn{\m}$.
				
			\end{enumerate}
			\item  (3.) and (4.). We prove that if $\trn M\sred_{!\oplus} \s=\mset{S_1,S_2}$  then exists  $\trn \n= \s=
			\mset{\trn{ N_1},\trn{N_2}}$
			and $M\sred_{\cbn\oplus} \mset{S_1,S_2}$,  by induction on $M$. We only examine the case $M=PQ$, the others are similar.
			
			$\trn M=\trn{(AB)}=\trn{A}!\trn{B}$. The redex of $\trn M$ cannot be in $\trn B$. Therefore, $\trn{A}\sred_{!\oplus}\mset{P_1,P_2}$, and
			$S_i=P_i !\trn B$. By i.h., exists $L_i\in \PLambda$, $\trn{L_i}=P_i$ and $A\sred \mset{L_i}$. We conclude that $M=AB\sred \mset{
				L_iB  }$ and $\trn{(L_iB)}=S_i$.
		\end{itemize}
	\end{proof}}
}
\end{document}